\newcommand{\ket}[1]{\left\vert{#1}\right\rangle}
\newcommand{\AR}[2][c]{$$\begin{array}[#1]{lllllllllllllll}#2\end{array}$$}
\def\ket#1{{|}#1\rangle}
\newtheorem{theorem}{Theorem}
\newtheorem{lemma}{Lemma}
\newtheorem{corollary}{Corollary}
\newtheorem{proposition}{Proposition}
\newtheorem{definition}{Definition}
\begin{document}

\title{Global Quantum Circuit Optimization}
\author[1]{R. Dias da Silva\thanks{The first two authors have contributed equally to this work.}}
\author[2]{E. Pius$^{\ast}$}
\author[2]{E. Kashefi}

\affil[1]{Instituto de F\'isica, Universidade Federal Fluminense, Av. Gal. Milton Tavares de Souza s/n, 
Gragoat\'a, Niter\'oi, RJ, 24210-340, Brazil}
\affil[2]{School of Informatics, University of Edinburgh, 10 Crichton Street, Edinburgh EH8 9AB, UK}

\maketitle

\begin{abstract}

One of the main goals in quantum circuit optimisation is to reduce the number of ancillary qubits and the depth of computation, to obtain robust computation. However, most of known techniques, based on local rewriting rules, for parallelising quantum circuits will require the addition of ancilla qubits, leading to an undesired space-time tradeoff. Recently several novel approaches based on measurement-based quantum computation (MBQC) techniques attempted to resolve this problem. The key element is to explore the global structure of a given circuit, defined via translation into a corresponding MBQC pattern. It is known that the parallel power of MBQC is superior to the quantum circuit model, and hence in these approaches one could apply the MBQC depth optimisation techniques to achieve a lower depth. However, currently, once the obtained parallel pattern is translated back to a quantum circuit, one should either increase the depth or add ancilla qubits. In this paper we characterise those computations where both optimisation could be achieved together. In doing so we present a new connection between two MBQC depth optimisation procedures, known as the \emph{maximally delayed generalised flow} and \emph{signal shifting}. This structural link will allow us to apply an MBQC qubit optimisation procedure known as \emph{compactification} to a large class of pattern including all those obtained from any arbitrary quantum circuit. We also present a more efficient algorithm (compared to the existing one) for finding the maximally delayed generalised flow for graph states with flow.
\end{abstract}

\newpage
\tableofcontents
\newpage

\section{Introduction and background material} \label{sec_background}

We are slowly reaching the classical limit of current technology in decreasing the size of computer chips. Hence to avoid quantum effect and also to reduce heat generation parallel computing has become the dominant paradigm in computer architecture. On the other hand in the quantum domain reducing the depth would be essential for keeping the computation coherent to avoid classical effect. In both scenarios designing parallel circuit remains a challenging task. Hence one attempts to rewrite locally a given circuit to reduce the depth of the computation. Usually such an approach will add ancilla registers to achieve the parallelisation, this being an undesired effect in the quantum setting as addition of ancilla qubits could increase the decoherence, breaching the initial purpose of parallelisaiton.  In this paper we present a general optimisation techniques for quantum circuit that exploits the global structure of a given computation to achieve parallelisation with no ancilla addition. 

Our technique is based on the translation of a given quantum circuit \cite{Deutsch89, NielsenCbook00} into a measurement-based quantum computation (MBQC) \cite{RaussendorfB01,DanosKP07}. These two models utilise remarkably different information processing tools: while the former is based on unitary evolution of an initially non-entangled set of qubits, the latter needs an initial highly-entangled multi-qubit state, where the information processing is driven by measurements only. Naturally since the two aforementioned models use different information processing tools,  hence each model has its own optimisation techniques. In the MBQC model, for instance, most of the optimization techniques are based on the identification of a more efficient correction structure that is directly linked to the geometry of the underlying global entanglement structure. Examples of these techniques are the \textit{signal shifting} \cite{DanosKP07} and the \textit{generalised flow} \cite{BrowneKMP07}, both discussed in following sections. The so-called \textit{standardisation} procedure \cite{DanosKP07} can also reduce the number of computational steps by rearranging the MBQC operations into a normal form. Moreover, \textit{all} Pauli measurements in this model can be performed in the beginning of the computation \cite{RaussendorfBB01}, which is a surprising difference from the quantum circuit model.

On the other hand, most optimization techniques for quantum circuits are based on template identification and substitution. For instance in \cite{EscartinP11}, some circuit identities are used to modify the teleportation and dense coding protocols, with the purpose of giving a more intuitive understanding of those protocols. Similarly in \cite{SedlakP08} and \cite{MaslovDMN08} a set of circuit identities for reducing the number of gates in the circuit for size optimization was given. In contrast to that, in \cite{MooreN01} a useful set of techniques for circuit parallelisation was provided, where the number of computational steps is reduced by using additional resources. However, as noted in \cite{SedlakP08}, all the aforementioned circuit optimization techniques are basically exchanging a sequence of gates for a different one without any consideration on the structure of the complete circuit being optimised. The translation into MBQC would allow us to explore the global structure of a given circuit. 

The first such a scheme by back and forth translation between the two models was presented in \cite{BroadbentK09}. However the backward translation into the circuit required the addition of many ancilla qubits. On the other hand several recent works presented an optimised translation scheme from MBQC into the circuit where all the non-input qubits are removed, referred hereinafter as compact translation \cite{DanosKP07,DuncanP10,daSilvaG12,Duncan12}. The price for doing that happened to be the loss of the optimal depth of the original MBQC. The key result of our paper is a new scheme for the parallelisation where the obtained MBQC pattern could be translated compactly. Our scheme is based on a new theoretical connection between two MBQC depth optimisation procedures, known as the \emph{maximally delayed generalised flow} \cite{MhallaP07} and \emph{signal shifting}, which could lead to other interesting observation about MBQC, beyond the purpose of this paper as we discuss later. Also our result highlights the fundamental role of the MBQC Pauli optimisation in obtaining MBQC parallel structure beyond anything obtainable in the quantum circuit model, described later. We prove how our proposed scheme is more optimal in both depth and space compare to the scheme in  \cite{BroadbentK09}. We conclude with a new algorithm for finding the maximally delayed generalised flow for graphs with flow with $O(n^3)$ steps compared to the exciting algorithm in \cite{MhallaP07}, where $n$ is the number of the nodes in the graph.

\subsection{The MBQC Model}

We review the basic ideas behind the measurement-based quantum computation, with special attention to its description in terms of the formal language known as Measurement Calculus \cite{DanosKP07},  and the flow theorems \cite{DanosK06,BrowneKMP07}.

In 1999 Chuang and Gottesman described how one could apply arbitrary quantum gates using an adaptation of the quantum teleportation model \cite{GottesmanC99}. This approach was further developed by other researchers \cite{HuelgaPV01, HuelgaVCP00, Leung01, Nielsen01}, enabling one in principle to perform arbitrary computations given a few primitives: preparation of maximally entangled systems of fixed, small dimension; multi-qubit measurements on arbitrary set of qubits; and the possibility of adapting the measurement bases depending on earlier measurement outcomes.

These models of computation draw on measurements to implement the dynamics, and as such named collectively the measurement-based model of quantum computation (MBQC), for an overview see the paper by Jozsa \cite{Jozsa05}. An MBQC model using only single qubit measurements was proposed by Raussendorf and Briegel in 2001, which became known as the one-way model \cite{RaussendorfBB01}. The one-way model achieves universality through the preparation of a special type of entangled states, the so-called cluster states \cite{RaussendorfBB03}. These states are created with the CZ gate acting on qubits prepared in the state $\ket{+} = \frac{1}{\sqrt{2}}(\ket{0} + \ket{1})$ arranged in a regular lattice, usually the two-dimensional ones. This can be relaxed to create more general states with the same interaction over general graphs, creating the so-called graph states \cite{HeinDERvdNB06}. Both cluster and graph states can be represented graphically, using vertices denoting the qubits and edges for the two-qubit entangling gate CZ. Therefore, the entangled resource for the one-way model can be fully represented as graphs. Although two-dimensional cluster states can be used as resource for universal quantum computation in the one-way model, arbitrary graph states may, or may not, serve for the same purpose; investigating which kinds of entangled states are useful resources for MBQC is an active area of research \cite{GrossFE08, VanDenNestMDB6, MoraPMNDB10, WeiTAR11}.

A formal language to describe in a compact way the operations needed for the one-way model was proposed in \cite{DanosKP07}. The language could be easily adapted to any other type of measurement-based model hence in the rest of this paper we refer to the general MBQC term instead of the specific one-way model as our scheme could be applicable to any MBQC models. In this framework every MBQC algorithm (usually referred to as an MBQC pattern) involves a sequence of operations such as entangling gates, measurements and feed-forwarding of outcome results to determine further measurement bases.
A \textit{measurement pattern}, or simply a pattern, is defined by a choice of a set of working qubits $V$, a subset of input qubits ($I$), another subset of output qubits ($O$), and a finite 
sequence of commands acting on qubits in $V$. Therefore, we consider patterns associated to the so-called \textit{open graphs}:
\begin{definition}[\textbf{open graph}]
	\label{def_opengraph}
	An open graph is a triplet $(G, I, O)$, where $G = (V, E)$ is a undirected graph, and $I, O \subseteq V$ are respectively called input and output vertices.
\end{definition}
An example of an open graph is shown in Figure \ref{def_gflow}. There are four types of commands, the first is the qubit initialisation command $N_i$ that prepares qubit $i$ in the state $\ket{+}$. The input qubits are already given as a prepared state. The entangling command $E_{ij} \equiv CZ_{ij}$ corresponds to the $CZ$ gate between qubits $i$ and $j$, where
\begin{align*}
\wedge Z \equiv \left(
\begin{array}{cccc}
1 & 0 &0&0\\
0 &1 &0&0\\
0&0&1&0\\
0&0&0&-1
\end{array}
\right).
\end{align*}
The single-qubit measurement command $M_i^{\theta}$ corresponds to a measurement of qubit $i$ in the basis $\ket{\pm_\theta}\equiv  \frac{1}{\sqrt{2}}(\ket{0}\pm e^{i\theta}\ket{1})$, with outcome $s_i=0$ associated with $\ket{+_\theta}$, and outcome $1$ with $\ket{-_\theta}$. The measurement outcomes are usually referred as \emph{signals}. Finally, the corrections may be of two types, either Pauli $X$ or Pauli $Z$, and they may depend on any prior measurement results, denoted by $s = \oplus_{j \in J \subset V} s_j$ ($s_j =0$ or $1$ and the summation is done modulo two). This dependency can be summarised as correction commands: $X_i^{s}$ and $Z_i^s$ denoting a Pauli $X$ and $Z$ corrections on qubit $i$ which must be applied only when the parity of the measurement outcomes on qubits $j \in J \subset V$ equals one (as $Z^0 = X^0=I$).
A characteristic of the MBQC model is that the choice of measurement bases may depend on earlier measurement outcomes. These dependent measurements can be conveniently written as $_t[M_i^{\theta}]^s$, where
\begin{equation}
_t[M_i^{\theta}]^s \equiv M_i^{\theta}X_i^s Z_i^t = M_i^{(-1)^s\theta+t\pi},
\end{equation}
where it is understood that the operations are performed in the order from right to left in the sequence. The left ($t$) and right ($s$) dependencies of the measurement $M_i$ are called its $Z$ and $X$ {\it dependencies}, respectively.

A pattern is runnable, that is, corresponds to a physically sound sequence of operations, if it satisfies the following requirements: (R0) no command
depends on outcomes not yet measured; (R1) no command acts on a qubit already measured or not yet prepared, with the obvious exception of the preparation commands; (R2) a qubit undergoes 
measurement (preparation) iff it is not an output (input) qubit.

As an example, take the pattern consisting of the 
choices $V=\{1,2\}, I=\{1\}, O=\{2\}$ and the sequence of commands:
\begin{equation} \label{jblock}
X_2^{s_1}M_1^{-\theta}E_{12}N_2^{0}.
\end{equation}
This sequence of operations does the following: first it initialises the output qubit 2 in the state $\ket{+}$; then it applies $\wedge Z$ on qubits 1 and 2; followed by a measurement of input qubit 1 onto the basis $\{1/\sqrt{2}(\ket{0}+e^{-i\theta}\ket{1}),1/\sqrt{2}(\ket{0}-e^{-i\theta}\ket{1})\}$. If the result is the latter vector then the one-bit outcome is $s_1=1$ and there is a correction on the second qubit ($X_2^1=X_2$), otherwise no correction is necessary. A simple calculation shows that this pattern implements the unitary $J_{\theta}$ on the state prepared in qubit 1, outputting the result on qubit 2, where
\begin{equation}
J_{\theta}\equiv \frac{1}{\sqrt{2}} \left(
\begin{array}{cc}
1 & e^{i\theta}\\
1 & -e^{i \theta}
\end{array}
\right).
\end{equation}
The simple sequence above is a convenient building block of more complicated computations in the MBQC model. This is because the set of single qubit $J_{\theta}$ ($\forall \theta$) together with CZ on arbitrary pairs of qubits can be shown to be a universal set of gates for quantum computation \cite{NielsenCbook00}.

The following rewrite rules (\cite{DanosKP07}) put the command sequence in the \emph{standard} form, where preparation is done first followed by the entanglement, measurements and corrections: 
\begin{eqnarray} 
E_{ij}X_i^s &\Rightarrow& X_i^sZ_j^sE_{ij} \label{rw1}\\ 
		E_{ij}Z_i^s &\Rightarrow& Z_i^sE_{ij} \label{rw2}\\
_t[M_i^{\theta}]^sX_i^r &\Rightarrow& _t[M_i^{\theta}]^{s+r} \label{rw3}\\
_t[M_i^{\theta}]^sZ_i^r &\Rightarrow& _{r+t}[M_i^{\theta}]^s \label{rw4}
\end{eqnarray}
This procedure is called \textit{standardisation} and can directly change the dependencies structure commands, possibly reducing the computational depth, without breaking the causality ordering \cite{BroadbentK09}. 

\subsection{Determinism in MBQC}

Due to the probabilistic nature of quantum measurement, not every measurement pattern implements a \emph{deterministic} computation -- a completely positive, trace-preserving (cptp) map that sends pure states to pure states. We will refer to the collection of possible measurement outcomes as a \emph{branch} of the computation. In this paper, we consider deterministic patterns which satisfies three conditions: (1) the probability of obtaining each branch is the same, called \emph{strong determinism}; (2) for any measurement angle we have determinism, called \textit{uniform determinism}; and (3) which are deterministic after each single measurement, called \textit{stepwise determinism}. We will call those patterns simply \textit{deterministic patterns}. Here since we are only working with quantum circuits we don't need to be concerned with other stronger notions of determinism defined for MBQC pattern such as in \cite{MhallaMPST11}.

In \cite{DanosK06} conditions over a graph (knows as \emph{flow}) are presented in order to identify a dependency structure for the measurement sequence associated to open graph to obtain determinism. In what follows, we call non-input vertices as $I^C$ (complement of $I$ in the graph) and non-output vertices as $O^C$ (complement of $O$ in the graph).

\begin{definition}[\textbf{Flow} \cite{DanosK06}]
	\label{def_flow}
	We say that an open graph $(G,I,O)$ has \textit{flow} iff there exists a map  $f:O^C \to I^C$ and a strict partial order $\prec_f$ over all vertices in the graph such that for all $i\in O^C$
	\begin{itemize}
	\item (F1) $i \prec_f f(i)$;
	\item (F2) if $j \in N(f(i))$, then $j = i$ or $i \prec_f j$, where $N(v)$ is the neighbourhood of $v$;
	\item (F3) $i \in N(f(i))$;
	\end{itemize}
\end{definition}

Efficient algorithms for finding flow (if it exist) can be found in \cite{Beaudrap06a,MhallaP07}.
The flow function $f$ is a one-to-one function. The proof is trivial, but as this property is extensively used in this work we will present the proof in this paper.
\begin{lemma}
	Let $(f, \prec_f)$ be a flow on an open graph $(G, I, O)$. The function $f$ is an injective function, \emph{i.e.} for every $i \in O^C$, $f(i)$ is unique.
\end{lemma}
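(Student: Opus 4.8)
The plan is to argue by contradiction, exploiting the antisymmetry built into the strict partial order $\prec_f$. Suppose $f$ is not injective: there are distinct $i, j \in O^C$ with $f(i) = f(j) =: v$. The goal is to show this forces both $i \prec_f j$ and $j \prec_f i$, which is impossible.

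First I would invoke (F3) for both vertices: $i \in N(f(i)) = N(v)$ and $j \in N(f(j)) = N(v)$. Now apply (F2) to the vertex $i$: since $j \in N(v) = N(f(i))$ and $j \neq i$ by assumption, condition (F2) yields $i \prec_f j$. Then I would apply (F2) symmetrically to the vertex $j$: since $i \in N(v) = N(f(j))$ and $i \neq j$, condition (F2) gives $j \prec_f i$.

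Finally, having $i \prec_f j$ and $j \prec_f i$ with $i \neq j$ contradicts the fact that $\prec_f$ is a strict partial order (transitivity would give $i \prec_f i$, contradicting irreflexivity). Hence no such pair exists and $f$ is injective, so $f(i)$ is unique for each $i \in O^C$.

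There is essentially no obstacle here: the only things to be careful about are that (F2) is stated as ``$j = i$ or $i \prec_f j$'' so one must explicitly use $i \neq j$ to discharge the first disjunct, and that one should state clearly which axioms of a strict partial order (irreflexivity together with transitivity, equivalently asymmetry) are being used to close the contradiction. The argument does not use (F1) at all.
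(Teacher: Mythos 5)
Your proof is correct and follows essentially the same route as the paper: assume $f(i)=f(j)$ for distinct $i,j$, use (F3) to place each of $i,j$ in the neighbourhood of the common image, apply (F2) in both directions to get $i \prec_f j$ and $j \prec_f i$, and conclude by the antisymmetry of the strict partial order. Your write-up is merely a bit more explicit about where (F3) and the $i \neq j$ case split enter, which the paper leaves implicit.
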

\begin{proof}
	Let us assume that for some $i \in O^C$, $f(i)$ is not unique, \emph{i.e.} there exists $j \in O^C$ such that $i \neq j$ but $f(i) = f(j)$.
	Then according to the flow definition:
	\begin{align*}
		j \in N(f(i)) \Rightarrow i \prec_f j, \\
		i \in N(f(j)) \Rightarrow j \prec_f i,
	\end{align*}
	and we arrive to a contradiction because $i \prec_f j$ and $j \prec_f i$ cannot be true at the same time. Hence $f(i)$ has to be unique.
\end{proof}

In the case where $|I|=|O|$, the flow function induces a circuit-like structure in a graph, in the sense that for each input qubit $i \in I$, there exists a number $n$ such that $f^n(i) \in O$ and the vertex sequence
\begin{equation} \label{eq_ordered}
\{ i, f(i), f[f(i)], f[ f[f(i)]]\}, ..., f^n(i)\}, 
\end{equation}
can be translated to a single wire in the circuit model. A simple example can be seen in Figure \ref{fig_flowgflow}. This circuit-like structure is an interesting feature of the flow function, since it allows a very simple translation procedure called star decomposition introduced in \cite{DanosK06}.

Flow provides only a sufficient condition for determinism  but one can generalise the above definition to obtain a condition that is both necessary and sufficient. This generalisation allows correcting sets with more than one element. In those cases, we say that the graph has \emph{generalised flow} (or simply \emph{gflow}). In what follows we define $Odd (K) =  \{k\, ,\,  |N_G(k)\cap K|=1 \mod 2\}$ to be the set of vertices where each element is connected with the set $K$ by an odd number of edges.

\begin{definition} [\textbf{Generalised flow} \cite{BrowneKMP07}]
\label{def_gflow}
	We say $(G,I,O)$ has generalised flow if there exists a map $g: O^C \rightarrow P^{I^C}$ (the set of all subsets of non-input qubits) and a partial order $\prec_g$  over all vertices in the graph such that for all $i\in O^C$,
	\begin{itemize}
		\item (G1)  if $j\in g(i)$ then $i \prec_g j$;
		\item (G2)  if $j\in Odd(g(i))$ then $j = i$ or $i \prec_g j$;
		\item (G3)  $i \in Odd(g(i))$;
	\end{itemize}
\end{definition}

The set $g(i)$ is often referred to as the \emph{correcting set} for qubit $i$.
It is important to note that flow is a special case of gflow, where $g(i)$ contains only one element.
This is a key difference regarding the translation of measurement patterns to quantum circuits. An example of a graph with gflow (but no flow) is shown in Figure \ref{fig_flowgflow}.

\begin{figure}
\center

	\label{fig_flowgflow}
	\includegraphics[scale=0.6]{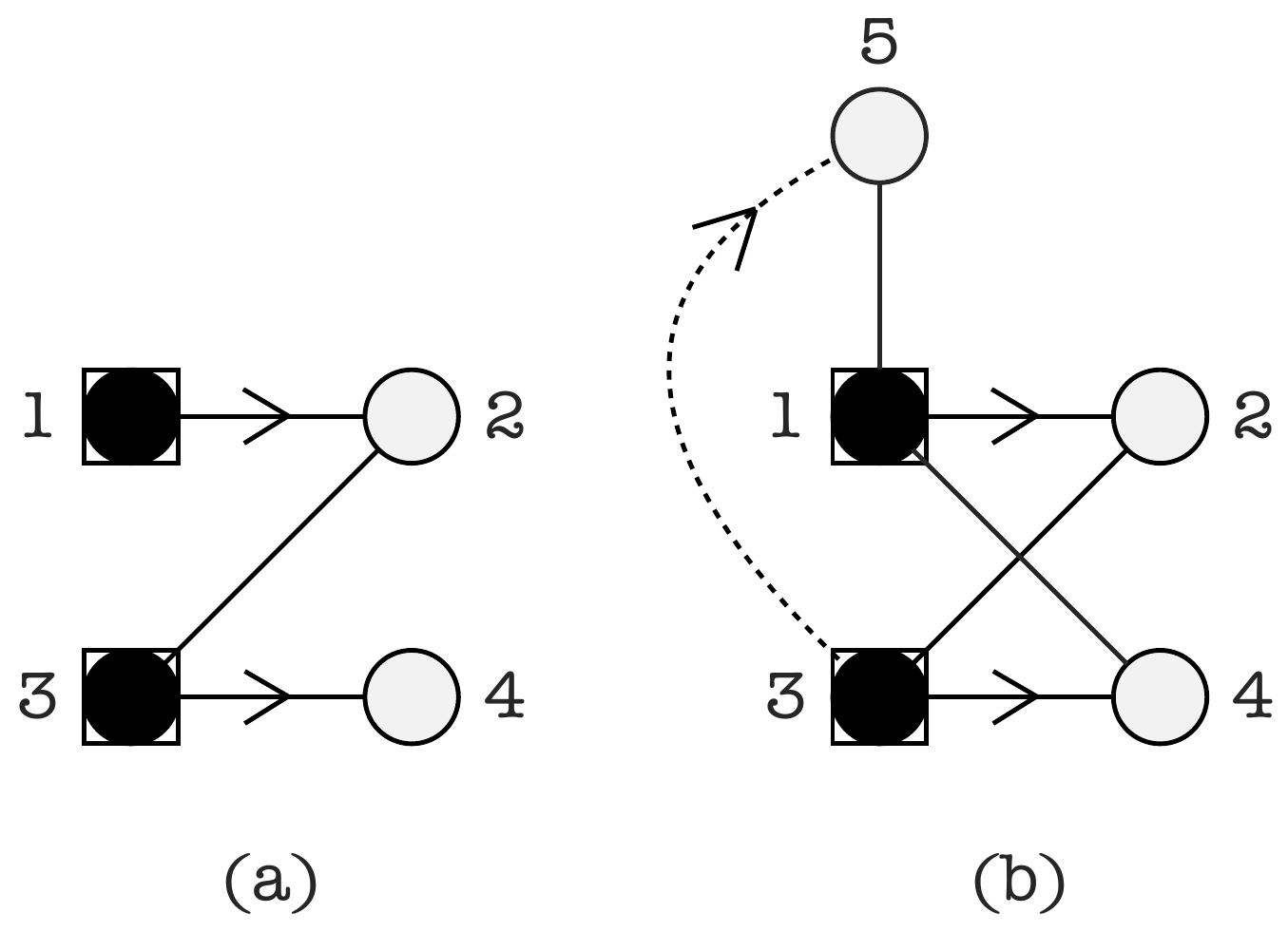}
	\caption{Example of graphs satisfying the conditions of (a) flow and (b) gflow. Each vertex represent a qubit, where the black vertices represent measured qubits and the white ones are unmeasured qubits. The input qubits are represented by boxed vertices and the arrows highlight the dependency structure, since it points to vertices belonging to the correcting set of the measured qubit from where it came from. The dashed line with an arrow represent the case where the vertex from the correcting set is a non-neighbouring vertex (vertices not linked by an edge). }
\end{figure}

The gflow partial oder leads to an arrangement of the vertices into layers (see below), in which all the corresponding measurements can be performed simultaneously. The number of layers corresponds to the number of parallel steps in which a computation could be finished, known as the \emph{depth} of the pattern. 

\begin{definition}[\textbf{Depth of a gflow} \cite{MhallaP07}]
	\label{def_gflow_depth}
	For a given open graph $(G, I, O)$ and  a gflow $(g, \prec_g)$ of $(G, I, O)$, let
	\begin{align*}
		V_k^{\prec_g} =
		\begin{cases}
			\max_{\prec_g}(V(G)) & \text{ if } k = 0\\
			\max_{\prec_g}(V(G) \setminus (\cup_{i < k} V_i^{\prec_g})) & \text{ if } k > 0
		\end{cases}
	\end{align*}
	where $\max(X)_{\prec_g} = \{ u \in X \text{ s.t. } \forall v \in X, \lnot (u \prec_g v) \}$ is the set of maximal elements of $X$ according to $\prec_g$. The \emph{depth} $d^{\prec_g}$ of the gflow is the smallest $d$ such that $V_{d + 1}^{\prec_g} = \emptyset$, $(V_k)_{k = 0 \dots d^{\prec_g}}$ is a partition of $V(G)$ into $d^{\prec_g} + 1$ layers.
\end{definition}

We define the \emph{layering function} of a gflow based on the above distribution of vertices into layers.

\begin{definition}[\textbf{Layering function}] Given a gflow $(g, \prec_g)$ on an open graph $(G, I, O)$ we define its \emph{layering function}  $L_g: V(G) \rightarrow \mathbb{N}$ to be the natural number $k$ such that $i \in V_{k}^{\prec_g}$.
\end{definition}

There is another useful way to understand the depth of a gflow.
A gflow can be represented as a directed graph on top of an open graph as shown in Figure \ref{fig_flowgflow}.
The longest path from inputs to outputs over those directed edges corresponds to the depth of the gflow.
In \cite{MhallaP07} it was shown, that a special type of gflow, called a \emph{maximally delayed gflow}, has minimal depth.

\begin{definition}[\textbf{Maximally delayed gflow} \cite{MhallaP07}]
	\label{def_delayed_gflow}
	For a given open graph $(G, I, O)$ and two given gflows $(g, \prec_g)$ and $(g', \prec_{g'})$ of $(G, I, O)$, $(g, \prec_g)$ is more delayed than $(g', \prec_{g'})$ if $\forall k$, $|\cup_{i = 0 \dots k} V_i^{\prec_g}| \geq |\cup_{i = 0 \dots k} V_i^{\prec_{g'}}|$ and there exists a $k$ such that the inequality is strict. A gflow $(g, \prec_g)$ is \emph{maximally delayed} if there exists no gflow of the same graph that is more delayed.
\end{definition}
We will simply refer to the maximally delayed gflow as the \emph{optimal gflow}. Note that in \cite{MhallaP07} it was proven that the layering of the vertices imposed by an optimal gflow $(g, \prec_g)$ is always unique, however the gflow itself might not be unique.
This is an important property together with the following lemmas that we will exploit later for our main result on linking gflow to other known structures for MBQC. 

\begin{lemma}[Lemma 1 from \cite{MhallaP07}]
	\label{lem_last_layer}
	If $(g, \prec)$ is a maximally delayed gflow of $(G, I, O)$ then $V_0^\prec = O$.
\end{lemma}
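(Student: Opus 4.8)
The plan is to establish the two inclusions $V_0^\prec \subseteq O$ and $O \subseteq V_0^\prec$ separately, using that $V_0^\prec = \max(V(G))_\prec$. The first holds for \emph{every} gflow and is essentially immediate; the second is where the hypothesis of maximal delay is genuinely needed, since the gflow axioms (G1)--(G3) only constrain non-output vertices and so do not by themselves force the outputs to be $\prec$-maximal.

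For $V_0^\prec \subseteq O$, I would show that a non-output vertex can never be $\prec$-maximal. Let $i \in O^C$; then $g(i)$ is defined, and it must be nonempty, since $Odd(\emptyset) = \emptyset$ while condition (G3) requires $i \in Odd(g(i))$. Choosing any $j \in g(i)$, condition (G1) gives $i \prec j$, so $i$ is not maximal, i.e. $i \notin V_0^\prec$.

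For $O \subseteq V_0^\prec$, I would argue by contradiction from the optimality of $(g, \prec)$. Assume some $o \in O$ is not $\prec$-maximal, and define a new relation $\prec'$ on $V(G)$ by deleting from $\prec$ every comparability whose smaller element is an output vertex, i.e. $a \prec' b$ iff $a \prec b$ and $a \notin O$. Two routine checks are then needed: first, $\prec'$ is again a partial order (irreflexivity is inherited, and if $a \prec' b \prec' c$ then $a \notin O$ and $a \prec c$, so $a \prec' c$); second, $(g, \prec')$ is again a gflow, because (G1)--(G3) only ever involve a vertex $i \in O^C$ as the smaller element of a comparability and no pair $(i, \cdot)$ with $i \notin O$ is deleted, so all three conditions carry over unchanged.

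It then remains to check that $(g, \prec')$ is \emph{strictly more delayed} than $(g, \prec)$, which contradicts maximal delay. Here I would use the chain description of the layering function: for any gflow the layer index $L(v)$ of a vertex $v$ equals the length of the longest $\prec$-ascending chain starting at $v$, so that $\cup_{i \le k} V_i^{\prec} = \{ v : L(v) \le k \}$. Writing $L'$ for the layering function of $(g, \prec')$, and noting that every $\prec'$-ascending chain is in particular a $\prec$-ascending chain, we get $L'(v) \le L(v)$ for all $v$, whence $\cup_{i \le k} V_i^{\prec} \subseteq \cup_{i \le k} V_i^{\prec'}$ for every $k$; and at $k = 0$ the inclusion is strict, since $o$ is now $\prec'$-maximal (no pair $(o, \cdot)$ survived the deletion) while $o \notin V_0^\prec$ by assumption. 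This is precisely the statement that $(g, \prec')$ is more delayed, contradicting the hypothesis; hence in fact $o \in V_0^\prec$. I expect the only mildly delicate point to be establishing this chain description cleanly (equivalently, the monotonicity of the layering under deletion of comparabilities); everything else is bookkeeping with the gflow axioms.
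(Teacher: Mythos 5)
Your proof is correct. Note that the paper offers no proof of this statement at all: it is imported verbatim as Lemma~1 of \cite{MhallaP07}, so there is no internal argument to compare against; what you wrote is essentially the standard argument from that reference. Both halves check out: $V_0^\prec \subseteq O$ holds for any gflow because (G3) forces $g(i) \neq \emptyset$ (as $Odd(\emptyset)=\emptyset$) and (G1) then produces a vertex above any $i \in O^C$; and for $O \subseteq V_0^\prec$ your modified relation $\prec'$ (delete every comparability whose smaller element is an output) is indeed still a strict partial order and still supports $(g,\prec')$ as a gflow, since (G1)--(G2) only ever place a vertex of $O^C$ as the smaller element of a comparability, and the resulting gflow is strictly more delayed at $k=0$, contradicting maximal delay. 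The one step you flagged as delicate, the longest-chain description of the layering, is fine, but you can bypass it entirely with a direct induction on $k$: assuming $\cup_{i\le k-1}V_i^{\prec} \subseteq \cup_{i\le k-1}V_i^{\prec'}$, take $v \in V_k^{\prec}$; every $u$ with $v \prec' u$ also satisfies $v \prec u$, hence lies in $\cup_{i\le k-1}V_i^{\prec} \subseteq \cup_{i\le k-1}V_i^{\prec'}$, so $v$ is either already in $\cup_{i\le k-1}V_i^{\prec'}$ or is maximal in the complement and lands in $V_k^{\prec'}$. This yields the prefix inclusions $\cup_{i\le k}V_i^{\prec} \subseteq \cup_{i\le k}V_i^{\prec'}$ with slightly less machinery; otherwise your bookkeeping is exactly right.
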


\begin{lemma}[Lemma 2 from \cite{MhallaP07}]
	\label{lem_penultimate_layer}
	If $(g, \prec)$ is a maximally delayed gflow of $(G, I, O)$ then $(\tilde{g}, \prec_{\tilde{g}})$ is a maximally delayed gflow of $(G, I, O \cup V_1^\prec)$ where $\tilde{g}$ is the restriction of $g$ to $V(G) \setminus (V_1^\prec \cup V_0^\prec)$ and $\prec_{\tilde{g}} = \prec \setminus V_1^\prec \times V_0^\prec$.
\end{lemma}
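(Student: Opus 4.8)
The plan is to prove the statement in two parts: first that $(\tilde g, \prec_{\tilde g})$ is a gflow of $(G, I, O \cup V_1^\prec)$, then that it is maximally delayed, the second part by contradiction via a ``lifting'' construction that converts any strictly-more-delayed competitor into a strictly-more-delayed gflow of the original $(G,I,O)$, contradicting its maximality.

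Write $O_1 = O \cup V_1^\prec$. By Lemma~\ref{lem_last_layer}, $V_0^\prec = O$, so $O_1 = V_0^\prec \cup V_1^\prec$ and $O_1^C = V(G) \setminus (V_0^\prec \cup V_1^\prec)$ is precisely the domain of $\tilde g$. The structural fact I would record first is that for every $i \in V_1^\prec$, both $g(i)$ and $Odd(g(i)) \setminus \{i\}$ are contained in $V_0^\prec = O$: by (G1)/(G2) any such vertex $j \ne i$ satisfies $i \prec j$, hence sits in a strictly earlier layer, which for a layer-$1$ vertex means layer $0$. For the first part I would check that $\prec_{\tilde g} = \prec \setminus (V_1^\prec \times V_0^\prec)$ remains a strict partial order --- irreflexivity is clear, and transitivity cannot break because no vertex lies strictly between layers $1$ and $0$, so there is no chain $a \prec x \prec b$ with $a \in V_1^\prec$ and $b \in V_0^\prec$ --- and that conditions (G1)--(G3) for $\tilde g$ on $(G, I, O_1)$ are inherited from $(g,\prec)$, since every deleted order relation starts in $V_1^\prec$ while the domain of $\tilde g$ avoids $V_1^\prec$, and (G3) is purely graph-theoretic. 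I would then compute the layers of $(\tilde g, \prec_{\tilde g})$, obtaining $V_0^{\prec_{\tilde g}} = O_1$ and $V_k^{\prec_{\tilde g}} = V_{k+1}^\prec$ for $k \ge 1$; equivalently $L_{\tilde g}(v) = L_g(v) - 1$ for all $v \in O_1^C$.

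For the second part, assume for contradiction that a gflow $(g', \prec')$ of $(G, I, O_1)$ is strictly more delayed than $(\tilde g, \prec_{\tilde g})$. I would first normalise $\prec'$, replacing it by the transitive closure of only its \emph{forced} relations $\{(i,j) : i \in O_1^C,\ j \in g'(i) \cup (Odd(g'(i)) \setminus \{i\})\}$; this is still a gflow, is contained in $\prec'$ hence acyclic, is at least as delayed, and makes every vertex of $O_1$ maximal. Then I would build a candidate gflow $(g'', \prec'')$ of $(G, I, O)$ by setting $g''(i) = g'(i)$ on $O_1^C$ and $g''(i) = g(i)$ on $V_1^\prec$, and letting $\prec''$ be the transitive closure of $\prec'$ together with the pairs $\{(i,j) : i \in V_1^\prec,\ j \in g(i) \cup (Odd(g(i)) \setminus \{i\})\}$ --- all of which lie in $V_1^\prec \times O$ by the structural fact. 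Conditions (G1)--(G3) then hold: for the new correcting sets on $V_1^\prec$ by construction of these added pairs, and for $i \in O_1^C$ by inheritance since $\prec'' \supseteq \prec'$; and $\prec''$ is acyclic because every added pair ends in $O$ while $O$-vertices have no outgoing relations. The closing step is a layering comparison: one checks $\mathrm{rank}_{\prec''}(v) \le \mathrm{rank}_{\prec'}(v) + 1$ for every $v$ (a $\prec''$-chain is a $\prec'$-chain optionally followed by one final $V_1^\prec \to O$ step), and combines this with the exact identity $\mathrm{rank}_{\prec}(v) = \mathrm{rank}_{\prec_{\tilde g}}(v) + 1$ for $v \notin O$ from the first part together with ``$(g',\prec')$ strictly more delayed than $(\tilde g, \prec_{\tilde g})$''; level by level this forces $|\cup_{i \le k} V_i^{\prec''}| \ge |\cup_{i \le k} V_i^{\prec}|$ for all $k$ with a strict inequality at some level, contradicting that $(g, \prec)$ is maximally delayed.

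The step I expect to be the main obstacle is the bookkeeping of the layering under the lift: the correspondence between the layers of a gflow of $(G,I,O_1)$ and those of its lifted version on $(G,I,O)$ is \emph{not} an exact shift, because a longest chain can end inside $V_1^\prec$ and then gain one extra unit of rank from an added $V_1^\prec \to O$ edge. One therefore has to argue asymmetrically --- using only the one-sided bound $\mathrm{rank}_{\prec''} \le \mathrm{rank}_{\prec'} + 1$ for the arbitrary competitor, but the exact relation for $(g,\prec)$ itself --- and one must not skip the preliminary normalisation of $\prec'$, since without it the lift need not have $O$ as its final layer. Everything else (the partial-order checks and the inherited gflow conditions) is routine.
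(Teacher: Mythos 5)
The paper does not prove this statement at all---it is imported verbatim as Lemma~2 of \cite{MhallaP07}---so there is no in-paper proof to compare against; judged on its own, your restrict-and-lift argument is correct and is essentially the strategy of the original Mhalla--Perdrix proof: show the restriction is a gflow of $(G,I,O\cup V_1^\prec)$ whose layers are those of $(g,\prec)$ shifted down by one, then turn any strictly more delayed competitor into a gflow of $(G,I,O)$ by reattaching the correcting sets $g(i)$ (which lie in $O$) for $i\in V_1^\prec$, contradicting maximal delayedness. Your extra care---normalising $\prec'$ to its forced relations so that $O$-vertices are sinks, and using only the one-sided bound $\mathrm{rank}_{\prec''}\le\mathrm{rank}_{\prec'}+1$ against the exact shift for $(g,\prec)$---correctly fills in the bookkeeping that the cited source treats tersely.
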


\section{Depth optimisation tools for MBQC}

Th parallel power of MBQC is proven to be equivalent to quantum circuit augmented with free unbounded fanout \cite{BKP11}. This motivates to use MBQC as an automated tool for circuit parallelisation as it was first presented in \cite{BroadbentK09}. Another way to obtain parallel MBQC structure is to extract the entanglement graph of the pattern and obtain the optimal gflow of the graph \cite{MhallaP07}. Then one performs the required corrections according to this structure. Our first main result is to show the equivalence between these two seemingly very different technique for the patterns obtained from a quantum circuit, that is those with flow. More precisely we show how the effect of performing signal shifting optimisation (that is the core idea in \cite{BroadbentK09}) result in a maximally delayed gflow. This structural link shed further lights on the complicated structure of maximally delayed gflow and permit us to find a new efficient algorithm for finding it for the large class of patterns obtained form a circuit. 

\subsection{Signal shifted flow} \label{sec_ssf}

We proceed with reviewing the rules for signal shifting defined in \cite{DanosKP07}:
\begin{align}
	_t[M_i^{\alpha}]^s &\Rightarrow S_i^t \; [M_i^{\alpha}]^s \label{eq_ss1} \\
	X_j^s \; S_i^t &\Rightarrow S_i^t \; X_j^{s[(t+s_i)/s_i]} \label{eq_ss2} \\
	Z_j^s \; S_i^t &\Rightarrow S_i^t \; Z_j^{s[(t+s_i)/s_i]} \label{eq_ss3}
\end{align}
where $S_i^t$ is the signal shifting command (adding $t$ to $s_i$) and $s[t/s_i]$ denotes the substitution of $s_i$ with $t$ in $s$. Signal shifting, can be utilised to parallelise MBQC patterns and quantum circuits \cite{BroadbentK09}. The rest of this section is focused on various structural properties of the signal shifting. 

As can be seen from the above rules, signal shifting rewrites the $X$- and $Z$-corrections of a measurement pattern in a well defined manner.
In particular, it will move all the $Z$-corrections to the end of the pattern, thereby introducing new $X$-corrections when Rule \ref{eq_ss2} is applied.
It is proven in \cite{BroadbentK09} that signal shifting will never increase the depth of an MBQC pattern, although it can decrease it.
In the case when the depth decreases, it is the consequence of the removal of the $Z$-corrections on the measured qubits by applying Rule \ref{eq_ss1}.

The Rules \ref{eq_ss1} - \ref{eq_ss3} can be interpreted in the following way.
Signal shifting takes a signal from a $Z$-correction on a measured qubit $i$ (Rule \ref{eq_ss1}) and adds it to the corrections  that depend on the outcome of the measurement of $i$ (Rules \ref{eq_ss2} - \ref{eq_ss3}).
When the signal moves to an $X$-correction command, then it won't propagate any further.
If the signal was added to another $Z$-correction of a measured vertex, then signal shifting can be applied again until no $Z$-corrections are left on non-output vertices.
Therefore signals move along a path created by the $Z$-corrections.
The propagation of signals in an MBQC pattern can be described by a \emph{$Z$-path} as defined below.

\begin{definition}
	\label{def_zpath}
	Let $M$ be a measurement pattern on an open graph $(G, I, O)$.
	Then we  define a directed acyclic graph, called $G_Z$, on the vertices of $G$ such that there exists a directed edge from $i$ to $j$ iff there exists a correction command $Z^{s_i}_j$ in $M$.
	A path in $G_Z$ between two vertices $v$ and $u$ is called a \emph{$Z$-path}.
\end{definition}

The above definition allows us to state a simple observation about connectivity of a graph with flow.

\begin{lemma}
	\label{lem_zpath_conn}
	If $(f, \prec_f)$ is a flow on an open graph $(G, I, O)$, and there exists a $Z$-path from vertex $i$ to vertex $j$, then the vertices $i$ and $f(j)$ cannot be connected.
\end{lemma}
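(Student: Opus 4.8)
The plan is to translate the hypotheses into statements about the flow function and the underlying graph, and then derive a contradiction from the flow axioms (F1)--(F3). First I would recall what a $Z$-correction command $Z_j^{s_i}$ means in a pattern coming from a flow $(f,\prec_f)$: after standardisation, the correction structure is dictated by $f$, so a $Z$-correction on $j$ depending on the signal $s_i$ arises precisely because $j \in N(f(i))$ and $j \ne i$ (the neighbours of $f(i)$ other than $i$ pick up a $Z$-dependency on $s_i$ via rewrite rule \eqref{rw1}, while the $X$-correction goes on $f(i)$ itself). So a directed edge $i \to j$ in $G_Z$ means $j \in N(f(i)) \setminus \{i\}$, and by (F2) this forces $i \prec_f j$.

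Next I would promote this to a $Z$-path: if there is a $Z$-path from $i$ to $j$, then composing the above observation along the path gives $i \prec_f j$ by transitivity of the strict partial order. Now suppose, for contradiction, that $i$ and $f(j)$ \emph{are} connected, i.e. $i \in N(f(j))$. Since $f(j)$ is defined we have $j \in O^C$. Applying (F2) to the vertex $j$ with the neighbour $i \in N(f(j))$ yields either $i = j$ or $j \prec_f i$. The case $i = j$ is impossible: a $Z$-path from $i$ to $j = i$ together with $G_Z$ being a directed acyclic graph (Definition \ref{def_zpath}) is a contradiction, and in any event $i \prec_f i$ is forbidden for a strict partial order. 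So $j \prec_f i$, which combined with $i \prec_f j$ from the previous paragraph contradicts antisymmetry of $\prec_f$. Hence $i$ and $f(j)$ cannot be connected.

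I expect the main obstacle to be the very first step: making rigorous the claim that in a pattern with flow, a directed edge in $G_Z$ from $i$ to $j$ implies $j \in N(f(i)) \setminus\{i\}$. This requires being careful about which pattern we mean — presumably the canonical standardised pattern associated to the flow via the star-decomposition construction of \cite{DanosK06} — and tracking exactly how the $Z$-corrections are generated by rule \eqref{rw1} when the $X$-corrections coming from the correcting set $\{f(i)\}$ are pushed past the entangling commands $E_{i'j'}$. Once that bookkeeping is pinned down, the rest is a short chain of applications of (F2) and the order axioms, and the acyclicity of $G_Z$ handles the degenerate $i=j$ case cleanly.
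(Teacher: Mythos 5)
Your proof is correct and follows essentially the same route as the paper's: derive $i \prec_f j$ from the $Z$-path, rule out $i=j$ by acyclicity of $G_Z$, and use flow property (F2) on the assumed edge $i \in N(f(j))$ to get the contradictory relation $j \prec_f i$. The only difference is that you spell out why a $Z$-correction edge implies $j \in N(f(i))\setminus\{i\}$ and hence $i \prec_f j$, a step the paper leaves implicit here and only formalises later as Corollary \ref{cor_zcorr}.
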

\begin{proof}
	The existence of a $Z$-path from $i$ to $j$ implies that $i \prec_f j$.
	The Z dependency graph is an acyclic graph, thus $i \neq j$.
	If $i$ would be connected to $f(j)$, then according to the flow property (F2):
	\begin{align*}
		i \in N(f(j)) \quad \land \quad i \neq j \quad \Rightarrow \quad j \prec_f i
	\end{align*}
	Now we have two contradicting strict partial order relations $i \prec_f j$ and $j \prec_f i$.
	Therefore $i$ cannot be connected to $f(j)$.
\end{proof}

Recall that the addition of signals is done modulo 2, therefore, if an even number of signals from a measured vertex $i$ is added to a correction command on vertex $j$, the signals will cancel out (since $Z^2 = X^2 = I$). Furthermore, it is evident from the rewrite Rules of \ref{eq_ss1} - \ref{eq_ss3} that after signal shifting, the measurement result of vertex $i$ will create a new $X$-correction over vertex $j$ if there exist an odd number of $Z$-paths from $i$ to a vertex $k$ that $j$ is $X$-dependent on directly before signal shifting. Similarly a new $Z$-correction from $i$ to $j$ will be created if there exists an odd number of $Z$-paths from $i$ to $j$. Either way,  \emph{the number of $Z$-paths from a vertex $i$ to another vertex $j$}, denoted as $\zeta_i(j)$, can be used to determine if the signal from $i$ should be added to a correction. We define $\zeta_i(i)$ to be 1 to simplify further calculations and definitions in this paper. The importance of the number of $Z$-paths will manifest itself in the next subsection, when the relation between signal shifting and gflows is studied.

We define a new structure called the \emph{signal shifted flow} (SSF), and show that it satisfies the three gflow properties in Definition \ref{def_gflow}. Before constructing the SSF, some definitions and lemmas are needed to justify our definition. Note that if an open graph $(G, I, O)$ has a flow $(f, \prec_f)$, then we can write the MBQC pattern of a deterministic computations on this open graph as \cite{DanosK06}:

\begin{align}
	\label{eq_flow_pattern}
	P = \prod\limits_{i \in O^C}^{\prec_f}
	\left(
		X^{s_i}_{f(i)} Z^{s_i}_{N(f(i)) \setminus \lbrace i \rbrace} M_i^{\alpha_i}
	\right)
	E_G
	N_{I^C}
\end{align}
where the product follows the strict partial order $\prec_f$ of the flow $(f, \prec_f)$.
From Equation \ref{eq_flow_pattern} we see that a $Z$-correction on a vertex $j$ depending on the measurement outcome of another vertex $i$ appears only if $j$ is a neighbour of $f(i)$.
This is formally stated in the next corollary as we will refer to it several times. 

\begin{corollary}
	\label{cor_zcorr}
	If $(G, I, O)$ is an open graph with a flow $(f, \prec_f)$, then there exists a $Z$-correction from vertex $i$ to another vertex $j$ iff $j \in N(f(i)) \setminus \lbrace i \rbrace$.
\end{corollary}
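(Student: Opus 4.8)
The plan is to read the statement off directly from the canonical flow pattern in Equation~\ref{eq_flow_pattern}. Since $(G,I,O)$ carries the flow $(f,\prec_f)$, a deterministic computation on this open graph can be written as the command sequence~\ref{eq_flow_pattern}, and in that sequence the only correction commands that occur are, for each measured (non-output) vertex $i$, a single $X$-correction $X^{s_i}_{f(i)}$ together with the $Z$-corrections $Z^{s_i}_{j}$ ranging over $j \in N(f(i)) \setminus \{i\}$ (recall that $Z^{s_i}_{N(f(i))\setminus\{i\}}$ is shorthand for the product of these over that neighbourhood). No other $Z$-correction command appears; in particular there is no $Z$-correction acting on $i$ itself, and no correction depending on $s_i$ when $i \in O$, since by requirement (R2) such a vertex is never measured and $s_i$ is undefined.

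With this in hand, the argument is just unwinding Definition~\ref{def_zpath}. First I would recall that, by that definition, saying there is a ``$Z$-correction from $i$ to $j$'' means precisely that the command $Z^{s_i}_j$ occurs in the pattern, i.e.\ that $(i,j)$ is a directed edge of $G_Z$. For the forward direction, suppose such a command occurs; matching it against the list of correction commands appearing in~\ref{eq_flow_pattern}, it must be one of the $Z^{s_i}_j$ with $j \in N(f(i)) \setminus \{i\}$, which is the claimed conclusion. For the converse, if $j \in N(f(i)) \setminus \{i\}$ then the factor $X^{s_i}_{f(i)}\, Z^{s_i}_{N(f(i))\setminus\{i\}}\, M_i^{\alpha_i}$ of~\ref{eq_flow_pattern} contains the command $Z^{s_i}_j$, so $(i,j)$ is an edge of $G_Z$, i.e.\ there is a $Z$-correction from $i$ to $j$.

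I do not expect a genuine obstacle; the two points that need a word of care are (i) that the statement is about the canonical pattern~\ref{eq_flow_pattern} itself, \emph{before} any standardisation or signal shifting is applied, so the $Z$-correction commands are literally the ones displayed, and (ii) the implicit restriction $i \in O^C$, needed for $f(i)$ and $s_i$ to be meaningful. Both are immediate from the setup, and neither the injectivity of $f$ nor Lemma~\ref{lem_zpath_conn} is required, so the corollary follows at once.
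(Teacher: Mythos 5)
Your proposal is correct and follows exactly the paper's route: the corollary is read off directly from the canonical flow pattern in Equation~\ref{eq_flow_pattern}, whose only $Z$-correction commands depending on $s_i$ are the $Z^{s_i}_j$ with $j \in N(f(i)) \setminus \lbrace i \rbrace$, which is all the paper itself invokes when stating the corollary. Your extra remarks on working with the pattern before rewriting and on the implicit restriction $i \in O^C$ are sound and do not change the argument.
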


We define \emph{$Z$-dependency neighbourhood} of a vertex $j$ to be the set of vertices from which $j$ is receiving a $Z$-correction from. This set has an explicit form given as $N_Z(j) = \lbrace k \in O^C | f(k) \in N(j) \setminus \lbrace f(j) \rbrace \rbrace$, this is due to the following facts: for all vertices $k \in O^C$, from flow definition $f(k)$ exists also since $f(k) \notin \lbrace f(j) \rbrace$ hence  $k$ cannot be equal to $j$ and moreover since $f(k) \in N(j) \Rightarrow j \in N(f(k))$ therefore according to Corollary \ref{cor_zcorr} there exists a $Z$-correction from $k$ to $j$. It is easy to see, that $\zeta_i(j)$ can be written as:
\begin{align}
	\label{eq_paths}
	\zeta_i(j) = \sum\limits_{k \in N_Z(j)} \zeta_i(k)
\end{align}
There exists a $Z$-correction from every $k \in N_Z(j)$ to $j$.
These $Z$-corrections can be used to extend every such $Z$-path to $k$ to reach $j$.
If $i$ is in the sum, then because $\zeta_i(i) = 1$ the correct number of $Z$-paths is obtained with Equation \ref{eq_paths}.

We now present the complete algorithm (Algorithm \ref{alg_ss}) for signal shifting a flow pattern shown in Equation \ref{eq_flow_pattern}. We keep in mind that the order in which we apply the signal shifting rules does not matter \cite{DanosK06}.
\begin{algorithm} 
	\caption{$SignalShift$}
	\label{alg_ss}

	\KwIn{A measurement pattern $P$ with flow $(f, \prec_f)$ as defined in Equation \ref{eq_flow_pattern}.}
	
	\KwOut{The signal shifted pattern $P'$ of $P$.}
	
	\Begin
	{
		$B = O^C$\;
		$P' = P$\;
		\While{$B \neq \emptyset$}
		{
			select any vertex $i \in B$ which is smallest according to $\prec_f$\;
			$B = B \setminus \lbrace i \rbrace$\;
			\While{$\exists k \in B$ s.t. $Z_k^{s_i} \in P'$}
			{
				Move the $Z_k^{s_i}$ command next to the $M_k^{\alpha_k}$ command\;
				Use Rule \ref{eq_ss1} on $P'$ to create the signal command $S_k^{s_i}$\;
				\CommentSty{// Removes the $Z_k^{s_i}$ command from $P'$}\;
				Use Rule \ref{eq_ss2} on $P'$ to create a new $X_{f(k)}^{s_i}$ command\;
				\For{$j \in N(f(k)) \setminus \lbrace k \rbrace$}
				{
					Use Rule \ref{eq_ss3} on $P'$ to create a new $Z_j^{s_i}$ command.
				}
				Move $S_k^{s_i}$ to the end of the pattern and remove it.
			}
		}
	}
\end{algorithm}

\begin{proposition}
	\label{lem_alg_ss}
	Given the measurement pattern $P$ of a flow $(f, \prec_f)$ as defined in Equation \ref{eq_flow_pattern} as input to Algorithm \ref{alg_ss}, the output will be the signal shifted measurement pattern of $P$.
\end{proposition}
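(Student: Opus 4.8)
The plan is to reduce the statement to the confluence of the signal shifting calculus. The rewrite system given by Rules \ref{eq_ss1}--\ref{eq_ss3} is Church--Rosser — this is precisely the fact, recalled above and due to \cite{DanosK06,DanosKP07}, that the order in which the rules are applied is immaterial — so $P$ has a unique normal form under these rules. Since a $Z$-correction on any non-output vertex can always be commuted so as to sit immediately before its own measurement command and then be consumed by Rule \ref{eq_ss1}, this normal form is exactly the pattern reachable from $P$ in which no $Z$-correction acts on a non-output vertex, which is by definition the signal shifted pattern of $P$. It therefore suffices to prove three things: (i) every edit that Algorithm \ref{alg_ss} makes to $P'$ is the effect of Rules \ref{eq_ss1}--\ref{eq_ss3} together with semantics-preserving moves of the measurement calculus, namely commutation of a correction or a signal command past commands acting on disjoint qubits, the composition laws $Z_j^a Z_j^b = Z_j^{a \oplus b}$ and $X_j^a X_j^b = X_j^{a \oplus b}$ for corrections on a common qubit, and the deletion of a signal command that has been moved to the end of the pattern; (ii) the algorithm terminates; (iii) upon termination $P'$ has no $Z$-correction on a non-output vertex.

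For (i) I would read off the body of the inner \textbf{while} loop line by line. Bringing $Z_k^{s_i}$ next to $M_k^{\alpha_k}$ uses only commutations past commands on qubits other than $k$ together with the fusion of $Z$-corrections on $k$; afterwards the measurement command on $k$ reads $_{s_i}[M_k^{\alpha_k}]$, so the next line is exactly the instance of Rule \ref{eq_ss1} that produces the signal command $S_k^{s_i}$. Because the outer loop processes vertices in $\prec_f$-increasing order and the loop body only ever adjoins signals of the form $s_i$ (never $s_k$) to dependencies, at the moment $S_k^{s_i}$ is pushed rightward the corrections whose dependency contains $s_k$ are, by Corollary \ref{cor_zcorr} applied to $P$, exactly the $X$-correction on $f(k)$ and the $Z$-corrections on the vertices $j \in N(f(k)) \setminus \{k\}$; hence applying Rules \ref{eq_ss2} and \ref{eq_ss3} as the signal passes them adjoins $s_i$ to the dependency of $X_{f(k)}$ and of each such $Z_j$, which is exactly what the algorithm records as creating the new commands $X_{f(k)}^{s_i}$ and $Z_j^{s_i}$. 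Past every other command $S_k^{s_i}$ simply commutes (the substitution in Rules \ref{eq_ss2}--\ref{eq_ss3} is vacuous when $s_k$ does not occur), so moving it to the end and deleting it is sound, since there it relabels nothing.

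For (ii) and (iii) I would lean on the structure of the $Z$-dependency graph $G_Z$. By the argument used in the proof of Lemma \ref{lem_zpath_conn}, every directed edge $k \to j$ of $G_Z$ satisfies $k \prec_f j$, so $G_Z$ is a finite directed acyclic graph; the propagation of the signal originating at a fixed outer-loop vertex $i$ through the inner loop is a token process on $G_Z$ in which a token at $k$ (an occurrence of $Z_k^{s_i}$) is consumed and deposits tokens only at the $G_Z$-children of $k$, so the number of passes of the inner loop for this $i$ equals $\sum_k \zeta_i(k)$, which is finite, and the outer loop runs over the finite set $O^C$; hence the algorithm halts. On exit of the inner loop for $i$ its guard guarantees that no $Z_k^{s_i}$ with $k$ non-output remains, and because no later iteration ever creates a correction with content $s_i$, this persists. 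Since every $Z$-correction in a pattern obtained from $P$ depends on the outcome of some measured — hence non-output — vertex, after the outer loop exhausts $O^C$ the pattern $P'$ has no $Z$-correction on any non-output vertex. Combined with (i), $P'$ is reachable from $P$ by signal shifting rules and is in normal form, so by confluence it is the signal shifted pattern of $P$.

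The step I expect to be the main obstacle is the careful justification inside (i): one must check that the ``move next to $M_k$'' and ``move to the end'' commutations really are available in the measurement calculus without altering the computation even when the relevant qubit already carries other $X$- or $Z$-corrections, and that the interleaving in the inner loop of consuming one occurrence of $Z_k^{s_i}$ while simultaneously creating others — possibly several times at the same target, cancelling in pairs — never leaves a stray $Z$-correction on a non-output vertex. Everything else, termination and the identification of the output with the normal form, is then routine, with confluence quoted rather than reproved.
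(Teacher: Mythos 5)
Your proposal is correct and takes essentially the same route as the paper's proof: you check that every edit made by Algorithm \ref{alg_ss} is a legitimate application of Rules \ref{eq_ss1}--\ref{eq_ss3} together with trivial commutations (using Corollary \ref{cor_zcorr} and the fact that no earlier iteration can have created dependencies on $s_k$ since $k$ is still in $B$), that the algorithm terminates because acyclicity of the $Z$-dependency graph under flow bounds the number of inner-loop passes, and that the output carries no $Z$-correction on a non-output qubit and no residual signal commands, hence no rule applies. The only cosmetic difference is that you invoke confluence explicitly to identify the output with the unique normal form, which the paper handles by its cited remark that the order of applying the signal shifting rules is immaterial.
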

\begin{proof}
	We will prove this proposition by showing that:
	\begin{itemize}
		\item Algorithm \ref{alg_ss} terminates.
		\item Every step in Algorithm \ref{alg_ss} that modifies the pattern $P'$ is a valid application of a signal shifting rewrite rule.
		\item The output of Algorithm \ref{alg_ss}, the pattern $P'$, is signal shifted.
	\end{itemize}

	We begin by showing that Algorithm \ref{alg_ss} will terminate.
	The first ``while'' loop will obviously terminate, as we decrease the number of elements on each loop iteration and never add anything to the set $B$.
	The second ``while'' loop will not terminate only if some $Z_k^{s_i}$ command will be added to the pattern an infinite number of times.
	As the underlying graph is finite and a $Z_k^{s_i}$ command represents a directed edge in the $Z$-correction graph, this implies the existence of a cycle in the graph, however this is impossible according to the flow definition.
	The ``for'' loop in the algorithm terminates because the graph itself is finite, hence Algorithm \ref{alg_ss} has to terminate.
	
	For Algorithm \ref{alg_ss} to actually perform the signal shifting, its operations have to be either trivial commuting rules or the three signal shifting Rules \ref{eq_ss1} - \ref{eq_ss2}.
	As can be easily seen from the algorithm, the operations done are indeed the signal shifting rewrite Rules \ref{eq_ss1} - \ref{eq_ss2}.
	We still need to prove, that these rules can be applied in the order shown in the algorithm. Obviously we can use Rule \ref{eq_ss1} on line 8 to create the signal command due to the fact that $k \in B \subseteq O^C$ and that every non-output qubit is measured.
	Hence we have the measurement required for the creation of the signal command in the pattern. We know that $Z^{s_i}_k$ has to be in the pattern after the command $M_i^{\alpha_i}$ and before $M_k^{\alpha_k}$.
	The entanglement and creation commands are the first commands in the pattern and we do not need to move the $Z_k^{s_i}$ command past them.
	Hence we only need to move $Z_k^{s_i}$ past measurement commands on qubits that are not $i$ and $k$ and other correction commands.
	These can be done trivially and hence we can always move the $Z_k^{s_i}$ command next to $M_k^{\alpha_k}$ to apply Rule \ref{eq_ss1}.
	
	Next we want to move the newly created $S_k^i$ command to the end of the measurement pattern.
	To do that we need to commute it past the commands that appear after it.
	The only commands $S_k^i$ commutes non-trivially with are the ones that depend on the measurement of qubit $k$ as can be seen from Rules \ref{eq_ss1} - \ref{eq_ss3}.
	Those are the $X$- and $Z$-corrections depending on the measurement outcome of qubit $k$.
	According to Equation \ref{eq_flow_pattern} there is exactly one such $X$-correction in the pattern $P$, namely $X_{f(k)}^k$.
	Also the previous steps of the algorithm could not have created any dependencies from qubit $k$. The $Z$-correction commands have only been created depending on vertices that we already moved from $B$.
	Therefore we need to create exactly one new $X$-correction command using Rule \ref{eq_ss2}.
	We also look at the $Z$-corrections depending on $k$ and from Equation \ref{eq_flow_pattern} we see that in the original pattern these are on vertices from the set $N(f(k)) \setminus \lbrace k \rbrace$.
	As for the $X$-corrections we also have not created any new $Z$-corrections from $k$ in the previous steps of the algorithm.
	Hence this is exactly the set of corrections we need to commute with and apply Rule \ref{eq_ss3}. We are only left with commands after $S_k^i$ in the pattern that commute trivially with $S_k^i$. We can move the command at the end of the pattern.
	The signal command at the end of the pattern does not influence the computation and we will not add any new commands to the end of the pattern. Hence we can remove the $S_i^k$ command from the pattern.

	Finally we show that no more signal shifting rules can be applied after the completion of Algorithm \ref{alg_ss}, \emph{i.e.} the pattern $P'$ is signal shifted.
	We eliminate all $Z$-corrections acting on a non-output qubit depending on a vertex $i$ after removing it from the set $B$ and will afterwards never create any new $Z$-corrections depending on that vertex.
	At the end of the algorithm the set $B$ is empty, hence there cannot exist any non-output qubit that has a $Z$-correction acting on it and Rule \ref{eq_ss1} cannot be applied anymore. Moreover, since every signal command is at the end of the pattern, we cannot apply the Rules \ref{eq_ss2} and \ref{eq_ss3} neither, that completes the proof.
\end{proof}

We consider any trivial commutation of a pattern commands resulting to an equivalent pattern. Therefore the above algorithm defines the unique pattern obtained after signal shifting. Note that Algorithm \ref{alg_ss} works almost like a directed graph traversal, where there is a directed edge from vertex $i$ to $k$ iff there exists the command $Z_k^{s_i}$ in the measurement pattern.
The only difference from a classical directed graph traversal is that we  allow visiting of a vertex more than only once.
Hence we will traverse through every different path in the graph however we do that exactly once.

As mentioned before, the evenness of the number of $Z$-paths can be used to determine if a signal is added to a correction command.
Let $parity(n)$ be the function that determines the oddness or evenness of the integer $n$, \emph{i.e.} $parity(n) = n \mod 2$.
Then if an open graph has a flow, the oddness of $\zeta_i(j)$ can be found as described in the following lemma.

\begin{lemma}
	\label{lem_oddness}
	For every two vertices $i$ and $j$ in an open graph $(G, I, O)$ with flow $(f, \prec_f)$
	\begin{align*}
		parity(\zeta_i(j)) = | k \in \lbrace N_Z(j) | parity(\zeta_i(k)) = 1 \rbrace | \mod 2
	\end{align*}
	\emph{i.e.} $parity(\zeta_i(j))$ depends only on the number of vertices in the $Z$-dependency neighbourhood which have odd number of $Z$-paths from $i$.
\end{lemma}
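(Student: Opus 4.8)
The plan is to obtain the claimed identity directly from the counting recurrence Equation~\ref{eq_paths} by reducing it modulo~$2$. Essentially all the content is already contained in that equation, so the proof is short; the only points that deserve care are that the recurrence defining $\zeta_i$ is well founded (so that $\zeta_i(j)$ is an honest natural number and Equation~\ref{eq_paths} can be unfolded) and that the passage to $\mathbb{Z}/2\mathbb{Z}$ is carried out cleanly.

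First I would verify well-foundedness. Suppose $k \in N_Z(j)$, i.e. $f(k) \in N(j) \setminus \{ f(j) \}$. Then $f(k) \neq f(j)$, which forces $k \neq j$ because $f$ is a function, and $f(k) \in N(j)$ gives $j \in N(f(k))$ since $G$ is undirected; flow property (F2) of Definition~\ref{def_flow} then yields $k \prec_f j$. Hence every index occurring in the right-hand sum of Equation~\ref{eq_paths} is strictly below $j$ with respect to $\prec_f$, so unfolding Equation~\ref{eq_paths} terminates at the base case $\zeta_i(i) = 1$ and $\zeta_i(j)$ is a well-defined natural number --- namely the number of directed $Z$-paths from $i$ to $j$ --- consistent with Corollary~\ref{cor_zcorr} and the discussion preceding Equation~\ref{eq_paths}.

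Then I would apply the additive reduction $\mathbb{Z} \to \mathbb{Z}/2\mathbb{Z}$ to Equation~\ref{eq_paths}, obtaining $\zeta_i(j) \bmod 2 = \big( \sum_{k \in N_Z(j)} ( \zeta_i(k) \bmod 2 ) \big) \bmod 2$. Each summand $\zeta_i(k) \bmod 2$ is $0$ or $1$ and equals $1$ precisely when $parity(\zeta_i(k)) = 1$, so the integer $\sum_{k \in N_Z(j)} (\zeta_i(k) \bmod 2)$ equals the cardinality $|\{ k \in N_Z(j) : parity(\zeta_i(k)) = 1 \}|$; substituting back gives the lemma, with the degenerate case $j = i$ covered separately by the convention $\zeta_i(i) = 1$. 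I do not expect any genuine obstacle here: the one mild subtlety is the well-foundedness coming from (F2), and the rest is the bookkeeping observation that the parity of a finite sum of non-negative integers is the parity of the number of odd summands. I would therefore present the lemma as an immediate corollary of Equation~\ref{eq_paths} rather than via a separate induction.
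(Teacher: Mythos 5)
Your proof is correct and follows essentially the same route as the paper: the paper's argument is exactly the reduction of Equation \ref{eq_paths} modulo $2$, rewriting the sum of parities as the cardinality of the set of vertices $k \in N_Z(j)$ with $parity(\zeta_i(k)) = 1$. Your additional remarks on well-foundedness of the recurrence (via flow property (F2)) and on the base case $\zeta_i(i)=1$ are sound but do not change the substance of the argument.
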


\begin{proof}
	The oddness of $\zeta_i(j)$ can be written as
	\begin{align*}
		parity(\zeta_i(j)) &= \left(\sum\limits_{k \in N_Z(j)} \zeta_i(k)\right) \mod 2 = \\
		&= \sum\limits_{k \in N_Z(j)} \left( \zeta_i(k) \mod 2 \right) \mod 2 = \\
		&= \sum\limits_{\lbrace k \in N_Z(j) | 1 = \zeta_i(k) \mod 2 \rbrace} \left( \zeta_i(k) \mod 2 \right) \mod 2 = \\
		&= | \lbrace k \in N_Z(j) | 1 = \zeta_i(k) \mod 2 \rbrace | \mod 2 = \\
		&= | \lbrace k \in N_Z(j) | parity(\zeta_i(k)) = 1 \rbrace | \mod 2
	\end{align*}
\end{proof}

All these notions will allow us to define the structure of the pattern after signal shifting is being performed. 
\begin{proposition}
	\label{prop_ssf}
	Given a  flow $(f, \prec_f)$ on an open graph $(G, I, O)$, let $s$ be a function from $O^C \mapsto P^{I^C}$ such that $j \in s(i)$ iff $parity(\zeta_i(f^{-1}(j))) = 1$.
	Also define $L_s$ to be a layering function from $V(G)$ into a natural number:
	\begin{align*}	
		L_s(i) &= 0 &\forall i \in O \\
		L_s(i) &= \max_{j \in s(i)}(L_s(j) + 1) &\forall i \notin O
	\end{align*}
	Define the strict partial order $\prec_s$ with:
	\begin{align*}
		i \prec_s j \quad \Leftrightarrow \quad L_s(i) > L_s(j)
	\end{align*}
	Then, the application of signal shifting Rules \ref{eq_ss1} - \ref{eq_ss3} over an MBQC pattern with flow $(f, \prec_f)$ will lead to the following pattern:
	\begin{align}
		\label{eq_ssf_pattern}
		P = \prod_{j \in O, i \in I^C} Z_j^{s_i . parity(\zeta_i(j))}
		\prod\limits_{i \in O^C}^{\prec_s}
		\left(
			X^{s_i}_{s(i)} M_i^{\alpha_i}
		\right)
		E_G
		N_{I^C}
	\end{align}
\end{proposition}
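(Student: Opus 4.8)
The plan is to run Algorithm~\ref{alg_ss} on the flow pattern \eqref{eq_flow_pattern} and simply read off which corrections remain, invoking Proposition~\ref{lem_alg_ss} (the algorithm does compute the signal-shifted pattern), the $Z$-path bookkeeping of Definition~\ref{def_zpath} and Corollary~\ref{cor_zcorr}, and the remark after Proposition~\ref{lem_alg_ss} that the algorithm realises every $Z$-path exactly once. The one structural fact that organises everything is that a rewrite performed while the outer loop of Algorithm~\ref{alg_ss} processes a vertex $i$ only ever alters a command whose dependency contains $s_i$; hence the corrections depending on $s_i$ in the final pattern are determined entirely by the $i$-th outer iteration and can be analysed independently for each measured vertex $i$.

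First I would settle the $Z$-corrections. Because the outer loop exhausts $B=O^{C}$ and its inner loop deletes every $Z$-command on a still-unprocessed non-output vertex (and, as shown in the proof of Proposition~\ref{lem_alg_ss}, no later step can recreate one), no $Z$-correction on a non-output vertex survives. For an output vertex $j$, each $Z$-path from $i$ to $j$ contributes one copy of the command $Z_j^{s_i}$: Corollary~\ref{cor_zcorr} identifies a $Z$-correction $a\to b$ with the relation $b\in N(f(a))\setminus\{a\}$, so prolonging a $Z$-path $i\to\cdots\to a$ by one step produces $Z_b^{s_i}$, and since these copies add modulo two and the traversal visits each path once, the surviving weight of $Z_j^{s_i}$ is $\zeta_i(j)\bmod 2=parity(\zeta_i(j))$, giving the $Z$-corrections $Z_j^{s_i\cdot parity(\zeta_i(j))}$ on the output vertices as in \eqref{eq_ssf_pattern}. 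The $X$-corrections are counted the same way. In \eqref{eq_flow_pattern} the only qubit $X$-dependent on a measured vertex $k$ is $f(k)$, so (cf.\ the discussion after Lemma~\ref{lem_zpath_conn}) after signal shifting the command $X_j^{s_i}$ has net weight equal, modulo two, to the number of $Z$-paths from $i$ to $f^{-1}(j)$ --- the trivial path accounting, via the convention $\zeta_i(i)=1$, for the correction $X_{f(i)}^{s_i}$ already present in \eqref{eq_flow_pattern}. Hence $X_j^{s_i}$ survives iff $parity(\zeta_i(f^{-1}(j)))=1$, i.e.\ iff $j\in s(i)$; in particular $j\in s(i)$ requires $f^{-1}(j)$ to exist, so $s(i)\subseteq I^{C}$ as claimed, and we obtain the factor $\prod_{i\in O^{C}}^{\prec_s}\bigl(X^{s_i}_{s(i)}M_i^{\alpha_i}\bigr)$ up to the order of the blocks.

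It remains to justify $L_s$, $\prec_s$ and that order. The key inequality is $j\in s(i)\Rightarrow i\prec_f j$: $\zeta_i(f^{-1}(j))\ge 1$ yields a $Z$-path $i\to\cdots\to f^{-1}(j)$, each edge $a\to b$ of which satisfies $b\in N(f(a))\setminus\{a\}$, hence $a\prec_f b$ by (F2), so $i\preceq_f f^{-1}(j)$, and (F1) gives $f^{-1}(j)\prec_f j$. Moreover $f(i)\in s(i)$ for every $i\in O^{C}$ (again $\zeta_i(i)=1$), so every $s(i)$ is non-empty; as the recursion for $L_s$ refers only to vertices strictly $\prec_f$-above $i$ in a finite graph, $L_s$ is well-defined, and $\prec_s$ --- the preimage of $>$ on $\mathbb{N}$ under $L_s$ --- is a strict partial order. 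Finally, the only non-trivial ordering constraint on the shifted pattern is that $M_i$ precede every correction depending on $s_i$; such a correction is either an $X$-correction on some $j\in s(i)$, for which $L_s(j)<L_s(i)$ hence $i\prec_s j$, or a $Z$-correction on an output vertex, which can be pushed past all measurements to the far left of the pattern --- both compatible with $\prec_s$. Commuting the standard-form output of Algorithm~\ref{alg_ss} into $\prec_s$ order therefore produces exactly \eqref{eq_ssf_pattern}.

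The step I expect to be the main obstacle is the identity ``net weight of a correction $=$ number of the corresponding $Z$-paths $\pmod{2}$'': the algorithm processes $Z$-commands one at a time, and a command can be created, cancelled, and created again in the course of a single outer iteration, so turning the informal ``each path is traversed once'' into a clean argument --- e.g.\ an induction over the layers of the acyclic $Z$-correction graph, or over $\prec_f$, together with Equation~\eqref{eq_paths} and Lemma~\ref{lem_oddness} --- is where the real care is needed. Everything after that is bookkeeping against \eqref{eq_flow_pattern} and the flow axioms (F1)--(F3).
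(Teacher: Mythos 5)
Your proposal is correct and follows essentially the same route as the paper's own proof: it treats Algorithm \ref{alg_ss} as a once-per-path traversal of the $Z$-correction graph to conclude that the surviving $X$- and $Z$-corrections are exactly those with odd $\zeta$-parity, and then separately verifies well-definedness of $L_s$ and validity of the $\prec_s$ ordering, which is the paper's three-part structure. The only minor divergences are that you justify well-definedness of $L_s$ directly from $j\in s(i)\Rightarrow i\prec_f j$ plus finiteness (the paper instead argues via validity of the signal-shifted pattern and the fact that $f(k)\in s(k)$ extends any $X$-correction path to an output), and the ``each distinct $Z$-path is traversed exactly once'' step you flag as the delicate point is treated at the same level of informality in the paper itself (the remark following Proposition \ref{lem_alg_ss}).
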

\begin{proof}
	The proof is divided into three parts.
	First we will show that signal shifting creates exactly the pattern commands shown in Equation \ref{eq_ssf_pattern}.
	We proceed by showing, that the layering function $L_s$ is defined for every $i \in V(G)$.
	Lastly, we need to prove that using the partial order $\prec_s$ derived from $L_s$ for ordering the commands as in Equation \ref{eq_ssf_pattern} gives a valid measurement pattern.
	
	Note that the preparation commands ($N_I^C$), entanglement commands ($E_G$) and measurement commands ($M_i^{\alpha_i}$) are the same for Equations \ref{eq_flow_pattern} and \ref{eq_ssf_pattern}.
	Because signal shifting would not change these commands (Rules \ref{eq_ss1} - \ref{eq_ss3}) these are as required for a signal shifted pattern.
	Hence we need only to consider the correction commands.
	
	We will look at the correction commands that would appear in a signal shifted pattern.
	We do this by examining the signal shifting algorithm (Algorithm \ref{alg_ss}).
	As mentioned before, the algorithm works as a directed graph traversal, in a way that every distinct path is traversed o.
	As seen in the algorithm every $Z_k^{s_i}$ correction acting on a non-output qubit is removed from the pattern.
	This is in accordance with the proposed pattern  in Equation \ref{eq_ssf_pattern}.
	Let us examine which new corrections are created.
	
	The number of newly created $X_j^{s_i}$ depends on the number of times we enter the first loop with command $Z^{s_i}_{f^{-1}(j)}$.
	As the algorithm is a directed graph traversal algorithm, this happens as many times as there are different paths over the $Z$-dependency graph from $i$ to $f^{-1}(j)$.
	Because the same two $X_l^{s_i}$ corrections cancel each other, hence a new $X$-correction appears in a signal shifted pattern only if $parity(\zeta_i(f^{-1}(j))) = 1$. 
	We also note that no new $X_(f(i))^{s_i}$ correction is created since there exist no $Z$-path between $i$ and $f^{-1}(f(i))$. On the other hand Algorithm \ref{alg_ss} leaves the already existed $X$ corrections unchanged and moreover since we have defined $\zeta_i(i) = 1$ therefore $f(i) \in s(i)$. This implies that the set $s(i)$ does indeed contain all the vertices that have an $X$-correction depending on $s_i$ after signal shifting is performed.
	
	The number of newly created $Z$-corrections on an output vertex $j$ depending on a vertex $i$ appearing in the signal shifted pattern is equal to the number of different paths from $i$ to $j$.
	The difference with non-output qubits is that these will not be removed through the process of signal shifting.
	As with $X$-corrections, two $Z$-correction commands on the same qubit will cancel each other out and hence the existence of a $Z^{s_i}_j$ in the final pattern depends on the parity of the number of paths from $i$ to $j$.
	This can be written in short as:
	\begin{align*}
		Z_j^{s_i \cdot parity(\zeta_i(j))}
	\end{align*}
	Hence the measurement pattern in Equation \ref{eq_ssf_pattern} has exactly the same commands as the signal shifted pattern in Equation \ref{eq_flow_pattern}.
	
	Another thing we need to proof is that the layering function $L_s$ is defined for every $i \in V(G)$.
	As proven above, the $X$-corrections depending on the measurement of qubit $v$ correspond to the set $s(v)$.
	Hence we can interpret the definition of $L_s(v)$ as finding the maximum value of $L_s$ for every vertex that has an $X$-correction from $v$ and adding $1$ to it.
	The recursive definition of $L_s(v)$ is well defined, if for every non-output qubit we can find a path over $X$-corrections ending at an output qubit. We know that signal shifting of a valid pattern creates another valid pattern.
	This implies that the $X$-corrections cannot create a cyclic dependency structure and hence every path over the $X$-corrections has an endpoint. Moreover such a path cannot end on a non-output qubit $k$ since $f(k) \in s(k)$ and one could always extend that path with $f(k)$. Therefore $L_s(v)$ is well defined.

	Finally, it is easy to show that the partial order $\prec_s$ as used in Equation \ref{eq_ssf_pattern} gives a valid ordering of the commands.
	Every vertex $j$ that has an $X$-correction depending on the measurement of qubit $i$ has a smaller $L_s$ number and hence $i \prec_s j$.
	This way no $X$-correction command acts on an already measured qubit and because the $Z$-corrections are applied only on output qubits, the correction ordering is valid. Every other command is applied before the measurement command and hence the pattern in Equation \ref{eq_ssf_pattern} is a valid measurement pattern.
\end{proof} 

Given an open graph with a flow, we refer to the construction of the above proposition as its corresponding \emph{signal shifted flow} (SSF).
The main theorem of this section states that every SSF is actually a special case of a gflow. 

\begin{corollary}
	\label{cor_neighbour}
	If $(G, I, O)$ is an open graph with flow $(f, \prec_f)$ and SSF $(s, \prec_s)$ then for every vertex $i$ and $j$ such that $f(j) \in s(i) \setminus \lbrace f(i) \rbrace$, we can find another vertex $k$, such that $f(k) \in s(i) \cap N(j)$.
\end{corollary}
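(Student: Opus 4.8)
The plan is to unwind the definition of the SSF map $s$ from Proposition \ref{prop_ssf} and then appeal to Lemma \ref{lem_oddness}. First I would extract from the hypothesis $f(j) \in s(i) \setminus \{f(i)\}$ two basic facts: since $f$ is injective, $f(j) \neq f(i)$ forces $j \neq i$; and, by the definition of $s$, $f(j) \in s(i)$ is equivalent to $parity(\zeta_i(f^{-1}(f(j)))) = parity(\zeta_i(j)) = 1$. So the claim reduces to the following: if $j \neq i$ and $parity(\zeta_i(j)) = 1$, then there is a vertex $k$ with $parity(\zeta_i(k)) = 1$ and $f(k) \in N(j)$.

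To produce such a $k$, I would apply Lemma \ref{lem_oddness}, which (using $j \neq i$) gives $parity(\zeta_i(j)) = |\{k \in N_Z(j) \mid parity(\zeta_i(k)) = 1\}| \mod 2$. Since the left-hand side equals $1$, the set $\{k \in N_Z(j) \mid parity(\zeta_i(k)) = 1\}$ has odd cardinality, hence is non-empty; pick any $k$ in it. It then remains to read off the conclusion from the explicit form of the $Z$-dependency neighbourhood, $N_Z(j) = \{k \in O^C \mid f(k) \in N(j) \setminus \{f(j)\}\}$: membership $k \in N_Z(j)$ shows $k \in O^C$ (so $f(k)$ is defined and $f^{-1}(f(k)) = k$ by injectivity) and $f(k) \in N(j) \setminus \{f(j)\} \subseteq N(j)$. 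Combined with $parity(\zeta_i(k)) = 1$, the definition of $s$ gives $f(k) \in s(i)$, and therefore $f(k) \in s(i) \cap N(j)$, the required vertex; note also $f(k) \neq f(j)$, so $k \neq j$.

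I do not anticipate a real obstacle: the whole argument is a short piece of definition-chasing plus the single application of Lemma \ref{lem_oddness}. The points to be careful about are keeping the injectivity of $f$ in play whenever passing between a vertex and its image under $f$ (the definition of $s(i)$ is phrased via $f^{-1}$), and noting that the hypothesis already excludes $j = i$ — which is exactly what makes the use of Lemma \ref{lem_oddness} unambiguous, since for $j = i$ the convention $\zeta_i(i) = 1$ sits outside the recursion of Equation \ref{eq_paths}.
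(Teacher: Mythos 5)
Your proposal is correct and follows essentially the same route as the paper: unwind $f(j)\in s(i)$ via Proposition \ref{prop_ssf} to get $parity(\zeta_i(j))=1$, use $j\neq i$ together with Lemma \ref{lem_oddness} to extract a vertex $k\in N_Z(j)$ with $parity(\zeta_i(k))=1$, and then read off $f(k)\in N(j)$ and $f(k)\in s(i)$ from the $Z$-dependency neighbourhood and the definition of $s$. Your extra remarks (injectivity of $f$ giving $j\neq i$, and why $j\neq i$ matters for applying Lemma \ref{lem_oddness}) only make the same argument slightly more explicit than the paper's.
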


\begin{proof}
	If $f(j) \in s(i)$, then from the Proposition \ref{prop_ssf} of SSF we can conclude that $parity(\zeta_i(j)) = 1$.
	We know that $j \neq i$ from the assumptions.
	Lemma \ref{lem_oddness} says, that there must exist at least one other vertex $k$ from which $j$ has a $Z$-correction, such that $parity(\zeta_i(k)) = 1$.
	The flow definition says that $j$ must therefore be a neighbour of $f(k)$.
	Definition \ref{prop_ssf} of SSF states that $f(k)$ must therefore be in $s(i)$, hence $f(k) \in s(i) \cap N(j)$.
\end{proof}

\begin{theorem}
	\label{theorem_ssfisgflow}
	Given any open graph $(G, I, O)$ with flow $(f, \prec_f)$, the corresponding signal shifted flow $(s, \prec_s)$ is a gflow.
\end{theorem}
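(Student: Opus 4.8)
The plan is to verify directly the three defining properties (G1)--(G3) of Definition \ref{def_gflow} for the pair $(s,\prec_s)$ produced by Proposition \ref{prop_ssf}. Property (G1) is immediate: if $j\in s(i)$ then, by the defining recursion of the layering function, $L_s(i)=\max_{j'\in s(i)}(L_s(j')+1)\ge L_s(j)+1>L_s(j)$, and since $\prec_s$ orders vertices by strictly decreasing $L_s$ this is exactly $i\prec_s j$. Hence all the real content lies in (G2) and (G3), which both concern $Odd(s(i))$.

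For those I would first compute $N(v)\cap s(i)$ for an arbitrary vertex $v$. Since $f$ is injective and $s(i)=\{f(w):w\in O^C,\ parity(\zeta_i(w))=1\}$ by Proposition \ref{prop_ssf}, we get $|N(v)\cap s(i)|=|\{w\in O^C: f(w)\in N(v),\ parity(\zeta_i(w))=1\}|$. For $v\in O^C$ the preimage set $\{w\in O^C: f(w)\in N(v)\}$ decomposes as the disjoint union $N_Z(v)\sqcup\{v\}$ --- the element $v$ is present because $f(v)\in N(v)$ by (F3), and the rest is exactly $N_Z(v)$ by definition --- so $|N(v)\cap s(i)|=|\{w\in N_Z(v): parity(\zeta_i(w))=1\}|+[\,parity(\zeta_i(v))=1\,]$. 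The argument then splits on whether $v=i$. If $v\neq i$, Lemma \ref{lem_oddness} tells us the first summand is congruent mod $2$ to $parity(\zeta_i(v))$, which cancels the Iverson term, so $|N(v)\cap s(i)|$ is even and $v\notin Odd(s(i))$. If $v=i$, Lemma \ref{lem_oddness} no longer applies (the convention $\zeta_i(i)=1$ breaks its recursion), so I would instead note that every $w\in N_Z(i)$ satisfies $f(w)\in N(i)$, i.e. $i$ is connected to $f(w)$; by the contrapositive of Lemma \ref{lem_zpath_conn} there is then no $Z$-path from $i$ to $w$, whence $\zeta_i(w)=0$. Combined with $\zeta_i(i)=1$ this gives $|N(i)\cap s(i)|=1$, proving $i\in Odd(s(i))$, which is (G3). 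The two cases together show $Odd(s(i))\cap O^C=\{i\}$, i.e. $Odd(s(i))\setminus\{i\}\subseteq O$.

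Property (G2) then drops out: if $j\in Odd(s(i))$ with $j\neq i$ then $j\in O$, so $L_s(j)=0$, while $i\in O^C$ together with $f(i)\in s(i)$ forces $L_s(i)\ge 1$, hence $i\prec_s j$. (Alternatively one could take the more local route through Corollary \ref{cor_neighbour}, which exhibits a neighbour of $j$ lying in $s(i)$, but the layering argument is shorter.) I expect the one genuinely delicate point to be the $v=i$ case: one must handle the $\zeta_i(i)=1$ convention separately and be careful to invoke Lemma \ref{lem_oddness} only for $j\neq i$; everything else is routine parity bookkeeping.
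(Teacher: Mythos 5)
Your proposal is correct and follows essentially the same route as the paper: property (G1) from the layering, and the stronger statement $Odd(s(i))\cap O^C=\{i\}$ established by parity bookkeeping over $N_Z(v)$ via Lemma \ref{lem_oddness} (the paper's Lemma \ref{lem_evenconnections}) together with the non-adjacency along $Z$-paths for the $v=i$ case (Lemma \ref{lem_zpath_conn}, which is exactly the content of the paper's Lemma \ref{lem_oddi}). Your single count with the decomposition $\{w\in O^C: f(w)\in N(v)\}=N_Z(v)\sqcup\{v\}$ is just a compact repackaging of the paper's two lemmas, and your care in not invoking Lemma \ref{lem_oddness} at $v=i$ (because of the convention $\zeta_i(i)=1$) is exactly the right precaution.
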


The proof is based on the following lemmas, demonstrating that $s$ is a gflow by satisfying all the properties of Definition \ref{def_gflow}.
The first property of gflow (property G1) is satisfied by SFF implicitly from Definition \ref{prop_ssf}, \emph{i.e.} for every $i \in V(G)$ it holds that $i \prec_s j$ if $j \in s(i)$.
Consider the second gflow property (G2), \emph{i.e.} if $j \in Odd(g(i))$ then $j = i$ or $i \prec_s j$.
We will show that every vertex with odd many connections to $s(i)$ has to be either $i$ itself or an output qubit.
This is a stronger condition than is needed to show G2, but as shown in Section \ref{sec_compactification}, necessary for creating compact circuits from SSF.

\begin{lemma}\label{lem_evenconnections}
	If $(s,\prec_s)$ is an SSF then every non-output vertex $v \neq i$ connected to $s(i)$ has an even number of connections to $s(i)$, \textit{i.e.},
	\begin{align*}
		\forall v \in N(s(i)) \setminus O \quad \land \quad v \neq i \quad \Rightarrow \quad v \notin Odd(s(i))
	\end{align*}
\end{lemma}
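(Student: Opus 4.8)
The plan is to proceed by strong induction on the layering function $L_s$ of the flow-pattern, exploiting the recursive structure that Proposition \ref{prop_ssf} gives for the sets $s(i)$ via the $Z$-path counts $\zeta_i(\cdot)$, together with Corollary \ref{cor_zcorr} and Corollary \ref{cor_neighbour}. Fix $i \in O^C$ and a non-output vertex $v \neq i$ that is connected to $s(i)$; I want to count $|N(v) \cap s(i)|$ modulo $2$ and show it is even. The key algebraic fact is that $j \in s(i)$ is equivalent to $parity(\zeta_i(f^{-1}(j))) = 1$, so $N(v) \cap s(i)$ is in bijection (via $f^{-1}$) with the set of vertices $k$ such that $f(k) \in N(v)$ and $parity(\zeta_i(k)) = 1$. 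Since $v$ is a non-output vertex it has its own flow image $f(v)$, and $v \in N(f(v))$ by (F3); the vertex $k = v$ itself contributes to this set precisely when $parity(\zeta_i(v)) = 1$, i.e. precisely when $f(v) \in s(i)$.

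First I would set up the counting identity carefully: write $|N(v) \cap s(i)| = |\{k \in O^C : f(k) \in N(v),\ parity(\zeta_i(k)) = 1\}|$, and split the right-hand side according to whether $f(v) \in N(v) \cap s(i)$ or not. By definition of the $Z$-dependency neighbourhood $N_Z(v) = \{k \in O^C : f(k) \in N(v) \setminus \{f(v)\}\}$, the vertices $k$ with $f(k) \in N(v)$ are exactly $N_Z(v)$ together with possibly $v$ itself (when $v \in N(f(v)) \setminus \{v\}$ is impossible, so the only way $f(v) \in N(v)$ matters is through $k=v$, using $v \in N(f(v))$). Hence modulo $2$,
\[
|N(v) \cap s(i)| \equiv |\{k \in N_Z(v) : parity(\zeta_i(k)) = 1\}| + [parity(\zeta_i(v)) = 1] \pmod 2.
\]
Now Lemma \ref{lem_oddness} tells me that $|\{k \in N_Z(v) : parity(\zeta_i(k)) = 1\}| \equiv parity(\zeta_i(v)) \pmod 2$. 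Substituting, the two terms on the right cancel modulo $2$, giving $|N(v) \cap s(i)| \equiv 0$, i.e. $v \notin Odd(s(i))$. Crucially Lemma \ref{lem_oddness} is stated for \emph{every} two vertices $i$ and $j$, so it applies directly here with $j = v$, and this is in fact what makes the argument work without an explicit induction after all.

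The main obstacle I anticipate is the bookkeeping around whether $f(v)$ itself lies in $N(v)$ and whether it lies in $s(i)$ — one must be careful that the set $\{k : f(k) \in N(v)\}$ decomposes cleanly as $N_Z(v) \sqcup \{v\}$ when $v \in N(f(v))$, and equals $N_Z(v)$ otherwise, and that the indicator term $[f(v) \in s(i)] = [parity(\zeta_i(v)) = 1]$ is exactly the ``correction term'' that Lemma \ref{lem_oddness} produces. A secondary subtlety is the edge case $v = f(i)$ (or more generally $f(v) = f(i)$, which forces $v = i$ by injectivity of $f$ and is excluded by hypothesis): here $f(i) \in s(i)$ holds by the convention $\zeta_i(i) = 1$, and one checks the parity identity still goes through since $\zeta_i(i) = 1$ is exactly the base case baked into Equation \ref{eq_paths}. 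Once those case distinctions are pinned down, the cancellation modulo $2$ is immediate from Lemma \ref{lem_oddness}, and since $v$ was an arbitrary non-output vertex $\neq i$ connected to $s(i)$, this establishes the claim.
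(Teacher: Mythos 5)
Your proposal is correct and takes essentially the same route as the paper's own proof: both rest on the correspondence, via the injective flow map $f$, between $s(i)\cap N(v)\setminus\lbrace f(v)\rbrace$ and $\lbrace k\in N_Z(v)\;|\;parity(\zeta_i(k))=1\rbrace$, then invoke Lemma \ref{lem_oddness} and use Proposition \ref{prop_ssf} to handle the possible contribution of $f(v)$ (which exists since $v\notin O$, with $f(v)\in N(v)$ by (F3)). The only difference is presentational: the paper runs the two cases $parity(\zeta_i(v))=0$ and $parity(\zeta_i(v))=1$ separately, whereas you fold them into a single mod-2 cancellation, and the strong induction announced at the start of your write-up is, as you yourself note, never needed.
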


\begin{proof}
	Let $v \neq i$ be a vertex connected to $s(i)$, we show the following two sets have the same number of elements.
	\begin{align*}
		\lbrace k \in N_Z(v) \; | \; parity(\zeta_i(k)) = 1 \rbrace
		\quad \text{ and } \quad
		s(i) \cap N(v) \setminus \lbrace f(v) \rbrace
	\end{align*}
	For every $j \in s(i) \cap N(v) \setminus \lbrace f(v) \rbrace$, we prove $f^{-1}(j)$ is the unique element in 
\AR{
\lbrace k \in N_Z(v) \; | \; parity(\zeta_i(k)) = 1 \rbrace
}
Because $j \in s(i)$ from Proposition \ref{prop_ssf} there must exist $f^{-1}(j)$.
	Also since $j \in N(v)$ therefore $v \in N(j) = N(f(f^{-1}(j)))$. Moreover since $j\not = f(v)$, Corollary \ref{cor_zcorr} implies the existence of a $Z$-correction from $f^{-1}(j)$ to $v$, \emph{i.e.} $f^{-1}(j) \in N_Z(v)$.
	Proposition \ref{prop_ssf} says that because $j \in s(i)$, it must hold that $parity(\zeta_i(f^{-1}(j))) = 1$.
	Therefore $f^{-1}(j) \in \lbrace k \in N_Z(v) \; | \; parity(\zeta_i(k)) = 1 \rbrace$.
	
	On the other hand, for every vertex $u \in \lbrace k \in N_Z(v) \; | \; parity(\zeta_i(k)) = 1 \rbrace$, as $parity(\zeta_i(u)) = 1$ then from Proposition \ref{prop_ssf} we have$f(u) \in s(i)$. Also $f(u) \in N(v)$ because of Corollary \ref{cor_zcorr} and finally, $f(u) \not = f(v)$ because $v$ cannot have a $Z$-correction from itself, \emph{i.e.} $v \notin N_Z$. Hence it holds that $f(u) \in s(i) \cap N(v) \setminus \lbrace f(v) \rbrace$. Therefore 
		\begin{align*}
		| s(i) \cap N(v) \setminus \lbrace f(v) \rbrace | =
		| \lbrace k \in N_Z(v) \; | \; parity(\zeta_i(k)) = 1 \rbrace |
	\end{align*}
	According to Lemma \ref{lem_oddness} $parity(\zeta_i(v)) = | \lbrace k \in N_Z(v) \; | \; parity(\zeta_i(k)) = 1 \rbrace | \mod 2$. If $parity(\zeta_i(v)) = 0$ then $s(i) \cap N(v) \setminus \lbrace f(v) \rbrace$ must have an even number of elements.
	Proposition \ref{prop_ssf} says that $f(v)$ cannot be in $s(i)$ and therefore $v$ can have only even number of connections to $s(i)$. If $parity(\zeta_i(v)) = 1$ then we know that $s(i) \cap N(v) \setminus \lbrace f(v) \rbrace$ must have an odd number of elements.
	If $f(v)$ exists it must be in $s(i)$ because of Proposition \ref{prop_ssf}.
	In the case of $f(v) \in s(i)$, we can conclude that $parity(|s(i) \cap N(v)|) = 0$ and $v$ has even many connections to $s(i)$.
	On the other hand if $f(v)$ does not exist, $v$ has to be an output qubit because the flow function $f$ is defined for every non-output vertex.
	The only possibility of $k$ having odd many connections to $s(i)$ is therefore if $k$ is an output vertex, which proves the lemma.
\end{proof}

The next lemma directly proves that an SSF also satisfies the last gflow property (G3) which states that $i \in Odd(s(i))$.

\begin{lemma}
	\label{lem_oddi}
	If $(s, \prec_s)$ is an SSF, then for every $i \in O^C$ it holds that $i \in Odd(s(i))$.
\end{lemma}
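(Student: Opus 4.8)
The plan is to establish the sharper statement that $N(i)\cap s(i)=\{f(i)\}$; since this is a one-element set, $i\in Odd(s(i))$ follows at once.

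First I would check that $f(i)\in N(i)\cap s(i)$. Flow property (F3) gives $i\in N(f(i))$, so $f(i)\in N(i)$. And since we defined $\zeta_i(i)=1$, Proposition~\ref{prop_ssf} applied with $j=f(i)$ (so that $f^{-1}(j)=i$ and $parity(\zeta_i(i))=1$) yields $f(i)\in s(i)$.

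Next I would show $f(i)$ is the only neighbour of $i$ that lies in $s(i)$. Suppose, towards a contradiction, that $j\in N(i)\cap s(i)$ with $j\neq f(i)$. Since $j\in s(i)$, Proposition~\ref{prop_ssf} guarantees that $k:=f^{-1}(j)$ exists, lies in $O^C$, and satisfies $parity(\zeta_i(k))=1$; in particular $\zeta_i(k)\geq 1$, so there is a $Z$-path from $i$ to $k$. Because $f$ is injective and $j=f(k)\neq f(i)$, we have $k\neq i$, so this $Z$-path has length at least one and, chaining flow property (F2) along its edges exactly as in the proof of Lemma~\ref{lem_zpath_conn}, we get $i\prec_f k$. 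On the other hand, $f(k)=j\in N(i)$ means $i\in N(f(k))$, and $i\neq k$, so Corollary~\ref{cor_zcorr} gives a $Z$-correction from $k$ to $i$; this single edge is a $Z$-path from $k$ to $i$, whence $k\prec_f i$. Now $i\prec_f k$ and $k\prec_f i$ contradict the strictness of $\prec_f$ (equivalently, the acyclicity of $G_Z$ asserted in Definition~\ref{def_zpath}). Hence $N(i)\cap s(i)=\{f(i)\}$, and therefore $i\in Odd(s(i))$.

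I expect the only genuinely delicate point to be the handling of the vertex $i$ itself: one cannot substitute $j=i$ into Lemma~\ref{lem_oddness} or Equation~\ref{eq_paths}, since those count $Z$-paths of positive length only and would spuriously contribute $\zeta_i(i)=0$. The single odd contribution that puts $i$ into $Odd(s(i))$ comes precisely from the convention $\zeta_i(i)=1$ --- that is, from the edge $i\to f(i)$ --- and the remainder of the argument is exactly the verification, via acyclicity of the $Z$-dependency graph, that no second neighbour of $i$ can enter $s(i)$.
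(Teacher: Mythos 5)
Your proof is correct and follows essentially the same route as the paper: the paper also shows that any $j\in s(i)\setminus\{f(i)\}$ cannot be adjacent to $i$ (via the $Z$-path from $i$ to $f^{-1}(j)$ forcing $i\prec_f f^{-1}(j)$ while an edge would force $f^{-1}(j)\prec_f i$ by (F2)), and then concludes from (F3) that $f(i)$ is the unique neighbour of $i$ in $s(i)$, hence $i\in Odd(s(i))$. Your packaging of the contradiction through Lemma~\ref{lem_zpath_conn} and Corollary~\ref{cor_zcorr}, and your explicit treatment of the convention $\zeta_i(i)=1$, are just slightly more detailed renderings of the same argument.
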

\begin{proof}
	First we show that, performing signal shifting creates new $X$-corrections only between unconnected vertices.
	Recall that signal shifting creates a new $X$-correction between vertices $i$ and $j$ iff there exists a $Z$-path from $i$ to $f^{-1}(j)$ and an $X$ correction from $f^{-1}(j)$ to $j$ therefore from the Flow definition we have:
	\begin{align*}
		i \prec_f f^{-1}(j) \prec_f j
	\end{align*}
	Let us assume that there exists an edge between $i$ and $j$.
	According to the Flow definition we have that
	\begin{align*}
		\left.
		\begin{aligned}
			i \in N(j) \\
			j = f(f^{-1} (j))
		\end{aligned}
		\right\rbrace \quad \Rightarrow  \quad i \in N(f(f^{-1} (j))) \quad \Rightarrow \quad f^{-1} (j) \prec_f i
	\end{align*}
	This contradicts the partial order $i \prec_f f^{-1}(j) \prec_f j$ of the Flow $(f, \prec_f)$ and therefore there cannot be an edge between vertices $i$ and $j$.
	
	Next we claim that 	there is exactly one edge between $i$ and $s(i)$.
	According to Definition \ref{prop_ssf} of SSF, the set $s(i)$ consists only of the vertex $f(i)$ and the vertices to which signal shifting created a new X dependency from $i$.
	We showed that signal shifting does not create $X$ dependencies between connected edges.
	Hence, $f(i)$ is the only vertex in $s(i)$ that can be connected to $i$, and there must be an edge between $i$ and $f(i)$ because of the  flow property (F3) ($i \in N(f(i))$).
	\end{proof}

\begin{proof} Now to obtain the proof of Theorem \ref{theorem_ssfisgflow}, we note that the definition of SSF implies the gflow property \textit{(1.)}. Lemma \ref{lem_evenconnections} implies that every SSF satisfies the gflow condition \textit{(2.)}.
As the third and last gflow condition is satisfied by SSF according to Lemma \ref{lem_oddi}, SSF is indeed a gflow and Theorem \ref{theorem_ssfisgflow} holds.
\end{proof}

The above theorem for the first time presents an structural link between two seemingly different approach for parallelisation, gflow and signal shifting, for those patterns having already flow. As mentioned in the introduction this is the key step in obtaining our simultaneous depth and space optimisation. The next section explores further the link with gflow, showing optimality of SSF in parallelisation. 

\subsection{Influencing paths} \label{sec_ip}

The notions of \emph{influencing walks} and \emph{partial influencing walks} on open graphs with flow was introduced in \cite{BroadbentK09} to describe the set of all vertices that a measurement depends on. An influencing walk starts with an input and ends with an output vertex, a partial influencing walk starts with an input vertex but can end with a non-output vertex.  We will use a modified definition of influencing walks that can start from any non-output vertex $i$ and end at any vertex $j \in s(i)$ and call it a \emph{stepwise influencing path}. This will allow us to conveniently explore the dependency structure of a pattern with SSF.
 
\begin{definition}
	\label{def_path}
	Let $(s,\prec_s)$ be an SSF that is obtained from a flow $(f,\prec_f)$ of an open graph $(G,I,O)$ and vertices $i$ and $j$ in $V(G)$ such that $j \in s(i)$.
	We say that a path between vertices $i$ and $j$ is an \emph{stepwise influencing path}, noted as $\wp_i(j)$, iff
	\begin{itemize}
		\item The path is over the edges of $G$.
		\item The first two elements on the path are $i$ and $f(i)$.
		\item Every even-placed vertex $k$ on the path $\wp_i(j)$, starting from $f(i)$, is in $s(i)$.
		\item Every odd-placed vertex on the path $\wp_i(j)$ is the unique vertex $f^{-1}(k)$ of some $k \in s(i)$ such that $k$ is the next vertex on the path $\wp_i(j)$.
	\end{itemize}
\end{definition}

It is easy to see that every second edge, in particular the edges between $f^{-1}(k)$ and $k \in s(i)$, in the stepwise influencing path is a flow edge. Hence the path contains no consecutive non-flow edges.
If we restrict the first vertices of the stepwise influencing path to be input vertices, the stepwise influencing path would be a partial influencing path, but not \emph{vice versa}.
Stepwise influencing paths are useful because of their appearance in the SSF as proven by the following lemma.

\begin{lemma}
	\label{lem_path}
	Let $(s,\prec_s)$ be an SSF obtained from a flow $(f,\prec_f)$ of an open graph $(G,I,O)$ and vertices $i$ and $j$ in $V(G)$ such that $j \in s(i)$.
	Then there always exists a stepwise influencing path $\wp_i(j)$.
\end{lemma}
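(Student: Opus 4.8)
The plan is to build $\wp_i(j)$ explicitly by strong induction on the vertex $w:=f^{-1}(j)$ ordered by the original flow order $\prec_f$; this $w$ exists and is unique since membership $j\in s(i)$ already presupposes $f^{-1}(j)$ in Proposition~\ref{prop_ssf} and $f$ is injective. The induction is well founded: $j\in s(i)$ forces $parity(\zeta_i(w))=1$, hence $\zeta_i(w)\ge 1$, and along any $Z$-path the order $\prec_f$ strictly increases — a $Z$-correction from $a$ to $b$ gives $b\in N(f(a))\setminus\{a\}$ by Corollary~\ref{cor_zcorr}, so flow property (F2) yields $a\prec_f b$ — so a $Z$-path from $i$ to $w$ gives $i\preceq_f w$.

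For the base case $w=i$ one has $j=f(i)$, which always lies in $s(i)$ since $\zeta_i(i)=1$ is odd, and the two-vertex sequence $(i,f(i))$ is a stepwise influencing path using only the flow edge $\{i,f(i)\}$ guaranteed by (F3). For the inductive step suppose $w\ne i$. Since $parity(\zeta_i(w))=1$ and $\zeta_i(w)=\sum_{k\in N_Z(w)}\zeta_i(k)$ by Equation~\ref{eq_paths}, Lemma~\ref{lem_oddness} produces a vertex $k\in N_Z(w)$ with $parity(\zeta_i(k))=1$. Unfolding $N_Z(w)$ gives $f(k)\in N(w)\setminus\{j\}$, so $w\in N(f(k))$ and $w\ne k$, whence $k\prec_f w$ by (F2); and $parity(\zeta_i(k))=1$ means $f(k)\in s(i)$ by Proposition~\ref{prop_ssf}. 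Because $f^{-1}(f(k))=k\prec_f w$, the induction hypothesis supplies a stepwise influencing path $\wp_i(f(k))$, whose last two vertices are $k$ (odd-placed) and $f(k)$ (even-placed). I then append $w=f^{-1}(j)$ and $j=f(w)$. The result meets Definition~\ref{def_path}: the edge $\{f(k),w\}$ exists because $f(k)\in N(w)$, the edge $\{w,f(w)\}$ exists by (F3), the new even-placed vertex is $j\in s(i)$, and the new odd-placed vertex $w$ equals $f^{-1}(j)$ with $j$ as its successor, while the prefix conditions (first two vertices $i,f(i)$, earlier even-placed vertices in $s(i)$, etc.) are inherited from $\wp_i(f(k))$.

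The one point needing a touch of care is whether the object produced is literally a path with distinct vertices or merely a walk carrying the stepwise-influencing structure — since it is obtained here as a "modified influencing walk" the latter reading already suffices, but in any case the odd-placed vertices of the construction form a strictly $\prec_f$-increasing chain (each recursion inserts $w$ just after a previous odd vertex $k$ with $k\prec_f w$) and are therefore pairwise distinct, and the even-placed vertices, being $f$ of these, are distinct as well by injectivity of $f$, so at worst a routine pruning removes any coincidence across the two parities. I expect the genuinely substantive steps to be (i) choosing the induction on $f^{-1}(j)$ ordered by $\prec_f$ rather than on some naive $Z$-path length, and (ii) invoking Lemma~\ref{lem_oddness} to pick the intermediate vertex $k$ with \emph{odd} $\zeta_i(k)$: a vertex merely lying on some $Z$-path from $i$ to $w$ need not satisfy the parity condition that places $f(k)$ into $s(i)$, and it is precisely this parity bookkeeping that makes the recursion close. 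Everything else is a direct unfolding of the definitions of the SSF, of $N_Z$, and of a stepwise influencing path.
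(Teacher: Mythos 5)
Your proof is correct and takes essentially the same route as the paper: the paper constructs the path backward from $j$ by repeatedly invoking Corollary \ref{cor_neighbour} (whose content is exactly your step combining Lemma \ref{lem_oddness} with Proposition \ref{prop_ssf} to pick a $Z$-predecessor $k$ with $parity(\zeta_i(k))=1$, hence $f(k)\in s(i)\cap N(w)$), terminating at $i$ by finiteness and acyclicity of the flow order. Your strong induction on $f^{-1}(j)$ with respect to $\prec_f$ merely makes that termination argument explicit, and your treatment of possible vertex repetitions is at least as careful as the paper's one-line appeal to the impossibility of cyclic dependencies.
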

\begin{proof}
	We start by constructing such a path backward from $j$ to $i$.
	We select $j$ and $f^{-1}(j)$ as the last two vertices on the path and apply Corollary \ref{cor_neighbour} to find the vertices on the path, until we reach $i$.
	The formation of cycles is impossible, as this would imply a cyclic dependency structure, impossible for a flow.
	We have to reach $i$ as the set of vertices we choose from is finite.
\end{proof}

The above lemma will be used in Section \ref{sec_compactification} to obtain compact circuits from SSF. Note that there might be more than one stepwise influencing path from $i$ to $j$. We conclude the section about influencing paths with the following two lemmas which will be used to prove the optimality of SSF. First, the structure of stepwise influencing paths imposes a strict restriction on the way a vertex on the stepwise influencing path can be connected.

\begin{lemma}
	\label{lem_past_conn}
	Let $\wp_i(j)$ be a stepwise influencing path from $i$ to $j$ in an open graph $(G, I, O)$ with flow $(f, \prec_f)$ and corresponding SSF $(s, \prec_s)$.
	Then $f^{-1}(j)$ is the only odd-placed vertex in $\wp_i(j)$ that $j$ is connected to.
\end{lemma}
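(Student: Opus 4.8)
The plan is to reduce everything to one application of Lemma~\ref{lem_zpath_conn}. I would first fix notation for the path: write $\wp_i(j)=v_1v_2\cdots v_{2M}$ with $v_1=i$, so that by Definition~\ref{def_path} the even-placed vertices $v_2,v_4,\ldots,v_{2M}$ all lie in $s(i)$ (with $v_2=f(i)$ and $v_{2M}=j$), while each odd-placed vertex satisfies $v_{2m-1}=f^{-1}(v_{2m})$; in particular $f(v_{2m-1})=v_{2m}$ since $f$ is injective. Thus the penultimate vertex is $v_{2M-1}=f^{-1}(j)$, and flow property (F3), applied to $f^{-1}(j)\in O^C$, gives $f^{-1}(j)\in N(f(f^{-1}(j)))=N(j)$, so $f^{-1}(j)$ really is an odd-placed vertex adjacent to $j$. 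What is left to prove is uniqueness; if $M=1$ the only odd-placed vertex is $v_1=i=f^{-1}(j)$ and there is nothing more to do, so I would assume $M\ge 2$.

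The core step is to rule out any odd-placed vertex $w=v_{2m-1}$ with $w\neq f^{-1}(j)$, i.e.\ $1\le m\le M-1$, by building a $Z$-path from $w$ to $f^{-1}(j)$. For each $\ell$ with $m\le\ell\le M-1$, the vertices $v_{2\ell}$ and $v_{2\ell+1}$ are adjacent in $G$ (they are consecutive on the path), $v_{2\ell}=f(v_{2\ell-1})$ as noted above, and $v_{2\ell+1}\neq v_{2\ell-1}$ because $\wp_i(j)$ is a simple path. Hence $v_{2\ell+1}\in N(f(v_{2\ell-1}))\setminus\{v_{2\ell-1}\}$, and Corollary~\ref{cor_zcorr} yields a $Z$-correction from $v_{2\ell-1}$ to $v_{2\ell+1}$. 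Concatenating these for $\ell=m,m+1,\ldots,M-1$ produces a $Z$-path $v_{2m-1}\to v_{2m+1}\to\cdots\to v_{2M-1}$ from $w$ to $f^{-1}(j)$ — a genuine path in $G_Z$, since its vertices are odd-placed vertices of the simple path $\wp_i(j)$ and hence distinct. Then Lemma~\ref{lem_zpath_conn}, applied with $f^{-1}(j)\in O^C$ playing the role of its ``$j$'', tells us that $w$ cannot be connected to $f(f^{-1}(j))=j$. Since this excludes every odd-placed vertex except $f^{-1}(j)$ itself, the lemma follows.

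I do not expect a real obstacle here: the argument is essentially bookkeeping, and the only points needing care are (i) getting the parity/indexing of the influencing path right so that ``$v_{2\ell}=f(v_{2\ell-1})$'' and the concatenation into a single $Z$-path are legitimate, and (ii) noting explicitly that the simplicity of $\wp_i(j)$ (no repeated vertices — cycles being impossible for a flow, as in the proof of Lemma~\ref{lem_path}) is what makes the constructed edge sequence an honest $Z$-path. As a sanity check, one can also avoid Lemma~\ref{lem_zpath_conn} entirely: the same edges of $\wp_i(j)$ give, via (F2), a strictly increasing chain $i\prec_f v_3\prec_f v_5\prec_f\cdots\prec_f f^{-1}(j)$, whereas (F2) applied at $j=f(f^{-1}(j))$ forces every neighbour of $j$ other than $f^{-1}(j)$ to be $\succ_f f^{-1}(j)$, which is incompatible with lying on that chain.
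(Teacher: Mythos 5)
Your proof is correct and follows essentially the same route as the paper: both use Corollary~\ref{cor_zcorr} to obtain $Z$-corrections between consecutive odd-placed vertices of $\wp_i(j)$, chain these into a $Z$-path from any odd-placed vertex to $f^{-1}(j)$, and then invoke Lemma~\ref{lem_zpath_conn} to rule out an edge to $j = f(f^{-1}(j))$. Your write-up is merely more explicit about indexing, the $M=1$ case, adjacency of $f^{-1}(j)$ to $j$ via (F3), and the simplicity of the path, which the paper leaves implicit.
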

\begin{proof}
	According to the definition of stepwise influencing path, for every three consecutive vertices $v_1$, $v_2$, $v_3$ in $\wp_i(j)$ such that $v_1$ and $v_3$ are odd-placed we have that $v_2 = f(v_1)$ and $v_3 \in N(v_2) = N(f(v_1))$.
	According to Corollary \ref{cor_zcorr} there must exist a $Z$-correction from $v_1$ to $v_3$.
	Therefore the odd-placed vertices in $\wp_i(j)$ are on a $Z$-path from $i$ to $f^{-1}(j)$ and obviously from every odd-placed vertex in $\wp_i(j)$ there exists a $Z$-path to $f^{-1}(j)$.
	Lemma \ref{lem_zpath_conn} says that $j$ cannot be connected to any of the odd-placed vertices in $\wp_i(j)$.
\end{proof}

The previous lemma shows, that the stepwise influencing paths can be used to describe some properties of the connectivity in open graphs with SSF.
The next lemma (illustrated in Figure \ref{fig_ext_path}) will explain how a stepwise influencing path can be extended.
\begin{lemma}
	\label{lem_ext_path}
	Let $(G, I, O)$ be an open graph with flow $(f, \prec_f)$ and corresponding SSF $(s, \prec_s)$ and let $i$ and $j$ be two non-output vertices of the open graph such that $f(j) \in s(i)$.
	If $v \in N(j) \cap s(i) \setminus \lbrace f(j) \rbrace$ then every stepwise influencing path $\wp_i(v)$ can be extended by the vertices $j$ and $f(j)$ to create another stepwise influencing path $\wp_i(f(j))$.
\end{lemma}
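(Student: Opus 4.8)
The plan is to prove Lemma~\ref{lem_ext_path} by an explicit surgery on the path: given an arbitrary stepwise influencing path $\wp_i(v)$, I append at its end first the vertex $j$ and then the vertex $f(j)$, and I verify, condition by condition, that the enlarged vertex sequence meets all four bullets of Definition~\ref{def_path} for $\wp_i(f(j))$. No recursion or case analysis on the length of $\wp_i(v)$ is needed; everything is bookkeeping about the positions of vertices on the path together with the injectivity of $f$ (the injectivity lemma at the start of the paper) and flow property (F3).

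\textbf{Key steps.}
The one fact to record first is a parity observation: in any stepwise influencing path, $i$ occupies position $1$ and $f(i)$ position $2$, and the terminal vertex is necessarily \emph{even}-placed, since by the last bullet of Definition~\ref{def_path} an odd-placed vertex must be $f^{-1}$ of its successor and hence cannot be last. So in $\wp_i(v)$ the terminal vertex $v$ is even-placed (consistent with $v\in s(i)$), and appending $j$ puts $j$ at an odd position and $f(j)$ at the following even position, which is exactly the alternation Definition~\ref{def_path} prescribes. With this, the four bullets are immediate: (i) the walk stays on edges of $G$ because $\{v,j\}$ is an edge (as $v\in N(j)$, by hypothesis) and $\{j,f(j)\}$ is an edge (flow property (F3)); (ii) the first two vertices are still $i$ and $f(i)$; (iii) every even-placed vertex from $f(i)$ onward lies in $s(i)$, the old ones being unchanged and the one newly inserted being $f(j)$, which is in $s(i)$ by hypothesis; and (iv) every odd-placed vertex is $f^{-1}$ of its successor, which lies in $s(i)$: the old odd-placed vertices and their successors are unchanged, and the new odd-placed vertex $j$ has successor $f(j)\in s(i)$ with $j=f^{-1}(f(j))$, this preimage being unique because $f$ is injective. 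One should also note that the degenerate possibility $j=i$ never arises: by the argument in the proof of Lemma~\ref{lem_oddi}, $f(i)$ is the only vertex of $s(i)$ adjacent to $i$, so there is no $v\in N(j)\cap s(i)\setminus\{f(j)\}$ when $j=i$.

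\textbf{Main obstacle.}
The delicate point — and what I expect to be the only real content of the proof — is checking that the enlarged walk is still a bona fide stepwise influencing path, i.e.\ that $j$ and $f(j)$ do not already occur on $\wp_i(v)$. That $f(j)\neq j$ is (F1). For the rest I would reuse the $Z$-path machinery from the proof of Lemma~\ref{lem_past_conn}: the odd-placed vertices of $\wp_i(v)$, in order, form a directed path in the acyclic graph $G_Z$ from $i$ to $f^{-1}(v)$, and Corollary~\ref{cor_zcorr} supplies a further $G_Z$-edge $f^{-1}(v)\to j$ (valid since $v\in N(j)$ and $j\neq f^{-1}(v)$, the latter because $f(j)\neq v$). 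Thus each odd-placed vertex of $\wp_i(v)$ lies on a $Z$-path to $j$. A prior occurrence of $j$ as an odd-placed vertex would close a directed cycle in $G_Z$, which is impossible; a prior occurrence of $f(j)$ would, since $f(j)\in s(i)$ and $f(j)\neq v$, be a non-terminal even-placed vertex, hence would have an odd-placed neighbour on $\wp_i(v)$ — impossible, because by Lemma~\ref{lem_zpath_conn} every odd-placed vertex of $\wp_i(v)$, lying on a $Z$-path to $j$, is non-adjacent to $f(j)$; and a prior even-placed occurrence of $j$ is excluded by the same acyclicity argument applied to $f^{-1}(j)$, which would then be one of those odd-placed vertices. (If instead one reads Definition~\ref{def_path} as allowing repeated vertices, this paragraph is unnecessary and the four bullets above already finish the proof; these bullets are in any case exactly what the downstream uses in Section~\ref{sec_compactification} rely on.) This gives the required $\wp_i(f(j))$, as illustrated in Figure~\ref{fig_ext_path}.
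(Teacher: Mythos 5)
Your proof is correct and follows essentially the same route as the paper: the paper's argument is exactly your ``key steps'' paragraph, namely checking the conditions of Definition~\ref{def_path} directly via the edge $\{v,j\}$ from the hypothesis $v\in N(j)$, the edge $\{j,f(j)\}$ from flow property (F3), the membership $f(j)\in s(i)$, and the uniqueness of $j=f^{-1}(f(j))$ from injectivity of $f$. Your final paragraph on non-repetition of $j$ and $f(j)$ goes beyond what the paper checks (its proof is silent on the path-versus-walk issue), and while your parenthetical rightly notes it is dispensable, be aware that within it the claim that a prior occurrence of $f(j)$ must be even-placed is asserted rather than argued (one can justify it, e.g.\ an odd-placed $f(j)$ would yield a $Z$-path from $f(j)$ to $j$ and hence $f(j)\prec_f j$, contradicting (F1)).
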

\begin{proof}
	Adding $j$ and $f(j)$ to $\wp_i(v)$ satisfies the conditions for stepwise influencing paths.
	There exists an edge between vertices $j$ and $v$ and vertices $j$ and $f(j)$, hence it is a valid path. Moreover, $f(j) \in s(i)$ would be an even-placed vertex on the extended path, and $j$ would be the unique oddly-placed vertex with $f(j) \in s(i)$.
\end{proof}
\begin{figure}[ht]
	\begin{minipage}[t]{0.47\linewidth}
	\begin{center}
		\includegraphics[width=\textwidth]{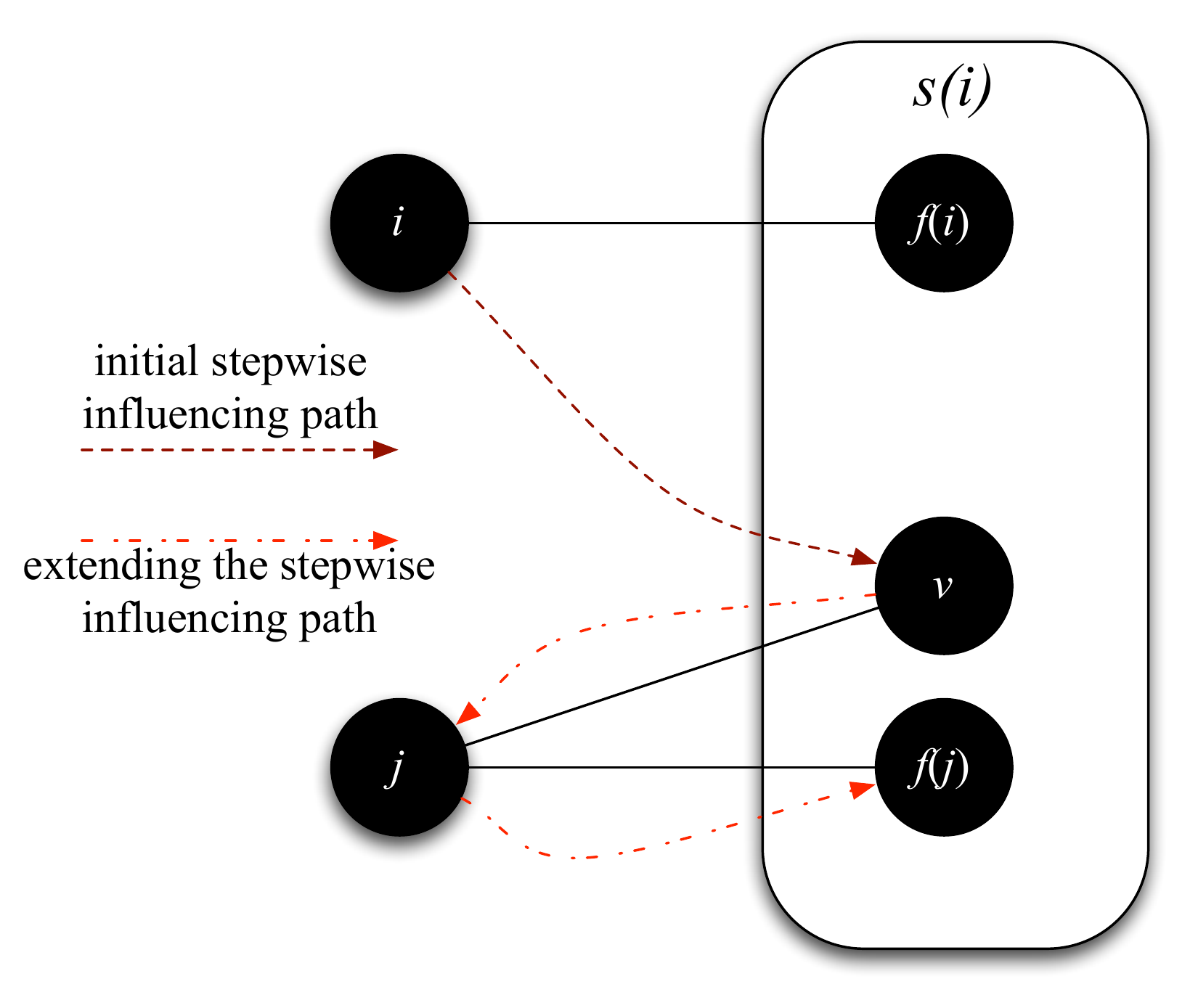}
	\end{center}
	\caption{
		Extending a stepwise influencing path ending at vertex $v$ according to Lemma \ref{lem_ext_path}.}
	\label{fig_ext_path}
	\end{minipage}
	\begin{minipage}[t]{0.1\linewidth}
		\hspace{\textwidth}
	\end{minipage}
	\begin{minipage}[t]{0.41\linewidth}
	\begin{center}
		\includegraphics[width=\textwidth]{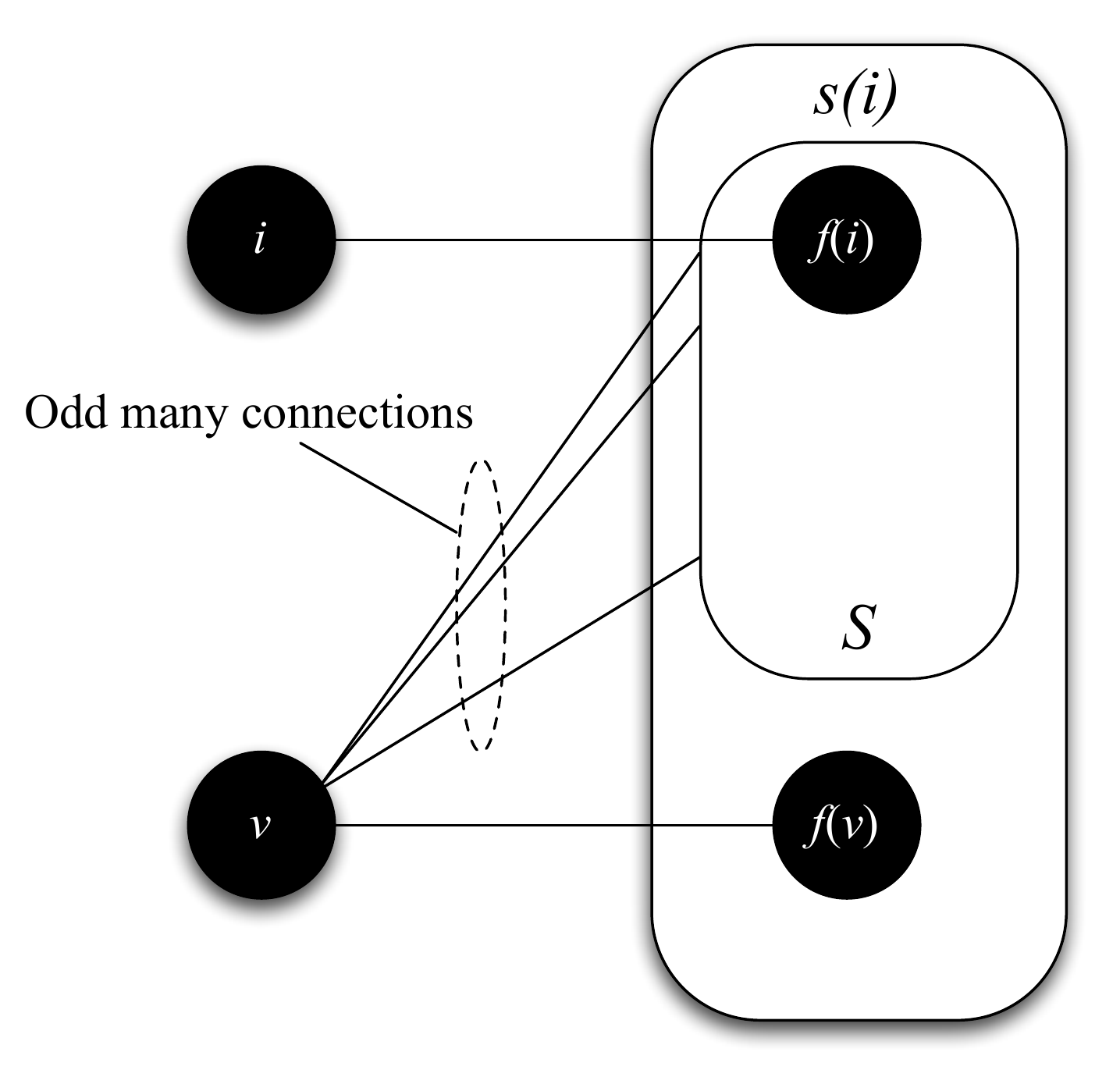}
	\end{center}
	\caption{
		For every strict subset $S$ of $s(i)$ containing $f(i)$ we can find a vertex $v$ in the odd neighbourhood of $S$ such that $f(v)$ is not contained in $S$. This is proven in Lemma \ref{lem_strict_subset}.}
	\label{fig_strict_subset}
	\end{minipage}
\end{figure}

\subsection{Optimality of signal shifting}

Given an MBQC pattern with gflow, finding the maximally delayed gflow of its underlying graph could potentially further reduce the depth of the computation \cite{MhallaP07}. A natural question that arises is how SSF is linked with the optimal gflow. In this section, we prove that if the input and output sizes of the pattern are equal, then SSF is indeed the optimal gflow. Hence we can conclude the most optimal parallelisation that one could obtain via translation of a quantum circuit into an MBQC pattern is achieved by the simple rewriting rules of SSF. This will also lead to a more efficient algorithm than the one presented in \cite{MhallaP07} for finding the maximally delayed gflow of a graph as we discuss later.
\begin{theorem}
	\label{thm_optimality}
	Let $(G, I, O)$ be an open graph with flow $(f, \prec_f)$ such that $|I|=|O|$.
	Let $(s, \prec_s)$ be the SSF obtained from $(f, \prec_f)$.
	Then $(s, \prec_s)$ is the optimal gflow of $(G, I, O)$.
\end{theorem}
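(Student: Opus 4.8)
The plan is to show that the layering function $L_s$ of the SSF coincides with the layering of a maximally delayed gflow of $(G,I,O)$; since that layering is unique \cite{MhallaP07} and since a gflow whose layering is this canonical one is maximally delayed (and conversely), this coincidence together with Theorem \ref{theorem_ssfisgflow} (the SSF is a gflow) will give that $(s,\prec_s)$ is an optimal gflow. As a warm‑up I would first record that the layers of the SSF are exactly the level sets of $L_s$, i.e.\ $V_k^{\prec_s}=L_s^{-1}(k)$: this is immediate from $i\prec_s j\Leftrightarrow L_s(i)>L_s(j)$ once one notes that $L_s$ is onto an initial segment of $\mathbb{N}$---indeed $f(i)\in s(i)$ forces $L_s(i)\ge 1$ for $i\notin O$, and along a maximising branch in $L_s(i)=\max_{j\in s(i)}(L_s(j)+1)$ the value drops by exactly one. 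In particular $V_0^{\prec_s}=O$, matching Lemma \ref{lem_last_layer}.

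I would then isolate two facts about the SSF layers. The first, ``legality'', is that every $v$ with $L_s(v)=k\ge 1$ admits the correcting set $K:=s(v)$ confined to the strictly lower layers: $s(v)\subseteq I^C$ by construction; every $j\in s(v)$ has $L_s(j)\le k-1$, so $s(v)\subseteq\bigcup_{i<k}V_i^{\prec_s}$; $v\in Odd(s(v))$ by Lemma \ref{lem_oddi}; and, crucially, $Odd(s(v))\setminus\{v\}\subseteq O\subseteq\bigcup_{i<k}V_i^{\prec_s}$ by Lemma \ref{lem_evenconnections}. The second fact, ``greediness'', is the converse: any $v$ admitting a correcting set $K\subseteq(\bigcup_{i<k}V_i^{\prec_s})\setminus I$ with $v\in Odd(K)$ and $Odd(K)\setminus\{v\}\subseteq\bigcup_{i<k}V_i^{\prec_s}$ already satisfies $L_s(v)\le k$.

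The hard part will be greediness. I would argue by induction on $k$: supposing for contradiction that such a $v$ has $L_s(v)\ge k+1$, observe that $f(v)\in s(v)$ with $L_s(f(v))=L_s(v)-1\ge k$, so the SSF set $s(v)$ genuinely reaches below level $k$ whereas the hypothetical $K$ does not; hence $s(v)\ne K$ and $s(v)\not\subseteq\bigcup_{i<k}V_i^{\prec_s}$. The strategy is to confront $K$ with $s(v)$---passing to the symmetric difference $s(v)\triangle K$, or intersecting $s(v)$ with neighbourhoods of the vertices along a stepwise influencing path $\wp_v(\cdot)$ (Lemma \ref{lem_path})---and then invoke Lemma \ref{lem_strict_subset} (every strict subset $S$ of $s(v)$ containing $f(v)$ has some $w\in Odd(S)$ with $f(w)\notin S$, i.e.\ the SSF correcting set cannot be shrunk), Corollary \ref{cor_neighbour}, and the path‑extension Lemma \ref{lem_ext_path}, to extract a vertex that is forced into a layer $\ge k$ while, by the induction hypothesis and the legality of $K$, it is simultaneously forced into $\bigcup_{i<k}V_i^{\prec_s}$---a contradiction. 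This is the step where $|I|=|O|$ is essential: it makes $f$ a bijection from $O^C$ onto $I^C$, so that $f^{-1}$ is defined on all of $s(v)$ and the influencing‑path machinery of Section \ref{sec_ip} is available. I expect essentially all the difficulty of the theorem to sit in this combinatorial extraction.

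Finally, a maximally delayed gflow $g^\ast$ is itself greedy---this is implicit in \cite{MhallaP07}, or can be read off from Lemmas \ref{lem_last_layer} and \ref{lem_penultimate_layer}---and any gflow respects its own layering. Then by a simultaneous induction on $k$ one gets $\bigcup_{i\le k}V_i^{\prec_s}$ equal to the $k$‑th cumulative layer of $g^\ast$: the base case is $O=O$; for the step, legality of the SSF together with greediness of $g^\ast$ gives one inclusion, and greediness of the SSF together with the layering‑respecting property of $g^\ast$ gives the other. Hence $L_s$ is the unique optimal layering, and since the SSF is a gflow, $(s,\prec_s)$ is the optimal gflow of $(G,I,O)$.
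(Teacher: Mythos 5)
Your overall architecture is attractive and, at the skeleton level, sound: characterise the maximally delayed layering as the greedy one, show the SSF layering is both ``legal'' (which you correctly get from Proposition \ref{prop_ssf}, Lemma \ref{lem_evenconnections} and Lemma \ref{lem_oddi}) and ``greedy'', and conclude by the cumulative-layer induction together with the uniqueness of the optimal layering from \cite{MhallaP07}. The final induction you describe does close, and your legality claim is genuinely established by the cited lemmas. But the proposal has a real gap exactly where you yourself locate the difficulty: the ``greediness'' of $L_s$ for arbitrary $k$ is asserted, not proved. Your sketch (confront $K$ with $s(v)$, use Lemma \ref{lem_strict_subset}, Corollary \ref{cor_neighbour} and Lemma \ref{lem_ext_path} to extract a contradictory vertex) is essentially the argument the paper carries out in Lemmas \ref{lem_max_base} and \ref{lem_max_step} -- but that argument works only for $k=1$, because it leans on the candidate correcting set consisting of \emph{output} vertices: one repeatedly uses that every element of $K$ has a well-defined $f^{-1}$ lying outside $K$, has no outgoing $Z$-edges, and lies in $V_0$, so that even/odd connectivity to $K$ can be traded against $Z$-paths via Lemma \ref{lem_zpath_conn}. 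For $k\ge 2$ your hypothetical $K$ may mix outputs with measured vertices of intermediate layers, and none of those structural facts are available; the paper explicitly flags this obstruction at the start of Section \ref{sec_optimality_reduced_og} (``this would fail as the vertices in any layers can also use the output vertices in their correcting sets'') and resolves it only by building the reduced-open-graph machinery (Definition \ref{def_reduced_og}, Lemmas \ref{lem_reduced_gflow_existance}--\ref{lem_reduced_optimal_gflow}), which lets the $k=1$ argument be re-applied recursively on a smaller graph whose outputs are the old penultimate layer. Your proposal offers no substitute for this step, so as it stands the theorem is only argued for the penultimate layer.

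A secondary, smaller point: the ``greediness'' of the maximally delayed gflow itself (that its $k$-th cumulative layer contains every vertex correctable from the previous cumulative layers) is not a numbered statement in this paper; it is true and is essentially how the algorithm of \cite{MhallaP07} works, but deriving it ``from Lemmas \ref{lem_last_layer} and \ref{lem_penultimate_layer}'' needs its own short induction (Lemma \ref{lem_penultimate_layer} changes the output set, so one must check the correctability condition transfers to the enlarged-output graph, which is the same issue as above in miniature). If you supply a genuine proof of SSF-greediness for all $k$ -- either by reproducing something like the paper's graph-reduction recursion or by a new uniform extraction argument that handles non-output vertices in $K$ -- your route would be a legitimately different and somewhat cleaner packaging of the same combinatorial core; without it, the proposal is a plan rather than a proof.
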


The proof of the theorem is rather long, an outline is presented below. A general reader could omit the next subsections, however various novel constructions has been introduced in the proof that could be explored for other MBQC results and hence could be valuable for an MBQC expert. In Section \ref{sec_optimality_pen_layers} we show that the penultimate layers of an optimal gflow and an SSF of an open graph where $|I| = |O|$, are equal. Next we introduce the concept of a \emph{reduced open graph} in Section \ref{sec_optimality_reduced_og}. We prove two key properties of the optimal gflow and SSF of the reduced open graph. This highlights the recursive structures of the gflow and SSF leading to the possibility of extending these notions to new domains \footnote{For example, the authors are currently exploring this structure to define the concept of partial flow, for patterns with no deterministic computation.}. In Section \ref{sec_optimality_induction} we put the pieces together, by showing that the previous properties imply that reduced gflow (implicitly also optimal gflow and SSF) layers are equal to the original gflow layers from layer 1 onward. This allows us to construct a recursive proof for Theorem \ref{thm_optimality}, which we present in Section \ref{sec_optimality_proof}.

\subsubsection{The last two layers}
\label{sec_optimality_pen_layers}

The equality of the last layers of an SSF and optimal gflow follows from Lemma \ref{lem_last_layer} and Proposition \ref{prop_ssf} -- the last layer of an optimal gflow and an SSF is always the set of output vertices.
What is left to prove is that the penultimate layers are also equal, for doing so we need the following properties of open graphs with SSF.
An illustration of the property proven in the first of the two lemmas is shown in Figure \ref{fig_strict_subset}.

\begin{lemma}
	\label{lem_strict_subset}
	Let $(G, I, O$) be an open graph with flow $(f, \prec_f)$ and corresponding SSF $(s, \prec_s)$.
	If $i \in O^C$ then for every strict subset $S$ of $s(i)$ containing $f(i)$ there must exist a non-output vertex $v$ that is oddly connected to $S$ such that $f(v) \in s(i) \setminus S$, \emph{i.e.}
	\begin{align*}
		\forall i \in O^C \quad
		\forall S \subset s(i) \quad s.t. \quad f(i) \in S \quad
		\exists v \in Odd(S) \quad s.t. \quad f(v) \in s(i) \setminus S 
	\end{align*}
\end{lemma}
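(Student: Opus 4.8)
The plan is to argue by contradiction, exploiting the characterisation of $s(i)$ in terms of $Z$-paths and the parity bookkeeping from Lemma \ref{lem_oddness}, together with the fact (established in the proof of Lemma \ref{lem_oddi}) that $f(i)$ is the unique vertex of $s(i)$ adjacent to $i$. Suppose that some strict subset $S \subset s(i)$ with $f(i) \in S$ has \emph{no} non-output vertex $v$ with $v \in Odd(S)$ and $f(v) \in s(i) \setminus S$. I would like to show this forces $s(i)$ itself to be ``closeable'', i.e.\ that $S$ could have served as the correcting set for $i$, contradicting the way SSF is built (in particular contradicting $\zeta_i(f^{-1}(j)) $ being odd for every $j \in s(i)$, which would fail for the $j \in s(i) \setminus S$).

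Concretely, first I would recall that for each $j \in s(i)$ we have $parity(\zeta_i(f^{-1}(j))) = 1$, and via Lemma \ref{lem_oddness} this oddness propagates backwards through the $Z$-dependency neighbourhoods; the odd-placed vertices on any stepwise influencing path $\wp_i(j)$ (Lemma \ref{lem_path}) are exactly the carriers of these odd $Z$-paths. Second, I would fix a vertex $j \in s(i) \setminus S$ (such $j$ exists since $S$ is strict) and take a stepwise influencing path $\wp_i(j)$. Walking along this path from $i = $ the start towards $j$, each even-placed vertex lies in $s(i)$; the first even-placed vertex is $f(i) \in S$ and the last even-placed vertex is $j \notin S$. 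Hence somewhere along the path there is a consecutive triple $v_1, v_2 = f(v_1), v_3$ of vertices with $v_2 \in S$ but $v_3 \in s(i) \setminus S$ (the first ``exit'' of the path from $S$). The middle vertex $v_1 = f^{-1}(v_2)$ is a non-output vertex (as $f$ is defined on it), and by Corollary \ref{cor_zcorr}, $v_1$ being the $f$-preimage of an element of $s(i)$ adjacent on the path means $v_1 \in N(v_2)$ actually I should track adjacency to $S$, not just to $v_2$: the right object is the vertex $v_1$ whose flow-image $f(v_1) = v_3 \notin S$ while $v_1$ has odd-many edges into $S$. To get the odd parity into $S$ I would invoke Lemma \ref{lem_evenconnections}/Lemma \ref{lem_oddness}: the number of neighbours of $v_1$ inside $s(i)$, excluding $f(v_1)$, has the same parity as $\zeta_i(v_1)$, and restricting attention to $S$ versus $s(i)\setminus S$ one shows that the parity of $|N(v_1) \cap S|$ must be odd, because the ``missing'' contributions from $s(i) \setminus S$ cannot be forced to be odd unless there is another such exit vertex — and iterating, some such vertex with $f(v) \in s(i)\setminus S$ and $v \in Odd(S)$ must appear. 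This gives the desired $v$.

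The main obstacle I anticipate is the parity accounting in the last step: turning ``the stepwise influencing path leaves $S$'' into the clean statement ``$v \in Odd(S)$ with $f(v) \notin S$'' requires carefully separating $N(v_1) \cap s(i)$ into its part in $S$ and its part in $s(i) \setminus S$, and using that $\zeta_i(v_1)$ is odd (since $v_1$ is odd-placed on the path) together with $f(v_1) = v_3 \in s(i) \setminus S$ to conclude $|N(v_1) \cap S|$ is odd. One has to make sure the exclusion of $f(v_1)$ is handled correctly (as in Lemma \ref{lem_evenconnections}, $f(v_1) \notin N_Z$-type subtleties) and that the chosen ``first exit'' triple genuinely yields an \emph{odd} count into $S$ rather than merely a nonzero one; if the naive first-exit argument only gives nonzeroness, I would instead sum the parities over \emph{all} exit vertices of the path and use that the total number of path-crossings of the boundary of $S$ is odd (the path starts in $S$ at $f(i)$ and ends outside $S$ at $j$), which forces at least one genuinely odd crossing vertex $v$ with $f(v) \in s(i) \setminus S$. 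The remaining bits — that $v$ is non-output, that $v \neq i$ (since $i$'s only $s(i)$-neighbour is $f(i) \in S$, so $i \notin Odd(S)$ unless $S = \{f(i)\}$, a case one handles directly) — follow from the flow axioms and Lemma \ref{lem_oddi} as in the earlier proofs.
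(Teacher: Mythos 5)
Your plan starts the same way the paper does (pick $j \in s(i)\setminus S$, work with stepwise influencing paths from $i$ to $j$, and aim at a vertex $v$ with $f(v) \in s(i)\setminus S$ that is oddly connected to $S$), but the step you yourself flag as the obstacle is exactly where the argument is missing, and neither of your two suggested fixes closes it. If you take the \emph{first exit} of a single path $\wp_i(j)$ from $S$, the odd-placed vertex $v$ at that exit is adjacent to one even-placed vertex in $S$ and to $f(v)\in s(i)\setminus S$, but it may also be adjacent to arbitrarily many other vertices of $s(i)\setminus S$ that do not lie on the chosen path; Lemma \ref{lem_evenconnections} only tells you $|N(v)\cap S| \equiv |N(v)\cap(s(i)\setminus S)| \pmod 2$, so without controlling those off-path neighbours you cannot conclude $v\in Odd(S)$. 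Your fallback — summing parities over all exit vertices of the one path and using that the path crosses the boundary of $S$ an odd number of times — suffers from the same problem: the crossing-count parity concerns only the two path-adjacent members of $s(i)$ at each odd-placed vertex, while $Odd(S)$ counts all edges into $S$, and there is no conservation law along a single path that relates the two.

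The paper resolves precisely this point by not committing to one path and one ``first exit''. It walks \emph{backwards} from $j$, at each step jumping to a different stepwise influencing path whenever that allows the previous even-placed vertex to lie outside $S$, and stops at a vertex $u\in s(i)\setminus S$ with the property that on \emph{every} stepwise influencing path ending at $u$ the preceding even-placed vertex is in $S$. For $v=f^{-1}(u)$ this maximality, combined with Lemma \ref{lem_ext_path}, rules out any edge from $v$ to $s(i)\setminus(S\cup\{f(v)\})$: such an edge would let one extend a path ending at that vertex by $v$ and $f(v)=u$, producing a path into $u$ whose second-to-last even-placed vertex is outside $S$. Only then does the parity bookkeeping work: $v$ is evenly connected to $s(i)$ (Lemma \ref{lem_evenconnections}, using $u\neq f(i)$ so $v\neq i$), it is connected to $f(v)$ by flow property (F3), and all its remaining $s(i)$-neighbours lie in $S$, so $|N(v)\cap S|$ is odd. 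Your proposal never invokes Lemma \ref{lem_ext_path} or any substitute for this ``all paths into $u$ enter through $S$'' property, so as written the central claim $v\in Odd(S)$ is unproven and the proof has a genuine gap.
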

\begin{proof}
	If $s(i) = \lbrace f(i) \rbrace$ the lemma holds trivially, as there does not exist any nonempty strict subsets of $s(i)$.
	Consider the case where $s(i)$ contains more than one element and $S$ is a strict subset of $s(i)$.
	Then we select any vertex $j \notin S$ from $s(i)$ and look at the stepwise influencing paths from $i$ to $j$.
	Note that there might be more than one such path.
	We move backwards from $j$ towards $i$ over the stepwise influencing paths in the following way:
	\begin{enumerate}
		\item Move by two vertices
		\subitem{1.1} If possible, choose any stepwise influencing path where the previous even-placed element is not in $S$ and move to that element.
		\subitem{1.2} If the previous even-placed elements in all the stepwise influencing paths from $i$ to $j$ are in $S$, then stop.
		\item Repeat step 1.
	\end{enumerate}
	Let $u$ be the vertex to where we moved using the above process, $u$ has to exist because of the way we initially selected $j$.
	There are a couple of other observations that we can make about $u$.
	First, $u\in s(i)\setminus S$, because of the selection of $j$ and the way we moved on the paths.
	Second, $u$ cannot be the first even placed vertex on a stepwise influencing path from $i$ to $u$ because the first element is $f(i) \in S$ (according to Definition \ref{def_path}).
	Third, for every stepwise influencing path ending in $u$, the previous even-placed vertex has to be in $S$ as otherwise we could have moved one more step towards $i$.
	
	Considering the previous three observations we can show that the vertex $v = f^{-1}(u)$ must be oddly connected to $S$.
	We begin by noting that $v$ cannot be connected to any vertex $k \in s(i) \setminus (S \cup \lbrace f(v) \rbrace$.
	Otherwise, according to Lemma \ref{lem_ext_path}, we could extend any stepwise influencing path ending at $k$ with $v$ and $f(v)$. Hence
	$k \notin S \cup \lbrace f(v) \rbrace$ would then be an even-placed vertex on a stepwise influencing path from $i$ to $f(v)$.
	In particular, $k$ would be the second to last even-placed vertex on a stepwise influencing path from $i$ to $f(v) = u$
	Every such vertex, except $f(v)$ itself, is in $S$ as mentioned before.
	Because, according to Lemma \ref{lem_evenconnections}, $v$ has to be evenly connected to $s(i)$, it has to be oddly connected to $S$ and Lemma \ref{lem_strict_subset} holds.
\end{proof}

Next we need to show that every non-input vertex $i$ has a corresponding unique vertex $f^{-1}(i)$, this is only true for those graphs with $|I|=|O|$.

\begin{lemma}
	\label{lem_io_size}
	If $(f,\prec_f)$  is a flow on an open graph (G,I,O), then $|I| = |O|$ iff for every $j \in I^C$ there exists $f^{-1}(j)$.
\end{lemma}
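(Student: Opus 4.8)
The plan is to prove the biconditional in Lemma \ref{lem_io_size} by exploiting the ``circuit-like'' chain structure that the flow function imposes, together with the injectivity of $f$ already established. The key observation is that $f$ is an injection from $O^C$ into $I^C$, so $f^{-1}$ is always a well-defined partial function on $I^C$; the content of the lemma is exactly when this partial function is total, i.e. when $f$ is a bijection between $O^C$ and $I^C$. Since $|O^C| = |V| - |O|$ and $|I^C| = |V| - |I|$, we have $|I^C| = |O^C|$ if and only if $|I| = |O|$, and an injection between two finite sets of equal cardinality is automatically a surjection. This gives the cleanest route and I will organise the proof around it.

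First I would prove the forward direction: assume $|I| = |O|$. Then $|O^C| = |V| - |O| = |V| - |I| = |I^C|$. By the injectivity lemma proven earlier in the excerpt, $f : O^C \to I^C$ is injective. An injective map between finite sets of the same size is surjective, hence for every $j \in I^C$ there exists $i \in O^C$ with $f(i) = j$, and this $i$ is unique by injectivity; call it $f^{-1}(j)$. Conversely, assume that for every $j \in I^C$ the preimage $f^{-1}(j)$ exists. This says $f : O^C \to I^C$ is surjective; combined with injectivity it is a bijection, so $|O^C| = |I^C|$, and therefore $|I| = |V| - |I^C| = |V| - |O^C| = |O|$.

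The only subtlety worth spelling out — and the step I would flag as needing a word of care rather than being a genuine obstacle — is checking that the codomain of $f$ really is all of $I^C$ in the relevant counting sense, i.e. that one never has a flow where $f$ lands only in a proper subset of $I^C$ that still happens to have the same size; but this is automatic since $f$ is \emph{defined} with codomain $I^C$ and we are only asking about its image. A reader might alternatively want the ``chain'' picture: each input $i \in I$ starts a maximal $f$-chain $i, f(i), f^2(i), \dots$ which (by injectivity and finiteness, F1 forcing strict increase in $\prec_f$, hence no cycles) must terminate at an output vertex, and these chains partition $V$; then $|I|=|O|$ forces every non-input to be some $f(k)$. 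I would present the short cardinality argument as the main proof and perhaps mention the chain viewpoint as intuition, since the former is fully rigorous and self-contained given the injectivity lemma.
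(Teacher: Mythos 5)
Your proof is correct and follows essentially the same route as the paper's: both directions rest on the counting argument that the injective map $f:O^C\to I^C$ is total on $O^C$, so $f^{-1}$ is total on $I^C$ exactly when $|O^C|=|I^C|$, i.e.\ $|I|=|O|$. Your version merely makes explicit the appeal to injectivity and the finite-set pigeonhole step that the paper leaves implicit, which is a harmless refinement rather than a different approach.
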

\begin{proof}
	First, if $|I| = |O|$ then also $|I^C| = |O^C|$.
	The flow definition uniquely defines $f(i)$ for every $i \in O^C$ and therefore $f^{-1}(j)$ is uniquely defined for some, but not necessarily for all, vertices $j \in I^C$.
	The number of vertices for which $f$ is defined must equal the number of vertices for which $f^{-1}$ is defined and because $|I^C| = |O^C|$, $f^{-1}$ must be defined for every element in $I^C$.
	
	Second, Let us consider the case when for every $j \in I^C$ there exists $f^{-1}(j)$.
	The number of elements for which $f^{-1}$ is defined equals the number of elements $f$ is defined for.
	$f$ is by Definition \ref{def_flow} defined for every element in $O^C$.
	Hence $|I^C| = |O^C|$ which implies that $|I| = |O|$.
\end{proof}

Note that the above requirement, \emph{i.e.} the existence of $f^{-1}(i)$, is the only reason why our proof of Theorem \ref{thm_optimality} fails if $|I| \neq |O|$. We conjecture that by padding the input with necessary ancilla qubits without changing the underlying computation we could extend the above theorem to the general graphs. However the proof of such result is outside of the scope of this paper and not relevant for the optimisation of quantum circuit.

Note that because of Definition \ref{def_delayed_gflow} if a gflow is not optimal, its penultimate layer has to either be equal to the penultimate layer of the optimal gflow or there exists a vertex in the penultimate layer of optimal gflow that is not included in the penultimate layer of the other gflow. In the proof of the main result we assume that the penultimate layers are not equal, hence we could choose a vertex with particular properties (described in the next two lemmas) to derive a contradiction. 

\begin{figure}[h]
	\begin{minipage}[t]{0.40\linewidth}
	\begin{center}
		\includegraphics[width=\textwidth]{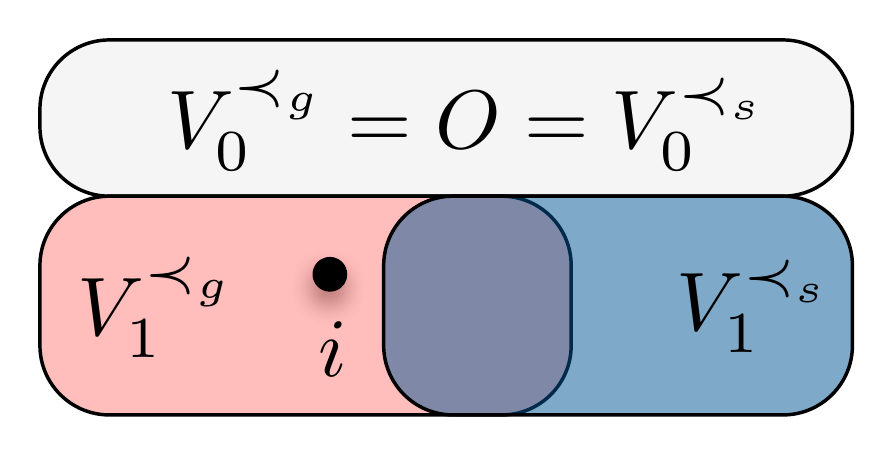}
	\end{center}
	\caption{
		The initial conditions required for Lemma \ref{lem_max_base}.}
	\label{fig_max_base_initial}
	\end{minipage}
	\begin{minipage}[t]{0.1\linewidth}
		\hspace{\textwidth}
	\end{minipage}
	\begin{minipage}[t]{0.44\linewidth}
	\begin{center}
		\includegraphics[width=\textwidth]{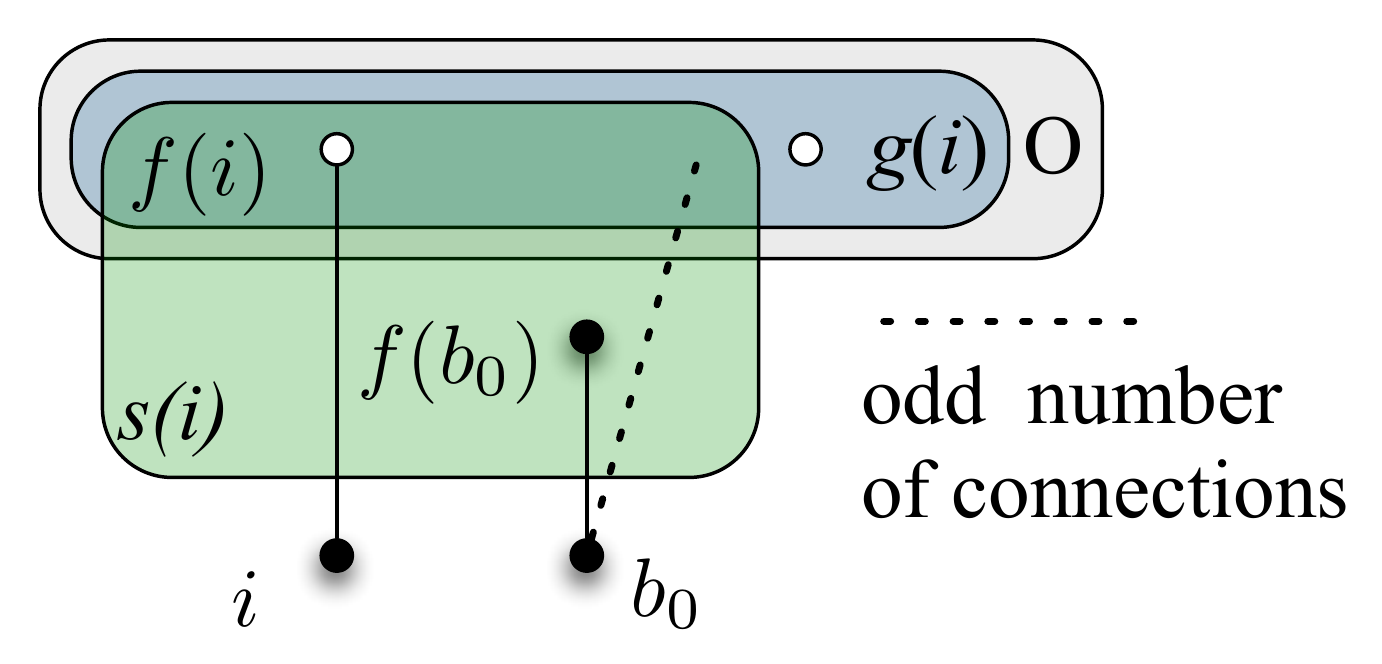}
	\end{center}
	\caption{
		The final conditions proved in Lemma \ref{lem_max_base}.}
	\label{fig_max_base_final}
	\end{minipage}
\end{figure}

\begin{lemma}
	\label{lem_max_base}
	Let $(G, I, O)$ be an open graph where $|I| = |O|$ with flow $(f, \prec_f)$, corresponding SSF $(s, \prec_s)$ and a gflow $(g, \prec_g)$ such that $V_0^{\prec_g} = O$. Assume there exists a vertex $i \in V_1^{\prec_g} \setminus V_1^{\prec_s}$, then
	\begin{itemize}
		\item $g(i) \subseteq O$
		\item $g(i) \cap s(i) \subset s(i)$
		\item $f(i) \in g(i)$
	\end{itemize}
	and there exists a vertex $b_0$ such that
	\begin{itemize}
		\item $b_0 \in Odd(g(i) \cap s(i))$
		\item $f(b_0) \in s(i) \setminus g(i)$
	\end{itemize}
\end{lemma}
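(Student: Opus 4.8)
The plan is to establish each of the five conclusions by combining the maximal-delay hypothesis $V_0^{\prec_g} = O$ with the structural lemmas about SSF already developed. First I would argue $g(i) \subseteq O$: since $i \in V_1^{\prec_g}$, every vertex strictly above $i$ in $\prec_g$ lies in $V_0^{\prec_g} = O$; by gflow property (G1) every $j \in g(i)$ satisfies $i \prec_g j$, so $g(i) \subseteq O$. Next, to see $f(i) \in g(i)$, recall that $i \in N(f(i))$ by flow property (F3) and that $f(i) \in s(i)$ always (from the definition of SSF in Proposition \ref{prop_ssf}, since $\zeta_i(i)=1$). The key point is that $i \notin V_1^{\prec_s}$ forces $f(i) \notin O$ to be impossible in a certain sense — more carefully, I would use that if $f(i)$ were not in $g(i)$, then since $g(i) \subseteq O$, $f(i)$ would be a measured (non-output) vertex, and I would derive a contradiction with (G2)/(G3) by showing $i$ cannot be corrected without $f(i)$ participating, exploiting Corollary \ref{cor_zcorr} and the fact that $f(i)$ is the unique neighbour of $i$ among $s(i)$ (Lemma \ref{lem_oddi}'s proof). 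Actually the cleanest route: $i$ must be measured (it is in layer 1, not layer 0), $i \in O^C$, so $f(i)$ is defined; and since $i \prec_s f(i)$ but $i \notin V_1^{\prec_s}$ would put $f(i)$ deeper — so $f(i) \in V_0^{\prec_s} = O$ by Lemma \ref{lem_last_layer} applied to the SSF layering, giving $f(i) \in O$. Then for the gflow $g$, I would show that the correcting set $g(i)$ must "hit" $f(i)$: among all vertices, $f(i)$ is the unique one adjacent to $i$ that can sit in $O$ on the relevant influencing structure, and removing it breaks (G3); I expect to invoke a parity/linear-algebra argument over $\mathbb{Z}_2$ comparing $Odd(g(i))$ with $Odd(s(i))$.

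For the third bullet, $g(i) \cap s(i) \subset s(i)$ (strict): I would suppose for contradiction that $s(i) \subseteq g(i)$. Then $S := s(i)$ satisfies $f(i) \in S$ and, combined with $g(i) \subseteq O$, all of $s(i)$ would lie in $O$. But then the SSF layering would place $i$ in layer $1$ (all of $s(i)$ in layer $0 = O$ forces $L_s(i) = 1$), contradicting $i \notin V_1^{\prec_s}$. So $g(i) \cap s(i)$ is a proper subset of $s(i)$, and it contains $f(i)$ by the previous bullet. This is the step where the hypothesis $i \in V_1^{\prec_g}\setminus V_1^{\prec_s}$ is really used.

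For the final two bullets, I would apply Lemma \ref{lem_strict_subset} with $S := g(i) \cap s(i)$, which is a strict subset of $s(i)$ containing $f(i)$ (just established); Lemma \ref{lem_strict_subset} yields a non-output vertex $b_0 \in Odd(S) = Odd(g(i)\cap s(i))$ with $f(b_0) \in s(i) \setminus S = s(i) \setminus g(i)$, which is exactly the claim — here I also need $f(b_0)$ to exist, which holds because $|I|=|O|$ via Lemma \ref{lem_io_size} (the vertex $b_0$ being non-output, $f(b_0)$ is defined; that $b_0$ is non-output is part of Lemma \ref{lem_strict_subset}'s conclusion). The main obstacle I anticipate is the careful argument for $f(i) \in g(i)$: it requires pinning down why the particular gflow $g$ — which is only assumed to satisfy $V_0^{\prec_g}=O$, not to be optimal or to agree with $f$ — must nonetheless route its correction of $i$ through $f(i)$. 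I would handle this by a $\mathbb{Z}_2$-parity comparison: since $i \in Odd(g(i))$ (G3) and $i \in Odd(s(i))$ (Lemma \ref{lem_oddi}), and $f(i)$ is the unique element of $s(i)$ adjacent to $i$ (from the proof of Lemma \ref{lem_oddi}), analysing $Odd(g(i) \triangle s(i))$ and using that all vertices of $g(i) \subseteq O$ are output (hence never equal to $i$ and constrained only through adjacency) should force $f(i) \in g(i)$, after possibly also using that $b_0$-type extension vertices cannot be used to "replace" $f(i)$ because of Lemma \ref{lem_past_conn} and Lemma \ref{lem_zpath_conn}.
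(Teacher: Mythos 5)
Your treatment of the first two bullets and of the final step (applying Lemma \ref{lem_strict_subset} to $S = g(i)\cap s(i)$, which is a strict subset of $s(i)$ containing $f(i)$) coincides with the paper's proof. The gap is in the third bullet, $f(i)\in g(i)$ --- exactly the part you flag as the main obstacle --- and your sketch for it does not go through. First, the inference ``$i \prec_s f(i)$ and $i \notin V_1^{\prec_s}$, so $f(i) \in V_0^{\prec_s} = O$'' is false: $i \notin V_1^{\prec_s}$ only says $L_s(i) \geq 2$, i.e.\ that \emph{some} element of $s(i)$ is non-output; it constrains the layer of $f(i)$ not at all, and in fact $f(i)\in O$ is only obtained a posteriori from $f(i)\in g(i)\subseteq O$. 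Second, the proposed $\mathbb{Z}_2$-parity comparison of $Odd(g(i))$ with $Odd(s(i))$ leans on the idea that $f(i)$ is essentially the only neighbour of $i$ available inside $O$; but while $f(i)$ is the unique neighbour of $i$ inside $s(i)$ (proof of Lemma \ref{lem_oddi}), nothing prevents $i$ from having many output neighbours outside $s(i)$, and $g(i)$ is a priori unrelated to $s(i)$, so condition (G3) for $g$ could be met through such neighbours without ever involving $f(i)$. Lemmas \ref{lem_past_conn} and \ref{lem_zpath_conn} alone do not exclude this.

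The paper closes this gap with a different device that your plan is missing: assuming $f(i)\notin g(i)$, it picks $a_1\in g(i)\cap N(i)$ (nonempty by (G3)), notes that $f^{-1}(a_1)$ exists by Lemma \ref{lem_io_size}, is non-output and distinct from $i$, and hence --- because $i\in V_1^{\prec_g}$ together with $V_0^{\prec_g}=O$ forces $i\not\prec_g f^{-1}(a_1)$ --- must by (G2) be \emph{evenly} connected to $g(i)$; this produces a further vertex $a_2\in g(i)$, and iterating yields a chain $a_1,\dots,a_n\in g(i)$ that must eventually close since the graph is finite. The closure is then ruled out by observing that consecutive links give $Z$-corrections from $f^{-1}(a_{j+1})$ to $f^{-1}(a_j)$ (Corollary \ref{cor_zcorr}), hence $Z$-paths from $f^{-1}(a_n)$ to all earlier $f^{-1}(a_j)$, so Lemma \ref{lem_zpath_conn} forbids $f^{-1}(a_n)$ from being adjacent to any earlier $a_j$ --- a contradiction. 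Some argument of this kind, which exploits the layering hypothesis on \emph{every} vertex $f^{-1}(a)$ with $a\in g(i)$ rather than on $f(i)$ alone, is required; without it the third bullet, and therefore the lemma, is not established.
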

\begin{proof}
	Because $i$ is in $V_1^{\prec_g}$ the set $g(i)$ must be a subset of $V_0^{\prec_g} = O$ according to Definition \ref{def_gflow_depth}.
	Proposition \ref{prop_ssf} implies that $V_0^{\prec_s} = O$.
	This and the fact that $i \notin V_1^{\prec_s}$ implies that $s(i)$ is not a subset of the output vertices $O = V_0^{\prec_s}$.
	Therefore there must exist a non-output vertex in $s(i)$ and, because $g(i) \subseteq O$, this vertex cannot be contained in $g(i)$.
	Thus the intersection of $s(i)$ and $g(i)$ cannot be equal to $s(i)$ and $g(i) \cap s(i) \subset s(i)$.
	
	We now show that $f(i) \in g(i)$.
	Let us assume that $f(i) \notin g(i)$, and choose a vertex $a_1 \in g(i)$ connected to $i$, such a vertex has to exist because the gflow definition says that $i$ is oddly connected to $g(i)$.
	As $a_1 \in g(i)$ then by the gflow definition $a_1$ cannot be an input qubit.
	According to Lemma \ref{lem_io_size}, there must exist a vertex $f^{-1}(a_1)$ to which $a_1$ is connected to.
	By the definition of flow, $f^{-1}(a_1)$ cannot be an output vertex and thus is not in layer $V_0^{\prec_g}$.
	As $g(i) \subseteq O$ this also means $f^{-1}(a_1) \notin g(i)$.
	On the other hand $f^{-1}(a_1)$ is connected to $a_1 \in g(i)$.
	Because $i \in V_1^{\prec_g}$ and $f^{-1}(a_1) \notin V_0^{\prec_g}$ we know from Definition \ref{def_gflow_depth} that $i \not\prec_g f^{-1}(a_1)$.
	As $f^{-1}(a_1)$ is connected to $g(i)$ we can conclude from the gflow definition that $f^{-1}(a_1)$ has to be evenly connected to $g(i)$ and therefore has at least one more connection to a vertex $a_2 \in g(i)$.

	Using the same argument for $a_2$ as for $a_1$ we can say that there must exist $f^{-1}(a_2) \notin g(i)$ to which $a_2$ is connected to.
	Let us assume that $f^{-1}(a_2)$ is not connected to $a_1$.
	This means it has only one connection to the set $A_2 = \lbrace a_1, a_2\rbrace \subseteq g(i)$ and is therefore oddly connected to it.
	We can continue this procedure of selecting vertices from $g(i)$ until we select a vertex $a_n$ such that $f^{-1}(a_n)$ is connected to at least one vertex $a_j$ in $A_{n-1} = \lbrace a_1, \dots a_{n-1}\rbrace \subseteq g(i)$.
	If this happens we can no longer say with certainty that $f^{-1}(a_n)$ is oddly connected to $A_n \subseteq g(i)$, which means we cannot select any more elements from $g(i)$ using this method.
	Because $(G, I, O)$ is a finite open graph we must find this $a_n$ in finite number of steps.

	We created the set $A_n$ in such a way that:
	\begin{align*}
		\forall j \in \lbrace 1, 2, \dots, n-1 \rbrace \quad f^{-1}(a_j) \in N(a_{j+1}) = N(f(f^{-1}(a_{j+1})))
	\end{align*}
	Hence we have a $Z$-correction from every $f^{-1}(a_{j+1})$ to $f^{-1}(a_j)$ and thus there exists a $Z$-path from $f^{-1}(a_n)$ to every $f^{-1}(a_j)$ such that $a_j \in A_{n-1}$ and , because of Lemma \ref{lem_path}, $f^{-1}(a_n)$ cannot be connected to any vertex in $A_{n-1}$.
	This leads to a contradiction with the assumption that it is connected to at least one vertex in $A_{n-1}$.
	Therefore our initial assumption that $f(i) \notin g(i)$ must be false and $g(i)$ must contain $f(i)$.

	From the definition of SSF we have that $f(i) \in s(i)$ and therefore also $f(i) \in g(i) \cap s(i)$. Now we know that $g(i) \cap s(i)$ is a strict subset of $s(i)$ containing $f(i)$; the existence of $b_0$ follows from Lemma \ref{lem_strict_subset}.
\end{proof}

Now we prove that if we have a vertex with the same properties as $b_0$ in Lemma \ref{lem_max_base} and a (possibly empty) subset $A$ of vertices with particular properties (which will be defined in the next lemma) we can always increase the size of $A$ and find another vertex with properties of $b_0$. This would imply the possibility of increasing the size of $A$ to infinity and will give us the contradiction we need.

\begin{lemma}
	\label{lem_max_step}
	Let $(G, I, O)$ be an open graph where $|I| = |O|$ with flow $(f, \prec_f)$, corresponding SSF $(s, \prec_s)$ and a gflow $(g, \prec_g)$.
	If we have a vertex $i$ in the open graph such that
	\begin{itemize}
		\item $g(i) \subseteq O$
		\item $g(i) \cap s(i) \subset s(i)$
		\item $f(i) \in g(i)$
	\end{itemize}
	and if we have a subset $A \subseteq g(i)$ and another vertex $b_0$ such that
	\begin{itemize}
		\item $b_0 \in Odd(g(i) \cap s(i))$
		\item $f(b_0) \in s(i) \setminus g(i)$
		\item $\forall j \in A \quad \exists \quad b_0 \buildrel Z \over \longrightarrow f^{-1}(j)$
	\end{itemize}
	then there exists another vertex $c_o$ and a non empty set $B \subseteq g(i)$ such that
	\begin{itemize}
		\item $B \neq \emptyset$
		\item $B \cap A = \emptyset$
		\item $c_0 \in Odd(g(i) \cap s(i))$
		\item $f(c_0) \in s(i) \setminus g(i)$
		\item $\forall j \in A \cup B \quad \exists \quad c_0 \buildrel Z \over \longrightarrow f^{-1}(j)$
	\end{itemize}
\end{lemma}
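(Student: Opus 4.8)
The plan is to mimic the chain‑building argument in the proof of Lemma~\ref{lem_max_base}, but to start the chain from $b_0$ (instead of from $i$) and to carry along the $Z$-path data recorded by $A$. Write $S:=g(i)\cap s(i)$, so the hypotheses read $f(i)\in S$ and $S\subsetneq s(i)$, and recall that the $Z$-correction graph $G_Z$ of the flow pattern of $(f,\prec_f)$ is acyclic and that $\zeta_i(\cdot)$ counts $Z$-paths in it.

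First I would record a few consequences of the hypotheses. Since $f(b_0)\in s(i)\setminus g(i)$ while $f(i)\in g(i)$, we have $b_0\neq i$; moreover $b_0$ is non-output (as $f(b_0)$ is defined) and connected to $s(i)$ (being oddly connected to $S\ni f(i)$), so Lemma~\ref{lem_evenconnections} gives $b_0\notin Odd(s(i))$, and together with $b_0\in Odd(S)$ and the disjoint decomposition $s(i)=S\cup(s(i)\setminus g(i))$ this makes $b_0$ oddly connected to $s(i)\setminus g(i)$. Also, for every $a\in S$ the vertex $f^{-1}(a)$ exists (here $a\in g(i)\subseteq I^C$ and $|I|=|O|$, so Lemma~\ref{lem_io_size} applies) and $f^{-1}(a)\neq b_0$ because $f(b_0)\notin g(i)\ni a$; hence, by Corollary~\ref{cor_zcorr}, whenever $a\in S\cap N(b_0)$ there is a $Z$-correction $f^{-1}(a)\to b_0$, and more generally, for any vertex $w$ and any $a'\in s(i)\cap N(w)$ with $f^{-1}(a')\neq w$ there is a $Z$-correction $f^{-1}(a')\to w$.

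Next I would build a chain $b_0=w_0, w_1, w_2,\dots$ with $w_{k}=f^{-1}(y_{k})$ for $k\geq 1$, where $y_1\in S\cap N(b_0)$ (it exists since $b_0\in Odd(S)$) and, inductively, $y_{k+1}$ is an $s(i)$-neighbour of $w_k$ other than $y_k$ — such a neighbour exists whenever $w_k\neq i$, because Lemma~\ref{lem_evenconnections} makes $w_k$ evenly connected to $s(i)$, hence it has at least two $s(i)$-neighbours. Each step contributes a $Z$-correction $w_{k+1}\to w_k$, so there is a $Z$-path from every $w_k$ down through $w_1\to b_0$ to each $f^{-1}(j)$, $j\in A$. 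I would then check that no $w_k$ can repeat an earlier one and that no $w_k$ can be adjacent to an earlier $y_l$: either would produce a positive-length closed walk in the acyclic graph $G_Z$, which is impossible; this forces the $y_k$ to be pairwise distinct and shows that the fresh neighbour $y_{k+1}$ can always be chosen. Since $s(i)$ is finite, the chain must terminate, and — choosing to stop exactly when the current vertex $w_k=f^{-1}(y_k)$ has $y_k\in s(i)\setminus g(i)$ and is oddly connected to $s(i)\setminus g(i)$ — the terminal vertex $c_0:=w_k$ satisfies $f(c_0)=y_k\in s(i)\setminus g(i)$ and, via Lemma~\ref{lem_evenconnections} again, $c_0\in Odd(S)$. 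Finally, set $B:=\{\,y_l : y_l\in g(i)\,\}$; every such $y_l$ lies in $S$ (so $B\subseteq g(i)$) and is fresh, and $y_l\notin A$ because otherwise composing the hypothesised $Z$-path $b_0\to f^{-1}(y_l)$ with the chain's $Z$-path $f^{-1}(y_l)\to b_0$ would again close a walk in $G_Z$; thus $B\cap A=\emptyset$ and (as $y_1\in S\subseteq g(i)$) $B\neq\emptyset$, while the chain gives $Z$-paths $c_0\to f^{-1}(j)$ for every $j\in A\cup B$.

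The main obstacle is the bookkeeping that keeps two requirements from colliding along the chain: it must collect enough fresh vertices of $g(i)$ to populate $B$, yet be steered towards a terminal vertex whose $f$-image has \emph{left} $g(i)$ while still being oddly connected to $S$ — so one has to verify that the stopping rule is actually reached (that the two cases ``$y_k\in S$'' and ``$w_k$ evenly connected to $s(i)\setminus g(i)$'' both force the chain to continue, hence cannot persist by finiteness) and that at every interior step the acyclicity of $G_Z$ (through Lemma~\ref{lem_zpath_conn}) simultaneously bars revisits and prevents the chain from folding back into $A$. The delicate configurations are the degenerate ones — most notably when the only $S$-neighbour of $b_0$ is $f(i)$, so $w_1=i$ and the interior-vertex step fails immediately — where I expect one must instead invoke Lemma~\ref{lem_strict_subset} (applied to $S$, or to $S$ together with the $f(b_0)$-branch) and the path-extension Lemma~\ref{lem_ext_path} to exhibit $c_0$ and $B$ directly.
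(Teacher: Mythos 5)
There is a genuine gap, and it sits exactly where you flag your ``main obstacle'': nothing in your construction ever forces the stopping configuration to occur. Your chain step is always available at any interior vertex $w_k\neq i$ (Lemma~\ref{lem_evenconnections} gives at least two $s(i)$-neighbours, and your acyclicity argument makes the new $y_{k+1}$ automatically fresh), so finiteness of $s(i)$ does not make the chain halt at a good vertex --- it makes the chain run into $w_k=i$, and this is a genuine dead end: by the proof of Lemma~\ref{lem_oddi}, $f(i)$ is the \emph{only} neighbour of $i$ in $s(i)$, so the walk cannot continue, the stopping rule is not met there, and none of the required properties of $c_0$ hold. The case $w_1=i$ that you call ``degenerate'' is therefore not an edge case but the generic failure mode; to rule it out you would need to show that a vertex $w_k$ with $f(w_k)\in s(i)\setminus g(i)$ and $w_k\in Odd(g(i)\cap s(i))$ is encountered strictly before $i$, and your chain (whose choices of $y_{k+1}$ are unconstrained within $s(i)$) gives no mechanism for that. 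Your proposed rescue via Lemma~\ref{lem_strict_subset} also does not close the gap: applied to $S=g(i)\cap s(i)$ it does produce some vertex with the two ``local'' properties of $c_0$, but it gives no control over the $Z$-paths to $f^{-1}(j)$ for $j\in A$ and no new nonempty set $B\subseteq g(i)$ disjoint from $A$, which is the whole point of the lemma (it is what drives the infinite-growth contradiction in Lemma~\ref{lem_equal_layer}).

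Symptomatically, your argument never uses the gflow property (G2) of $(g,\prec_g)$, which is the engine of the paper's proof. The paper first exploits that $b_0$ (and each subsequent $b_j$) is \emph{evenly} connected to $g(i)$ while oddly connected to $g(i)\cap s(i)$, to build a chain $b_1,\dots,b_n$ whose $f$-images are fresh elements of $g(i)\setminus s(i)$; this chain must terminate because each continuation consumes a new element of the finite set $g(i)$, it yields the nonempty set $B\subseteq g(i)$ disjoint from $A$ (disjointness and odd connectivity to $S$ coming from the $Z$-paths $b_n\rightarrow b_j$ together with Lemma~\ref{lem_zpath_conn}), and it ends at a vertex $b_n$ oddly connected to $S$ and evenly connected to $s(i)$, hence adjacent to some $c\in s(i)\setminus g(i)$. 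Only then does the paper run a backward walk of the kind you propose --- but anchored at $c$ and steered by Lemmas~\ref{lem_path} and~\ref{lem_ext_path} along stepwise influencing paths, starting from the last $S$-element on $\wp_i(c)$ --- to locate $c_0$, and the $Z$-path from $c_0$ through $b_n$ and $b_0$ then covers all of $A\cup B$. In short, your single $s(i)$-walk conflates the two phases of the paper's proof and omits the $g(i)$-phase that guarantees both $B$ and the existence of a reachable terminal vertex; as written the proof does not go through.
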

\begin{proof}
	The proof consists of three steps:
	we start by constructing the set $B$;
	we proceed with finding the vertex $c_0$;
	and finally we prove that $c_0$ has the required properties.

	Define $S = g(i) \cap s(i)$, since $f(b_0)$ exists hence $b_0$ cannot be an output vertex.
	Also since $g(i) \subseteq O$ therefore $b_0$ is not in $g(i)$. As $b_0 \notin O = V_0^{\prec_g}$ and $g(i) \subseteq O$ we can conclude from Definition \ref{def_gflow_depth} that $i \in V_1^{\prec_g}$ and $i \not\prec_g b_0$.
	Therefore according to the gflow definition, $b_0$ must  be in the even neighbourhood of $g(i)$.
	We also know from the initial conditions of this lemma that $b_0$ is in the odd neighbourhood of $g(i) \cap s(i)$.
	Thus there has to exist a vertex $v_1$ in $g(i)$ to which $b_0$ is connected to, but which is not included in $g(i) \cap s(i)$, \emph{i.e.} $v_1 \in g(i) \setminus s(i)$.
	As $g : O^C \rightarrow P^{I^C}$, $v_1 \in g(i)$ cannot be an input qubit and because $f^{-1}$ exists for every non-input qubit according to Lemma \ref{lem_io_size}, there must exist a vertex $f^{-1}(v_1) = b_1$.
	It is also important for the later part of the proof to note that $f(b_1) = v_1 \notin A$. This is due to Lemma \ref{lem_zpath_conn}, which implies that $b_0$ cannot be connected to any vertex in $A$.

	Define $B_0 = S$ and consider the case when $b_1$ is evenly connected to $B_0$.
	Remember that the flow property (F3) says that there is always an edge between $b_1$ and $f(b_1)$.
	This means that $b_1$ is oddly connected to $B_1 = \lbrace f(b_1) \rbrace \cup B_0$ which is a subset of $g(i)$.
	But again because of the gflow property (G2) we have that $b_1$ must be evenly connected to $g(i)$.
	Thus there must exist another vertex $b_2$ such that $b_1$ is connected to $f(b_2) \in g(i) \setminus B_1$, otherwise $b_1$ could not be in the even neighbourhood of $g(i)$.
	If $b_2$ is evenly connected to $B_1$, it must be oddly connected to $B_2 = \lbrace f(b_2) \rbrace \cup B_1$ which is again a subset of $g(i)$.
	If $b_2$ is oddly connected to $B_2$ there must exist a vertex $b_3$ such that $b_3$ is connected to $f(b_3) \in g(i) \setminus B_2$, otherwise $b_2$ could not be in the even neighbourhood of $g(i)$.
	We can continue this scheme until we get to vertex $b_n$ that is oddly connected to $B_{n-1}$.
	As $B_n = \lbrace f(b_n) \rbrace \cup B_{n-1}$ and there exists an edge between $b_n$ and $f(b_{n})$ we get that $b_n$ must be evenly connected to $B_n$.
	Such vertex $b_n$ must exist, otherwise we could continue selecting elements from $g(i)$ infinitely, but $(G, I, O)$ is a finite open graph.
	We select $B = B_n \setminus S$. Recall that $f(b_1)$ must exist, therefore $B$ must have at least on element. 
	
	Next we show $b_n$ is oddly connected to $S$. We note that we have the following:
	\begin{align*}
		\forall j \in \lbrace 1, 2, \dots, n \rbrace \quad b_j \in N(f(b_j)) \quad  \land \quad b_{j-1} \in N(f(b_j))
	\end{align*}
	Corollary \ref{cor_zcorr} implies that for every $j > 0$ there exists a $Z$-correction from $b_j$ to $b_{j-1}$.
	Thus we have a $Z$-path from $b_n$ to every other $b_j$ where $j < n$, hence from Lemma \ref{lem_zpath_conn} we conclude $b_n$ cannot be connected to any vertex $f(b_j) \in B_{n-1}$ where $j < n$. The number of edges that connect the vertices in $B_{n-1}$ to vertex $b_n$ has to be the same as the number of edges between vertices of $S$ and $b_n$, because $B_{n-1} = \lbrace f(b_1), f(b_2), \dots, f(b_{n-1}) \rbrace \cup S$.	As $b_n$ was oddly connected to $B_{n-1}$, it must also be oddly connected to $S$. Note that however $b_n$ does not have the required properties for $c_0$, but will be used to find such a vertex.
	
	The gflow definition says that $b_n$ must be evenly connected to $s(i)$. It is also oddly connected to $s(i) \cap g(i)$ hence there must exist a vertex $c \in s(i) \setminus g(i)$ to which $b_n$ is connected to.
	According to Lemma \ref{lem_path} there exists a stepwise influencing path $\wp_i(c)$ and due to Definition \ref{def_path}, $f(i)$ has to be on on this path. Therefore there exists at least one element in $\wp_i(c)$ that is in $S$.
	Let $f(a_0)$ be the last element of the path $\wp_i(c)$ in $S$.
	
	Define $a_1$ to be the vertex in $\wp_i(c)$ that comes after $f(a_0)$.	We know that $a_1$ has odd many $Z$-paths from $i$ because Definition \ref{def_path} implies that $f(a_1) \in s(i)$.
	If $a_1$ is already oddly connected to $S$, then we are done and $a_1 = c_0$.
	If $a_1$ is evenly connected to $S \subset s(i)$, then we know that it must be oddly connected to $S \cup \lbrace f(a_1) \rbrace \subseteq s(i)$.
	There must exist another vertex $f(a_2) \in s(i) \setminus (S \cup \lbrace f(a_1) \rbrace)$ to which $a_1$ is connected to for it to be evenly connected to $s(i)$ as is required by Lemma \ref{lem_evenconnections}.
	Because $f(a_2) \in s(i)$ we know there exists a stepwise influencing path $\wp_i(f(a_2))$ (Lemma \ref{lem_path}) and we can extend that path by $a_1$ and $f(a_1)$ as was proven in Lemma \ref{lem_ext_path}.
	We move backward on this path and find the element $a_2$. If $a_2$ is oddly connected to $S$, we are done and set $c_0 = a_2$. Otherwise we can continue as was the case for $a_1$ until we find an element $a_m$ that is oddly connected to $S$.
	This element must exist since graph is finite and the $Z$ corrections do not create any loops.
	We select $c_0 = a_m$. Note that $a_m$ cannot be $i$ because $f(i) \in S = s(i) \cap g(i)$ but $f(a_m) \notin g(i)$.
	
	There is a $Z$-path from $a_m = c_0$ to $a_1$ (we moved backwards along this path to find $a_m$) and from $a_1$ to $b_n$ because of the way we selected $a_1$.
	There also exists a $Z$-path from $b_n$ to every other $b_j$ such that $0 \leq j < n$, thus $a_m$ will also have a $Z$-path to every $b_j$ in $\lbrace b_1, b_2, \dots, b_n \rbrace$.
	Even more, because:
	\begin{align*}
		&a_m \buildrel Z \over \longrightarrow b_n
		\quad \land \quad
		b_n \buildrel Z \over \longrightarrow b_0
		\quad \land \quad
		\forall j \in A \quad \quad b_0 \buildrel Z \over \longrightarrow f^{-1}(j)
		\quad \Rightarrow \\
		\Rightarrow \quad
		&\forall j \in A \quad \quad a_m \buildrel Z \over \longrightarrow f^{-1}(j)
	\end{align*}
	This completes the proof.
\end{proof}

Finally we could put together Lemmas \ref{lem_max_base} and \ref{lem_max_step}.

\begin{lemma}[\textbf{Equality of the penultimate SSF and optimal gflow layer}]
	\label{lem_equal_layer}
	Let $(G, I, O)$ be an open graph with flow $(f, \prec_f)$, corresponding SSF $(s, \prec_s)$ and optimal gflow $(g, \prec_g)$ such that $|I| = |O|$. Then $V_1^{\prec_s} =  V_1^{\prec_g}$. 
\end{lemma}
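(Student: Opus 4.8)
The plan is to argue by contradiction: assuming $V_1^{\prec_s}\neq V_1^{\prec_g}$, I would use Lemmas \ref{lem_max_base} and \ref{lem_max_step} to manufacture an infinite strictly increasing chain of subsets of the finite set $g(i)$, which is impossible.

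\textbf{Fixing the last layer.} By Theorem \ref{theorem_ssfisgflow} the SSF $(s,\prec_s)$ is itself a gflow, so it can be compared with the optimal gflow $(g,\prec_g)$. Lemma \ref{lem_last_layer} gives $V_0^{\prec_g}=O$, and in the layering of Proposition \ref{prop_ssf} only output vertices receive value $0$ (for $i\notin O$ one has $L_s(i)=\max_{j\in s(i)}(L_s(j)+1)\ge 1$), so $V_0^{\prec_s}=O$ as well. Hence the two gflows already agree on layer $0$, and a disagreement can only occur at layer $1$. If $(s,\prec_s)$ happens to be optimal its layering is the unique optimal layering and there is nothing to prove; otherwise the remark following Definition \ref{def_delayed_gflow} says that either $V_1^{\prec_s}=V_1^{\prec_g}$ — and we are done — or there exists a vertex $i\in V_1^{\prec_g}\setminus V_1^{\prec_s}$. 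Thus it suffices to contradict the existence of such an $i$.

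\textbf{The infinite ascent.} Given $i\in V_1^{\prec_g}\setminus V_1^{\prec_s}$, the hypotheses of Lemma \ref{lem_max_base} are in place ($|I|=|O|$, $V_0^{\prec_g}=O$, and the vertex $i$), so we obtain $g(i)\subseteq O$, $g(i)\cap s(i)\subset s(i)$, $f(i)\in g(i)$, and a vertex $b_0$ with $b_0\in Odd(g(i)\cap s(i))$ and $f(b_0)\in s(i)\setminus g(i)$. These are exactly the hypotheses of Lemma \ref{lem_max_step} taken with $A=\emptyset$ (the condition $\forall j\in A\ \exists\ b_0\buildrel Z\over\longrightarrow f^{-1}(j)$ being vacuous). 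Iterating Lemma \ref{lem_max_step} then produces, from a valid pair $(b_0,A)$, a new valid pair $(c_0,A\cup B)$ with $B\subseteq g(i)$ nonempty and disjoint from $A$; since the three ``$i$''-conditions are unchanged, the output is again a legal input, so the iteration never terminates. But $|A|$ strictly grows at every step while $A$ stays inside the finite set $g(i)\subseteq O$ — a contradiction. Hence no such $i$ exists, and by the dichotomy above $V_1^{\prec_s}=V_1^{\prec_g}$.

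\textbf{Where the difficulty sits.} The genuinely hard combinatorics — the nested selections along stepwise influencing paths and $Z$-paths, and the parity bookkeeping over $Odd(\,\cdot\,)$ — is already packaged inside Lemmas \ref{lem_max_base} and \ref{lem_max_step}, so this final step is light. The only points requiring care are: verifying that the output of Lemma \ref{lem_max_step} re-satisfies its own input hypotheses, so the loop is well founded; noting that $B\neq\emptyset$ and $B\cap A=\emptyset$ force $|A|$ to increase, which is what makes finiteness of $g(i)$ bite; and being precise about the reduction to the single case $i\in V_1^{\prec_g}\setminus V_1^{\prec_s}$, which rests on $V_0^{\prec_g}=V_0^{\prec_s}=O$ together with uniqueness of the optimal layering and the remark after Definition \ref{def_delayed_gflow}.
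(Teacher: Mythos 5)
Your proposal is correct and follows essentially the same route as the paper: fix $V_0^{\prec_s}=V_0^{\prec_g}=O$, take $i\in V_1^{\prec_g}\setminus V_1^{\prec_s}$ via the remark after Definition \ref{def_delayed_gflow}, seed Lemma \ref{lem_max_step} with the output of Lemma \ref{lem_max_base} and $A=\emptyset$, and iterate to contradict finiteness (the paper phrases the contradiction as infinitely many vertices of $V(G)$, you as an unboundedly growing subset of the finite set $g(i)$ — the same argument).
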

\begin{proof}
Assume  $V_1^{\prec_s} \neq V_1^{\prec_g}$ we show how we can choose infinitely many different vertices from $V(G)$. Due to Definition \ref{def_delayed_gflow} we have $|V_1^{\prec_s}| \leq |V_1^{\prec_g}|$ and since $V_1^{\prec_s} \neq V_1^{\prec_g}$ hence trivially  $V_1^{\prec_g} \not\subset V_1^{\prec_s}$ and there must exist a vertex $i$ in $V_1^{\prec_g} \setminus V_1^{\prec_s}$. Then from Lemma \ref{lem_last_layer} we have $V^0_g = O$ and using Lemma \ref{lem_max_base} we obtain the following:
	\begin{itemize}
		\item $g(i) \subseteq O$
		\item $f(i) \in g(i)$
		\item $g(i) \cap s(i) \subset s(i)$
	\end{itemize}
	and that there exists another vertex $b_0$ such that
	\begin{itemize}
		\item $b_0 \in Odd(g(i) \cap s(i))$
		\item $f(b_0) \in s(i) \setminus g(i)$
	\end{itemize}
	These constraints together with an empty set $A$ allow us to apply Lemma \ref{lem_max_step}. Lemma \ref{lem_max_step} is constructed in such a way that whenever we can apply it to a  (possibly empty) set $A$, it proves the existence of another set $B$ such that that $|A| < |A \cup B|$ and Lemma \ref{lem_max_step} is applicable to the new set $A \cup B$.
	Thus it is possible to apply Lemma \ref{lem_max_step} infinitely many times and construct a subset of $V(G)$ containing infinitely many vertices. This leads to a contradiction as $G$ is a finite graph.
\end{proof}

\subsubsection{Reducing the open graph}
\label{sec_optimality_reduced_og}

The equality of penultimate layers of SSF and gflow might suggest that one could prove the equality of other layers simply by removing the last layer from the open graph and reapply the lemmas from the last section. However this would fail as the vertices in any layers can also use the output vertices in their correcting sets. Therefore we need to be careful which vertices we remove such that the reduced graph still have a gflow.

\begin{definition}
	\label{def_reduced_og}
	If $(G, I, O)$ is an open graph with flow $(f, \prec_f)$ and corresponding SSF $(s, \prec_s)$ then we call the open graph $(G', I, O')$ a \emph{reduced open graph according to} $(s, \prec_s)$, where
	\begin{itemize}
		\item $R = \lbrace v \in O \; | \; f^{-1}(v) \in V_1^{\prec_s} \rbrace$ is the set of removed vertices.
		\item $G' = (V', E')$ where
			\subitem $V' = V \setminus R$
			\subitem $E' = E \setminus (V \times R)$
		\item $O' = (V_1^{\prec_s} \cup O) \setminus R$
	\end{itemize}
\end{definition}

We will omit "according to..." and call $(G', I, O')$ just reduced open graph when it is clear from the text which SSF is used for constructing it.
An example of a reduced open graph is shown in Figure \ref{fig_reduced_og}

\begin{figure}[h]
	\begin{center}
		\resizebox{\hsize}{!}{\includegraphics{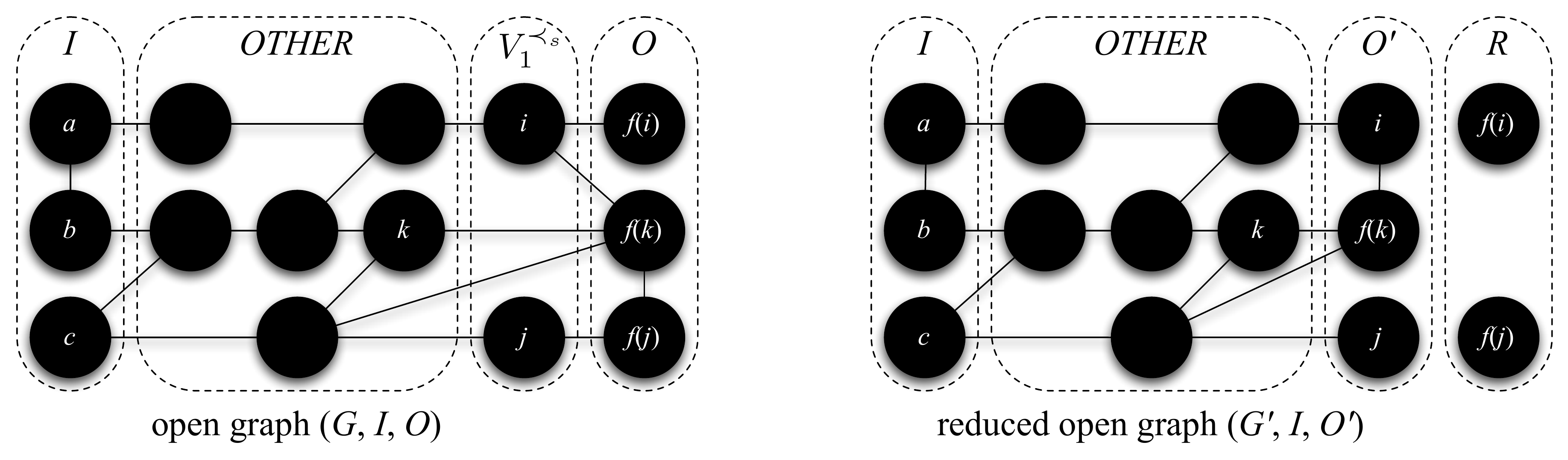}}
	\end{center}
	\caption{An example of an SSF reduced open graph (right) together with the original open graph (left).}
	\label{fig_reduced_og}
\end{figure}

As we saw in the previous section, we needed the fact that $|I|=|O|$ to be able to prove that the penultimate layers of SSF and optimal gflow are equal.
If we want to apply the same lemmas to the new reduced open graph, we need to guarantee that if we start with a graph where input size equals output size, the same holds for the reduced open graph.

\begin{lemma}
	\label{lem_reduced_og_output}
	Let $(G', I, O')$ be a reduced open graph of the open graph $(G, I, O)$, then $|O| = |O'|$.
\end{lemma}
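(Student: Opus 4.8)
The plan is to compute $|O'|$ directly from Definition~\ref{def_reduced_og} and reduce the statement to the single identity $|V_1^{\prec_s}| = |R|$, which I will then obtain by showing that the flow function $f$ restricts to a bijection from $V_1^{\prec_s}$ onto $R$.

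First I would unfold the definitions. Since $R \subseteq O$ by Definition~\ref{def_reduced_og}, we have $O' = (V_1^{\prec_s} \cup O) \setminus R = V_1^{\prec_s} \cup (O \setminus R)$. By Proposition~\ref{prop_ssf} the bottom SSF layer is $V_0^{\prec_s} = O$, and because the SSF layers partition $V(G)$ the sets $V_1^{\prec_s}$ and $O$ are disjoint; hence the union above is disjoint and $|O'| = |V_1^{\prec_s}| + |O| - |R|$. Thus it suffices to prove $|V_1^{\prec_s}| = |R|$.

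Next I would establish the bijection. Take any $w \in V_1^{\prec_s}$. Since $V_0^{\prec_s} = O$, the vertex $w$ is non-output, so $f(w)$ is defined; moreover $f(w) \in s(w)$ by Proposition~\ref{prop_ssf}, and since the SSF satisfies the gflow property (G1) we have $w \prec_s f(w)$, i.e. $L_s(w) > L_s(f(w))$. As $w \in V_1^{\prec_s}$ means $L_s(w) = 1$, this forces $L_s(f(w)) = 0$, i.e. $f(w) \in V_0^{\prec_s} = O$. Because $f$ is injective, $f^{-1}(f(w)) = w \in V_1^{\prec_s}$, so $f(w) \in R$ by the definition of $R$; thus $f(V_1^{\prec_s}) \subseteq R$. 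Conversely, if $v \in R$ then $f^{-1}(v)$ exists and lies in $V_1^{\prec_s}$, so $v = f(f^{-1}(v)) \in f(V_1^{\prec_s})$. Hence $f(V_1^{\prec_s}) = R$, and injectivity of $f$ gives $|V_1^{\prec_s}| = |f(V_1^{\prec_s})| = |R|$. Combining with the count above, $|O'| = |V_1^{\prec_s}| + |O| - |R| = |O|$.

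The only step that is more than bookkeeping is showing that $f$ maps $V_1^{\prec_s}$ into $O$ rather than merely into some earlier layer: this is exactly the fact that a vertex whose SSF layer is the penultimate one has its entire correcting set — and in particular its flow image — among the output vertices. Everything else is injectivity of $f$ and elementary set arithmetic. I would also remark that this argument never uses $|I| = |O|$, consistent with the statement as given; that hypothesis becomes relevant only afterwards, to control the remaining layer structure of the reduced graph.
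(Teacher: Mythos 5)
Your proof is correct and follows essentially the same route as the paper's: both arguments show that $f$ restricts to a bijection between $V_1^{\prec_s}$ and $R$ (using $f(i)\in s(i)\subseteq O$ for $i$ in the penultimate layer, injectivity of $f$, and the definition of $R$), so that each vertex removed from $O$ is exactly compensated by one added from $V_1^{\prec_s}$. Your version merely spells out the disjointness and counting steps more explicitly.
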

\begin{proof}
	Let $R$ be the set of vertices removed from $G$, then for every vertex $i \in V_1^{\prec_s}$ we have a corresponding unique vertex $f(i)$ in $R$ since Proposition \ref{prop_ssf} implies that $s(i) \subseteq O$ and $f(i) \in s(i)$. On the other hand, for every vertex in $R$ there exists a corresponding vertex in $V_1^{\prec_s}$ from the definition of $R$. Therefore for every vertex $v \in R$ that we remove from $O$ when constructing $O' = (V_1^{\prec_s} \cup O) \setminus R$ we add another vertex $f^{-1}(v) \in V_1^{\prec_s}$ and it must hold that $|O| = |O'|$.
\end{proof}

The next lemma is used later to construct a gflow of the reduced open graph from the gflow of the original open graph.

\begin{lemma}
	\label{lem_set_property}
	Let $(G, I, O)$ be an open graph and $A$ and $B$ two sets in $O$ such that $Odd(B) \cap O^C = \emptyset$.
	Then $Odd((A \cup B) \setminus (A \cap B)) \cap O^C = Odd(A) \cap O^C$.
\end{lemma}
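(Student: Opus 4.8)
The plan is to reduce everything to the linearity of the map $Odd(\cdot)$ over $\mathbb{F}_2$. First I would record the elementary observation that $(A \cup B) \setminus (A \cap B)$ is exactly the symmetric difference $A \triangle B$, so the statement is really about $Odd(A \triangle B) \cap O^C$. Next I would establish the key identity $Odd(K_1 \triangle K_2) = Odd(K_1) \triangle Odd(K_2)$ for arbitrary vertex sets $K_1, K_2$: fixing a vertex $k$, one has $N_G(k) \cap (K_1 \triangle K_2) = (N_G(k) \cap K_1) \triangle (N_G(k) \cap K_2)$, and since $|X \triangle Y| \equiv |X| + |Y| \pmod 2$, the parity of $|N_G(k) \cap (K_1 \triangle K_2)|$ equals the sum mod $2$ of the parities of $|N_G(k) \cap K_1|$ and $|N_G(k) \cap K_2|$. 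Hence $k \in Odd(K_1 \triangle K_2)$ precisely when $k$ lies in exactly one of $Odd(K_1), Odd(K_2)$, which is the asserted equality.

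Applying this identity with $K_1 = A$ and $K_2 = B$ yields $Odd(A \triangle B) = Odd(A) \triangle Odd(B)$. I would then intersect both sides with $O^C$ and use the standard set-theoretic fact that intersection distributes over symmetric difference, $(X \triangle Y) \cap Z = (X \cap Z) \triangle (Y \cap Z)$, to obtain $Odd(A \triangle B) \cap O^C = (Odd(A) \cap O^C) \triangle (Odd(B) \cap O^C)$. Finally, the hypothesis $Odd(B) \cap O^C = \emptyset$ makes the second term vanish, collapsing the right-hand side to $(Odd(A) \cap O^C) \triangle \emptyset = Odd(A) \cap O^C$, which is exactly the claimed equality.

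I do not expect any genuine obstacle here; the argument is a short direct computation. The only points requiring a little care are matching up the set operations correctly — recognising $(A\cup B)\setminus(A\cap B)$ as $A \triangle B$ and using both the $\mathbb{F}_2$-linearity of $Odd$ and the distributivity of $\cap$ over $\triangle$ in precisely this form — after which the hypothesis kills the extra term and nothing else is needed.
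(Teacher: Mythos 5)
Your proof is correct, and it takes a genuinely different (more algebraic) route than the paper's. The paper fixes a vertex $v \in O^C$ and runs an explicit four-case analysis on the parities of $v$'s connections to the pieces $A \setminus B$, $A \cap B$ and $B \setminus A$, using the hypothesis $Odd(B) \cap O^C = \emptyset$ in each case to force the parity of the connection to $(A \setminus B) \cup (B \setminus A)$ to agree with that to $A$. You instead isolate the general identity $Odd(K_1 \triangle K_2) = Odd(K_1) \triangle Odd(K_2)$ (the $\mathbb{F}_2$-linearity of $Odd$, proved by the same mod-2 counting the paper uses implicitly), combine it with distributivity of $\cap$ over $\triangle$, and let the hypothesis annihilate the term $Odd(B) \cap O^C$. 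The underlying arithmetic is identical, but your packaging is shorter, dispenses with the case enumeration, and is slightly more general: it never uses that $A, B \subseteq O$, and the restriction to $O^C$ enters only in the final step. What the paper's version buys is an entirely elementary, picture-level verification (it is the decomposition illustrated in their accompanying figure), whereas your identity-based argument states the structural fact cleanly and would serve equally well where the lemma is invoked in the reduced-gflow construction.
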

\begin{proof}
	There are altogether four different possibilities for a vertex $v \in O^C$ to be connected to the sets $A$  and $B$ satisfying $Odd(B) \cap O^C = \emptyset$ as shown in Figure \ref{fig_set_property}:
	\begin{align*}
		1) \quad &v \in Even(A) \cap Odd(A \setminus B) \Rightarrow
		v \in Odd(A \cap B) \Rightarrow
		v \in Odd(B \setminus A) \Rightarrow \\
		\Rightarrow \quad
		&v \in Even((A \setminus B) \cup (B \setminus A)) \\
		2) \quad &v \in Even(A) \cap Even(A \setminus B) \Rightarrow
		v \in Even(A \cap B) \Rightarrow
		v \in Even(B \setminus A) \Rightarrow \\
		\Rightarrow \quad
		&v \in Even((A \setminus B) \cup (B \setminus A)) \\
		3) \quad &v \in Odd(A) \cap Odd(A \setminus B) \Rightarrow
		v \in Even(A \cap B) \Rightarrow
		v \in Even(B \setminus A) \Rightarrow \\
		\Rightarrow \quad
		&v \in Odd((A \setminus B) \cup (B \setminus A)) \\
		4) \quad &v \in Odd(A) \cap Even(A \setminus B)\Rightarrow
		v \in Odd(A \cap B) \Rightarrow
		v \in Odd(B \setminus A) \Rightarrow \\
		\Rightarrow \quad
		&v \in Odd((A \setminus B) \cup (B \setminus A))
	\end{align*}
	We see that every time $v$ is evenly connected to $A$ it is also evenly connected to $(A \setminus B) \cup (B \setminus A)$ and every time $v$ is oddly connected to $A$ it is also oddly connected to $(A \setminus B) \cup (B \setminus A)$.
	Because $(A \setminus B) \cup (B \setminus A) = (A \cup B) \setminus (A \cap B)$ ans $v$ is in $O^C$ it must hold that $Odd((A \cup B) \setminus (A \cap B)) \cap O^C = Odd(A) \cap O^C$.
\end{proof}

\begin{figure}
	\begin{center}
		\resizebox{\hsize}{!}{\includegraphics{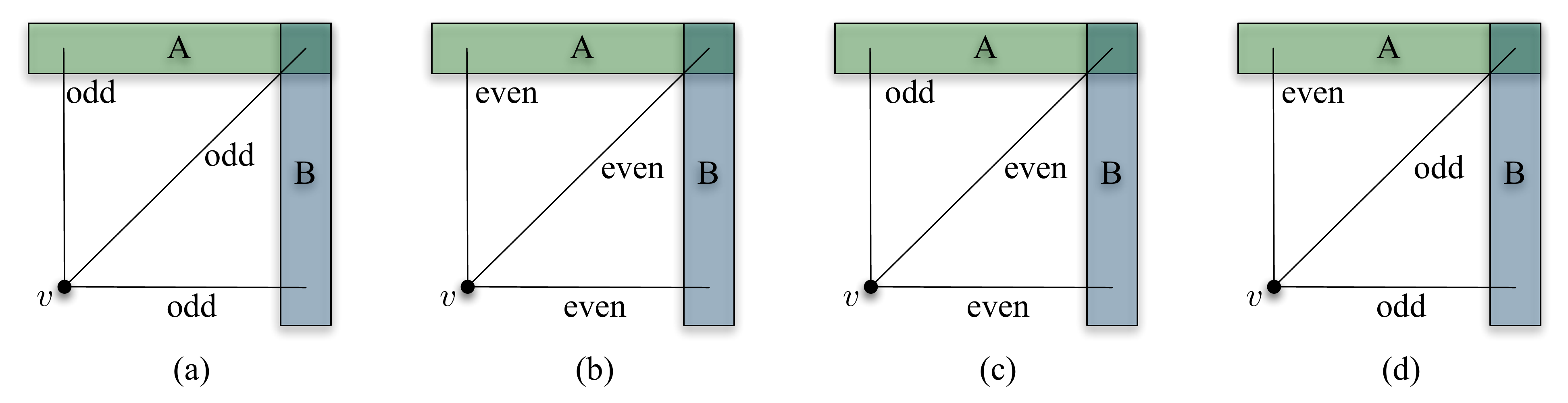}}
	\end{center}
	\caption{The four possibilities for a vertex $v \in O^C$ to be connected connected to the sets $A$ and $B$. The vertex $v$ can be either oddly (a) or evenly (b) connected the sets $A \setminus B$, $B \setminus A$ and $A \cap B$; oddly connected to $A \setminus B$ and evenly to $B \setminus A$ and $A \cap B$ (c); or evenly connected to $A \setminus B$ and oddly to $B \setminus A$ and $A \cap B$ (d).}
	\label{fig_set_property}
\end{figure}

We start by creating a function that will be proven to have the required properties of the gflow.

\begin{lemma}[\textbf{Finding the reduced gflow function}]
	\label{lem_reduced_gflow_existance}
	Let $(G, I, O)$ be an open graph with flow $(f, \prec_f)$, SSF $(s, \prec_s)$ and optimal gflow $(g, \prec_g)$ such that $|I| = |O|$.
	Let $(G', I, O')$ be the SSF reduced open graph of $(G, I, O)$ with the removed vertices set $R$, then  there exists a function $g' : O'^C \rightarrow P^{I^C \cap V(G')}$ such that:
	\begin{enumerate}
		\item $\forall i \in O'^C \quad g'(i) \cap O'^C = g(i) \cap O'^C$
		\item $\forall i \in O'^C \quad Odd(g'(i)) \cap O'^C = Odd(g(i)) \cap O'^C$
	\end{enumerate}
\end{lemma}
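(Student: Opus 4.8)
\emph{Setup.}
First I would unwind the notation. Since $f$ is injective and, for $w\in V_1^{\prec_s}$, the whole of $s(w)$ sits in $V_0^{\prec_s}=O$ (because $L_s(w)=1$) while $f(w)\in s(w)$, the map $w\mapsto f(w)$ is a bijection of $V_1^{\prec_s}$ onto $R$, and $R\subseteq I^C$; a short computation then gives $O'^C=O^C\setminus V_1^{\prec_s}$ and $I^C\cap V(G')=I^C\setminus R$. By Lemma~\ref{lem_equal_layer}, $V_1^{\prec_s}=V_1^{\prec_g}$, so for each $w\in V_1^{\prec_s}$ the gflow conditions \textit{(G2)},\textit{(G3)} together with $w\in V_1^{\prec_g}$ force $g(w)\subseteq V_0^{\prec_g}=O$ and $Odd(g(w))\cap O^C=\{w\}$, hence $Odd(g(w))\cap O'^C=\emptyset$. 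Finally, the argument inside the proof of Lemma~\ref{lem_max_base} that establishes $f(i)\in g(i)$ uses only $i\in V_1^{\prec_g}$ and $|I|=|O|$ (through Lemma~\ref{lem_io_size}), not the hypothesis $i\notin V_1^{\prec_s}$, so it applies verbatim: $f(w)\in g(w)$ for every $w\in V_1^{\prec_s}$.

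\emph{The construction.}
For $i\in O'^C$ I set
\begin{align*}
  g'(i)\ =\ g(i)\ \triangle\ \bigtriangleup_{w\in Y_i}g(w),
\end{align*}
where $\triangle$ is symmetric difference and $Y_i\subseteq V_1^{\prec_s}$ is to be chosen; write $\Delta_i=\bigtriangleup_{w\in Y_i}g(w)$. Because every $g(w)\subseteq O$ we have $\Delta_i\subseteq O$, so $\Delta_i\cap O'^C=\emptyset$ and hence $g'(i)\cap O'^C=g(i)\cap O'^C$, which is condition \textit{(1.)}. Using that $Odd$ respects symmetric difference, $Odd(A\triangle B)=Odd(A)\triangle Odd(B)$ (the computation underlying Lemma~\ref{lem_set_property}), together with $Odd(g(w))\cap O'^C=\emptyset$, we get $Odd(g'(i))\cap O'^C=Odd(g(i))\cap O'^C$, which is condition \textit{(2.)}. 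Also $g(i),g(w)\subseteq I^C$, so $g'(i)\subseteq I^C$. Thus the only thing left to arrange is $g'(i)\cap R=\emptyset$, i.e. $g'(i)\subseteq I^C\cap V(G')$.

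\emph{Reduction.}
Since $\Delta_i\cap R=\bigtriangleup_{w\in Y_i}(g(w)\cap R)$, the requirement $g'(i)\cap R=\emptyset$ is exactly
\begin{align*}
  g(i)\cap R\ =\ \bigtriangleup_{w\in Y_i}\bigl(g(w)\cap R\bigr),
\end{align*}
i.e. $g(i)\cap R$ must lie in the $\mathbb{F}_2$-span of $\{\,g(w)\cap R : w\in V_1^{\prec_s}\,\}$ inside $\mathbb{F}_2^R$; if this holds, take $Y_i$ to realise the combination and the lemma is proved. So everything reduces to this spanning statement, which is the main obstacle. The natural route is to show the $V_1^{\prec_s}\times V_1^{\prec_s}$ matrix $M$ over $\mathbb{F}_2$ with $M_{w',w}=1\iff f(w')\in g(w)$ (its rows indexed by $R$ through $v=f(w')$) is non-singular -- then $\{g(w)\cap R\}$ spans all of $\mathbb{F}_2^R$; note we already know $M$ has $1$s on the diagonal, that is $f(w)\in g(w)$.

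\emph{The hard part.}
To prove non-singularity, suppose $S:=\bigtriangleup_{w\in Y}g(w)\subseteq O\setminus R$ for some nonempty $Y\subseteq V_1^{\prec_s}$; then $S\subseteq O$ and $Odd(S)\cap O^C=Y\neq\emptyset$. I expect the contradiction to come from the same finite-descent pattern used in Lemmas~\ref{lem_max_base} and \ref{lem_max_step}: start from a witness in $Y\subseteq Odd(S)$, use \textit{(G2)} for the $w$'s involved and flow property \textit{(F2)} to force alternating even/odd connections into and out of $S$, extract the induced $Z$-corrections via Corollary~\ref{cor_zcorr}, and invoke Lemma~\ref{lem_zpath_conn} with the acyclicity of the $Z$-correction graph and finiteness of $G$ to rule out the edges the chain demands, producing an impossible infinite strictly decreasing sequence. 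The delicate point -- and where the real work lies -- is the case in which such a descent threatens to revisit a vertex of $Y$ (equivalently, some $f(w)$ re-enters another $g(w')$), which is the phenomenon behind the heavy bookkeeping of Lemma~\ref{lem_max_step}; I would close it with the stepwise-influencing-path machinery of Section~\ref{sec_ip} (Lemmas~\ref{lem_ext_path} and \ref{lem_past_conn}) and Lemma~\ref{lem_strict_subset}.
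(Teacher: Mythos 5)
Your setup and the algebraic skeleton are sound: symmetric-differencing $g(i)$ with sets that are contained in $O$, avoid $O'^C$, and have odd neighbourhood disjoint from $O'^C$ does preserve conditions \textit{(1)} and \textit{(2)} (your use of the linearity $Odd(A\triangle B)=Odd(A)\triangle Odd(B)$ is a clean packaging of Lemma~\ref{lem_set_property}), and your observation that $f(w)\in g(w)$ for all $w\in V_1^{\prec_g}$ follows from the relevant part of Lemma~\ref{lem_max_base} is correct. The genuine gap is exactly where you flag it: your construction only works if $g(i)\cap R$ lies in the $\mathbb{F}_2$-span of $\{\,g(w)\cap R : w\in V_1^{\prec_s}\,\}$, and all you actually establish about the matrix $M_{w',w}=[f(w')\in g(w)]$ is that its diagonal is all ones, which over $\mathbb{F}_2$ does not imply invertibility (nor even that the particular vectors $g(i)\cap R$ lie in the span). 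The concluding paragraph is a hope, not a proof: the descent arguments of Lemmas~\ref{lem_max_base} and \ref{lem_max_step} are tailored to a single correcting set $g(i)$ compared against $s(i)$, and it is not shown how they control an arbitrary symmetric difference $\bigtriangleup_{w\in Y}g(w)$; in particular the ``revisiting'' case you mention is precisely the unresolved issue. As it stands the proof is incomplete.

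The paper closes exactly this hole by eliminating $R$-elements with SSF sets rather than gflow sets, and you have every ingredient needed to do the same. If $j\in R$ appears in the current correcting set, replace the set by its symmetric difference with $s(f^{-1}(j))$: since $f^{-1}(j)\in V_1^{\prec_s}$ one has $s(f^{-1}(j))\subseteq O$ and (by Lemma~\ref{lem_evenconnections}) $Odd(s(f^{-1}(j)))\cap O'^C=\emptyset$, so conditions \textit{(1)} and \textit{(2)} are preserved just as in your argument; moreover $j=f(f^{-1}(j))\in s(f^{-1}(j))$, so $j$ is removed, and any other $R$-element $k\in s(f^{-1}(j))$ satisfies $f^{-1}(j)\prec_f f^{-1}(k)$ via a $Z$-path (Proposition~\ref{prop_ssf}), so the elimination is triangular with unit diagonal with respect to $\prec_f$ and terminates --- no invertibility statement about the optimal gflow is needed. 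In other words, your spanning requirement is trivially satisfied for the family $\{s(w)\cap R\}_{w\in V_1^{\prec_s}}$, whereas for $\{g(w)\cap R\}$ it remains an unproved (and unnecessary) claim.
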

\begin{proof}
	We start by noting that according to Lemma \ref{lem_penultimate_layer} we can create an optimal gflow $(\tilde{g}, \prec_{\tilde{g}})$ of the open graph $(G, I, O \cup V^{\prec_g}_1) = (G, I, O' \cup R)$ by restricting $g$ to $V(G) \setminus (V^{\prec_g}_1 \cup V^{\prec_g}_0) = O'^C$ and setting $\prec_{\tilde{g}} = \prec_g \setminus V^{\prec_g}_1 \times V^{\prec_g}_0$. We construct our desired $g'$ function from $\tilde{g}$.
	
	We consider $i \in O'^C$, if there exists a vertex $j \in R \cap \tilde{g}(i)$ then from the reduced open graph definition we have $f^{-1}(j) \in V^{\prec_g}_1$. Also from Lemma \ref{lem_equal_layer} we have $V^{\prec_g}_1 = V^{\prec_s}_1$ and thus $f^{-1}(j) \in V^{\prec_s}_1$.
	According to Proposition \ref{prop_ssf} this means that $s(f^{-1}(j)) \subseteq O$. We have $Odd(s(f^{-1}(j)) \cap O'^C = \emptyset$ since the only odd neighbours of $s(f^{-1}(j))$ are either output vertices or the vertex $f^{-1}(j) \in V_1^{\prec_g} \subseteq O'$.

	Now we define $g'(i) = (\tilde{g}(i) \cup s(f^{-1}(j))) \setminus (\tilde{g}(i) \cap s(f^{-1}(j)))$, hence $j\not \in g'(i)$. Moreover Lemma \ref{lem_set_property} implies that $Odd(g'(i)) \cap O'^C = Odd(\tilde{g}(i)) \cap O'^C$. Also $g'(i) \cap O'^C = \tilde{g}(i) \cap O'^C$ since $s(f^{-1}(j)) \subseteq O'$.
	Note that, since the new set will be constructed via a union of two sets we might add another vertex $k \in R$ to the set $g'(i)$.
	However, we can remove any such vertex $k$ added to $g'(i)$ by applying the same procedure recursively.
	For every such vertex $k$, it must hold that $f^{-1}(j) \prec_f f^{-1}(k)$ since $k \in s(f^{-1}(j))$ and Proposition \ref{prop_ssf} implies the existence of a $Z$-path from $f^{-1}(j)$ to $f^{-1}(k)$. Now we remove $k$ via the above procedure \emph{i.e.} defining $g'(i) = (g'(i) \cup s(f^{-1}(k))) \setminus (g'(i) \cap s(f^{-1}(k)))$. If this would add vertex $j$ again to $g'(i)$, hence there exists a $Z$-path from $f^{-1}(k)$ to $f^{-1}(j)$ and $f^{-1}(k) \prec_f f^{-1}(j)$ which contradicts the previous relation.
	This procedure will eventually terminate and remove all undesired vertices $k \in R$ since in the above procedure we never create any $Z$-path loops.
\end{proof}

We call a function which satisfies properties (1) and (2) of Lemma \ref{lem_reduced_gflow_existance} the \emph{reduced gflow function} of $g$.
We can interpret these properties as saying that the gflow function $g'$ differs from the gflow function $g$ only by the vertices in $O'$, \emph{i.e.} the other elements in the correcting set are left unchanged.
As a gflow consists of a function and a partial order, we still need to define a valid partial order.
The one that is most useful to us is such that it preserves as much relations as possible from the original gflow, hence the layering structures remain similar.

\begin{lemma}[\textbf{Constructing the reduced gflow}]
	\label{lem_ordering}
	Let $(G, I, O)$ be an open graph with SSF $(s, \prec_s)$, gflow $(g, \prec_g)$ and $(G', I, O')$ a reduced open graph of $(G', I, O')$.
	If $g'$ is a reduced gflow function of $(g, \prec_g)$, then $(g', \prec_{g'})$ is a gflow of $(G', I, O')$, where
	\begin{align*}
		&\forall i, j \in O'^C \quad i \prec_g j \Leftrightarrow i \prec_{g'} j \\
		&\forall i \in O'^C, \forall j \in O' \cap g'(i) \quad \Rightarrow \quad i \prec_{g'} j
	\end{align*}
\end{lemma}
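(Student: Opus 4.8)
The plan is to verify directly that $(g', \prec_{g'})$ satisfies the three gflow axioms (G1)--(G3) of Definition~\ref{def_gflow} for the open graph $(G', I, O')$, after first checking that $\prec_{g'}$ is a bona fide strict partial order. The organising idea throughout is a case split on the target vertex: a target lying in $O'^C$ is handled by pulling the required relation back to the original gflow $(g,\prec_g)$ through the two defining properties of a reduced gflow function (Lemma~\ref{lem_reduced_gflow_existance}), while a target lying in $O'$ is handled by the fact that the output vertices of the reduced graph sit at the top of $\prec_{g'}$.

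Two preliminary observations come first. \emph{(i)} $\prec_{g'}$ is a strict partial order on $V(G')=O'^C\cup O'$: restricted to $O'^C$ it coincides with the restriction of $\prec_g$, which is strict and partial, and every additional relation points from a vertex of $O'^C$ into a vertex of $O'$. Since no vertex of $O'$ carries a correcting set, none of them ever appears on the left of a relation, so the transitive closure forms no cycle and stays irreflexive; in particular it adds no new relation between two vertices of $O'^C$, which is exactly what makes the ``$\Leftrightarrow$'' in the definition consistent. For the same reason one may (and, to make (G2) go through, should) also stipulate $i\prec_{g'}v$ whenever $v\in Odd(g'(i))\cap O'$. \emph{(ii)} Because $G'$ is obtained from $G$ merely by deleting the edges incident to $R$ and $g'(i)\subseteq V(G')$, one has $N_{G'}(v)\cap g'(i)=N_G(v)\cap g'(i)$ for every $v\in V(G')$; hence the odd neighbourhood of $g'(i)$ computed in $G'$ agrees on $V(G')$ with the one computed in $G$, and by property~(2) of Lemma~\ref{lem_reduced_gflow_existance} both meet $O'^C$ exactly in $Odd(g(i))\cap O'^C$.

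Then I would check the axioms for each $i\in O'^C$. \emph{(G1):} if $j\in g'(i)\cap O'$ then $i\prec_{g'}j$ by the second defining clause; if $j\in g'(i)\cap O'^C=g(i)\cap O'^C$ (property~(1)), then $i\prec_g j$ by (G1) for $g$, so $i\prec_{g'}j$ since both lie in $O'^C$. \emph{(G2):} let $v\neq i$ be in the odd neighbourhood of $g'(i)$ in $G'$; if $v\in O'$ then $i\prec_{g'}v$ by the stipulation in (i), and if $v\in O'^C$ then by (ii) $v\in Odd(g(i))$, so (G2) for $g$ gives $i\prec_g v$, hence $i\prec_{g'}v$. \emph{(G3):} for $i\in O'^C$ we have $i\in O^C$ (as $O'^C\subseteq O^C$) and $i\in V(G')$, so $g(i)$ is defined and $i\in Odd(g(i))$ by (G3) for $g$; observation (ii) then places $i$ in the odd neighbourhood of $g'(i)$ in $G'$. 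Finally $g':O'^C\to P^{I^C\cap V(G')}$ is already part of Lemma~\ref{lem_reduced_gflow_existance}, so the function has the correct type.

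The only delicate point is observation (i): one must be certain that grafting the new ``$i$-before-output'' relations onto the restriction of $\prec_g$ to $O'^C$ destroys neither irreflexivity nor transitivity, and does not silently create relations between two $O'^C$-vertices that were incomparable in $\prec_g$ (which would clash with the stated ``$\Leftrightarrow$''). This is exactly where the structural fact that output vertices have no correcting set --- and therefore can always be taken to be $\prec$-maximal in any gflow --- is doing the work; once it is in place, everything else is a routine unfolding of the two properties of the reduced gflow function together with the remark that passing to $G'$ leaves unchanged the odd neighbourhoods of sets already contained in $V(G')$.
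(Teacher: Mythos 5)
Your proof is correct and follows essentially the same route as the paper: a direct verification of (G1)--(G3) for each $i \in O'^C$, splitting on whether the target vertex lies in $O'$ or in $O'^C$ and pulling the latter case back to $(g,\prec_g)$ via the two properties of the reduced gflow function of Lemma~\ref{lem_reduced_gflow_existance}. Your additional checks --- that $\prec_{g'}$ is a genuine strict partial order, that odd neighbourhoods of $g'(i)$ computed in $G$ and in $G'$ agree on $V(G')$, and that $\prec_{g'}$ should be extended by $i \prec_{g'} v$ for $v \in Odd(g'(i)) \cap O'$ so that (G2) holds literally --- make explicit points that the paper's proof simply asserts as obvious.
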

\begin{proof}
	We will show that $(g', \prec_{g'})$ satisfies the three gflow properties (G1) - (G3) in Definition \ref{def_gflow}. First property requires that if $j \in g'(i)$, then $i \prec_{g'} j$.
	This is obviously true if $j \in O'$.
	If $j \in g'(i)\cap O'^C$, from Lemma \ref{lem_reduced_gflow_existance} we have $j \in g(i)$ which implies that $i \prec_g j$ because $(g, \prec_g)$ is a gflow.
	Now according the definition of $\prec_{g'}$ it must also hold, that $i \prec_{g'} j$.
	
	Now we consider the gflow property (G2).
	For every $j \in Odd(g'(i))$ it must be that $j = i$ or $i \prec_{g'} j$.
	If $j \in O'$, then again this is obviously true because of the definition of $\prec_{g'}$.
	If $j \in Odd(g'(i)) \cap O'^C$ then we know that $j \in Odd(g(i))$ and $j = i$ or $i \prec_g j$.
	According to the definition of $\prec_{g'}$, $i \prec_g j$ implies that $i \prec_{g'} j$ and we have that if $j \in Odd(g'(i))$ then either $i = j$ or $i \prec_{g'} j$. Thus the gflow property (G2) is satisfied.
	
	Finally, we require for gflow property (G3) that $i \in Odd(g'(i))$ and as $i \in O'^C$ this is true because of the properties of $g'$.
\end{proof}

We call the gflow $(g', \prec_{g'})$ from Lemma \ref{lem_ordering} the \emph{reduced gflow} of $(g, \prec_g)$. Similarly we can construct the SSF of the reduced open graph. Note that an SSF can only exist if the reduced open graph has flow.
Thus arises the need to prove the existence of a flow on the reduced open graph, as is done in the next lemma.

\begin{lemma}
	\label{lem_restricted_flow}
	If $(G, I, O)$ is an open graph with flow $(f, \prec_f)$ and if $(G', I, O')$ is the reduced open graph described in Definition \ref{def_reduced_og}, then $(f', \prec_{f'})$, where
	\begin{itemize}
		\item $\forall i \in O'^C \quad f'(i) = f(i)$
		\item $\prec_{f'} = \prec_f \setminus [(V \times R) \cup (V_1^{\prec_s} \times O)]$
	\end{itemize}
	is a flow of $(G', I, O')$.
\end{lemma}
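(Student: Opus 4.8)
The plan is to verify three things in order: that $f'$ is a well-defined map into $I^C\cap V(G')$; that $\prec_{f'}$ is a strict partial order on $V(G')$; and that $(f',\prec_{f'})$ satisfies the flow axioms (F1)--(F3) of Definition~\ref{def_flow} on $(G',I,O')$.

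For well-definedness I would first record that $O'^C=O^C\setminus V_1^{\prec_s}$, and that $R$ lies entirely in the image of $f$ (each $v\in R$ has $f^{-1}(v)\in V_1^{\prec_s}$), so $R\cap I=\emptyset$, $I\subseteq V(G')$, and $I^C$ of $G'$ is $V(G')\setminus I$. Then for $i\in O'^C$ we have $i\notin V_1^{\prec_s}$, whence $f(i)\notin R$ --- otherwise $f^{-1}(f(i))=i\in V_1^{\prec_s}$ by the definition of $R$ --- so $f(i)\in V(G')$; since also $f(i)\in I^C$ by Definition~\ref{def_flow}, $f'(i)=f(i)$ is a legitimate element of $I^C\cap V(G')$.

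Checking (F1)--(F3) should then be routine, reducing to the corresponding axioms for $(f,\prec_f)$ on $G$, since $G'$ is the induced subgraph of $G$ on $V\setminus R$ and none of the relations one needs has been deleted. Indeed, fix $i\in O'^C$. For (F3), the edge $\{i,f(i)\}$ survives in $E'=E\setminus(V\times R)$ because $i\in O^C$ and $f(i)\notin R$. For (F1), $i\prec_f f(i)$ and the pair $(i,f(i))$ is not among the deleted pairs, its source $i\notin V_1^{\prec_s}$ and its target $f(i)\notin R$. For (F2), any $j\in N_{G'}(f(i))\subseteq N_G(f(i))$ satisfies $j=i$ or $i\prec_f j$ by (F2) for $G$, and in the latter case $(i,j)$ is again not deleted ($i\notin V_1^{\prec_s}$, $j\in V(G')$ so $j\notin R$), hence $i\prec_{f'}j$.

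The substance of the lemma, and the step I expect to be the main obstacle, is transitivity of $\prec_{f'}$ (irreflexivity being inherited from $\prec_f$, and restriction to $V(G')$ being harmless). A transitivity failure would be a chain $a\prec_{f'}b\prec_{f'}c$ whose composite $(a,c)$ is deleted, i.e.\ $a\in V_1^{\prec_s}$ and $c\in O$. Since $(a,b)$ is not deleted and $a\in V_1^{\prec_s}$, necessarily $b\notin O$; and if $b\in V_1^{\prec_s}$ then $(b,c)$ would be deleted, contradicting $b\prec_{f'}c$, so $b\in O'^C$. Thus everything comes down to showing no vertex of $V_1^{\prec_s}$ has a $\prec_f$-successor in $O'^C$, equivalently that every $\prec_f$-successor of a $V_1^{\prec_s}$-vertex lies in $O\cup V_1^{\prec_s}$. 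I would derive this from Proposition~\ref{prop_ssf} (for $i\in V_1^{\prec_s}$ one has $f(i)\in s(i)\subseteq O$, so the flow edge out of $i$ lands in $O$), Corollary~\ref{cor_zcorr} together with Lemma~\ref{lem_oddness} (to control the $Z$-correction successors), and the layer-equality Lemma~\ref{lem_equal_layer}, which identifies $V_1^{\prec_s}$ with the penultimate layer of the optimal gflow and hence with vertices $\prec$-maximal among $O^C$. Should a direct argument prove awkward, there is a safe fallback: take $\prec_{f'}$ to be the transitive closure of $\prec_f\setminus[(V\times R)\cup(V_1^{\prec_s}\times O)]$; this closure is still contained in $\prec_f$, hence a strict partial order, and enlarging the order only relaxes (F1)--(F3), so $(f',\prec_{f'})$ remains a flow of $(G',I,O')$.
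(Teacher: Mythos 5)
Your verification of well-definedness and of (F1)--(F3) is correct and is essentially the paper's proof: the paper likewise checks that $f'$ maps $O'^C$ into $I^C$ (every removed vertex $v\in R$ has $f^{-1}(v)\in O'$, so no vertex needed by $f'$ is deleted) and then verifies (F1)--(F3) by noting that the relevant pairs survive the deletion; if anything you are slightly more careful, since the paper only remarks that $(i,f'(i))$ and $(i,j)$ lie in $\prec_f\setminus V\times R$ and does not explicitly note $i\notin V_1^{\prec_s}$ to rule out the $V_1^{\prec_s}\times O$ deletions.

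Where you diverge is the transitivity discussion, and there your primary route does not work. The paper never addresses transitivity of $\prec_{f'}$ at all, and the statement you reduce it to --- that no vertex of $V_1^{\prec_s}$ has a $\prec_f$-successor in $O'^C$ --- is false in general: Definition \ref{def_flow} allows $\prec_f$ to be any strict partial order containing the forced relations, e.g.\ a linear extension, and nothing ties such an order to the SSF layering (Lemma \ref{lem_equal_layer} concerns the layers of the optimal gflow, assumes $|I|=|O|$, which the present lemma does not, and in any case says nothing about $\prec_f$-maximality). Concretely, take a component in which some output $c$ has degree two, so that $f^{-1}(c)$ lies in SSF layer $2$ and hence $c\notin R$, together with a disjoint path $u$--$a$--$w$ (input $u$, output $w$), so that $a\in V_1^{\prec_s}$ and $w\in R$; choosing $\prec_f$ to be a linear extension in which $a$ precedes a non-output $b$ of the first component with $b\prec_f c$, one gets $a\prec_{f'}b\prec_{f'}c$ while $(a,c)$ is deleted, so the literal $\prec_{f'}$ is not even transitive. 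Your fallback is therefore not a fallback but the actual fix: the transitive closure of $\prec_f\setminus[(V\times R)\cup(V_1^{\prec_s}\times O)]$ is contained in $\prec_f$, hence irreflexive and a strict partial order, and since (F1)--(F3) only demand that certain pairs belong to the order, enlarging it costs nothing. This is also the reading under which the paper's own proof (which only ever checks the membership conditions) is complete, so you should present the closure argument as part of the proof rather than as a contingency.
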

\begin{proof}
	It is sufficient to show that $(f', \prec_{f'})$ satisfies the flow properties (F1) - (F3) in Definition \ref{def_flow} and that $f'$ is a function from $O'^C$ to $I^C$.
	It is easy to see that $f': O'^C \rightarrow V' \setminus I$.
	$f'$ acts by definition on $O'^C$ and
	\begin{align*}
		\forall i \in O'^C \quad f'(i) = f(i) \in V \setminus I.
	\end{align*}
	The graph $G'$ has fewer vertices than $G$, therefore we need to show that all the vertices required according to the flow function $f'$ are included in $G'$, \emph{i.e.} $\forall i \in O'^C$ it must hold that $f'(i) \in V'$.
	According to Definition \ref{def_reduced_og} every vertex $v$ removed from the initial open graph $(G, I, O)$ is chosen such that $f^{-1}(v) \in O' $.
	Therefore it must be that every vertex $j \in V'$ such that $f(j) \notin V'$ must be an output vertex in $(G', I, O'^C)$.
	Because $f'(j)$ is not defined for outputs the vertices removed from the original graph $G$ are not needed for $f'$ and $f': O'^C \rightarrow V' \setminus I$. Hence we have that $f(i) = f'(i) \in I^C$ for every vertex $i \in O'^C$ and $f': O'^C \rightarrow I^C$.
	
	Let $R$ be the set of removed vertices as defined in Definition \ref{def_reduced_og}.
	The flow property (F1) states that $i \prec_{f'} f'(i)$ and holds because:
	\begin{align*}
		&\forall i \in O'^C \subseteq O^C \quad i \prec f(i) = f'(i) \quad \Rightarrow \\
		\Rightarrow \quad &f'(i) \notin R \quad \land \quad (i, f'(i)) \in \; \prec_f \quad \Rightarrow \\
		\Rightarrow \quad &(i, f'(i)) \in \; \prec_f \setminus V \times R = \; \prec_{f'} \quad \Rightarrow\\
		\Rightarrow \quad & i \prec_{f'} f'(i)
	\end{align*}
	
	To prove that $(f', \prec_{f'})$ satisfied flow property (F2) we need to show that for every $j \in V'$ if $j \in N(f'(i))$ then either $j = i$ or $i \prec_{f'} j$.
	\begin{align*}
		&j \in N(f'(i)) \quad \Rightarrow \quad j \in N(f(i)) \quad \Rightarrow\\
		\Rightarrow \quad &j = i \quad \lor \quad i \prec_f j \quad \Rightarrow \\
		\Rightarrow \quad &j = i \quad \lor \quad (i, j) \in \prec_f \quad \land \quad j \in V' = V \setminus R \quad \Rightarrow \\
		\Rightarrow \quad &j = i \quad \lor \quad (i, j) \in \prec_f \setminus V \times R = \; \prec_{f'} \quad \Rightarrow \\
		\Rightarrow \quad &j = i \quad \lor \quad i \prec_{f'} j
	\end{align*}
	
	Finally the flow property (F3) $i \in N(f'(i))$ holds almost trivially:
	\begin{align*}
		i \in O'^C \subseteq O^C \quad \Rightarrow \quad i \in N(f(i)) = N(f'(i))
	\end{align*}
\end{proof}

Next we prove that the reduced gflow of an SSF is also an SFF.

\begin{lemma}[\textbf{Constructing the reduced SSF}]
	\label{lem_restricted_ssf}
	Let $(G, I, O)$ be an open graph with flow $(f, \prec_f)$ and SSF $(s, \prec_s)$.
	If $(G', I, O')$ is the reduced open graph, according to $(s, \prec_s)$, then there exists an SSF $(s', \prec_{s'})$ of $(G', I, O')$ such that $(s', \prec_{s'})$ is the reduced gflow of $(s, \prec_s)$.
\end{lemma}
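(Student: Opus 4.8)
The plan is to exhibit the SSF of $(G',I,O')$ explicitly and identify it with the reduced gflow of $(s,\prec_s)$. First, by Lemma~\ref{lem_restricted_flow} the reduced open graph $(G',I,O')$ carries a flow $(f',\prec_{f'})$ with $f'=f$ on $O'^C$; by Proposition~\ref{prop_ssf} it therefore has a canonical SSF $(s',\prec_{s'})$, and by Theorem~\ref{theorem_ssfisgflow} this is in particular a gflow of $(G',I,O')$. It then remains to verify that $s'$ is a reduced gflow function of $s$ in the sense of Lemma~\ref{lem_reduced_gflow_existance} --- \emph{i.e.} $s'(i)\cap O'^C=s(i)\cap O'^C$ and $Odd(s'(i))\cap O'^C=Odd(s(i))\cap O'^C$ for every $i\in O'^C$ --- and that $\prec_{s'}$ is the order that Lemma~\ref{lem_ordering} assigns to $s'$; this is exactly what it means for $(s',\prec_{s'})$ to be the reduced gflow of $(s,\prec_s)$.

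Two of these checks are easy. Applying Lemma~\ref{lem_evenconnections} to the SSF $s$ of $G$ shows that the only non-output vertex of $G$ oddly connected to $s(i)$ is $i$ itself, so $Odd(s(i))\cap O'^C=\{i\}$ for $i\in O'^C$; applying the same lemma to the SSF $s'$ of $G'$ gives $Odd(s'(i))\cap O'^C=\{i\}$, so the odd-neighbourhood condition holds. The requirement $i\prec_{s'}j$ for $j\in O'\cap s'(i)$ is also immediate, since $s'$ is an SSF and hence a gflow, so $i\prec_{s'}j$ for \emph{all} $j\in s'(i)$.

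The substantive point is $s'(i)\cap O'^C=s(i)\cap O'^C$. By Proposition~\ref{prop_ssf}, for a non-input $j\in O'^C$ one has $j\in s(i)\iff parity(\zeta_i(f^{-1}(j)))=1$ and $j\in s'(i)\iff parity(\zeta'_i(f^{-1}(j)))=1$, where $\zeta$ and $\zeta'$ count $Z$-paths in $G$ and $G'$ and $f^{-1}(j)\in O'^C$ (otherwise $j$ would lie in the removed set $R\subseteq O$). So I would reduce the claim to showing $parity(\zeta'_i(k))=parity(\zeta_i(k))$ for all $i,k\in O'^C$. By Corollary~\ref{cor_zcorr} the $Z$-correction edges of $G'$ are precisely those $Z$-correction edges of $G$ whose source is in $O'^C$ and whose target is not in $R$; since $R\subseteq O$, no $Z$-path of $G$ joining two vertices of $O'^C$ has an interior vertex in $R$, so the only possible discrepancy between $\zeta_i(k)$ and $\zeta'_i(k)$ comes from $Z$-paths of $G$ whose interior meets the removed penultimate layer $V_1^{\prec_s}$. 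The hard part --- and the place where the definition of $R$ through $V_1^{\prec_s}$ really has to be used --- is to rule this out: using the stepwise influencing path machinery (Lemmas~\ref{lem_path}, \ref{lem_past_conn}, \ref{lem_ext_path}) together with the description of $V_1^{\prec_s}$ via the layering function $L_s$, one shows that no $Z$-path of $G$ between two vertices of $O'^C$ passes through $V_1^{\prec_s}$, whence $\zeta'_i(k)=\zeta_i(k)$ and the correcting-set equality follows.

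Finally, to match the partial orders I would show $L_s(i)=L_{s'}(i)+1$ for all $i\in O'^C$ by induction on $L_s(i)$, using the correcting-set equality just established: when $L_s(i)=2$ we have $s(i)\subseteq O\cup V_1^{\prec_s}$, so $s'(i)\cap O'^C=\emptyset$ and, since $f'(i)\in s'(i)\subseteq O'$, $L_{s'}(i)=1$; for larger $L_s(i)$ one uses $s'(i)=(s(i)\cap O'^C)\cup(s'(i)\cap O')$ together with the fact that the $\prec_s$-deepest element of $s(i)$ already lies in $O'^C$. Since $\prec_s$ and $\prec_{s'}$ are obtained from $L_s$ and $L_{s'}$ by the same rule, $\prec_{s'}$ restricted to $O'^C$ coincides with $\prec_s$ restricted to $O'^C$, which is the remaining clause of Lemma~\ref{lem_ordering}. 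Combining everything, $(s',\prec_{s'})$ is the reduced gflow of $(s,\prec_s)$.
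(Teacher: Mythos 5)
Your setup agrees with the paper's proof: you take $(s',\prec_{s'})$ to be the SSF of the reduced flow supplied by Lemma \ref{lem_restricted_flow}, and you check the odd-neighbourhood condition exactly as the paper does, via Lemma \ref{lem_evenconnections}. The gap is in the step you call ``the hard part'' and defer to the stepwise-influencing-path machinery: the claim that no $Z$-path of $G$ between two vertices of $O'^C$ passes through $V_1^{\prec_s}$, and hence that $\zeta'_i(k)=\zeta_i(k)$ exactly, is false. Concretely, take measured vertices $q,w,x,u,t,T$ and outputs $Q,W,X,U,T_2$, with flow edges $\{q,Q\},\{w,W\},\{x,X\},\{u,U\},\{t,T\},\{T,T_2\}$ and further edges $\{Q,w\},\{Q,t\},\{W,x\},\{W,u\},\{X,u\},\{U,t\}$; with $I=\{q,w,x,u,t\}$ this is an open graph with $|I|=|O|=5$ and flow $f(q)=Q$, $f(w)=W$, $f(x)=X$, $f(u)=U$, $f(t)=T$, $f(T)=T_2$. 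The $Z$-correction edges are $q\to w$, $q\to t$, $w\to x$, $w\to u$, $x\to u$, $u\to t$, $t\to T_2$, and one computes $s(w)=\{W,X\}$ and $s(T)=\{T_2\}$ (so $L_s(w)=L_s(T)=1$), while $s(q)=\{Q,W,X,T\}$, $s(x)=\{X,U,T\}$, $s(u)=\{U,T\}$, $s(t)=\{T\}$ all have $L_s=2$. Hence $V_1^{\prec_s}=\{w,T\}$, $R=\{W,T_2\}$ and $O'^C=\{q,x,u,t\}$, and the two $Z$-paths $q\to w\to u$ and $q\to w\to x\to u$ run between two vertices of $O'^C$ and pass through $w\in V_1^{\prec_s}$; both are destroyed in $G'$, so $\zeta_q(u)=2$ while $\zeta'_q(u)=0$. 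So the exact path counts are not preserved, and the statement you planned to prove with Lemmas \ref{lem_path}--\ref{lem_ext_path} cannot be proved.

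What survives, and what the paper actually proves, is only preservation of the \emph{parity} of $\zeta_i(f^{-1}(j))$ for $j\in O'^C$: every path from $i$ to $f^{-1}(j)$ destroyed by deleting a vertex $v\in R$ factors through the penultimate-layer vertex $f^{-1}(v)$, so if the count changed by an odd amount there would have to be an odd number of $Z$-paths from $f^{-1}(v)$ to $f^{-1}(j)$, giving $j\in s(f^{-1}(v))$ by Proposition \ref{prop_ssf}; but $f^{-1}(v)\in V_1^{\prec_s}$ forces $s(f^{-1}(v))\subseteq O$, contradicting $j\in O'^C$. (In the example above this is visible: the destroyed paths come in even numbers, $2$ and $0$ have equal parity, and indeed $s'(q)\cap O'^C=\emptyset=s(q)\cap O'^C$.) This parity argument is precisely the content your proposal is missing; without it the correcting-set equality $s'(i)\cap O'^C=s(i)\cap O'^C$ is not established, and your final paragraph matching $\prec_{s'}$ with the order of Lemma \ref{lem_ordering} also collapses, since it is built on that equality. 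The remaining parts of your outline (the $Odd$ condition and the identification of $s'$ as the SSF of the reduced flow) are correct but are the easy part of the lemma.
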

\begin{proof}
	Let $R$ be the set of vertices removed from $(G, I, O)$ to get $(G', I, O')$. The reduced flow $(f', \prec_{f'})$ exists because of Lemma \ref{lem_restricted_flow}. Define $(s', \prec_{s'})$ to be the the SSF derived from this reduced flow. Assume $(s', \prec_{s'})$ is not a reduced gflow of $(s, \prec_s)$, then one of the properties of Lemma \ref{lem_reduced_gflow_existance} should not hold, We show a contradiction in both cases. 
	
	If the first property does not hold then
	\begin{align*}
		&&\exists i \in O'^C \quad s.t. \quad s'(i) \cap O'^C \neq s(i) \cap O'^C \Rightarrow \\ 
		&&\exists j \in O'^C \cap [(s(i) \setminus s'(i)) \cup (s'(i) \setminus s(i))] \Rightarrow \\ 
		&& (parity(\zeta_i^{s}(f^{-1}(j))) = 1 \land parity(\zeta_i^{s'}(f^{-1}(j))) = 0) \lor \\ 
		&&(parity(\zeta_i^{s'}(f^{-1}(j))) = 1 \land parity(\zeta_i^{g}(f^{-1}(j))) = 0) \Rightarrow \\ && parity(\zeta_i^{s}(j)) \neq parity(\zeta_i^{s'}(j))
	\end{align*}
	Hence by removing vertices and edges from the open graph $(G, I, O)$ we must have changed $\zeta_i^s(f^{-1}(j))$ by an odd number to get $\zeta_i^{s'}(f^{-1}(j))$.

	We look at how removing the vertices in $R$ from the open graph $(G, I, O)$ changes $\zeta_i^s(f^{-1}(j))$.
	Removing a vertex $v$ changes the number of $Z$-paths from $i$ to $f^{-1}(j)$ if by removing it we also remove an edge in the $Z$-correction graph $G_Z$.
	Let this removed edge be $(k, l)$, then Corollary \ref{cor_zcorr} implies that $l \in N(f(k))$ and $v$ has to be either $k$, $l$ or $f(k)$.
	Corollary \ref{cor_zcorr} also implies that for $v$ to have an outgoing edge in $G_Z$, $f(v)$ has to be defined. Since $f$ is not defined for output vertices and $v \in R \subseteq O$, there cannot be any outgoing edges from $v$.
	Therefore $v$ cannot be $k$ as there is an edge from $k$ to $l$ in $G_Z$.
	Also $v$ cannot be $l$ since again $v$ cannot have an outgoing edge in $G_Z$, hence $v$ would have to be the last element on the $Z$-path from $i$ to $f^{-1}(j)$, which is $f^{-1}(j)$.
	This is impossible, as $f^{-1}(j)$ cannot be an output vertex.
	Therefore the only possibility is that $v = f(k)$.
	
	Let $v$ be the first vertex removed from $G$, such that $\zeta_i^s(f^{-1}(j))$ changes by an odd number.
	Hence all the paths from $i$ to $f^{-1}(j)$ that disappear due to removal of $v$ have to go through $f^{-1}(v)$. Therefore there must also exist an odd number of paths from $f^{-1}(v)$ to $f^{-1}(j)$.
	We know that because of Proposition \ref{prop_ssf}, $j \in s(f^{-1}(v))$ and $f^{-1}(v) \prec_s j$.
	On the other hand because of Definition \ref{def_reduced_og} it must also hold that $f^{-1}(v) \in V_1^{\prec_s}$, which together with Definition \ref{def_gflow_depth} implies that $j \in V_0^{\prec_s} = O$.
	This leads to a contradiction, because $j$ has to be in $O'^C \subseteq O^C$ and cannot be in $O$.
	Therefore property $(1)$ must be true for $(s', \prec_{s'})$.
	
	Now we show that property $(2)$ has to hold.
	According to Lemma \ref{lem_evenconnections}:
	\begin{align*}
		Odd(s'(i)) \cap O'^C = \lbrace i \rbrace = Odd(s(i)) \cap O^C
	\end{align*}
	Because $i \in O'^C \subseteq O^C$, it must also hold that
	\begin{align*}
		Odd(s'(i)) \cap O'^C = \lbrace i \rbrace = Odd(s(i)) \cap O'^C
	\end{align*}
	and property $(2)$ has to be true for $(s', \prec_{s'})$.
\end{proof}

In the specific case where $|I|=|O|$, it will follow from Lemma \ref{lem_restricted_ssf} that the unique SSF of a reduced open graph is the reduced gflow of the original SSF.

\begin{corollary}
	\label{cor_restricted_ssf}
	Let $(G, I, O)$ be an open graph with an SSF $(s, \prec_s)$ such that $|I|=|O|$.
	If $(G', I, O')$ is the reduced open graph, according to $(s, \prec_s)$, of $(G, I, O)$, then the unique SSF $(s', \prec_{s'})$ of $(G', I, O')$ has the following properties:
	\begin{enumerate}
		\item $\forall i \in O'^C \quad s'(i) \cap O'^C = s(i) \cap O'^C$
		\item $\forall i \in O'^C \quad Odd(s'(i)) \cap O'^C = Odd(s(i)) \cap O'^C$
	\end{enumerate}
\end{corollary}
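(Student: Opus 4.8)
The plan is to read the corollary off from Lemma~\ref{lem_restricted_ssf} together with the size bookkeeping of Lemma~\ref{lem_reduced_og_output}, the only extra ingredient being uniqueness of the SSF on the reduced open graph. First I would note that the hypothesis $|I| = |O|$ and Lemma~\ref{lem_reduced_og_output} give $|I| = |O'|$, so the reduced open graph again has equal input and output sizes. Next, Lemma~\ref{lem_restricted_flow} guarantees that $(G', I, O')$ has a flow, hence an SSF, and Lemma~\ref{lem_restricted_ssf} asserts that there is an SSF $(s', \prec_{s'})$ of $(G', I, O')$ which is the \emph{reduced gflow} of $(s, \prec_s)$. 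Unwinding the definition of ``reduced gflow of $(s, \prec_s)$'' (its function part being precisely properties (1)--(2) of Lemma~\ref{lem_reduced_gflow_existance} with $g = s$) yields exactly
\begin{align*}
s'(i) \cap O'^C = s(i) \cap O'^C \quad \text{and} \quad Odd(s'(i)) \cap O'^C = Odd(s(i)) \cap O'^C \qquad \text{for all } i \in O'^C ,
\end{align*}
i.e.\ properties (1) and (2) of the corollary. Thus the entire content beyond Lemma~\ref{lem_restricted_ssf} is the word ``unique''.

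To promote ``there is such an SSF'' to ``the unique SSF has these properties'', I would argue that an SSF of an open graph is built deterministically out of its underlying flow: by Corollary~\ref{cor_zcorr} the whole $Z$-correction structure, and hence every quantity $\zeta_i(\cdot)$ and hence the function $s$ of Proposition~\ref{prop_ssf}, is fixed once the flow \emph{function} is fixed, and then the partial order $\prec_s$ is fixed by $s$ through the layering function $L_s$. Consequently two SSFs of $(G', I, O')$ can differ only if $(G', I, O')$ admits two distinct flow functions. Since $(G', I, O')$ has a flow and $|I| = |O'|$, the flow function is uniquely determined, so $(G', I, O')$ has exactly one SSF, which must therefore be the $(s', \prec_{s'})$ produced by Lemma~\ref{lem_restricted_ssf}, and hence satisfies (1) and (2).

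The main obstacle is this last point: establishing that an open graph with $|I| = |O|$ carries at most one flow function (equivalently, at most one SSF). If one prefers not to invoke uniqueness of flow for equal-size open graphs as a known fact, the alternative is to prove directly that any two SSFs of $(G', I, O')$ agree, e.g.\ by reusing the parity-counting argument in the proof of Lemma~\ref{lem_restricted_ssf}: compare an arbitrary SSF of $(G', I, O')$ with the reduced gflow of $(s, \prec_s)$ vertex by vertex, using Lemma~\ref{lem_oddness} and Corollary~\ref{cor_zcorr} to show the two correcting-set functions have matching parities off $O'$ and the same layering, forcing them to coincide. One should also check that no circularity with Theorem~\ref{thm_optimality} is introduced, since this corollary is a stepping stone toward that theorem; the argument above uses only Lemmas~\ref{lem_restricted_ssf}, \ref{lem_restricted_flow}, \ref{lem_reduced_og_output}, Corollary~\ref{cor_zcorr} and Proposition~\ref{prop_ssf}, all established independently of Theorem~\ref{thm_optimality}.
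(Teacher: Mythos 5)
Your proposal is correct and follows essentially the same route as the paper: reduce via Lemma \ref{lem_reduced_og_output} to $|I| = |O'|$, use Lemma \ref{lem_restricted_flow} for existence of a flow on the reduced open graph, invoke the known uniqueness of flow when input and output sizes match (the paper cites \cite{Beaudrap06a} for exactly this) together with the deterministic construction of the SSF from its flow (Proposition \ref{prop_ssf}) to get uniqueness of the SSF, and conclude properties (1) and (2) from Lemma \ref{lem_restricted_ssf}. The extra detail you supply on why a unique flow forces a unique SSF, and your fallback parity argument, are fine but not needed beyond what the paper already does.
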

\begin{proof}
	Because of Lemma \ref{lem_restricted_flow} $(G', I, O')$ has a flow.
	Since $|I| = |O|$, then according to Lemma \ref{lem_reduced_og_output} $|I| = |O'|$ and hence $(G'. I, O')$ has a unique flow \cite{Beaudrap06a}.
	Flow is required for the existence of an SSF according to Proposition \ref{prop_ssf}, therefore there can exist only one SSF and because of Lemma \ref{lem_strict_subset} this SSF has to satisfy properties (1) and (2).
\end{proof}

\subsubsection{Moving back}
\label{sec_optimality_induction}

We have proven that the penultimate layers of SSF and optimal gflow are equal if $|I|=|O|$.
Then we showed how to remove some vertices from the open graph and construct an SSF and optimal gflow on the new reduced graph.
Both of them are reduced gflows, a property which we will use in this section to show that they preserve the layering of the gflows they were derived from.

\begin{lemma}
	\label{lem_layering}
	Let $(G, I, O)$ be an open graph with SSF $(s, \prec_s)$ and gflow $(g, \prec_g)$ such that $(G', I, O')$ is the reduced open graph of $(G, I, O)$ with the removed vertices set $R$. For every reduced gflow $(g', \prec_{g'})$ of  $(G', I, O')$ such that $V_0^{\prec_s} = V_0^{\prec_g} = O$ and $V_1^{\prec_s} = V_1^{\prec_g}$
	it must hold that
	\begin{align}
		\label{eq_layering}
		\forall n \ge 0 \quad
		\cup_{k = 0}^n V_k^{\prec_{g'}} =
		\cup_{k = 0}^{n+1} V_k^{\prec_{g}} \setminus R
	\end{align}
\end{lemma}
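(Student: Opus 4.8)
The plan is to prove (\ref{eq_layering}) by induction on $n$. The engine of the argument is the observation that the removed set $R$ sits inside the bottom gflow layer ($R \subseteq O = V_0^{\prec_g}$) and that, away from $O'$, the partial order $\prec_{g'}$ of the reduced gflow is literally $\prec_g$ restricted; so deleting $R$ just shifts every layer index down by one.

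\emph{Base case ($n=0$).} Here the claim is $V_0^{\prec_{g'}} = \cup_{k=0}^{1} V_k^{\prec_g} \setminus R = (O \cup V_1^{\prec_g}) \setminus R$, which by the hypothesis $V_1^{\prec_s} = V_1^{\prec_g}$ and Definition \ref{def_reduced_og} is exactly $O'$. So I must check that the $\prec_{g'}$-maximal elements of $V(G')$ are precisely $O'$. The inclusion $O' \subseteq V_0^{\prec_{g'}}$ is immediate: by the construction of $\prec_{g'}$ in Lemma \ref{lem_ordering} an element of $O'$ never occurs on the left of a $\prec_{g'}$-relation. For the reverse inclusion, take $i \in O'^C$; since $(g', \prec_{g'})$ is a gflow (Lemma \ref{lem_ordering}) we have $i \in Odd(g'(i))$, so $g'(i) \ne \emptyset$, and for any $j \in g'(i)$ either $j \in O'$, giving $i \prec_{g'} j$ at once, or $j \in g'(i) \cap O'^C = g(i) \cap O'^C$ by property (1) of the reduced gflow function, giving $i \prec_g j$ and hence $i \prec_{g'} j$. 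Thus no vertex of $O'^C$ is maximal and $V_0^{\prec_{g'}} = O'$.

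\emph{Inductive step.} Assume $\cup_{k=0}^{n} V_k^{\prec_{g'}} = \cup_{k=0}^{n+1} V_k^{\prec_g} \setminus R$ and put $W = V(G) \setminus \cup_{k=0}^{n+1} V_k^{\prec_g}$. Since $n+1 \ge 1$, $\cup_{k=0}^{n+1} V_k^{\prec_g}$ contains $O \cup V_1^{\prec_g}$, hence contains $O'$ and $R$; so $W \cap R = \emptyset$ and $W \cap O' = \emptyset$, i.e.\ $W \subseteq O'^C$. A short set manipulation, using $V(G') = V(G)\setminus R$ and the induction hypothesis, gives $V(G') \setminus \cup_{k=0}^{n} V_k^{\prec_{g'}} = W$. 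By Definition \ref{def_gflow_depth}, $V_{n+1}^{\prec_{g'}} = \max_{\prec_{g'}}(W)$ and $V_{n+2}^{\prec_g} = \max_{\prec_g}(W)$; but on the set $W \subseteq O'^C$ the relations $\prec_{g'}$ and $\prec_g$ agree — the first clause of Lemma \ref{lem_ordering} identifies them on $O'^C$, and the second clause only relates vertices to elements of $O'$, none of which lie in $W$ — so these two maximal sets are equal, i.e.\ $V_{n+1}^{\prec_{g'}} = V_{n+2}^{\prec_g}$. Finally, uniting with $\cup_{k=0}^{n} V_k^{\prec_{g'}} = \cup_{k=0}^{n+1} V_k^{\prec_g}\setminus R$ and using $V_{n+2}^{\prec_g} \cap R = \emptyset$ (disjoint layers, $R \subseteq V_0^{\prec_g}$) yields $\cup_{k=0}^{n+1} V_k^{\prec_{g'}} = \cup_{k=0}^{n+2} V_k^{\prec_g}\setminus R$, closing the induction.

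\emph{Main obstacle.} Computationally the proof is light; the one spot that needs real care is the base case, specifically showing $V_0^{\prec_{g'}}$ is not larger than $O'$ — this is where one genuinely uses the gflow axiom $i \in Odd(g'(i))$ (to get $g'(i)\neq\emptyset$) together with property (1) of the reduced gflow function. The rest is just the bookkeeping that excising the last-layer set $R$ renumbers the layers downward by one without disturbing the order among the surviving non-output vertices.
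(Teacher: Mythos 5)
Your proof is correct and follows essentially the same route as the paper's: induction on $n$, with the base case reducing to $V_0^{\prec_{g'}} = O'$ via Definition \ref{def_reduced_og} and the hypothesis $V_1^{\prec_s} = V_1^{\prec_g}$, and the inductive step resting on the fact that $\prec_{g'}$ and $\prec_g$ coincide on $O'^C$ (Lemma \ref{lem_ordering}) while $R \subseteq O = V_0^{\prec_g}$ keeps the removed vertices out of all later layers. The differences are only presentational: you argue the step directly by identifying the common residual set $W$ and equating its $\prec_{g'}$- and $\prec_g$-maximal elements, where the paper argues by contradiction, and you verify $V_0^{\prec_{g'}} = O'$ from the construction of $\prec_{g'}$ rather than invoking Lemma \ref{lem_last_layer}.
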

\begin{proof}
	We prove Lemma \ref{lem_layering} by induction and showing first that Equation \ref{eq_layering} holds if $n = 0$, \emph{i.e.} we need to prove that
	\begin{align*}
		V_0^{\prec_{g'}} = (V_1^{\prec_g} \cup V_0^{\prec_g}) \setminus R
	\end{align*}
	Lemma \ref{lem_last_layer} tells that $V_0^{\prec_{g'}} = O'$ and $V_0^{\prec_g} = O$.
	Because the penultimate layers of SSF $(s, \prec_s)$ and gflow $(g, \prec_g)$ are equal we also have that $V_1^{\prec_s} = V_1^{\prec_g}$.
	Now we take the definition of $O'$ from Definition \ref{def_reduced_og} of the reduced open graph and substitute the appropriate sets:
	\begin{align*}
		O' = (V_1^{\prec_s} \cup O) \setminus R \quad
		\Rightarrow
		V_0^{\prec_{g'}} = (V_1^{\prec_g} \cup V_0^{\prec_g}) \setminus R
	\end{align*}
	Thus Equation \ref{eq_layering} holds for $n = 0$.
	For the induction step we assume that Equation \ref{eq_layering} holds for $n = m-1$, \emph{i.e}
	\begin{align}
		\cup_{k = 0}^{m-1} V_k^{\prec_{g'}} =
		\cup_{k = 0}^{m} V_k^{\prec_{g}} \setminus R
	\end{align}
	and show that it holds for $n = m$.
	We use contradiction and assume that
	\begin{align*}
		\cup_{k = 0}^m V_k^{\prec_{g'}} \neq
		\cup_{k = 0}^{m+1} V_k^{\prec_{g}} \setminus R
	\end{align*}
	There are two possibilities: either $\exists i \in V^{\prec_{g'}}_m \setminus  V^{\prec_g}_{m+1}$ or $\exists i \in V^{\prec_{g}}_{m+1} \setminus  V^{\prec_g'}_{m}$.
	We note that according to Lemma \ref{lem_ordering} $i \prec_g j \Leftrightarrow i \prec_{g'} j$ if $i \in O'^C$ and $j \in O'^C$.
	Because $V^{\prec_{g'}}_m \subseteq O'^C$ for every $m > 0$ we have that
	\begin{align*}
		\exists i \in V^{\prec_{g'}}_m \setminus  V^{\prec_g}_{m+1}
		&\Rightarrow \quad
		\exists j \in V^{\prec_g}_{m+1} \cap g'(i) \quad s.t. \quad i \prec_g j
		\quad \Rightarrow \\
		&\Rightarrow \quad
		i \prec_{g'} j 
		\quad \Rightarrow \quad
		j \in \cup_{k = 0}^{m-1} \; V^{\prec_{g'}}_k = \cup_{k = 0}^m \; V^{\prec_{g}}_k \\
		\exists i \in V^{\prec_{g}}_{m+1} \setminus  V^{\prec_{g'}}_{m}
		\quad &\Rightarrow \quad
		\exists j \in V^{\prec_{g'}}_{m} \cap g(i) \quad s.t. \quad i \prec_{g'} j
		\quad \Rightarrow \\
		&\Rightarrow \quad
		i \prec_{g} j 
		\quad \Rightarrow \quad
		j \in \cup_{k = 0}^m \; V^{\prec_{g}}_k = \cup_{k = 0}^{m-1} \; V^{\prec_{g'}}_k
	\end{align*}
	As can be seen above, both of the possible cases lead to a contradiction and hence it must hold that 
	\begin{align*}
		\cup_{k = 0}^m V_k^{\prec_{g'}} =
		\cup_{k = 0}^{m+1} V_k^{\prec_{g}} \setminus R
	\end{align*}
	This completes the induction step and the proof itself.
\end{proof}

From the previous lemma we can construct a proof saying that every layer of a reduced gflow starting from the second to last one is equal to a layer of the original gflow.

\begin{corollary}
	\label{cor_layering}
	Let $(G, I, O)$ be an open graph with SSF $(s, \prec_s)$ and gflow $(g, \prec_g)$ such that $(G', I, O')$ is the reduced open graph of $(G, I, O)$ with the removed vertices set $R$. For every reduced gflow $(g', \prec_{g'})$ of  $(G', I, O')$ such that $V_0^{\prec_s} = V_0^{\prec_g} = O$ and $V_1^{\prec_s} = V_1^{\prec_g}$ it must hold that
	\begin{align*}
		\forall n > 0 \quad V^{\prec_{g'}}_n = V^{\prec_g}_{n+1} 
	\end{align*}
\end{corollary}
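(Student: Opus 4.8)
The plan is to derive the corollary directly from Lemma \ref{lem_layering} by a telescoping argument on the nested unions of layers, together with a small amount of set bookkeeping.

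First I would use that, by Definition \ref{def_gflow_depth}, the family $(V_k^{\prec_{g'}})_k$ partitions $V(G')$, so for every $n > 0$
\begin{align*}
	V_n^{\prec_{g'}} = \Big(\bigcup_{k=0}^{n} V_k^{\prec_{g'}}\Big) \setminus \Big(\bigcup_{k=0}^{n-1} V_k^{\prec_{g'}}\Big).
\end{align*}
Since $n > 0$, both indices $n$ and $n-1$ are $\geq 0$, so Lemma \ref{lem_layering} applies to each of these two unions and rewrites the right-hand side as $\big(\bigcup_{k=0}^{n+1} V_k^{\prec_g} \setminus R\big) \setminus \big(\bigcup_{k=0}^{n} V_k^{\prec_g} \setminus R\big)$.

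Next I would invoke the elementary identity $(A \setminus R) \setminus (B \setminus R) = (A \setminus B) \setminus R$, valid for arbitrary sets, with $A = \bigcup_{k=0}^{n+1} V_k^{\prec_g}$ and $B = \bigcup_{k=0}^{n} V_k^{\prec_g}$. Because $B \subseteq A$ and the $V_k^{\prec_g}$ partition $V(G)$, the difference $A \setminus B$ is exactly the single layer $V_{n+1}^{\prec_g}$, so at this point $V_n^{\prec_{g'}} = V_{n+1}^{\prec_g} \setminus R$.

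The last step is to check that $R$ is disjoint from $V_{n+1}^{\prec_g}$ whenever $n > 0$. By the hypothesis of the corollary $V_0^{\prec_g} = O$, and by Definition \ref{def_reduced_og} $R \subseteq O$, hence $R \subseteq V_0^{\prec_g}$; since the gflow layers are pairwise disjoint, $R$ meets no layer $V_{n+1}^{\prec_g}$ with $n+1 \geq 1$. Therefore $V_{n+1}^{\prec_g} \setminus R = V_{n+1}^{\prec_g}$, and we conclude $V_n^{\prec_{g'}} = V_{n+1}^{\prec_g}$ for all $n > 0$. There is no real obstacle here: the mathematical content sits entirely in Lemma \ref{lem_layering}, and the only subtlety is to keep the index shift straight — passing to the reduced open graph merges the original penultimate layer into the output layer of $(G', I, O')$, so layer $n$ of $g'$ corresponds to layer $n+1$ of $g$ — and to note that $R$ lies wholly in the original output layer so it cannot interfere with any deeper layer.
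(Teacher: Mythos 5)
Your proposal is correct and follows essentially the same route as the paper: apply Lemma \ref{lem_layering} to the cumulative unions at indices $n$ and $n-1$, subtract to obtain $V_n^{\prec_{g'}} = V_{n+1}^{\prec_g} \setminus R$, and then use $R \subseteq O = V_0^{\prec_g}$ together with disjointness of layers to drop the $\setminus R$. The only (harmless) difference is that you justify $O = V_0^{\prec_g}$ from the corollary's hypothesis rather than citing Lemma \ref{lem_last_layer}, which is if anything slightly cleaner since $(g,\prec_g)$ is not assumed maximally delayed here.
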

\begin{proof}
	This follows trivially from Lemma \ref{lem_layering}. We have that if $n > 0$:
	\begin{align*}
		\cup_{k = 0}^n V_k^{\prec_{g'}} =
		\cup_{k = 0}^{n+1} V_k^{\prec_{g}} \setminus R \\
		\cup_{k = 0}^{n-1} V_k^{\prec_{g'}} =
		\cup_{k = 0}^{n} V_k^{\prec_{g}} \setminus R
	\end{align*}
	We can now subtract the elements of the second set from the first.
	\begin{align*}
		&\cup_{k = 0}^n V_k^{\prec_{g'}} \setminus 
		\cup_{k = 0}^{n-1} V_k^{\prec_{g'}} =
		(\cup_{k = 0}^{n+1} V_k^{\prec_{g}} \setminus R) \setminus
		(\cup_{k = 0}^{n} V_k^{\prec_{g}} \setminus R)
		\quad \Rightarrow \quad
		V_n^{\prec_{g'}} = V_{n+1}^{\prec_{g}} \setminus R
	\end{align*}
	Because Definition \ref{def_reduced_og} of SSF reduced open graph we know that $R \subseteq O$.
	Lemma \ref{lem_last_layer} says that $O = V_0^{\prec_g}$, hence we know that no element in $R$ can be included in $V_{n+1}^{\prec_g}$ for $n \ge 0$ and $V^{\prec_{g'}}_n = V^{\prec_g}_{n+1}$.
\end{proof}

It turns out, that if the gflow we have for the original open graph is the optimal one, then the reduced gflow will be optimal for the reduced open graph.

\begin{lemma}[\textbf{Constructing the optimal reduced gflow}]
	\label{lem_reduced_optimal_gflow}
	Let $(G, I, O)$ be an open graph with SSF $(s, \prec_s)$ and optimal gflow $(g, \prec_g)$. If $(G', I, O')$ is the reduced open graph of $(G, I, O)$ then the reduced gflow $(g', \prec_{g'})$ of $(g, \prec_g)$ is the optimal gflow of $(G', I, O')$.
\end{lemma}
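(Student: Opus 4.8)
The plan is to argue by contradiction: assuming the reduced gflow $(g', \prec_{g'})$ is \emph{not} maximally delayed on $(G', I, O')$, I will lift a maximally delayed gflow of $(G', I, O')$ back to a gflow of $(G, I, O)$ that is strictly more delayed than $(g, \prec_g)$, contradicting the optimality of $(g, \prec_g)$.

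First I would assemble the structural facts already available. Since $(g, \prec_g)$ is optimal, Lemma \ref{lem_last_layer} gives $V_0^{\prec_g} = O$, Proposition \ref{prop_ssf} gives $V_0^{\prec_s} = O$, and Lemma \ref{lem_equal_layer} gives $V_1^{\prec_s} = V_1^{\prec_g}$ (here, as throughout this subsection, $|I| = |O|$). Hence Lemma \ref{lem_layering} applies to $(g', \prec_{g'})$, which is a reduced gflow by Lemma \ref{lem_ordering}, and yields $\bigcup_{k=0}^{n} V_k^{\prec_{g'}} = \bigcup_{k=0}^{n+1} V_k^{\prec_g} \setminus R$ for all $n \ge 0$, where $R = \lbrace f(i)\,:\,i\in V_1^{\prec_s}\rbrace$ is the removed set. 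As $R \subseteq O = V_0^{\prec_g}$ and $R$ is disjoint from $V(G')$, this rearranges to $|\bigcup_{k=0}^{n+1} V_k^{\prec_g}| = |\bigcup_{k=0}^{n} V_k^{\prec_{g'}}| + |R|$. Also note $O'^C = O^C \setminus V_1^{\prec_s}$ and $O' = V_1^{\prec_s} \cup (O\setminus R)$ (disjointly), so $O\cup V_1^{\prec_s} = O' \cup R$.

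Now suppose $(g', \prec_{g'})$ is not maximally delayed. Since $(G', I, O')$ carries the gflow $(g', \prec_{g'})$, and more-delayedness is transitive with the graph finite, there is a maximally delayed gflow $(h, \prec_h)$ of $(G', I, O')$, and it is strictly more delayed than $(g', \prec_{g'})$; write $W_k = V_k^{\prec_h}$, so $W_0 = O'$ by Lemma \ref{lem_last_layer}. I would then define $(\hat h, \prec_{\hat h})$ on $(G, I, O)$ by $\hat h(i) = s(i)$ for $i \in V_1^{\prec_s}$ and $\hat h(i) = h(i)$ for $i \in O'^C$, with $\prec_{\hat h}$ the order induced by the layering whose layer $0$ is $O$, whose layer $1$ is $V_1^{\prec_s}$, and whose layer $n+1$ is $W_n$ for $n \ge 1$. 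The verification that $(\hat h,\prec_{\hat h})$ satisfies (G1)--(G3) splits in two. For $i\in V_1^{\prec_s}$ it uses that $(s,\prec_s)$ is a gflow (Theorem \ref{theorem_ssfisgflow}) with $s(i)\subseteq O$ and, by Lemmas \ref{lem_evenconnections} and \ref{lem_oddi}, $Odd_G(s(i)) \subseteq O\cup\lbrace i\rbrace$ and $i \in Odd_G(s(i))$ --- so all relevant neighbours lie in layer $0$. For $i\in O'^C$ the only subtlety is the edges of $G$ incident to $R$; but $\hat h(i)=h(i)\subseteq V(G')$ and $E$ restricted to $V(G')$ equals $E'$, so $Odd_G(h(i))\cap V(G') = Odd_{G'}(h(i))$ and $Odd_G(h(i))\setminus V(G') \subseteq R \subseteq O$, placing the extra odd-neighbours in layer $0$, while the $G'$-part inherits (G2)--(G3) from $(h,\prec_h)$ and (G1) follows from $L_h(i)>L_h(j)$ for $j\in h(i)$.

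Finally I would do the bookkeeping. From $O\cup V_1^{\prec_s} = O'\cup R$ and $W_0=O'$ one gets $\bigcup_{k=0}^{n+1} V_k^{\prec_{\hat h}} = R \cup \bigcup_{k=0}^{n} W_k$ for $n \ge 1$, hence $|\bigcup_{k=0}^{n+1} V_k^{\prec_{\hat h}}| = |R| + |\bigcup_{k=0}^{n} W_k|$, while $|\bigcup_{k=0}^{n+1} V_k^{\prec_g}| = |R| + |\bigcup_{k=0}^{n} V_k^{\prec_{g'}}|$ from the first step; at the two lowest levels, $\bigcup_{k=0}^{0} V_k^{\prec_{\hat h}} = O = V_0^{\prec_g}$ and $\bigcup_{k=0}^{1} V_k^{\prec_{\hat h}} = O\cup V_1^{\prec_s} = V_0^{\prec_g}\cup V_1^{\prec_g}$, giving equality there. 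Since $(h,\prec_h)$ is more delayed than $(g',\prec_{g'})$, its $W$-partial-sums dominate those of $V^{\prec_{g'}}$ with a strict gain at some level $n_0$; and $n_0\ge 1$ because at level $0$ both $W_0$ and $V_0^{\prec_{g'}}$ equal $O'$. Therefore the strict gain is inherited by $(\hat h,\prec_{\hat h})$ at level $n_0+1\ge 2$, so $(\hat h,\prec_{\hat h})$ is strictly more delayed than $(g,\prec_g)$ --- a contradiction. Hence $(g',\prec_{g'})$ is maximally delayed, i.e.\ optimal.

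The step I expect to be the main obstacle is the gflow verification for $(\hat h,\prec_{\hat h})$: controlling the passage from $Odd_{G'}$ to $Odd_G$ when the edges to $R$ are restored, and confirming that the layering prescribed for $\hat h$ coincides with the one computed from $\prec_{\hat h}$ via Definition \ref{def_gflow_depth}, so that the delay comparison is legitimate. The cardinality accounting, though fiddly, is routine once the layering of $\hat h$ is fixed.
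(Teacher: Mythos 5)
Your proposal is correct, but it takes a genuinely different route from the paper's own proof. The paper also argues by contradiction, yet it never builds a new gflow on $(G,I,O)$: it takes the optimal gflow $(d,\prec_d)$ of the reduced graph, picks a vertex $i$ in the first layer where $d$ and $g'$ disagree, transfers the fact that $d(i)\subseteq\cup_{k\le n-1}V_k^{\prec_{g'}}$ up to $\cup_{k\le n}V_k^{\prec_g}\setminus R$ via Lemma \ref{lem_layering}, and then uses the maximality of $(g,\prec_g)$ to place $i$ in $V_{n+1}^{\prec_g}$, which equals $V_n^{\prec_{g'}}$ by Corollary \ref{cor_layering} --- contradicting $i\notin V_n^{\prec_{g'}}$. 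You instead lift the entire maximally delayed reduced gflow $(h,\prec_h)$ to an explicit competitor $(\hat h,\prec_{\hat h})$ on the original graph, gluing it with the SSF correcting sets on $V_1^{\prec_s}$, and contradict the optimality of $(g,\prec_g)$ by counting partial sums straight from Definition \ref{def_delayed_gflow}. Your gluing is sound: for $i\in V_1^{\prec_s}$, Proposition \ref{prop_ssf} gives $s(i)\subseteq O$ while Lemmas \ref{lem_evenconnections} and \ref{lem_oddi} confine $Odd(s(i))$ to $O\cup\lbrace i\rbrace$; for $i\in O'^C$, since $h(i)\subseteq V(G')$ and $E'=E\setminus(V\times R)$ one indeed has $Odd_G(h(i))\cap V(G')=Odd_{G'}(h(i))$ with any extra odd neighbours lying in $R\subseteq O$; and because $\prec_{\hat h}$ is defined by the prescribed layering, Definition \ref{def_gflow_depth} reproduces exactly those layers, so the delay comparison is legitimate. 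Two details worth writing out in a final version: $V_0^{\prec_{g'}}=O'$ (needed for your claim that the strict gain occurs at some $n_0\ge 1$) follows from Lemma \ref{lem_layering} at $n=0$, and the existence of a maximally delayed $(h,\prec_h)$ that is more delayed than a non-optimal $(g',\prec_{g'})$ does need the finiteness-plus-transitivity argument you sketch, since ``more delayed'' is only a partial relation. As for what each approach buys: the paper's proof is shorter and reuses Corollary \ref{cor_layering} directly, but its key placement of $i$ in $V_{n+1}^{\prec_g}$ is left compressed and leans on properties of maximally delayed gflows from \cite{MhallaP07}; your argument is more self-contained, needing only the definition of delayedness together with Lemmas \ref{lem_last_layer}, \ref{lem_equal_layer}, \ref{lem_layering} and the SSF structure lemmas, at the price of verifying the gflow axioms for the glued object. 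Like the paper, you use $|I|=|O|$ through Lemma \ref{lem_equal_layer}, which you rightly make explicit.
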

\begin{proof}
	First, because of Lemma \ref{lem_ordering} $(g', \prec{g'})$ has to be a gflow of $(G', I, O')$.
	Let us assume that $(g', \prec_{g'})$ is not the optimal gflow of $(G', I, O')$ and let $(d, \prec_d)$ be the optimal one.
	Then according to Definition \ref{def_delayed_gflow}
	\begin{align*}
		\exists n > 0, i \in O'^C \quad s.t. \quad
		i \in V_n^{\prec_d} \setminus V_n^{\prec_{g'}} \quad \land \quad
		\forall k < n \quad V_k^{\prec_{g'}} = V_k^{\prec_d}
	\end{align*}
	and since $d(i) \in \cup_{k = 0}^{n-1} V_k^{\prec_d} = \cup_{k = 0}^{n-1} V_k^{\prec_{g'}} $ from Lemma \ref{lem_layering} we obtain $d(i) \in \cup_{k = 0}^{n} V_k^{\prec_g}\setminus R$, where the $R$ is the set of vertices removed from the original graph.
	Now we know from Definition \ref{def_delayed_gflow} that $i$ is in $V_{n+1}^{\prec_g}$ which according to Corollary \ref{cor_layering} must be equal to $V_n^{\prec_{g'}}$.
	This leads to a contradiction because $i \in V_n^{\prec_d} \setminus V_n^{\prec_{g'}}$ and thus $(g', \prec_{g'})$ has to be the optimal gflow of $(G', I, O')$.
\end{proof}

\subsubsection{Proof of the optimality theorem}
\label{sec_optimality_proof}

We can now prove Theorem \ref{thm_optimality} by showing that the vertex layering of any SSF and an optimal gflow is exactly the same.
Let $(G, I, O)$ be an open graph with flow $(f, \prec_f)$ such that $|I| = |O|$.
Let $(s, \prec_s)$ be the SSF obtained from $(f, \prec_f)$ according to Proposition \ref{prop_ssf} and $(g, \prec_g)$ the optimal gflow of $(G, I, O)$.
According to Proposition \ref{prop_ssf} the last layer of any SSF is the set of output vertices.
Lemma \ref{lem_last_layer} says that this is also true for the last layer of an optimal gflow, therefore $V^{\prec_s}_0 = V^{\prec_g}_0 = O$.
The layers $V^{\prec_s}_1$ and $V^{\prec_g}_1$ are equal because of Lemma \ref{lem_equal_layer}.

Now we need to show that layers $V^{\prec_s}_n$ and $V^{\prec_g}_n$ are equal for $n > 1$.
We can construct a reduced open graph (Definition \ref{def_reduced_og}) $(G', I, O')$ from $(G, I, O)$.
We now consider the unique SSF $(s', \prec_{s'})$ and reduced gflow $(g', \prec_{g'})$ of $(G', I, O')$, which according to Lemma \ref{lem_reduced_optimal_gflow} is optimal.
According to Lemma \ref{lem_layering}, $V^{\prec_{g'}}_n =  V^{\prec_g}_{n+1}$ for every $n > 0$ and because SSF is by Theorem \ref{theorem_ssfisgflow} a gflow the same lemma also implies that $V^{\prec_{s'}}_n = V^{\prec_s}_{n+1}$.
Because of the way a reduced open graph is defined, we know that $|I| = |O'|$ (Lemma \ref{lem_reduced_og_output}).
Thus we can again use Lemma \ref{lem_equal_layer} to say that $V^{\prec_{g}}_2 = V^{\prec_{g'}}_1 = V^{\prec_{s'}}_1 = V^{\prec_{s}}_2$.
We can now take $(G', I, O')$ and find its reduced open graph to show using the same technique that $V^{\prec_{g}}_3 = V^{\prec_{s}}_3$.
This can be continued until we reach the empty layers, in which case we have considered all the layers according to $(s, \prec_s)$ and $(g, \prec_g)$.
As every layer of $(s, \prec_s)$ and $(g, \prec_g)$ will be equal and $(g, \prec_g)$ is the optimal gflow, the SSF of a flow of an open graph $(G, I, O)$ is an optimal gflow if $|I|=|O|$, which proves Theorem \ref{thm_optimality}.

\section{Compact circuits from signal shifted flow} \label{sec_compactification}

In the last section we presented an automated parallelisation technique for measurement patterns. However, when we translate those parallel measurement patterns back to the circuit model using the method described in \cite{BroadbentK09}, we end up with quantum circuits with many ancilla qubits. More specifically, the new circuits will have the same number of qubits as there are vertices in the associated MBQC graph. Our next main result of the paper is a new scheme that explore the notion of circuit compactification introduced in \cite{daSilvaG12} to remove all ancilla qubits introduced by the back-and-forth translation between the two models. We start by reviewing the notion of extended circuits, which is basically a re-interpretation of measurement patterns using circuit notation. Then we derive a set of rewrite procedures that combine the rewrite rules introduced in \cite{daSilvaG12} in such a way that the SSF layering function (and, consequently, the optimised depth) does not change in the process of removing ancilla qubits from an SSF extended circuit. Finally, we introduce the algorithm that make use of rewrite procedures to completely rewrite an extended circuit until all ancilla qubits are removed.

\subsection{Extended translation} 
\label{sec_ext}

A straightforward translation method for measurement patterns, which we refer to as \emph{extended translation}, was introduced in \cite{BroadbentK09}. This translation is inefficient, in the sense that it gives as many circuit wires as vertices in the original pattern (instead of inputs only). However its importance comes from the fact that its very easy to implement since the procedure to obtain an extended circuit is just a re-interpretation of the measurement pattern using the quantum circuit notation. Moreover, it will serve as a starting point to obtain more compact circuits for patterns with signal-shifted flow. 

\begin{figure}
 \center
\includegraphics[scale=0.66]{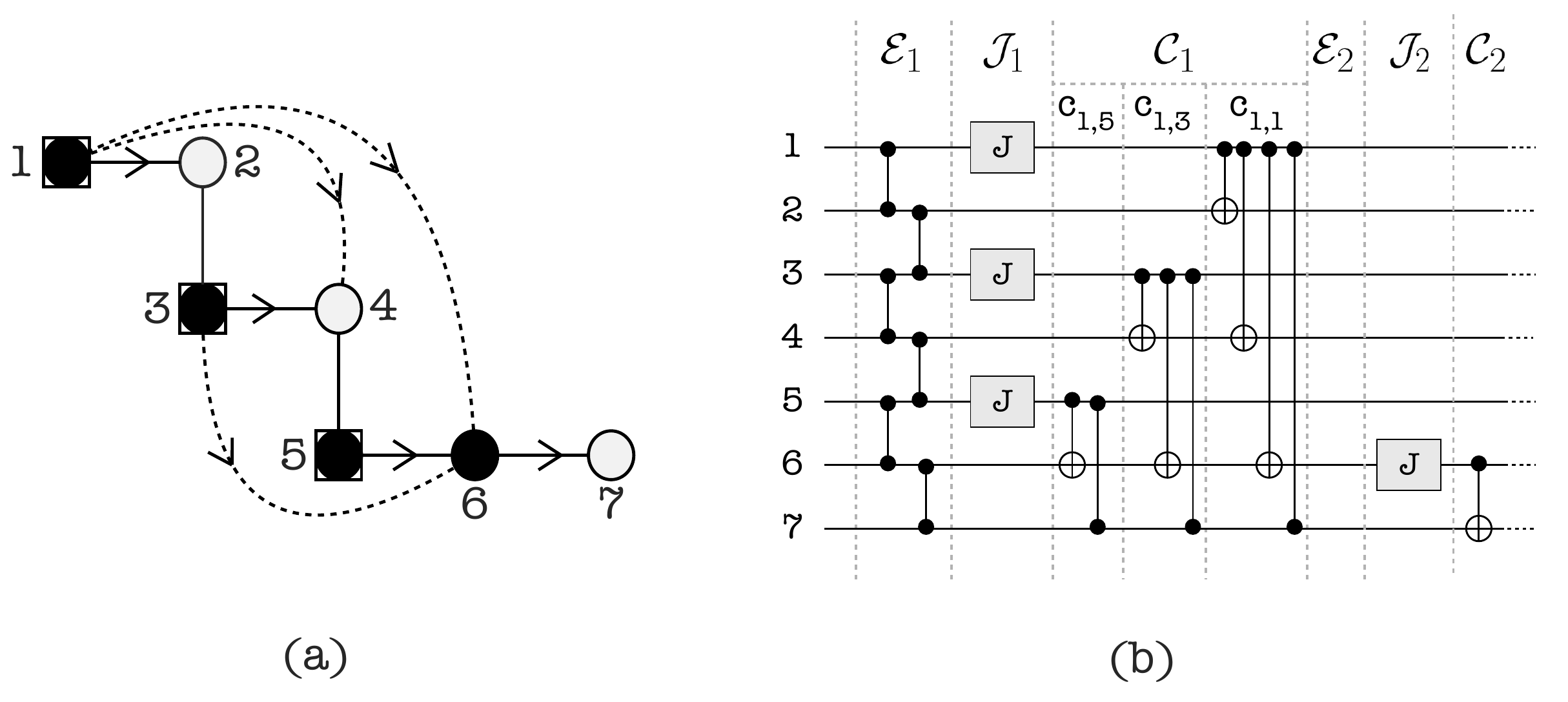}
\caption{Graph associated to the measurement pattern in Equation \ref{eq_examplessf} and the corresponding extended circuit.}
\label{fig_examplessf}
\end{figure}

\begin{definition} \label{def_ext} Given a signal shifted measurement pattern with computational space~$(V,I,O)$ and underlying geometry~$(G,I,O)$ with SSF $(s, \prec_s)$. The corresponding \emph{extended circuit}~$C$ with~$|I|$ input qubits and~$|V\setminus I|$ ancilla qubits, is constructed in the
following steps:

\begin{enumerate}
\item Each vertex on the graph is translated as a circuit wire.
\item The wires corresponding to $I^C$ are prepared in the sate $\ket +$.
\item Each edge linking vertices $i$ and $j$ on the graph (command $E_{ij}$) is translated as a $CZ_{ij}$ in the beginning of the circuit.
\item Each dependent measurement $[M_{i}^{\theta_i}]^{s(i)}$ is translated as a gate $J({- \theta_i})$ in wire $i$ followed by Controlled-$X$ operator with qubit $i$ as control and $j\in s(i)$ as the target. The layering respects the $\prec_s$ by replacing the control just after the $J$-gate and the target just before the $J$-gate of the next measurement command.
\item Each correction on the output qubits $C_j^{s_i}$ ($C=$ Pauli -X or -Z)  is translated as a Controlled-$C$ gates at with qubit $i$ as control and $j$ as target. The layering again respects $\prec_s$ by putting the control right after the $J$-gate and all the corresponding $CX$ (introduced in Step 4) acting on qubit $i$.
\item All the qubits in $O^C$ will be measured in the computational basis.
\end{enumerate}
The obtained layering structure is referred to as $\mathcal{E}_n$, $\mathcal{J}_n$, and $\mathcal{C}_n$, each containing only entangling gates, $J$-gates, and correction gates, respectively. We also divide slices $\mathcal{C}_n$ into $s_n$ many smaller slices $c_{n,i}$, where $s_n$ is the total number of $J$-gates in slice $\mathcal{J}_n$. Each slice $c_{n,i}$ contains all correction gates with control on qubit $i$ s.t. $J_i$ is in $\mathcal{J}_n$.
\end{definition}
 
It is easy to verify that the above circuit implements the same operator as the measurement pattern (see also \cite{BroadbentK09}). For clarity, in what follows we will refer to a $CZ_{ij}$ created in Step 3 above as $E_{ij}$ while keeping the notation $CZ_{ij}$ for those created in Step 5. Later on we will use the fact that, for a signal shifted pattern (Equation \ref{eq_ssf_pattern}), a $CX_{ij}$ or $CZ_{ik}$ will be created in the corresponding extended circuit if and only if $j \in s(i)$ or $k \in Odd(s(i))$, respectively. Note that by construction, all gates associated to operators $E_{ij}$ are initially in slice $\mathcal{E}_1$ (Step 3 in Definition \ref{def_ext}), with $\mathcal{E}_2, ..., \mathcal{E}_n$ all empty. However during the compactification procedure while we rewrite the  circuit new gates will be added to these empty slices. Figure \ref{fig_examplessf}-b shows the extended circuit of the following signal shifted measurement pattern with associated graph given in Figure \ref{fig_examplessf}-a.
\begin{equation} \label{eq_examplessf}
Z^{s_1+s_3+s_5}_7X_7^{s_6}M_{6}^{\theta_6}X_6^{s_1+s_3+s_5}M_{5}^{\theta_5}X_4^{s_1+s_3}M_{3}^{\theta_3}X_2^{s_1}M_{1}^{\theta_1}E_{67}E_{56}E_{45}E_{34}E_{23}E_{12}.
\end{equation} 

\subsection{Compactification procedures} 
\label{sec_rps}

Compactification procedures can be described as a way of globally rewriting extended circuits in order to remove ancilla (non-input) wires. One way to achieve this is to rewrite the circuit to create \textit{$J$-blocks}, defined as follows. 

\begin{definition}\label{def_jblock} Consider a measurement pattern with computational space~$(V,I,O)$ and underlying geometry~$(G,I,O)$ with flow $(f, \prec_f)$ and corresponding extended circuit $C$. We say there is a \emph{$J$-block} in wires $i$ and $f(i)$ if the following set of conditions are satisfied (see Figure \ref{fig_jgate}-a):
\begin{enumerate}
\item The initial state of wire $f(i)$ is $|+\rangle$.
\item The gates sequence ($E_{if(i)}$, $J_i(\theta_i)$, $CX_{if(i)}$) appears in $C$.
\item The only gate acting on the wire $f(i)$ before $CX_{if(i)}$ is $E_{if(i)}$.
\item The only gates acting on wire $i$ after $E_{if(i)}$ are $J_i(\theta_i)$ and $CX_{if(i)}$. 
\item After $CX_{if(i)}$ gate, the qubit $i$ is measured in the $Z$ basis.
\end{enumerate}
\end{definition}

Once a $J$-block is created (via circuit rewriting), one can use the identity in Figure \ref{fig_jgate} ($J$-gate identity) to remove one wire from the circuit. In general, extended circuits are not prepared for direct applications of the $J$-gate identity. In Definition \ref{def_ext}, all corrections $C_j^{s_i}$ are translated as controlled-gates with control placed after $J_i$ gate. Hence Condition 4 in Definition \ref{def_jblock} is not satisfied in general. Moreover, since all $E$ gates are initially placed in slice $\mathcal{E}_1$, Condition 3 is not satisfied either for any open graph with more than two non-output qubits (see example in Figure \ref{fig_examplessf}-b). In order to create $J$-blocks in SSF extended circuits, we explore the relation between the $E$ gates and the correcting gates $C$, since the latter are defined accordingly to the former through the stabilizer formalism. In other words, there is a direct relation between the gates in slice $\mathcal{E}_1$ and all other two-qubit gates in the rest of the extended circuit. In the case where we succeed in removing as many wires as there are non-output qubits in the graph, we say that the resulting circuit is in a \emph{compact form}. We will refer to circuits in the compact form as \textit{compact circuits}.

\begin{figure}[ht]
\center
\includegraphics[scale=1]{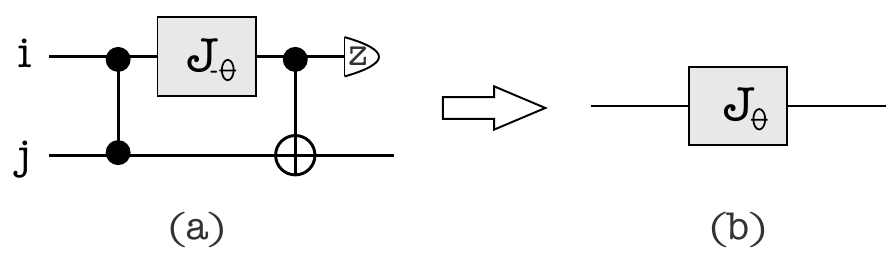}
\caption{This \textit{$J$-gate identity} will be used repeatedly to simplify generic extended circuits. Note that the $J$-gate angles differ from each other by a minus sign.}
\label{fig_jgate}
\end{figure}

\begin{definition} \label{def_compactform} Let~$C$ be the extended circuit of a measurement pattern with computational space $(V,I,O)$.
We say that $C$ can be put into a compact form if there exists a sequence of circuit rewriting equations such that the $J$-gate identity (Figure \ref{fig_jgate}) can be applied $|O^C|$ times. \end{definition}

\begin{figure}
\center
\includegraphics[scale=.7]{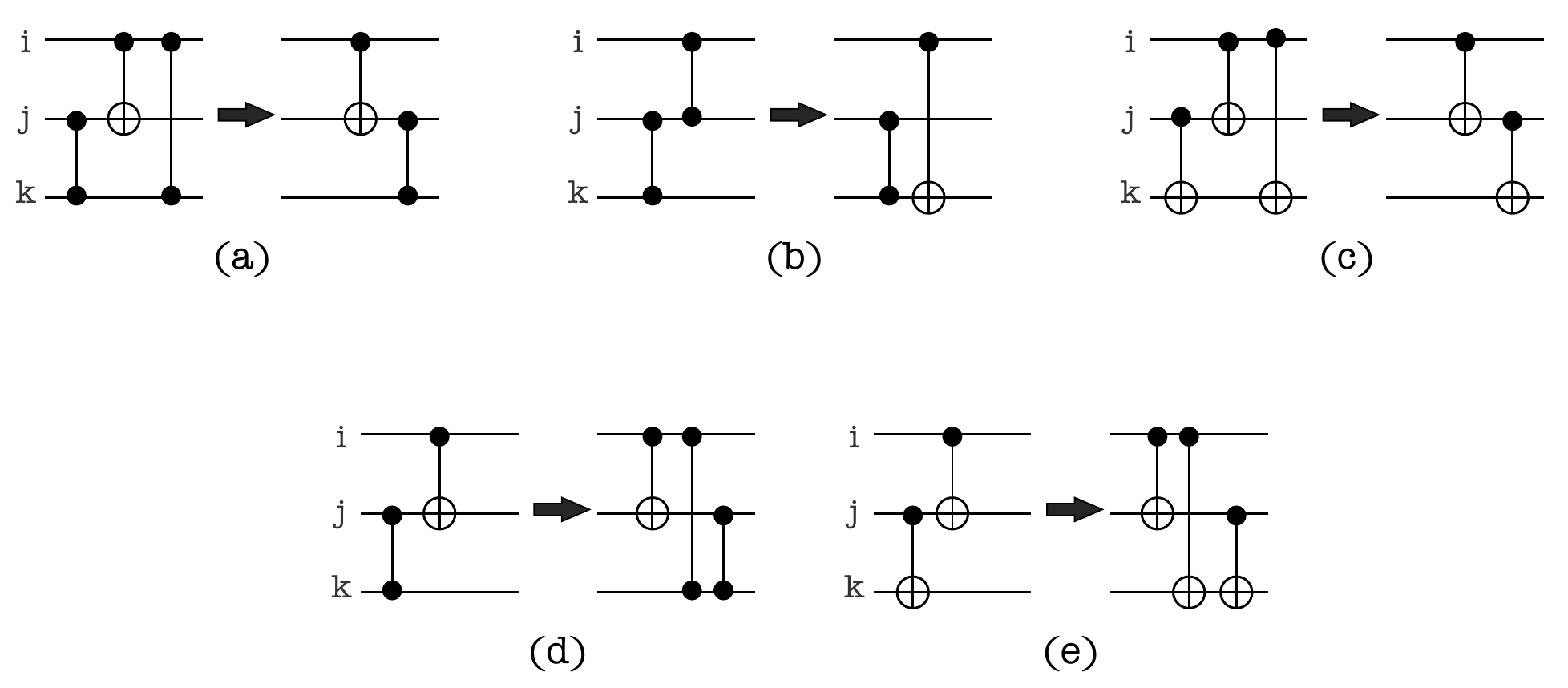}
\caption{Circuit identities in figures (a) to (c) are the rewrite rules from \cite{daSilvaG12} (the circuit identity in Figure (b) is true only if qubit $k$ is in the $|+\rangle$ state). The circuit identity in Figure (d) (Figure (e)) is obtained by multiplying $CZ_{ik}$ ($CX_{ik}$) in both sides of the identity in Figure (a) (Figure (c)).}
\label{fig_rules}
\end{figure}

A collection of circuit identities with the purpose of exploring the aforementioned relation between gates in extended circuits to create $J$-blocks was introduced in \cite{daSilvaG12} (Figure \ref{fig_rules}). A compactification procedure for graphs with flow was provided and some simple examples of graphs with gflow explored. In this section we present a novel algorithm able to rewrite SSF extended circuits to put it into a compact form. In what follows we present the circuit identities that will be used in the algorithm, which we will refer to as Rewrite Procedures (RPs). We refer to the $i$ wire of each RP as the \emph{target wire},  $j_1, ..., j_n$ as the \emph{correcting wires} and finally $k$ as the \emph{neighbour wire}. Moreover, when we need to emphasize which RP we are referring to we also add a superscript to the wire label; for instance $k^{(2)}$ indicates the neighbour wire of RP2. 

\begin{itemize}
\item \textbf{Rewrite Procedure 1.}  The circuit identity in Figure \ref{fig_rp1} moves gates $(E_{kj_1}, ..., E_{kj_m})$ past gates $(CX_{ij_1}, ..., CX_{ij_m})$, adding $m$ many gates $CZ_{ik}$ to slice $\mathcal{C}$ in the process. Using the rewrite rule in Figure \ref{fig_rules}-d each of those $E$ gates can be moved past gates $(CX_{ij_1}, ..., CX_{ij_m})$ in $\mathcal{C}$, creating a new $CZ_{ik}$ each time the rule is applied (Figure \ref{fig_rp1}-c). \\

\begin{figure}[ht]
\center
\includegraphics[scale=.56]{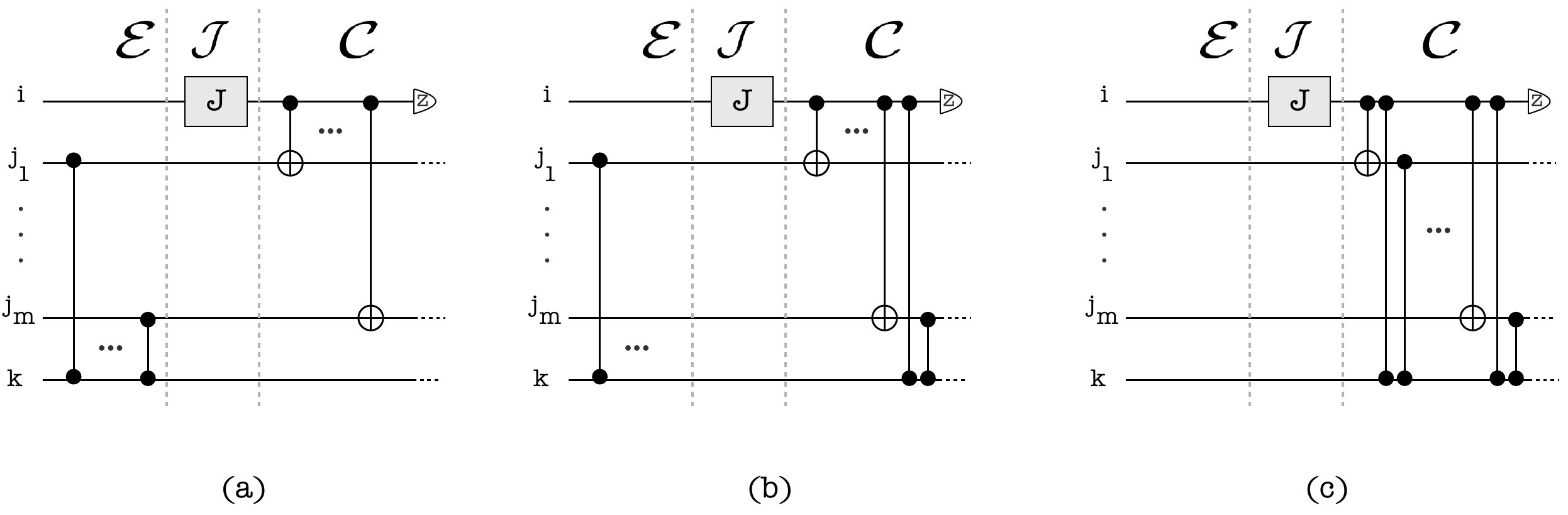}
\caption{Rewrite Procedure 1. This RP is basically several applications of the rewrite rule in Figure \ref{fig_rules}-d. Although all $E$ gates are drawn in the $\mathcal{E}$ slice, it is sufficient that those $E$ gates are placed just before the corresponding $CX$ gates (that is, with no gate in between)).}
\label{fig_rp1}
\end{figure}

\item \textbf{Rewrite Procedure 2.} The circuit identity in Figure \ref{fig_RP2} replaces gates $(E_{kj_1}, ...,E_{kj_{n-1}})$ in slice $\mathcal{E}$ with $(CX_{j_1j_n}, CX_{j_2j_n},$ $ ...,CX_{j_{n-1}j_n})$ in slice $\mathcal{C}$, removing gate $CX_{ij_n}$ in the process. We use the rewrite rule in Figure \ref{fig_rules}-b for each pair $\{(E_{j_1k}E_{kj_n}), ..., (E_{j_{n-1}k}E_{kj_n})\}$ transforming gates $(E_{kj_1}, ..., E_{kj_{n-1}})$ into $(CX_{j_1j_n}, ..., CX_{j_{n-1}j_n})$, as depicted in Figure \ref{fig_RP2}-b. The new $CX$ gates can be pushed forward to the beginning of slice $\mathcal{C}$, since it commutes trivially with $CX_{kj_n}$. Using the rewrite rule in Figure \ref{fig_rules}-e we can commute $(CX_{j_1j_n}, ..., CX_{j_{n-1}j_n})$ past $(CX_{ij_1}, ..., CX_{ij_{n-1}})$ creating $(n-1)$ many new $CX_{ij_n}$ in the process, which together with the pre-existing $CX_{ij_n}$ in $\mathcal{C}$ will cancel out, resulting in the circuit depicted in Figure \ref{fig_RP2}-c.\\

\begin{figure}[ht]
\center
\includegraphics[scale=.56]{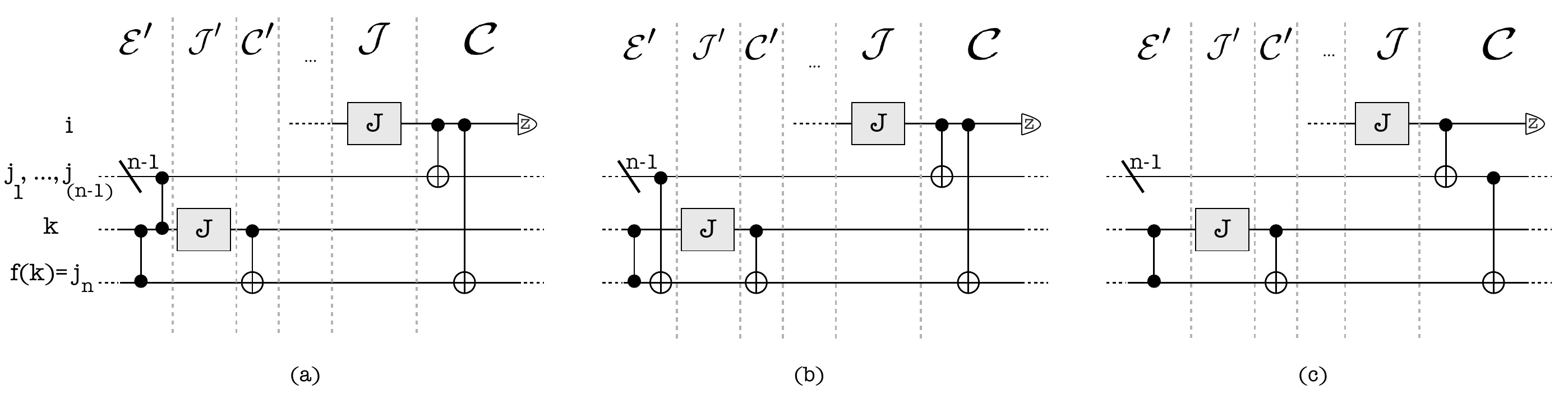}
\caption{Rewrite Procedure 2. If $L_s(k)=L_s(i)$, slices $(\mathcal{E},\mathcal{J},\mathcal{C})$ and $(\mathcal{E'},\mathcal{J'},\mathcal{C'})$ become the same; The rewrite procedure remains exactly the same.} 
\label{fig_RP2}
\end{figure}

\item \textbf{Rewrite Procedure 3.} The circuit identity in Figure \ref{fig_RP3} replaces gates $(E_{kj_1}, ...,E_{kj_n})$ in slice $\mathcal{E}'$ with $(CX_{j_1f(k)}, ...,$ $CX_{j_nf(k)})$ in slice $\mathcal{C}$. We use the rewrite rule in Figure \ref{fig_rules}-b for each pair $\{(E_{j_1k}E_{kf(k)}), ...,$ $ (E_{j_nk}E_{kf(k)})\}$ transforming gates $(E_{j_1k}, ..., E_{j_nk})$ into $(CX_{j_1f(k)}, ..., CX_{j_nf(k)})$ (Figure \ref{fig_RP3}-b). The new $CX$ gates can be pushed forward to the beginning of slice $\mathcal{C}$, since it commutes trivially with $CX_{kf(k)}$. Using the rewrite rule in Figure \ref{fig_rules}-e we can commute $(CX_{j_1f(k)}, ..., CX_{j_nf(k)})$ past $(CX_{ij_1}, ..., CX_{ij_n})$, creating $n$ many $CX_{if(k)}$ in the process. Since $n$ is even, all those $CX$ gates will cancel, resulting in the circuit depicted in Figure \ref{fig_RP3}-c.
\end{itemize}

\begin{figure}[ht]
\center
\includegraphics[scale=.6]{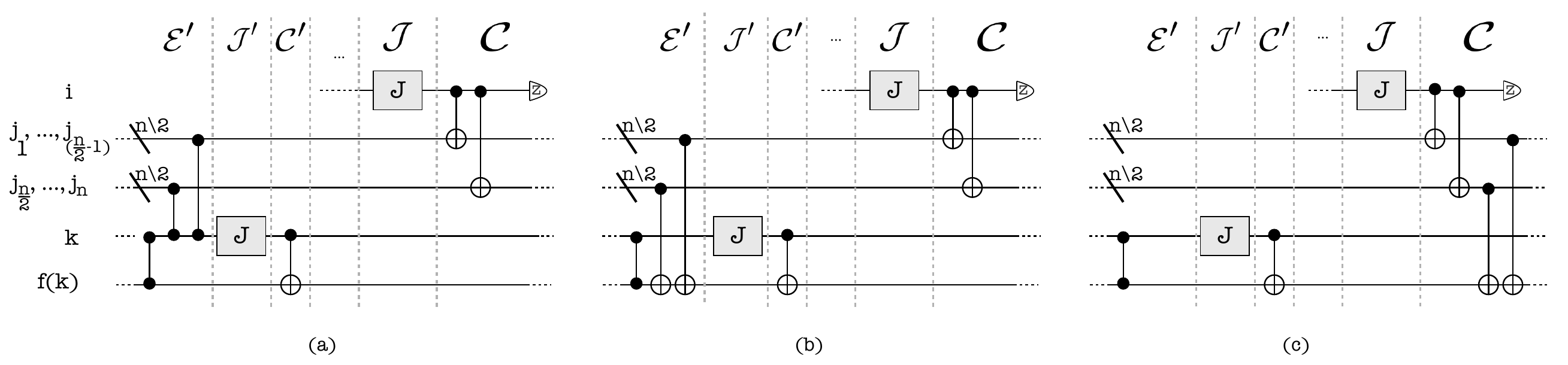}
\caption{Rewrite Procedure 3. If $L_s(k)=L_s(i)$, slices $(\mathcal{E},\mathcal{J},\mathcal{C})$ and $(\mathcal{E'},\mathcal{J'},\mathcal{C'})$ become the same; The rewrite procedure remains exactly the same.} 
\label{fig_RP3}
\end{figure}


In order to apply the $J$-gate identity for the pair of wires $(i,f(i))$ of a SSF extended circuit, we need to rewrite it until all conditions in Definition \ref{def_jblock} are satisfied. In a SSF extended circuit, the first two conditions are trivially satisfied for any pair of wires $(i,f(i))$ and, therefore, we need to rewrite the circuit to satisfy the other conditions. To do so we analyse each qubit $k$ in the neighbourhood of $s(i)$, classifying it according to three different cases: (i) $L_{s}(k) < L_{s}(i)$, (ii) $L_{s}(k) \geq L_{s}(i)$ and $f(k) \in s(i)$ and (iii) $L_{s}(k) \geq L_{s}(i)$ and $f(k) \notin s(i)$. This separation into cases is necessary for two reasons: First, the distinction between $L_{s}(k) < L_{s}(i)$ and $L_{s}(k) \geq L_{s}(i)$ is necessary because we are interested in keeping the $J$-gate parallelization introduced by signal shifting and hence we need a different procedure to deal with each case. Secondly, in the case where $L_{s}(k) \geq L_{s}(i)$, we use the rewrite rule in Figure \ref{fig_rules}-b which deletes $E$ gates. Since condition 2 in Definition \ref{def_jblock} requires the existence of gates of form $E_{kf(k)}$, we will treat differently cases where $f(k) \in s(i)$ and $f(k) \notin s(i)$ to guarantee those $E$ gates will not be removed from the circuit. As we show next, for each case one of the RPs can be applied if a set of prior conditions are satisfied.

\begin{proposition} \label{prop_rp1} Let $i$ be a wire in a SSF extended circuit s.t. $J_i$ is in some slice $\mathcal{J}_n$. If there exists a wire $k$ s.t. (i) $L_{s}(k) < L_{s}(i)$ and (ii) the set of gates $\{E_{kj_1}, ..., E_{kj_m}\}$ (with $j_1, ..., j_m \in s(i)$) can be pushed to slice $\mathcal{J}_n$, then RP1 (Figure \ref{fig_rp1}) can be applied.
\end{proposition}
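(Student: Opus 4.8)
The plan is to check that hypotheses (i) and (ii) force the left-hand side of the circuit identity in Figure \ref{fig_rp1} to be literally present in the SSF extended circuit; once that is done, RP1 is nothing more than the $m$ successive applications of the rewrite rule in Figure \ref{fig_rules}-d that the prose around Figure \ref{fig_rp1} describes.

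First I would collect what the construction gives for free. Since every $j_\ell \in s(i)$, Step 4 of Definition \ref{def_ext} places a gate $CX_{ij_\ell}$ in the circuit whose control sits on wire $i$ immediately after $J_i$; because $J_i \in \mathcal{J}_n$, all of $CX_{ij_1},\dots,CX_{ij_m}$ lie in the correction slice $\mathcal{C}_n$, inside the sub-slice $c_{n,i}$. Next I would observe that every two-qubit gate in $\mathcal{C}_n$ is a correction gate whose control lies on a wire of layer $L_s(i)$ and whose target lies on a wire of strictly smaller layer; hence the control wires and target wires occurring in $\mathcal{C}_n$ are disjoint, so any two correction gates of $\mathcal{C}_n$ commute. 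Consequently $c_{n,i}$ may be taken to be the first sub-slice of $\mathcal{C}_n$, and the gates $CX_{ij_1},\dots,CX_{ij_m}$ — which all share the control $i$ — may be listed in any prescribed order within it.

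Then I would bring the $E$ gates into position using the two hypotheses. Hypothesis (i), $L_s(k) < L_s(i)$, says that wire $k$ carries no $J$-gate in any of the slices $\mathcal{J}_1,\dots,\mathcal{J}_n$ (it has none at all if $k\in O$, and $J_k$ lies in a strictly later slice otherwise). Note moreover that each $j_\ell \in N(k)$, so $j_\ell \neq k$, and that $k \neq i$ and $j_\ell \neq i$ because $L_s(k)$ and $L_s(j_\ell)$ are both smaller than $L_s(i)$. Hypothesis (ii) lets us transport $E_{kj_1},\dots,E_{kj_m}$ to slice $\mathcal{J}_n$; each $E_{kj_\ell}=CZ_{kj_\ell}$ then acts on two wires that carry no $J$-gate of $\mathcal{J}_n$, and, on crossing into $\mathcal{C}_n$, it commutes trivially with the $CZ$-corrections of $c_{n,i}$ (two $CZ$'s always commute) and with every $CX_{ij_{\ell'}}$, $\ell' \neq \ell$, since the wire pairs $\{k,j_\ell\}$ and $\{i,j_{\ell'}\}$ are disjoint. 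Thus each $E_{kj_\ell}$ can be slid until it sits immediately before its $CX_{ij_\ell}$, with no gate between the two, which by the caption of Figure \ref{fig_rp1} is exactly the configuration RP1 requires.

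Finally, with $E_{kj_\ell}$ adjacent to $CX_{ij_\ell}$ for every $\ell$, the rewrite rule of Figure \ref{fig_rules}-d applies to each of the $m$ pairs in turn, moving $E_{kj_\ell}$ past $CX_{ij_\ell}$ and depositing a $CZ_{ik}$ in the correction slice; assembling these $m$ steps is precisely RP1, and since $L_s(k)<L_s(i)$ every new $CZ_{ik}$ has its control on the earlier-measured wire $i$, so the result is still a valid SSF extended circuit with unchanged layering. The step that carries the real weight is the third one: converting the abstract hypothesis ``$\{E_{kj_1},\dots,E_{kj_m}\}$ can be pushed to $\mathcal{J}_n$'' into the concrete adjacency ``each $E_{kj_\ell}$ lies immediately before $CX_{ij_\ell}$'', i.e.\ pinning down exactly which gates the $E$'s must cross and checking that (i) together with the commutativity of $\mathcal{C}_n$ removes every obstruction.
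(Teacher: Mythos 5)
Your proof is correct and takes essentially the same route as the paper's: both arguments check that $j_\ell \in s(i)$ places the gates $CX_{ij_\ell}$ in the sub-slice $c_{n,i}$ (Definition \ref{def_ext}), that $L_s(k) < L_s(i)$ (and $L_s(j_\ell) < L_s(i)$) keeps all relevant $J$-gates out of slices $\mathcal{J}_1,\dots,\mathcal{J}_n$, and that hypothesis (ii) then reproduces exactly the configuration of Figure \ref{fig_rp1}-a, after which the identity of Figure \ref{fig_rules}-d is applied $m$ times. The only difference is that you make explicit the commutation and ordering bookkeeping the paper leaves implicit; just note that your blanket claim that any two correction gates of $\mathcal{C}_n$ commute is slightly too strong (a $CX$ and a $CZ$ sharing a target wire need not commute), whereas the weaker facts you actually use---all gates in $c_{n,i}$ share the control $i$ and hence commute, and the sub-slices are pre-arranged as stipulated before Algorithm \ref{alg_full}---already suffice and agree with the paper's own assumptions.
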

\begin{proof}
Since $L_{s}(k) < L_{s}(i)$, the gate $J_k$ belongs to a future slice $\mathcal{J}_m$ ($m>n$). Also, $j_1, ..., j_m \in s(i)$ implies $i \prec_s \{j_1, ..., j_m\}$ and hence the gates $J_{j_1}, ..., J_{j_m}$ are in slices after $\mathcal{J}_n$ as well. Moreover, $j_1, ..., j_m \in s(i)$ implies there exist operators $X_{j_1}^{s_i}, ..., X_{j_m}^{s_i}$ in the measurement pattern, which are translated to the extended circuit as a $CX_{ij_1}, ..., CX_{ij_m}$ in slice $c_{n,i}$ (according to Definition \ref{def_ext}). Thus, if every gate $E_{kj}$, $j \in s(i)$, can be trivially pushed to slice $\mathcal{J}_n$, we have exactly the scenario depicted in Figure \ref{fig_rp1}-a. Therefore, the circuit identity in Figure \ref{fig_rp1} can be applied.
\end{proof}

For reasons that will become clear in the section \ref{sec_alg}, where the algorithm to obtain compact circuits from SSF extended circuits is introduced, we need to consider a case which is slightly different from the scenario described in Proposition \ref{prop_rp1}. In this case, condition (ii) in Proposition \ref{prop_rp1} is not satisfied because some of the $E$ gates are in slice $c_{n,i}$ but cannot be pushed trivially back to $\mathcal{J}_n$. The only scenario where that could happen is if there exists $CX_{i,k}$ in slice $c_{n,i}$ and some of the $E_{kj}$ gates are placed past it (and hence cannot be pushed trivially to $\mathcal{J}_n$). In this scenario, as we show next, RP1 can also be applied.

\begin{proposition} \label{prop_rp1b} Let $i$ be a wire in a SSF extended circuit s.t. $J_i$ is in some slice $\mathcal{J}_n$. If there exists a wire $k$ s.t. (i) $k \in s(i)$ and (ii) gates $\{E_{kj_1}, ..., E_{kj_m}\}$ (with $j_1, ..., j_m \in s(i)$) can either be all pushed to slice $\mathcal{J}_n$ or just some can be pushed  to slice $\mathcal{J}_n$ and the other $E$ gates are placed in $c_{n,i}$ just after $CX_{ik}$, then RP1 (Figure \ref{fig_rp1}) can be applied.
\end{proposition}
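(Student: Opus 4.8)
The plan is to mimic the proof of Proposition \ref{prop_rp1} almost verbatim, the only new ingredient being the bookkeeping needed to accommodate the gate $CX_{ik}$ and the $E$ gates trapped behind it. Concretely, I would argue that the relevant window of the SSF extended circuit can be brought, by trivial commutations alone, into the configuration of Figure \ref{fig_rp1}-a in the weakened form licensed by that figure's caption (each $E_{kj_l}$ sitting immediately before the matching $CX_{ij_l}$, with no gate in between), whence the circuit identity of Figure \ref{fig_rp1} applies and produces one $CZ_{ik}$ per correcting wire.

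First I would collect the structural consequences of the hypotheses. From $k\in s(i)$ we get $i\prec_s k$, so $L_s(k)<L_s(i)$; hence $J_k$ lies in a slice later than $\mathcal{J}_n$ and, by Definition \ref{def_ext}, there is a $CX_{ik}$ in the sub-slice $c_{n,i}$. Similarly each $j_l\in s(i)$ gives $i\prec_s j_l$, so $J_{j_l}$ lies after $\mathcal{J}_n$ and each $CX_{ij_l}$ sits in $c_{n,i}$. So the local picture is exactly that of Proposition \ref{prop_rp1}, the sole extra features being the presence of $CX_{ik}$ and, possibly, of some $E_{kj_l}$ parked right after it.

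Next I would use the commutation algebra inside $c_{n,i}$: all gates there are controlled by qubit $i$ and act by $X$ or $Z$ on pairwise distinct targets, so they all commute with one another; and $E_{kj_l}=CZ_{kj_l}$ commutes with every $CX_{ij_{l'}}$ ($l'\neq l$) and with any $CZ_{ik}$ present, failing to commute only with $CX_{ik}$ and with its own partner $CX_{ij_l}$. With this, I would handle the two alternatives of hypothesis (ii). If all $E_{kj_l}$ are pushable to $\mathcal{J}_n$, I would — after reordering $c_{n,i}$ so that each $CX_{ij_l}$ precedes $CX_{ik}$ — slide each $E_{kj_l}$ forward until it is adjacent to $CX_{ij_l}$, which is possible precisely because the relevant correcting wires carry no obstructing gate before $c_{n,i}$ (the content of being ``pushable to $\mathcal{J}_n$'') and on wire $k$ the gate glides past the remaining $CZ_{k\cdot}$ gates. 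If instead some $E_{kj_l}$ are stuck in $c_{n,i}$ right after $CX_{ik}$, I would leave them where they are and instead pull the matching $CX_{ij_l}$ backward — it commutes with everything lying between it and $E_{kj_l}$ — until the two are adjacent. In either case every $E_{kj_l}$ ends up immediately before $CX_{ij_l}$, so the hypothesis of RP1 (in the relaxed form of Figure \ref{fig_rp1}) is met and RP1 can be applied.

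The main obstacle, and the one point that needs genuine care, is precisely the ``stuck'' $E$ gates: since $CZ_{kj_l}$ and $CX_{ik}$ do not commute these cannot be returned to $\mathcal{J}_n$, so the argument must be run on the side of the target gates $CX_{ij_l}$ instead. One must check that dragging each $CX_{ij_l}$ backward never forces it across a non-commuting gate — in particular never across $CX_{ik}$ itself, which holds because by hypothesis the trapped $E_{kj_l}$ lie between $CX_{ik}$ and the $CX_{ij_l}$'s — and also that none of this reshuffling alters the $\prec_s$-layering relied on by the rest of the compactification algorithm. Everything else is the same case analysis already performed for Proposition \ref{prop_rp1}.
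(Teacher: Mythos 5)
Your proposal is correct and follows essentially the same route as the paper's proof: reduce to Proposition \ref{prop_rp1} when all the $E_{kj_l}$ can be pushed to $\mathcal{J}_n$, and otherwise observe that the only gate in $c_{n,i}$ that could obstruct pairing each trapped $E_{kj_l}$ with its $CX_{ij_l}$ is $CX_{ik}$, which by hypothesis sits before the trapped gates, so RP1 in the relaxed form of Figure \ref{fig_rp1} still applies. Your version merely makes the trivial-commutation bookkeeping inside $c_{n,i}$ more explicit than the paper does.
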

\begin{proof}
The proof of this proposition is trivial due to its similarity to Proposition \ref{prop_rp1}. First, note that $k \in s(i)$ implies $L_{s}(k) < L_{s}(i)$ and hence condition (i) in this proposition is equivalent to the one in Proposition \ref{prop_rp1}. Therefore, if all gates in the set $\{E_{kj_1}, ..., E_{kj_m}\}$ can be pushed to slice $\mathcal{J}_n$ we have exactly the conditions in Proposition \ref{prop_rp1} and there is nothing to prove. The other possibility is when a subset of the set $\{E_{kj_1}, ..., E_{kj_m}\}$ can be pushed to slice $\mathcal{J}_n$ but gates in the complementary subset are placed in slice $c_{n,i}$, after gate $CX_{ik}$ (which exists since $k \in s(i)$). It is easy to note that it will not prevent the application of RP1, since the only gate in $c_{n,i}$ that could be placed in between the $E$ and $CX$ gates used in RP1, namely $CX_{ik}$, is assumed in condition (ii) to be placed before the gates of the form $E_{kj}$.
\end{proof}

In the next Lemma we show the interesting effect of applying RP1 to SSF extended circuits.


\begin{lemma} \label{lem_rp1}
The application of RP1 to a pair of target and neighbour wires $(i,k)$ of a SSF extended circuit removes all $CZ_{ik}$ from the circuit.
\end{lemma}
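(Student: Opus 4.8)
The plan is to combine two observations: first, that in a freshly constructed SSF extended circuit the only $CZ_{ik}$ gates present are those created by Step 5 of Definition \ref{def_ext} (there are none among the $E$-gates of slice $\mathcal{E}_1$, which are labelled $E_{ij}$ not $CZ_{ij}$); and second, that RP1 is designed precisely to act on \emph{all} the relevant $E_{kj}$-gates simultaneously, thereby either creating or annihilating $CZ_{ik}$ gates in matched numbers. So the core of the argument will be a counting statement: the number of $CZ_{ik}$ gates \emph{before} applying RP1, plus the number \emph{created} by RP1, must be even — in fact, must cancel completely.

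First I would recall from Definition \ref{def_ext} (as noted in the text following it) that in the SSF extended circuit a $CZ_{ik}$ gate is present if and only if $k \in Odd(s(i))$, and these are exactly the gates produced in Step 5 from the output corrections $Z_k^{s_i \cdot parity(\zeta_i(k))}$ appearing in Equation \ref{eq_ssf_pattern}. By Lemma \ref{lem_evenconnections} (together with Lemma \ref{lem_oddi}), $Odd(s(i)) \cap O^C = \{i\}$, so every $CZ_{ik}$ with $k \neq i$ present in the circuit has $k \in O$, and the number of edges between $k$ and $s(i)$ is odd; that is, $|N(k) \cap s(i)|$ is odd. Next, when RP1 is applied with correcting wires $j_1, \dots, j_m \in s(i) \cap N(k)$, Figure \ref{fig_rp1} shows that one new $CZ_{ik}$ is created for each $E_{kj_\ell}$ moved past the corresponding $CX_{ij_\ell}$; hence exactly $m = |N(k) \cap s(i)|$ new $CZ_{ik}$ gates are introduced. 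I would then argue that RP1 as invoked in this setting always acts on the full set $N(k) \cap s(i)$ of neighbours (this is the content of condition (ii) in Propositions \ref{prop_rp1} and \ref{prop_rp1b}: \emph{all} such $E_{kj}$ gates get pushed into position), so $m = |N(k) \cap s(i)|$, which we have just seen is odd precisely when a $CZ_{ik}$ is already present.

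The final step is the parity bookkeeping. If no $CZ_{ik}$ was present beforehand, then $|N(k) \cap s(i)|$ is even, so RP1 creates an even number of $CZ_{ik}$ gates, which pairwise cancel (as $CZ^2 = I$ and all $CZ_{ik}$ commute), leaving none. If a $CZ_{ik}$ was present beforehand, then $|N(k) \cap s(i)|$ is odd, so RP1 creates an odd number of new $CZ_{ik}$ gates; together with the pre-existing one this is an even total, which again cancels completely, leaving no $CZ_{ik}$. Either way all $CZ_{ik}$ are removed, proving the lemma. The main obstacle I anticipate is making precise the claim that RP1 necessarily operates on the \emph{entire} neighbourhood $N(k) \cap s(i)$ rather than some proper subset — this needs the observation that every $E_{kj}$ with $j \in s(i)$ sits in slice $\mathcal{E}_1$ (or has been pushed adjacent to its $CX_{ij}$) and that the only obstruction to moving it, namely a $CX_{ik}$ in $c_{n,i}$, is harmless by the hypothesis of Proposition \ref{prop_rp1b}; assembling these facts carefully is where the real work lies, the parity count itself being immediate once the set is known to be complete.
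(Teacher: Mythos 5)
Your parity count is exactly the paper's argument: when $k \in Odd(s(i))$ there is a pre-existing $CZ_{ik}$ and RP1 creates an odd number of new ones, when $k \notin Odd(s(i))$ there is none and RP1 creates an even number, so in both cases the total is even. Where your proposal has a genuine gap is in the cancellation step. You write that the created $CZ_{ik}$ gates ``pairwise cancel (as $CZ^2 = I$ and all $CZ_{ik}$ commute)'', and you treat the possible $CX_{ik}$ in slice $c_{n,i}$ only as an obstruction to \emph{applying} RP1, dismissed by the hypothesis of Proposition \ref{prop_rp1b}. But that hypothesis concerns the placement of the $E$ gates before RP1 is applied; it says nothing about what happens afterwards. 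Precisely in the Proposition \ref{prop_rp1b} scenario ($k \in s(i)$, so a $CX_{ik}$ sits in $c_{n,i}$), the $CZ_{ik}$ gates produced by RP1 land on both sides of that $CX_{ik}$, and $CZ_{ik}$ does \emph{not} commute with $CX_{ik}$: one has $CZ_{ik}\,CX_{ik} = CX_{ik}\,CZ_{ik}\,(Z_i \otimes I)$. So the even collection of $CZ_{ik}$ gates cannot simply be grouped and annihilated; bringing them together leaves behind a Pauli $Z_i$ byproduct.

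The paper closes exactly this hole: it uses the identity above to move all $CZ_{ik}$ to one side of the troublesome $CX_{ik}$, then pushes the resulting $Z_i$ to the end of wire $i$ (it commutes with every remaining gate in $c_{n,i}$, since those are controlled gates with control on $i$), and finally discards it because wire $i$ is a non-output wire measured in the computational basis, so $Z_i$ does not affect the measurement statistics. Without this step — which relies on the fact that wire $i$ ends in a $Z$-basis measurement, not on any hypothesis of Proposition \ref{prop_rp1b} — the cancellation claim, and hence the lemma, is not established. The part you flagged as ``the real work'' (that RP1 acts on all of $N(k) \cap s(i)$) is comparatively routine and is guaranteed by the way Algorithm \ref{alg_chooserp} is invoked; the missing content is the $CZ_{ik}$/$CX_{ik}$ commutation bookkeeping.
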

\begin{proof}
According to Definition \ref{def_ext}, all $CZ$  gates with control in wire $i$ are placed in slice $c_{n,i}$ and hence we only need to show that all $CZ_{ik}$ in that slice are removed. Let us divide the analysis into two cases: (i) $k \in Odd(s(i))$ and (ii) $k \notin Odd(s(i))$. Consider the first case. Since
$k \in Odd(s(i))$, there exists a $Z_k^{s_i}$ in the measurement pattern and hence there exists a gate $CZ_{ik}$ in slice $c_{n,i}$. For this case, the index $m$ in RP1 is an odd number and therefore the application of RP1 creates odd many $CZ_{ik}$ gates in slice $c_{n,i}$. Note that those $CZ_{ik}$ gates can be created in different parts of $c_{n,i}$ (depending whether Proposition \ref{prop_rp1} or \ref{prop_rp1b} is satisfied) such that they cannot be grouped together trivially (like in Figure \ref{fig_lemczcx}-a). It is easy to verify that the only possible scenario for that to happen is when there exists $CX_{ik}$ in $c_{n,i}$ in between the created $CZ_{ik}$ gates. However, those $CZ_{ik}$ gates can be grouped together by moving all $CZ_{ik}$ to one side of the troublesome $CX_{ik}$ using the identity

\begin{equation*}
CZ_{ik}CX_{ik}|\psi\rangle_i |\phi\rangle_k = CX_{ik}CZ_{ik}(Z_i|\psi\rangle_i)|\phi\rangle_k
\end{equation*}
where $|\psi\rangle$ and $|\phi\rangle$ are arbitrary quantum sates and $Z$ is the Pauli operator (see Figure \ref{fig_lemczcx}-b). This way, since there exist even many $CZ_{ik}$ in sequence and even many $CZ$ gates equal the identity, all $CZ_{ik}$ can be simply removed from the circuit. The created single-qubit gate $Z_i$ can be pushed forward to the end of the $i$ wire, since it commutes with all gates in $c_{n,i}$. Moreover, since the final measurement is onto the $Z$-basis, the operator $Z_i$ has no effect in the measurement statistics and therefore can be removed from the circuit without changing the computation being implemented by the circuit (Figure \ref{fig_lemczcx}-c). An equivalent analysis applies for the second case, where $k \notin Odd(s(i))$. For this case RP1 create even many $CZ_{ik}$ in slice $c_{n,i}$ and there is no pre-existing $CZ_{ik}$ in that slice. The same identity can be used to group together all created $CZ_{ik}$, which cancel out since there are even many of those. Since both created and pre-existing $CZ_{ik}$ are removed from the circuit by the application of RP1, the Lemma holds.

\end{proof}

\begin{figure}
\center
\includegraphics[scale=.85]{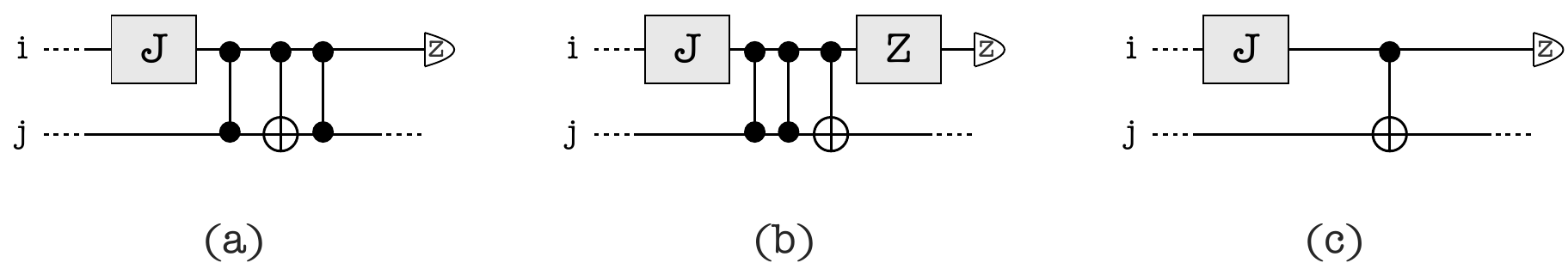}
\caption{Removing even many $CZ_{ij}$ from a SSF extended circuit. See Lemma \ref{lem_rp1} for more information.} 
\label{fig_lemczcx}
\end{figure}

Now we analyze the cases where $L_{s}(k) \geq L_{s}(i)$, that is, with $J_k$ gate blocking all gates of form $E_{kj}$ to be pushed past the correction gates (necessary to satisfy condition 3 in Definition \ref{def_jblock}). Since in those cases the corresponding RP will delete $E$ gates and we do not want to delete E gates of the form $E_{i,f(i)}$ (see Definition \ref{def_jblock}), we analyze separately the case where $f(k) \in s(i)$ (Proposition \ref{prop_rp2}) and the case where $f(k) \notin s(i)$ (Proposition \ref{prop_rp3}).

\begin{proposition} \label{prop_rp2} Let $i$ be a wire in a SSF extended circuit s.t. $J_i$ is in slice $\mathcal{J}_n$. If there exists a wire $k$ such that the following conditions are satisfied: (i) $L_{s}(k) \geq L_{s}(i)$; (ii) $f(k) \in s(i)$; (iii) gates $E_{kj_1}, ..., E_{kj_n}$ (with $j_1, ..., j_n \in s(i)$) can be trivially pushed to some slice $\mathcal{J}_{n'}$ (containing $J_k$); and (iv) the $E_{kf(k)}$ gate is the first gate acting on wire $f(k)$, then RP2 (Figure \ref{fig_RP2}) can be applied.\end{proposition}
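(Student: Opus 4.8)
The plan is to follow the template of the proof of Proposition \ref{prop_rp1}: I would show that hypotheses (i)--(iv), together with the canonical layout of an SSF extended circuit guaranteed by Definition \ref{def_ext}, force the wires $i$, $k$, $f(k)$ and $j_1,\dots,j_n$ into exactly the configuration drawn on the left of Figure \ref{fig_RP2}, so that the circuit identity RP2 applies verbatim. So the proof is essentially a pattern-match rather than a new construction.

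First I would account for all the two-qubit gates that must be present. Since $j_1,\dots,j_n\in s(i)$, Definition \ref{def_ext} places all of $CX_{ij_1},\dots,CX_{ij_n}$ inside the sub-slice $c_{n,i}$ right after $J_i$; by hypothesis (ii) one of them is $CX_{if(k)}$, the gate RP2 cancels. On wire $k$ I would invoke that $f(k)\in s(k)$ holds for every SSF (Proposition \ref{prop_ssf}, because $\zeta_k(k)=1$ forces $parity(\zeta_k(f^{-1}(f(k))))=1$), so $CX_{kf(k)}$ sits in the sub-slice attached to $J_k$, and hypothesis (iv) --- $E_{kf(k)}$ is the first gate on wire $f(k)$ --- guarantees the ``$J$-block skeleton'' $E_{kf(k)},J_k,CX_{kf(k)}$ is unobstructed on wire $f(k)$. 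This is exactly why $f(k)$ must be the pivot of the fusion rule Figure \ref{fig_rules}-b: that rule \emph{deletes} $E$ gates, and (ii) together with (iv) ensure it is the gates $E_{kj_\ell}$ that get consumed against $E_{kf(k)}$, never $E_{kf(k)}$ itself. Then I would use hypothesis (i): $L_s(k)\ge L_s(i)$ means $J_k$ lies in a slice $\mathcal{J}_{n'}$ that is $\mathcal{J}_n$ or an earlier one, so by (iii) the gates $E_{kj_1},\dots,E_{kj_n}$ can be pushed to sit directly in front of the block $CX_{ij_1},\dots,CX_{ij_n}$ in $c_{n,i}$ (the degenerate case $L_s(k)=L_s(i)$ merely collapses the two slices and, as the caption of Figure \ref{fig_RP2} notes, leaves the argument unchanged). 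Finally I would note that nothing intervenes between these $E_{kj_\ell}$ and the $CX_{ij_\ell}$, the only candidate being $CX_{ik}$, which can be commuted aside since it acts on wires disjoint from the relevant $E$ gates; assembling these observations reproduces the left side of Figure \ref{fig_RP2}-a, and RP2 follows.

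The main obstacle I expect is positional bookkeeping rather than any conceptual difficulty: I must verify that the $CX_{ij_\ell}$ gates can all be taken consecutively in $c_{n,i}$ with the pushed-in $E_{kj_\ell}$ gates aligned immediately before them, and that the fusion of Figure \ref{fig_rules}-b is legitimately applicable --- which needs wire $k$ to be in state $\ket{+}$ at that moment (true because $k$ is a non-output vertex whose $J_k$ gate comes only after these $E$ gates) and needs the ordering of the primed versus unprimed slices of Figure \ref{fig_RP2} to be handled correctly, which is the one place where $L_s(k)>L_s(i)$ as opposed to $L_s(k)=L_s(i)$ genuinely matters. Conditions (ii) and (iv) are used precisely to keep $E_{kf(k)}$ alive through the fusion, and condition (iii) is what lets the $E_{kj_\ell}$ gates reach the correct slice; with these in hand the proof reduces to a direct, if fiddly, verification against Figure \ref{fig_RP2}.
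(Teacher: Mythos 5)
Your overall strategy is the same as the paper's: check that hypotheses (i)--(iv) force exactly the configuration of Figure \ref{fig_RP2}-a and then apply the identity. But one step of your verification would fail as written. You allow for a gate $CX_{ik}$ sitting among the gates of $c_{n,i}$ and dismiss it on the grounds that it ``acts on wires disjoint from the relevant $E$ gates''; that is false --- $CX_{ik}$ and $E_{kj_\ell}=CZ_{kj_\ell}$ share wire $k$, and a $CX$ targeting $k$ does not commute with a $CZ$ acting on $k$. The observation the paper makes instead, and which your argument is missing, is that hypothesis (i) rules such gates out altogether: $L_s(k)\ge L_s(i)$ forces $k\notin s(i)$ (otherwise $i\prec_s k$, contradicting (i)) and $k\notin Odd(s(i))$ (by Lemma \ref{lem_evenconnections} the odd neighbours of $s(i)$ other than $i$ are outputs, which sit in layer $0$), so no $CX_{ik}$ and no $CZ_{ik}$ occur anywhere in the circuit. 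This is exactly what separates the RP2 situation from Proposition \ref{prop_rp1b}, where such a $CX_{ik}$ genuinely exists and needs special treatment; in the present setting it cannot exist, and no commutation argument is needed or available.

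A second, smaller slip: the $\ket{+}$ precondition of the fusion rule in Figure \ref{fig_rules}-b falls on the wire that becomes the target of the created $CX$ gates, i.e.\ on $f(k)$, not on the neighbour wire $k$ (whose label happens to collide with the figure's). That is precisely what hypothesis (iv) secures, together with $f(k)\in I^C$ so that wire $f(k)$ is prepared in $\ket{+}$: with $E_{kf(k)}$ the first gate on that wire, each pair $E_{kj_\ell}E_{kf(k)}$ can be fused while $E_{kf(k)}$ itself survives. Your justification --- that wire $k$ is in $\ket{+}$ because $k$ is ``non-output'' and $J_k$ comes later --- is wrong on both counts: only non-\emph{input} wires are prepared in $\ket{+}$, and wire $k$ is in general already entangled by its other $E$ gates at the moment of fusion. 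With these two points corrected, your pattern-match coincides with the paper's proof.
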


\begin{proof}
Since $L_{s}(k) \geq L_{s}(i)$, $n' \leq n$. In slice $\mathcal{C}_{n'}$ there is a $CX_{kf(k)}$; since $f(k) \in s(i)$, we define $j_n = f(k)$. Also, $j_1, ..., j_n \in s(i)$ implies that $i \prec_s \{j_1, ..., j_n\}$ and hence the gates $J_{j_1}, ..., J_{j_n}$ are in some future slices (compared to $n$). Moreover, it also implies the existence of the operators $X_{j_1}^{s_i}, ..., X_{j_n}^{s_i}$ in the measurement pattern, which are translated to the extended circuit as $CX_{ij_1}, ..., CX_{ij_n}$ in slice $c_{n,i}$ (containing all the $CX$ and $CZ$ with control on qubit $i$). On the other hand, $L_{s}(k) \geq L_{s}(i)$ implies both $k \notin Odd(s(i))$ and $k \notin s(i)$ and therefore there is no $CZ_{ik}$ or $CX_{ik}$ in the circuit. Thus, if every gate $E_{kj_1}, ..., E_{kj_n}$ can be trivially pushed to slice $\mathcal{J}_{n'}$ and $E_{kf(k)}$ is the first gate acting on wire $j_n$, we have exactly the scenario depicted in Figure \ref{fig_RP2}-a. Therefore, the circuit identity in Figure \ref{fig_RP2} can be applied.\end{proof}

\begin{proposition} \label{prop_rp3} Let $i$ be a wire in a SSF extended circuit s.t. $J_i$ is in slice $\mathcal{J}_n$. If there exists a wire $k$ such that the following conditions are satisfied: (i) $L_{s}(k) \geq L_{s}(i)$; (ii) $f(k) \notin s(i)$; (iii) gates $E_{kj_1}, ..., E_{kj_n}$ (with $j_1, ..., j_n \in s(i)$) can be trivially pushed to slice $\mathcal{J}_{n'}$ (containing $J_k$); and (iv) the $E_{kf(k)}$ gate is the first gate acting on wire $f(k)$, then RP3 (Figure \ref{fig_RP3}) can be applied.\end{proposition}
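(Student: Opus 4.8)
The plan is to mimic the proof of Proposition \ref{prop_rp2} almost line for line; the only structural change is that here $f(k)\notin s(i)$, so $f(k)$ is not one of the correcting wires $j_1,\dots,j_n$ but instead the separate wire onto which RP3 deposits its new $CX$ gates. Concretely, I would verify that under hypotheses (i)--(iv) the sub-circuit drawn on the left of Figure \ref{fig_RP3} literally occurs in the SSF extended circuit and that the rewrite rules it uses are legitimate there, which is precisely what it means for RP3 to be applicable.

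First I would unpack condition (i). Since $i$ is non-output, $L_s(i)\geq 1$, so $L_s(k)\geq L_s(i)\geq 1$ makes $k$ non-output; hence $f(k)$ is defined and, lying in the image of $f$, belongs to $I^C$, so wire $f(k)$ is initialised in $\ket{+}$. Because $f(k)\in s(k)$ for every SSF, the correction $X_{f(k)}^{s_k}$ is present and gives a $CX_{kf(k)}$ in the correction slice right after $J_k$; and $L_s(k)\geq L_s(i)$ puts $J_k$ in a slice $\mathcal{J}_{n'}$ with $n'\leq n$, just as in the figure. Hypothesis (iv) says $E_{kf(k)}$ is the first gate on wire $f(k)$, so that wire is still in $\ket{+}$ when $E_{kf(k)}$ acts, which is exactly the state the rewrite rule of Figure \ref{fig_rules}-b used inside RP3 requires; it also means this $E$ gate is neither consumed nor obstructed by any earlier gate.

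Next I would handle the $i$-side. From $j_1,\dots,j_n\in s(i)$ we get $i\prec_s j_\ell$, so each $J_{j_\ell}$ sits in a slice later than $\mathcal{J}_n$ and each correction $X_{j_\ell}^{s_i}$ yields a $CX_{ij_\ell}$ in the subslice $c_{n,i}$. The essential cleanliness check is that $c_{n,i}$ contains nothing else touching wire $k$: $L_s(k)\geq L_s(i)$ forces $k\notin s(i)$, while $k$ is a non-output vertex, $k\neq i$ (since $f(i)\in s(i)$ but $f(k)\notin s(i)$), and $k\in N(s(i))$ because it is joined to each $j_\ell$; hence Lemma \ref{lem_evenconnections} gives $k\notin Odd(s(i))$, so there is neither a $CZ_{ik}$ nor a $CX_{ik}$ anywhere in the circuit. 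Taking $\{j_1,\dots,j_n\}$ to be all of $N(k)\cap s(i)$, that same lemma moreover makes $n$ even, which is the parity that causes the $n$ freshly created $CX_{if(k)}$ gates to annihilate and produce the circuit of Figure \ref{fig_RP3}-c.

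Finally, condition (iii) lets the gates $E_{kj_1},\dots,E_{kj_n}$ be pushed up against the slice $\mathcal{J}_{n'}$ containing $J_k$, so, assembling everything above, the circuit exhibits exactly the left-hand configuration of Figure \ref{fig_RP3} and RP3 applies. As in the proof of Proposition \ref{prop_rp2}, I expect the only genuinely delicate point to be this slice bookkeeping: confirming that $k$ really is a non-output vertex distinct from $i$ and adjacent to $s(i)$, so that Lemma \ref{lem_evenconnections} legitimately applies and supplies both $k\notin Odd(s(i))$ and the evenness of $n$, and checking that nothing lies between the $E_{kj_\ell}$ gates and the $CX_{ij_\ell}$ gates that would block the commutations used inside RP3.
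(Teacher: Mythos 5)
Your proposal is correct and follows essentially the same route as the paper's proof: you verify that hypotheses (i)--(iv) reproduce exactly the left-hand configuration of Figure \ref{fig_RP3} (the $CX_{kf(k)}$ in slice $\mathcal{C}_{n'}$ with $n' \leq n$, the gates $CX_{ij_1},\dots,CX_{ij_n}$ in $c_{n,i}$, and the absence of any $CZ_{ik}$ or $CX_{ik}$), so the circuit identity applies. The only cosmetic difference is that you derive $k \notin Odd(s(i))$ (and the evenness needed for the created $CX_{if(k)}$ gates to cancel) from Lemma \ref{lem_evenconnections} after explicitly checking $k \neq i$ and that $k$ is non-output, whereas the paper reads $k \notin s(i)$ and $k \notin Odd(s(i))$ directly off the layering condition $L_s(k) \geq L_s(i)$.
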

\begin{proof}
Since $L_{s}(k) \geq L_{s}(i)$, then $n' \leq n$. In slice $\mathcal{C}_{n'}$ there is a $CX_{kf(k)}$, with $f(k) \notin s(i)$. Also, $j_1, ..., j_n \in s(i)$ implies that $i \prec_s \{j_1, ..., j_n\}$ and hence the gates $J_{j_1}, ..., J_{j_n}$ are also in some future slices (compared to $n$). Moreover, it also implies the existence of the operators $X_{j_1}^{s_i}, ..., X_{j_n}^{s_i}$ in the measurement pattern, which are translated to the extended circuit as $CX_{ij_1}, ..., CX_{ij_n}$ in slice $c_{n,i}$ (containing all the $CX$ and $CZ$ with control on qubit $i$). On the other hand, $L_{s}(k) \geq L_{s}(i)$ implies both $k \notin Odd(s(i))$ and $k \notin s(i)$ and therefore there is no $CZ_{ik}$ or $CX_{ik}$ in the circuit. Thus, if every gate $E_{kj_1}, ..., E_{kj_n}$ and $E_{kf(k)}$ can be trivially pushed to slice $\mathcal{J}_{n'}$ and $E_{kf(k)}$ is the first gate acting on wire $f(k)$, we have exactly the scenario depicted in Figure \ref{fig_RP3}-a. Therefore, the circuit identity in Figure \ref{fig_RP3} can be applied.
\end{proof}


In what follows we analyse some properties regarding the interplay between the RPs that will be crucial for the compactification algorithm for SSF extended circuits. 

\begin{lemma} \label{lem_czdeleted}
Let $C$ be an extended circuit obtained from a signal shifted measurement pattern. Then, if a rewrite procedure removes a gate $E_{jk}$ from the circuit, it will not be required for the application of any other rewrite procedure.
\end{lemma}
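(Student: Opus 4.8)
The plan is to classify, for each rewrite procedure RP1, RP2, RP3, exactly which $E$-gates it \emph{consumes} (i.e. removes from the circuit) versus which $E$-gates it merely \emph{reads} as a precondition, and then to show that a consumed gate can never reappear as a required gate for a later RP. First I would record the bookkeeping: RP1 (Figure \ref{fig_rp1}) does \emph{not} delete any $E$ gate — it only moves the gates $E_{kj_1},\dots,E_{kj_m}$ past the $CX_{ij_\ell}$'s, trading them for $CZ_{ik}$ gates which, by Lemma \ref{lem_rp1}, are then all removed. So RP1 is irrelevant to the statement. By contrast RP2 (Figure \ref{fig_RP2}) consumes the gates $E_{kj_1},\dots,E_{kj_{n-1}}$ (where $j_n=f(k)$) and \emph{keeps} $E_{kf(k)}$; RP3 (Figure \ref{fig_RP3}) consumes $E_{kj_1},\dots,E_{kj_n}$ and \emph{keeps} $E_{kf(k)}$. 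In both cases every consumed gate has the form $E_{kj}$ with $j \in s(i)$ and $j \neq f(k)$, where $k$ is the neighbour wire of the RP being applied to target $i$.

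The key structural observation I would isolate is a \emph{monotonicity of layers}: whenever an $E_{kj}$ gate is consumed by RP2 or RP3 applied to target $i$, we have $L_s(k) \ge L_s(i)$ (this is hypothesis (i) of Propositions \ref{prop_rp2} and \ref{prop_rp3}), while any $E$-gate \emph{required} by an RP applied to a later target $i'$ is a gate $E_{k' j'}$ with $k'$ the neighbour wire and $j' \in s(i')$, hence $L_s(k') \ge L_s(i') $ or $L_s(k') < L_s(i')$ depending on the case — the point being that the gate $E_{kj}$, once gone, could only be needed again if some future RP has that very pair $(k,j)$ or $(j,k)$ as (neighbour, correcting) or (target, correcting) wires. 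So the crux is: after RP2/RP3 removes $E_{kj}$ with $j\in s(i)$, can a subsequent RP demand a gate on the same pair of wires $\{k,j\}$? I would argue no, using Proposition \ref{prop_ssf} together with Corollary \ref{cor_zcorr}: the gates of the form $E_{kj}$ present in the extended circuit correspond exactly to edges of $G$ incident to wires whose $J$-gates have not yet been processed, and once the RP on target $i$ fires, the wires $j_1,\dots$ (all in $s(i)$, hence with $L_s(j_\ell) < L_s(i) \le L_s(k)$) get their $J$-gates processed first in the algorithm's order; by the time any RP could reference wire $k$ again as a correcting or target wire, the corresponding edge gate has been permanently converted into $CX$/$CZ$ correction gates (which then cancel, by Lemmas \ref{lem_rp1} and the cancellation arguments inside the RP descriptions) and is never restored, because none of RP1–RP3 ever \emph{creates} a new gate of the form $E_{ab}$ — they only create $CX$ and $CZ$ correction gates (Step 5 notation), never entangling $E$ gates in the sense of Step 3.

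The last point is the real lever and I would state it as a preliminary sublemma: \emph{no RP ever creates a gate of type $E_{ab}$} (as opposed to $CX_{ab}$ or $CZ_{ab}$). This is immediate from inspecting Figures \ref{fig_rp1}, \ref{fig_RP2}, \ref{fig_RP3} and the rewrite rules of Figure \ref{fig_rules} (d) and (e): every newly produced two-qubit gate is a controlled-$X$ or controlled-$Z$ correction gate landing in a $\mathcal{C}$-slice. Given this, once an $E_{jk}$ is deleted it is gone for good, so it suffices to check that the \emph{remaining} RPs in the algorithm, at the moment they are applied, only ever require $E$-gates that are still present — equivalently, that a deleted $E_{jk}$ is never among the $\{E_{kj_1},\dots\}$ or $E_{kf(k)}$ lists of a later RP. This reduces to the wire-ordering argument above. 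The main obstacle I anticipate is the careful verification that the algorithm's traversal order (by increasing $L_s$, i.e. by $\prec_s$) guarantees that when wire $k$ is later picked as a target, all the $E_{kj}$ gates that RP2/RP3 would have needed on $k$ were already either consumed-and-replaced or were never required, i.e. that there is no circular dependency where target $i$'s RP eats an edge that target $k$'s RP later wants; handling this cleanly will likely require invoking the acyclicity of the $Z$-dependency graph (Definition \ref{def_zpath}) and the fact, from Proposition \ref{prop_ssf}, that $L_s$ strictly decreases along $s(\cdot)$, so the "consumer" always has strictly larger layer than the wires whose $J$-gates it frees up. Everything else is routine gate-pushing bookkeeping once the sublemma and the layer-monotonicity are in place.
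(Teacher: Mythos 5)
Your preliminary bookkeeping agrees with the paper: only RP2 and RP3 delete $E$ gates, the deleted gates are exactly the non-flow edges $E_{kj_\ell}$ with $j_\ell \in s(i)$ and $k$ the neighbour wire, and no rewrite procedure ever creates a new $E$ gate, so a deleted gate is gone permanently. But the substance of the lemma is exactly the step you postpone as ``the main obstacle I anticipate'', and without it the proposal is an outline of a strategy rather than a proof. The paper discharges that obstacle by three concrete exclusions, each a short partial-order contradiction, none of which appears in your text. First, a deleted $E_{jk}$ is never a flow edge: you note $j \neq f(k)$ by construction, but you never address the case $k = f(j)$, which is the dangerous one, because $E_{jf(j)}$ is needed at the very end to form the $J$-block on the pair $(j, f(j))$ (condition 2 of Definition \ref{def_jblock}). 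It is ruled out because $j \in s(i)$ gives $i \prec_s j$, $f(j) \in s(j)$ gives $j \prec_s k$, while RP2/RP3 are only applied when $k \preccurlyeq_s i$, yielding the contradiction $i \prec_s j \prec_s k \preccurlyeq_s i$.

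Second, your argument that a later RP1 cannot need the deleted gate (``by the time any RP could reference wire $k$ again \dots the edge gate has been permanently converted and is never restored'') is circular: the question is whether the gate is \emph{required} later, not whether it is still present. The paper instead shows the deleted edge cannot even be among the edges a later RP1 reads, by proving the neighbour wire of the earlier RP2/RP3 can coincide neither with the neighbour wire nor with any correcting wire of the later RP1, via chains such as $k^{(2)} \preccurlyeq_s i^{(2)} \prec_s i^{(1)} \prec_s k^{(1)}$. Third, a deleted $E_{jk}$ could in principle be wanted by a \emph{later} RP2/RP3 paired with the other flow edge incident to it ($E_{jf(j)}$ rather than $E_{kf(k)}$), which would create a different $CX$; the paper shows the two usages impose incompatible $\prec_s$ relations on $j$, $k$ and the respective targets, so the $CX$ already produced is the only one any later procedure would ever want from that edge (this is what Corollary \ref{cor_czcx} records, and what makes the relaxed conditions in Algorithm \ref{alg_chooserp} sound). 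Your layer-monotonicity observation and the ``no RP creates $E$ gates'' sublemma are the right raw materials, but the case analysis that turns them into the lemma is missing.
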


\begin{proof}
Note that only RP2 and RP3 can delete a $E_{jk}$ gate from the circuit.  The proof is divided into three parts: (i) we show that the $E_{jk}$ is never a flow edge and, therefore, it will not be required for the application of a $J$-gate identity; (ii) it will not be needed for the application of RP1 and (iii) it will not be needed to create a $CX$ (by using RP2 or RP3) in the next layers. Suppose $E_{jk}$ is a flow edge such that $k=f(j)$, with $j$ being any of qubits $(j_1, ..., j_{n-1})$ in RP2 or $(j_1, ..., j_n)$ in RP3. The other way around, that is, $j = f(k)$, is not possible since $f(k)$ is separately identified in both RP2 and RP3. Since $f(j) \in s(j)$ for all $j \in O^C$, we have $j \prec_s k$. By construction, RP2 and RP3 are applied if and only if $k\preccurlyeq_si$. Moreover, since $j \in s(i)$ in both RP2 and RP3, it holds that $i \prec_s j$. Putting everything together we have $i\prec_s j \prec_s k \preccurlyeq_s i$, which is a contradiction. Therefore, $E_{jk}$ is never a flow edge.

In this second part of the proof, we show that if $E_{jk}$ is deleted by RP2 or RP3, it will not be required for the application of RP1 in any future step. Using the notation defined earlier, the fact that RP1 is applied after RP2 or RP3 can be written as  $i^{(2)} \prec_s i^{(1)}$ or $i^{(3)} \prec_s i^{(1)}$, where $i$ is the target wire of the corresponding RP. This relation will be used in the rest of the proof. We show that $k^{(2)}$ and $k^{(3)}$ cannot be any of wires $j^{(1)}_1, ...,  j^{(1)}_n$ (denoted simply by $j^{(1)}$ in the rest of this proof) or $k^{(1)}$ and hence the gates $E_{j^{(2)}k^{(2)}}$ or $E_{j^{(3)}k^{(3)}}$ cannot be the same as $E_{j^{(1)}k^{(1)}}$, which is what we want to prove. Let us show for $k^{(2)}$. Since $k^{(2)} \preccurlyeq_s i^{(2)}$ and $i^{(1)} \prec_s k^{(1)}$ (true for both Propositions \ref{prop_rp1} and \ref{prop_rp1b}, which state the scenarios where RP1 is applied) we have $k^{(2)} \preccurlyeq_s i^{(2)} \prec_s i^{(1)} \prec_s k^{(1)}$ and therefore $k^{(2)}$ cannot be the same wire as $k^{(1)}$. Now assume wire $j^{(1)}$ is $k^{(2)}$. In RP2 we have $k^{(2)} \preccurlyeq_s i^{(2)}$ (by construction) and in RP1 it holds that $i^{(1)} \prec_s j^{(1)}$, since the target of a correcting gate is always placed before the $J$-gate acting on that same wire (Definition \ref{def_ext}). Putting everything together gives $k^{(2)}\preccurlyeq_s i^{(2)} \prec_s i^{(1)} \prec_s j^{(1)}$. But since by assumption $j^{(1)}$ is $k^{(2)}$, it gives $k^{(2)} \prec_s k^{(2)}$ which is a contradiction. The proof is the same for $k^{(3)}$. Therefore, if $E_{jk}$ is ``consumed'' by RP2 or RP3, it will not be required for the application of RP1 in any future step.

Finally for the third part note that an $E$ gate will be removed to create a $CX$ gate only through identity in Figure \ref{fig_rules}-b. The pair of such $E$ gates is always associated with a flow and a non-flow edge when used in RP2 or RP3. Therefore, there are two possible pairs of $E$ gates using the non-flow edge $E_{jk}$: $E_{jk}E_{kf(k)}$ (which would create $CX_{jf(k)}$) and $E_{jk}E_{jf(j)}$ (which would create $CX_{kf(j)}$). We want to show that whenever we ``consume'' $E_{jk}$, it will not be required to create a different $CX$ than the one already created. Suppose we use $E_{xy}$ to create $CX_{xf(y)}$. This is the case when $y\preccurlyeq_s i$ and $i\prec_sx$, if $x$ is a correcting wire and $y$ is a neighbour wire in either RP2 or RP3. Similarly, if $x$ is the neighbour wire and $y$ is the correcting one then $x\preccurlyeq_s i$ and $y\succ_s i$ (easily obtained by relabelling the wires in RP2 and RP3). Since the first pair of conditions is inconsistent with the last pair of conditions ($j\preccurlyeq_si$ and $k\succ_si$), we conclude that once a given $E$ is used to create a $CX$, that $E$ would not be required to create a different $CX$ in any further step of the algorithm since the partial order $\prec_s$ is not changed by the application of any RP.
\end{proof}

\begin{corollary} \label{cor_czcx}
Let $CX_{ij}$ be a gate in an extended circuit created by the application of a rewrite procedure. The $E$ gate ``consumed'' to create this $CX_{ij}$ can be univocally determined.
\end{corollary}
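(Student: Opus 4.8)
The plan is to determine the consumed $E$ gate directly from the control and the target of the created $CX$ gate, leveraging the rigid form of the rewrite procedures already exploited in Lemma \ref{lem_czdeleted}. First I would recall, as in the third part of the proof of Lemma \ref{lem_czdeleted}, that a rewrite procedure creates a $CX$ gate while removing an $E$ gate only through the identity of Figure \ref{fig_rules}-b, which acts on a pair of $E$ gates sharing a neighbour vertex $k$ in the $\ket{+}$ state, one edge of the pair being the flow edge $E_{kf(k)}$ and the other the non-flow edge that is consumed. Inspecting RP2 and RP3 (Figures \ref{fig_RP2} and \ref{fig_RP3}) one sees that in each such step the pair is $E_{ak}E_{kf(k)}$ with $a$ a correcting wire in $s(i)$, the consumed edge is the non-flow edge $E_{ak}$, and the gate created is $CX_{a,f(k)}$. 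By contrast RP1 creates no new $CX$ gate, and the transient $CX$ gates produced inside RP2 and RP3 by the commutation identity of Figure \ref{fig_rules}-e cancel at once and consume no $E$ gate; hence the statement concerns precisely the $CX$ gates produced by Figure \ref{fig_rules}-b.

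With this in hand the argument is immediate. Suppose $CX_{ij}$ was created by a rewrite procedure. Then, reading off the previous paragraph, its control is the correcting wire $i=a$ and its target is $j=f(k)$ for the neighbour wire $k$ of that step; since $f$ is injective (as shown earlier in the paper) and $j$ is by construction a flow image, the vertex $k=f^{-1}(j)$ is uniquely recovered from $j$. Therefore the consumed non-flow edge is $E_{i,f^{-1}(j)}$, an expression that depends only on $i$, $j$ and the fixed flow $f$, not on which rewrite procedure or in which order $CX_{ij}$ was produced. This is exactly the claimed univocal determination.

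I do not anticipate a real obstacle; the corollary is bookkeeping on top of Lemma \ref{lem_czdeleted}. The single point worth spelling out is that a non-flow edge $E_{jk}$ admits, a priori, two pairings that could produce a $CX$ from it -- $E_{jk}E_{kf(k)}$ giving $CX_{j,f(k)}$ and $E_{jk}E_{jf(j)}$ giving $CX_{k,f(j)}$ -- and in both of these the created $CX$, when its target is fed through $f^{-1}$ and combined with its control, returns the same edge $E_{jk}$; so no ambiguity arises even before one invokes that a consumed $E$ gate is never needed again (Lemma \ref{lem_czdeleted}). I would state this explicitly to close the corollary cleanly.
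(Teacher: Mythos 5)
Your proposal is correct, and it reaches the conclusion by a slightly different mechanism than the paper. Both arguments start from the same facts (established in the third part of the proof of Lemma \ref{lem_czdeleted}): the only way a rewrite procedure creates a $CX$ while consuming an $E$ gate is via the identity of Figure \ref{fig_rules}-b inside RP2 or RP3, the $CX$ gates produced by the commutation identity of Figure \ref{fig_rules}-e cancel out, and a non-flow edge $E_{jk}$ admits exactly the two pairings $E_{jk}E_{kf(k)} \to CX_{jf(k)}$ and $E_{jk}E_{jf(j)} \to CX_{kf(j)}$. The paper then argues \emph{forwards}: which of the two pairings occurs is fixed by the $\prec_s$ relations among $j$, $k$, $f(j)$, and since no RP alters $\prec_s$, the map from consumed $E$ to created $CX$ is deterministic, giving the one-to-one correspondence. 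You instead argue \emph{backwards}: from the created gate $CX_{ij}$ you read off the consumed edge explicitly as $E_{i,f^{-1}(j)}$, using that the target is always a flow image and that $f$ is injective (Lemma 1 of the paper), and you observe that both pairings invert to the same edge, so no case distinction or appeal to the invariance of $\prec_s$ is needed for this corollary. Your route is somewhat more economical and constructive (it exhibits the inverse map), at the cost of implicitly leaning on the parity facts built into the descriptions of RP2 and RP3 for the cancellation of the transient Figure \ref{fig_rules}-e gates; the paper's route reuses the $\prec_s$-invariance machinery it already needed for Lemma \ref{lem_czdeleted}. Either argument suffices.
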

\begin{proof} It follows from the third part of the proof of Lemma \ref{lem_czdeleted} that if a gate of the form $E_{jk}$ is deleted in RP2 or RP3 we will have one of the following cases: the gates $E_{jk}E_{kf(k)}$ are used to create $CX_{jf(k)}$ or the pair $E_{jk}E_{jf(j)}$ are used to create $CX_{kf(j)}$. The $CX$ that is created by ``consuming'' the $E_{jk}$ depends exclusively on the relation between vertices $j$, $k$ and $f(j)$ in respect to the partial order $\prec_s$. Since the partial order $\prec_s$ is not changed by the application of any RP, there is a one-to-one correspondence between the deleted $E$ gate and the newly created $CX$ gate.
\end{proof}


\begin{algorithm}
	\caption{This algorithm decides which Rewrite Procedure must be applied by analysing how a given qubit $k$ is connected with the correcting set of another qubit $i$}
	\label{alg_chooserp}

	\KwIn{Wire labels $i$ and $k$.}
	
	\KwOut{A description of which Rewrite Procedure must be applied.}
	
	\Begin
	{
	\textbf{Read} $i$ and $k$.\\
\If{all conditions in Proposition \ref{prop_rp1}  \textbf{or} Proposition \ref{prop_rp1b} are satisfied}{
Apply RP1;\\
}

\If{all conditions in Proposition \ref{prop_rp2} are satisfied \textbf{or} conditions (iii) and/or (iv) in Proposition \ref{prop_rp2} are not satisfied but there exist a sequence of $(CX_{j_1j_n}, ..., CX_{j_{n-1}j_n})$ gates that can be pushed trivially to $c_{n,i}$ (slice containing all the $CX$ and $CZ$ with control on qubit $l$)}{
Apply RP2;\\
}
\If{all conditions in Proposition \ref{prop_rp3} are satisfied \textbf{or} conditions (iii) and/or (iv) in Proposition \ref{prop_rp3} are not satisfied but there exist a sequence $(CX_{j_1f(k)}, ..., CX_{j_nf(k)})$ gates that can be pushed trivially to $c_{n,i}$ (slice containing all the $CX$ and $CZ$ with control on qubit $i$)}{
Apply RP3;\\
}

{\bf otherwise} Abort
}
\end{algorithm}

The process of choosing the correct RP to be applied is summarised in Algorithm \ref{alg_chooserp}, which will be used as a subroutine in the compactification algorithm for SSF extended circuits (Algorithm \ref{alg_full}). The modified conditions in Lines 5 and 7 are based on the fact that if a $CZ$ required for the application of RP2 or RP3 is not in the circuit, the $CX$ that it would create will be (see Lemma \ref{lem_czdeleted}), allowing the application of future RPs.
In all RPs we assumed that some $E$ gates could be moved to the beginning of a given slice $\mathcal{J}$ in a trivial way. It will not be true in general for SSF extended circuits, since there might exist several other gates in the extended circuit such that the aforementioned $E$ gates could not be moved trivially to $\mathcal{J}$. In other words, the conditions in Propositions \ref{prop_rp1} to \ref{prop_rp3} would not be satisfied. The algorithm in the next section address exactly this problem: it provides an ordering where Algorithm \ref{alg_chooserp} never aborts. This ordering is the \textit{global} structure coming into play, since it is related to SSF and flow of the graph.

\subsection{Obtaining optimised compact circuits} \label{sec_alg}

In this section we explain how the compactification algorithm for SSF extended circuits (Algorithm \ref{alg_full}) works and show some examples. The goal of Algorithm \ref{alg_full} is to create $|O^C|$ many $J$-blocks (see Definition \ref{def_jblock}) in a SSF extended circuit and then apply the $J$-gate identity for all $J$-block, removing $|O^C|$ many wires from the extended circuit. Therefore, the output of Algorithm \ref{alg_full} is the compact form of the inputed SSF extended circuit. One of the main differences between SSF and other gflows is the notions of stepwise influencing path, introduced in Section \ref{sec_ip}. The next lemma use some properties of stepwise influencing path to relate the SSF correcting set to the partial order of flow. This relation will play an important role in Algorithm \ref{alg_full} since it gives an appropriate ordering for the application of the RPs. 

\begin{lemma} \label{lem_flowssf}
Let $(G,I,O)$ be an open graph with flow $(f,\prec_f)$ and SSF $(s,\prec_s)$. Then, for all $v \in N(j)\setminus\{i\}$, where $j \in s(i)$, it holds that $v \succ_f i$.
\end{lemma}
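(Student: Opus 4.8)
The plan is to reduce the statement to a short chain of flow relations, using the explicit description of $s(i)$ from Proposition \ref{prop_ssf} together with the fact that directed edges of the $Z$-correction graph respect the flow order.

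First I would isolate the trivial case $j = f(i)$. Here $v \in N(f(i)) \setminus \lbrace i \rbrace$, so flow property (F2) applied to $i$ gives at once $i \prec_f v$, i.e. $v \succ_f i$. For the remaining case $j \neq f(i)$, note that by the construction of $s$ in Proposition \ref{prop_ssf} every element of $s(i)$ is of the form $f(k)$, so $w := f^{-1}(j)$ is well defined; moreover $w \neq i$ because $f$ is injective and $f(w) = j \neq f(i)$. Since $j \in s(i)$, Proposition \ref{prop_ssf} gives $parity(\zeta_i(w)) = 1$, hence $\zeta_i(w) \ge 1$ and there is at least one $Z$-path from $i$ to $w$.

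Next I would recall, exactly as in the proof of Lemma \ref{lem_zpath_conn}, that each directed edge from $a$ to $b$ in $G_Z$ comes (Corollary \ref{cor_zcorr}) from $b \in N(f(a)) \setminus \lbrace a \rbrace$, which by (F2) forces $a \prec_f b$; composing these relations along the (non-trivial, since $i \neq w$) $Z$-path yields $i \prec_f w$. Finally, since $v \in N(j) \setminus \lbrace i \rbrace = N(f(w)) \setminus \lbrace i \rbrace$, property (F2) applied to $w$ gives either $v = w$ or $w \prec_f v$; in either case transitivity with $i \prec_f w$ gives $i \prec_f v$, that is $v \succ_f i$, which is what we want.

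An alternative route, closer in spirit to Section \ref{sec_ip}, is to invoke Lemma \ref{lem_path} to obtain a stepwise influencing path $\wp_i(j)$ and observe (as in Lemma \ref{lem_past_conn}) that its odd-placed vertices $i, \dots, f^{-1}(j)$ form a $Z$-path from $i$ to $f^{-1}(j)$, then argue as above; I would choose whichever presentation meshes best with the surrounding text. I do not anticipate a real obstacle: the only delicate points are remembering to treat $j = f(i)$ separately (there $w = i$ and the ``$Z$-path $\Rightarrow$ order'' step is vacuous, so one uses (F2) directly), and noting that $f^{-1}(j)$ exists for every $j \in s(i)$, which is built into Proposition \ref{prop_ssf}.
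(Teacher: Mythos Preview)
Your proof is correct. Your primary approach---observing directly that $parity(\zeta_i(f^{-1}(j)))=1$ forces the existence of a $Z$-path from $i$ to $f^{-1}(j)$, and then using Corollary~\ref{cor_zcorr} together with (F2) edge-by-edge to obtain $i \prec_f f^{-1}(j)$---is more elementary than the paper's. The paper instead invokes Lemma~\ref{lem_path} to produce a stepwise influencing path $\wp_i(j)$ and then uses Corollary~\ref{cor_neighbour} repeatedly to walk back from $f^{-1}(j)$ to $i$, establishing the same chain of flow relations; this is precisely your ``alternative route''. Your argument avoids the influencing-path machinery altogether and, as a bonus, explicitly handles the case $v = f^{-1}(j)$ (which the paper's proof silently absorbs into the (F2) conclusion). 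Either presentation is fine; yours is shorter and self-contained.
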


\begin{proof}First suppose $j=f(i)$. Then, by flow definition (Definition \ref{def_flow}), $v \succ_f i$. Now suppose $j \neq f(i)$ then from Definition \ref{def_flow}, $v \succ_f f^{-1}(j)$. By Lemma \ref{lem_path}, there exists a step-wise influencing path passing through $f(i)$ and $f^{-1}(j)$, namely $\wp_i(j)$. It follows from Corollary \ref{cor_neighbour}, that there exists $f^{-1}(k)$ such that $k \in s(i)$ and $f^{-1}(j) \in N(k)$ and, consequently, $f^{-1}(j) \succ_f f^{-1}(k)$. Lemma 2.14 allows us to repeat this process to find the previous vertices in the path $\wp_i(j)$ until we reach vertex $i$. Hence we conclude that $f^{-1}(j) \succ_i i$ and therefore $v \succ_f f^{-1}(j) \succ_f i$.
\end{proof}

Before running Algorithm \ref{alg_full}, the slices $c_{n,i}$ must be arranged in the extended circuit from right to left respecting the order imposed by $\prec_f$ (see for instance Figure \ref{fig_examplessf}). This can always be done since the set of gates in a given slice $c_{n,i}$ commutes with the gates in an other slice $c_{n,j}$, for any valid $j$. The algorithm starts with a $\mathtt{for}$ loop that runs for one SSF layer at a time, starting with the input layer and moving onwards until the last layer of non-output qubit is considered. At each iteration $n$ of the $\mathtt{for}$ loop, a rewrite procedure is applied to each pair of qubits $(i,k)$ s.t. $J_i$ is in $\mathcal{J}_n$ and $k \in N(s(i))$.
The two $\mathtt{foreach}$ loops and the two $\mathtt{while}$ loops define the order in which Algorithm \ref{alg_chooserp} will be called. This ordering, which is the inverse of the order given by $\prec_f$, assures that Algorithm \ref{alg_chooserp} will never abort when it is called by Algorithm \ref{alg_full}.

\begin{algorithm} [ht]
	\caption{Transforms a SSF extended circuit into a compact circuit (Definition \ref{def_compactform}).}
	\label{alg_full}

	\KwIn{SSF extended circuit.}
	
	\KwOut{Compact form of the SSF extended circuit.}
	
	\Begin
	{
	$D \leftarrow \max_{i \in G} \{L_{s}(i)$\}\\
	
	\For{ $n = 1$ to $D$}{
\textbf{Let} $W_n = \{a_{n,1}, ..., a_{n,m_n}\}$ be the set of the $m_n$ wires with $J$-gate in slice $\mathcal{J}_n$ \\
\textbf{Let} $S_{max}^{n} \leftarrow \max_{i \in W_n}\{L_f(i)\}$ \\
\textbf{Let} $S_{min}^{n} \leftarrow \min_{i \in W_n}\{L_f(i)\}$ \\
\While{ $S_{max}^{n} \geq S_{min}^{n}$}{ 
\ForEach{ $a_{n,i} \in W_n$ such that $L_f(a_{n,i}) = S_{max}^{n}$}{
\textbf{Let} $NSTEP \leftarrow \max_{k \in \{N(s(a_{n,i}))\setminus \{a_{n,i}\}\}}\{L_f(k)\}$ \\
\While{ $NSTEP > S_{max}^{n}$}{ 
\CommentSty{// That is, for all $k \in \{N(s(a_{n,i}))\setminus \{a_{n,i}\}\}$ (Due to Lemma \ref{lem_flowssf}) }\\
\ForEach{ $k \in \{N(s(a_{n,i}))\setminus \{a_{n,i}\}\}$ such that $L_f(k) = NSTEP$}{
$i \leftarrow a_{n,i}$;\\
\textbf{Run} Algorithm \ref{alg_chooserp}\\
}
$NSTEP \leftarrow (NSTEP-1)$\\
}
}
$S_{max} \leftarrow (S_{max}-1)$\\
}
}	
Remove from the circuit all gates $CX_{ij}$ placed in any slice $\mathcal{C}_n$ such that $j \neq f(i)$.\\
Apply the $J$-gate identity for all pairs of wires $\{i,f(i)\}$ such that $i \in O^C$.
}
\end{algorithm}

\begin{lemma} \label{lem_algneveraborts}
Algorithm \ref{alg_chooserp} never aborts when called by Algorithm \ref{alg_full}.
\end{lemma}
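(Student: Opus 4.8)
The plan is to show that whenever Algorithm~\ref{alg_full} invokes Algorithm~\ref{alg_chooserp} on a pair $(i,k)$, at least one of the three RPs is applicable, so the ``otherwise Abort'' branch is never reached. The argument proceeds by analysing the state of the extended circuit at the moment of the call, which is completely determined by the ordering dictated by the nested loops: all slices $c_{n',i'}$ with $i' \prec_f i$ (equivalently, with $L_f(i') > L_f(i)$) have already been processed, and within the current slice all neighbours $k'$ of $s(i)$ with $L_f(k') > L_f(k)$ have already been handled. First I would fix such a pair $(i,k)$ and split into the three cases that classify $k$ relative to $s(i)$ exactly as in the discussion preceding Propositions~\ref{prop_rp1}--\ref{prop_rp3}: (i) $L_s(k) < L_s(i)$ (equivalently, by SSF structure, $k$ is reachable ``earlier''), (ii) $L_s(k) \ge L_s(i)$ with $f(k)\in s(i)$, and (iii) $L_s(k)\ge L_s(i)$ with $f(k)\notin s(i)$. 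By Lemma~\ref{lem_flowssf}, every $k \in N(s(i))\setminus\{i\}$ satisfies $k \succ_f i$, which is exactly the hypothesis that makes the ordering in Algorithm~\ref{alg_full} (processing $k$'s in decreasing $L_f$ order, then decreasing $L_f(i)$) a legitimate traversal: by the time we touch $(i,k)$, every two-qubit gate that the earlier RPs could have injected into $c_{n,i}$ is already accounted for.

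The heart of the proof is to verify, in each case, that the conditions of the corresponding proposition (or the relaxed condition in Lines~5 and~7 of Algorithm~\ref{alg_chooserp}) hold. In Case~(i), I would check that the gates $E_{kj}$ with $j\in s(i)$ can be pushed to slice $\mathcal{J}_n$: the only obstruction would be a gate on wire $k$ or on wire $j$ sitting between $E_{kj}$ and $\mathcal{J}_n$, and by the inductive ordering the only such possible gate is a $CX_{ik}$ (when $k\in s(i)$) placed before the $E_{kj}$'s --- which is precisely the scenario covered by Proposition~\ref{prop_rp1b}. Any $CX$ or $CZ$ with control earlier than $i$ has been moved out of the way because those slices were finalised in earlier iterations, and by Lemma~\ref{lem_rp1} applications of RP1 in earlier steps removed all $CZ_{ik'}$ rather than leaving residue. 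In Cases~(ii) and~(iii), I would use that $L_s(k)\ge L_s(i)$ forces $k\notin s(i)$ and $k\notin Odd(s(i))$ (this is exactly the argument already used inside the proofs of Propositions~\ref{prop_rp2} and~\ref{prop_rp3}), so there is no $CX_{ik}$ or $CZ_{ik}$ in the circuit to block the push of $E_{kj_1},\dots,E_{kj_n}$ and $E_{kf(k)}$ to the slice $\mathcal{J}_{n'}$ containing $J_k$; condition~(iv), that $E_{kf(k)}$ is the first gate on wire $f(k)$, needs the separate observation that any earlier two-qubit gate on $f(k)$ would have had control earlier than $i$ in $\prec_f$, contradicting $f(k)\succ_f k \succ_f i$ together with Lemma~\ref{lem_flowssf}. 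When conditions~(iii)/(iv) literally fail because an $E$ gate was already consumed by a previously applied RP2 or RP3, Lemma~\ref{lem_czdeleted} and Corollary~\ref{cor_czcx} guarantee that the $CX$ that replaced it is still present and can be pushed trivially to $c_{n,i}$, so the relaxed conditions in Algorithm~\ref{alg_chooserp} fire and the relevant RP still applies.

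I expect the main obstacle to be the bookkeeping of \emph{which} gates can block the trivial commutation of the $E$ gates into the target slice --- that is, proving rigorously that the traversal order of Algorithm~\ref{alg_full} leaves the relevant slices ``clean''. This requires an induction on the pairs $(i,k)$ in the processing order, where the induction hypothesis states that after handling all earlier pairs, every two-qubit gate currently in $c_{n,i}$ is either an original $CX_{ij}$/$CZ_{ik'}$ from Definition~\ref{def_ext} or one created by an earlier RP with control on $i$, and in particular there is no ``stray'' $E$ or $CX$ with a control/target earlier than $i$ obstructing the picture. Establishing this invariant, and checking it is preserved by each of RP1, RP2, RP3 (using Lemma~\ref{lem_czdeleted} for the non-interference of consumed $E$ gates and Lemma~\ref{lem_rp1} for the disappearance of $CZ_{ik}$), is the technically delicate part; once it is in hand, each of the three cases reduces to the already-proven Propositions~\ref{prop_rp1}--\ref{prop_rp3} (or their relaxations), and the lemma follows.
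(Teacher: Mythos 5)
Your proposal follows essentially the same route as the paper's proof: a double induction over the processing order imposed by Algorithm~\ref{alg_full} (layers of the \texttt{for} loop, then decreasing $L_f$ within each layer), with the case split of Propositions~\ref{prop_rp1}--\ref{prop_rp3} resolved via Lemma~\ref{lem_flowssf}, and the consumed-$E$-gate situations handled by Lemma~\ref{lem_czdeleted} together with the relaxed conditions in Algorithm~\ref{alg_chooserp}. The invariant you flag as the delicate part --- that traversal in the inverse $\prec_f$ order keeps the relevant slices clean so the required $E$ and $CX$ gates commute trivially into place (in particular making $E_{kf(k)}$ the first gate on wire $f(k)$) --- is exactly what the paper's layer-by-layer induction verifies, so your plan is sound and matches the published argument.
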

\begin{proof}
Let us start by showing that the algorithm works for the first iteration of the $\mathtt{for}$ loop (\textit{i.e.}, $n=1$) and then we explain why it will work for all other layers. In what follows we use the notation $a_{n,i}$ to represent a target wire $i$ in layer $n$; note that there might exist more than one such wire in the same layer. Let $S^{n}_{max}$ ($S^{n}_{min}$) be the maximum (minimum) value of $L_f(a_{n,i})$ for $a_{n,i} \in W_n$ (as defined in Lines 5 and 6 in the algorithm), where $W_n$ is the set of the $m_n$ wires with $J$-gate in slice $\mathcal{J}_n$. 
According to Lemma \ref{lem_flowssf}, for any $a_{1,i}$ such that $L_f(a_{1,i})= S^{1}_{max}$, and any qubit $k$ connected to a qubit in $s(a_{1,i})$ it holds that $k \notin W_1$. In this case, condition (i) in Proposition \ref{prop_rp1} is satisfied. Since we are in the first SSF layer, all $E$ gates required for the application of any of those RPs can be trivially pushed forward to slice $\mathcal{J}_1$, hence all conditions in Proposition \ref{prop_rp1} are initially satisfied. After the application of one or more RP1, it might happen that some $E$ gates get stuck in between two $CX$ gates in slice $c_{1,i}$ (which happens when a neighbour wire $k$ is itself in both $N(s(a_{1,i}))$ and $s(a_{1,i})$ sets). In this case the conditions in Proposition \ref{prop_rp1b} are the ones satisfied. In both cases, only RP1 can be selected by Algorithm \ref{alg_chooserp}. Therefore, Algorithm \ref{alg_chooserp} would not abort for all target qubits $a_{1,i}$ such that $L_{f}(a_{1,i})= S^{1}_{max}$. As a consequence, since only RP1 is applied, all gate $E_{kp}$, such that $p \in s(a_{1,i})$, are moved past the correction slice $c_{1,i}$, for all $a_{1,i}$ such that $L_{f}(a_{1,i})= S^{1}_{max}$. 

Now consider qubits $a_{1,i}$ such that $L_f(a_{1,i})= S^{1}_{max} - 1$, that is, the second iteration of the first $\mathtt{while}$ loop. For $k \in N(s(a_{1,i}))$, Lemma \ref{lem_flowssf} implies that either $k \notin W_1$ or $\{(k \in W_1) \land [L_f(k)= S^{1}_{max}]\}$ is true. For $k \notin W_1$, the procedure to be applied is RP1, similarly and by the same reasons as in the previous iteration. For the other case the applicable rewrite procedures are either RP2 or RP3, since $J_k$ in $\mathcal{J}_1$ implies $L_f(k) = L_f(a_{1,i})$. Note that in the previous iteration (qubits $a_{1,i}$ s.t. $L_{f}(a_{1,i})= S^{1}_{max} $, like the wire $k$ being analysed now), all $E_{f(k)v}$, for any $v \neq k$ were moved past the corresponding correction slice and hence $E_{f(k)k}$ is the first gate acting on the $f(k)$ wire (as required by both Propositions \ref{prop_rp2} and \ref{prop_rp3}). Moreover, when the circuit identity in Figure \ref{fig_rules}-b creates the new $CX$ gates (first step in both RP2 and RP3), it can be pushed forward to $c_{1,i}$ because the only gate in between is $CX_{kf(k)}$, which commutes with the $CX$ gates we want to push forward. Therefore, since all conditions in either Proposition \ref{prop_rp2} or Proposition \ref{prop_rp3} would be satisfied for all qubits $a_{1,i}$ such that $L_f(a_{1,i})= S^{1}_{max} - 1$, the corresponding $RP$ can be successfully applied.

To complete the proof by induction for the first iteration of the $\mathtt{for}$ loop, assume Algorithm \ref{alg_chooserp} has not aborted in the first $q$ iteration of the first $\mathtt{while}$ loop. Then in the $(q+1)^{th}$ iteration, qubits $a_{1,i}$ such that $L_f(a_{1,i})= (S^{1}_{max} - q) \geq S^{1}_{min}$, where $S^{1}_{min} \equiv \min_{a_{1,i} \in W_1}\{L_f(a_{1,i})\}$, are considered. If a given qubit $k$ is neighbour of a qubit in $s(a_{1,i})$, then Lemma \ref{lem_flowssf} implies that either $k \notin W_1$ or $\{(k \in W_1) \land [(S^{1}_{max} - q) < L_f(k)\}$ is true. Here the same analysis as before applies; First, if $k\notin W_1$, Algorithm \ref{alg_chooserp} will apply RP1, moving the corresponding $E$ gates past slice $c_{1,i}$. On the other hand, if $\{(k \in W_1) \land [(S^{1}_{max} - q) < L_f(k)\}$, either the conditions in Proposition \ref{prop_rp2} or Proposition \ref{prop_rp3} will be satisfied,  since all $E$ gates acting on qubits in the sets $s(a_{1,v})$, for all $a_{1,v} \in W_1$ s.t. $L_f(a_{1,v}) > (S^{1}_{max} - q)$, were moved past slice $c_{1,v}$ in previous iterations (with the obvious exception of gates of the form $E_{vf(v)}$). The procedure continues until there is no qubit left in $W_1$ to be considered. Therefore, Algorithm \ref{alg_chooserp} never aborts when called during the first iteration ($n=1$).

Now let us analyse how the $E$ gates placement changed in the circuit after the $n^{th}$ iteration of the $\mathtt{for}$ loop where Algorithm \ref{alg_chooserp} applied a RP each time it was called by Algorithm \ref{alg_full}.  First, note that all $E$ gates acting on qubits in $s(i)$, for $i \in W_n$, were moved to $\mathcal{E}_{n+1}$ (via RP1) or transformed into a $CX$ (via RP2 or RP3) and then moved to the same slice, with the exception of gates of the form $E_{if(i)}$ which remains in $\mathcal{E}_n$ and will later be used for the application of the $J$-gate identity. Moreover, Lemma \ref{lem_czdeleted} guarantees that if an $E$ gate is deleted in a given iteration of the algorithm, it will not be required in any future step of the algorithm.

Note also that those $E$ and $CX$ gates were moved to the  $\mathcal{E}_{n+1}$ slice in a specific order, namely the inverse of the order induced by $\prec_f$. Since this ordering is a property of the associated graph, it does not change during the run of the algorithm. It means that for any given iteration $n$ of the $\mathtt{for}$ loop, the ordering of the two-qubit gates required for the application of a RP will be in agreement with the order Algorithm \ref{alg_chooserp} will be called by Algorithm \ref{alg_full}. Hence the condition (required for the application of any RP) that the required two-qubit gates can be pushed trivially to a given slice is always satisfied. Therefore,  at each iteration of the $\mathtt{for}$ loop, it arranges all two-qubit gates necessary for the next iteration in the same order it will be called. 

To complete the proof by induction, let us assume that in the first $(p-1)$ iterations of the $\mathtt{for}$ loop Algorithm \ref{alg_chooserp} has not aborted, and then we show that it will not abort in the $p^{th}$ iteration. Let us analyse the first iteration of the first $\mathtt{while}$ loop for $n=p$. According to Lemma \ref{lem_flowssf}, for any $a_{p,i}$ such that $L_f(a_{p,i})= S^{p}_{max}$, if a given qubit $k \in N(s(a_{p,i}))$ then either $k \in W_m$ (if there exists $m<p$ s.t. $S^{m}_{max} > S^{p}_{max}$) or the $J_k$ gate belongs to a layer $\mathcal{J}_{m'}$ s.t. $m'> p$. If the latter condition is the case, then by definition only RP1 is applied to $a_{p,i}$. On the other hand, if $k \in W_m$, RP2 or RP3 must be applied. Since by assumption Algorithm \ref{alg_chooserp} has not aborted in any previous iteration, the conditions for the application of the aforementioned RPs are satisfied and Algorithm 2 will not abort in the current iteration.

Now assume Algorithm \ref{alg_chooserp} has not aborted in the first $q$ iteration of the first $\mathtt{while}$ loop for $n=p$. In the $(q+1)^{th}$ iteration of this loop, qubits $a_{p,i}$ such that $L_f(a_{p,i})= (S^{p}_{max} - q) \geq S^{p}_{min}$, where $S^{p}_{min} \equiv \min_{a_{p,i} \in W_p}\{L_f(a_{p,i})\}$, are considered. For a given $k$ neighbour of a qubit in $s(a_{p,i})$, Lemma \ref{lem_flowssf} implies that one of the following statement holds: (i) $k \in W_m \land [S^m_{max}>(S^{p}_{max} - q) ]$ for some $m<p$, (ii) $(k \in W_p) \land [(S^{p}_{max} - q) < L_f(k)]$ or (iii) $k \in W_{m'}$, for $m'>p$. Here the same analysis as before applies; For cases (i) and (ii) RP2 or RP3 is applied and for (iii) RP1 will move the corresponding $E$ gates past slice $c_{p,i}$. Since the same order is respected throughout the algorithm (the inverse of the order defined by flow), the conditions for the application of the aforementioned RPs are satisfied and Algorithm 2 will not abort in the current iteration. This procedure is repeated until there is no more elements in the set $W_p$, which concludes our proof by induction and proves the Lemma.

\end{proof}

\begin{theorem}\label{the_compact}
Algorithm \ref{alg_full} outputs a compact circuit.
\end{theorem}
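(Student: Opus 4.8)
The plan is to show that once the main $\mathtt{for}$-loop of Algorithm \ref{alg_full} has run, followed by its two clean-up steps, the extended circuit contains a $J$-block (Definition \ref{def_jblock}) on the wires $(i,f(i))$ for every $i\in O^C$, so that the $J$-gate identity of Figure \ref{fig_jgate} can be applied $|O^C|$ times; by Definition \ref{def_compactform} this is exactly what it means for Algorithm \ref{alg_full} to output a compact circuit. First I would note that Algorithm \ref{alg_full} halts: the $\mathtt{for}$-loop ranges over the finitely many SSF layers $1,\dots,D$, the inner $\mathtt{while}$- and $\mathtt{foreach}$-loops over finitely many values of $L_f$ and over the finite sets $W_n$ and $N(s(a_{n,i}))$, and by Lemma \ref{lem_algneveraborts} every call to Algorithm \ref{alg_chooserp} returns a rewrite procedure rather than aborting; hence the loop is a well-defined finite sequence of valid circuit rewritings, after which the penultimate line removes some $CX$ gates and the last line applies the $J$-gate identity.

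The core step is to describe the circuit at the end of the main loop. Fix $i\in O^C$ and put $n=L_s(i)$, so $J_i\in\mathcal J_n$. By Lemma \ref{lem_rp1}, each application of RP1 with target $i$ and a neighbour $k$ removes all gates $CZ_{ik}$, while by Propositions \ref{prop_rp2}--\ref{prop_rp3} the procedures RP2 and RP3 are chosen only when $k\notin Odd(s(i))$, i.e.\ when no $CZ_{ik}$ exists; since the loop processes every $k\in N(s(i))\setminus\{i\}$ with target $i$ (and this set contains every output vertex carrying a correction $CZ$ controlled by $i$), and no rewrite procedure ever creates a $CZ$ gate with a control other than its own target, no correction $CZ$ gate remains after the loop. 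Next, the bookkeeping carried out inside the proof of Lemma \ref{lem_algneveraborts} shows that every $E_{kp}$ with $p\in s(i)$ is pushed out of the correction slice $c_{n,i}$ into $\mathcal E_{n+1}$ (by RP1), or transformed by RP2/RP3 into a $CX$ gate which is then moved past $c_{n,i}$, with the single exception of the flow edge $E_{if(i)}$, which stays in $\mathcal E_n$ and, by Lemma \ref{lem_czdeleted}, is never consumed. Therefore wire $f(i)$ carries only $E_{if(i)}$ before $CX_{if(i)}$ (Condition 3 of Definition \ref{def_jblock}), and wire $i$ carries, after $E_{if(i)}$, only $J_i$, the gate $CX_{if(i)}$ (which is never deleted: the inner $\mathtt{foreach}$ excludes $k=a_{n,i}$, so RP2 never removes a $CX_{ij}$ with $j=f(i)$), and the corrections $CX_{ij}$ with $j\in s(i)\setminus\{f(i)\}$. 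The penultimate line of Algorithm \ref{alg_full} deletes precisely these last gates, which yields Condition 4. Conditions 1, 2 and 5 are immediate: $f(i)\in I^C$, so wire $f(i)$ starts in $\ket{+}$; $E_{if(i)}$ exists by flow property (F3), $J_i$ by Step 4 of Definition \ref{def_ext} and $CX_{if(i)}$ (since $f(i)\in s(i)$) by the same step, all in the required order; and $i\in O^C$ is measured in the $Z$-basis by Step 6 of that definition. Hence $(i,f(i))$ is a $J$-block for every $i\in O^C$.

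It remains to apply the $J$-gate identity $|O^C|$ times. Applying it to $(i,f(i))$ rewrites only the sub-sequence $(E_{if(i)},J_i,CX_{if(i)})$ into a single $J$-gate on the wire obtained by merging $i$ into $f(i)$, and leaves every other gate in place; in particular, whenever $i'=f^{-1}(i)$ exists, the $J$-block $(i',i)$ --- whose gates $E_{i'i}$, $J_{i'}$, $CX_{i'i}$ all precede $E_{if(i)}$ on wire $i$ --- survives, now as a $J$-block on $(i',\text{merged wire})$. Traversing the flow chains $v,f(v),f^2(v),\dots$ of Equation \ref{eq_ordered} from their sources (vertices with no $f$-preimage) toward the outputs and applying the identity one block at a time then removes one wire per step and preserves all $J$-blocks not yet treated; since there are exactly $|O^C|$ non-output vertices, each contributing one step, the identity is applied $|O^C|$ times and the final circuit is compact.

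The step I expect to be the main obstacle is the second paragraph: pinning down the exact configuration of every $E$, $CX$ and $CZ$ gate in the circuit at the end of the main loop --- verifying that the interleaving of RP1/RP2/RP3 applications dictated by Algorithm \ref{alg_chooserp} along the flow-ordered traversal of Algorithm \ref{alg_full} really does leave wire $f(i)$ with $E_{if(i)}$ alone before $CX_{if(i)}$ and wire $i$ cleared, after $E_{if(i)}$, of everything except $J_i$, $CX_{if(i)}$ and the removable corrections --- together with checking that deleting the corrections $CX_{ij}$, $j\ne f(i)$, in the penultimate line is a legitimate rewriting (these only propagate the classical $Z$-basis outcome of wire $i$ and can be absorbed as byproduct operators, as in \cite{daSilvaG12}). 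This is where one must lean hardest on Lemmas \ref{lem_rp1}, \ref{lem_czdeleted} and \ref{lem_flowssf} and on the invariants established within the proof of Lemma \ref{lem_algneveraborts}.
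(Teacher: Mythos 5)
Your overall skeleton is the paper's: show that after the main loop and Line 17 every pair $(i,f(i))$, $i\in O^C$, forms a $J$-block in the sense of Definition \ref{def_jblock} (conditions 1, 2, 5 being immediate from the extended translation, condition 3 coming from Lemma \ref{lem_algneveraborts}, condition 4 from the removal of the leftover corrections in Line 17), and then apply the $J$-gate identity $|O^C|$ times. Your additional bookkeeping (Lemma \ref{lem_rp1} for the $CZ$ gates, the flow-chain ordering of the final $|O^C|$ applications of the identity) is fine and in fact more explicit than the paper.

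The genuine gap is in the step you yourself flag as the main obstacle: the legitimacy of Line 17. You justify deleting the remaining $CX_{ij}$ with $j\neq f(i)$ by saying they ``only propagate the classical $Z$-basis outcome of wire $i$ and can be absorbed as byproduct operators.'' That argument does not work: an uncompensated $X$ on a wire $j$ that still has a $J$-gate and a measurement ahead of it is not an absorbable byproduct, since $J(\theta)X = e^{i\theta}\,Z\,J(-\theta)$ flips the sign of the angle --- this dependency is exactly the adaptive correction the pattern needs for determinism --- and on output wires dropping it would change the output state outright. The correct (and simpler) justification, which is the one the paper uses, is a fixed-point argument on the target rather than a byproduct argument on the control: by Proposition \ref{prop_ssf} the correcting set satisfies $s(i)\subseteq I^C$, so every target wire $j\in s(i)$ is prepared in $\ket{+}$, and since the only entangling gate the algorithm leaves on wire $j$ (its own flow edge) has not yet acted at the point where $CX_{ij}$ sits, the gate acts on $\ket{+}_j$ and hence acts as the identity, $CX_{ij}\ket{+}_j=\ket{+}_j$; it can therefore be erased without changing the computation. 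With that replacement your proof matches the paper's; without it, condition 4 of Definition \ref{def_jblock} is not actually established.
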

\begin{proof}
We show that Algorithm \ref{alg_full} creates $|O^C|$ many $J$-gate blocks (Definition \ref{def_compactform}) and then removes $|O^C|$ many wires from the circuit (using the $J$-gate identity), yielding a compact circuit. The first two conditions in Definition \ref{def_jblock} are trivially satisfied for all SSF extended circuit. In Lemma \ref{lem_algneveraborts} we proved that Algorithm \ref{alg_full} moves to a future slice (or removes) all gates $E$ acting on qubits $j \in s(i)$ ($\forall i \in O^C$) except gate $E_{if(i)}$ and hence condition 3 in Definition \ref{def_jblock} is also satisfied. Now we prove that condition 4 is satisfied. Note that all non-input qubit are initialised in state $|+\rangle $. According to SSF construction (Proposition \ref{prop_ssf}), for all qubit $j \in s(i)$, there exists $f^{-1}(j)$. By definition $f: O^C \rightarrow I^C$ and therefore every qubits in $s(i)$, for all $i \in O^C$, starts in the $|+\rangle$ state. Therefore, all remaining $CX_{ij}$ such that $j \neq f(i)$ can me removed from the circuit without changing the computation, since $CX_{ij}|+\rangle_j = |+\rangle_j$ (remember that the only gate $E$ acting on wires $j \in s(i)$ left behind by the algorithm is $E_{if(i)}$). Therefore, after the step in Line 17, condition 4 in Definition \ref{def_compactform} is also satisfied. Since all non-output wires in an extended circuit are measured in the Z basis by construction, condition 5 is also satisfied and hence Algorithm \ref{alg_full} has created $|O^C|$ many $J$-blocks after the step in Line 17. Finally, in Line 18 the $J$-gate identity is applied to all $J$-blocks, resulting in a compact circuit.
\end{proof}

We complete this section with two examples using the above algorithm. In the first one we start with a generic quantum circuit, translate it to the MBQC model, find the SSF extended circuit of it and then apply Algorithm \ref{alg_full} to put the circuit in the compact form. In the second example, we analyse an example with Pauli measurements and show that if we want to keep the parallelization introduced by it we can no longer find the compact form of the extended circuit.

\subsubsection*{Example 1}
\label{sec_example1}

 \begin{figure}
\center
\includegraphics[scale=.5]{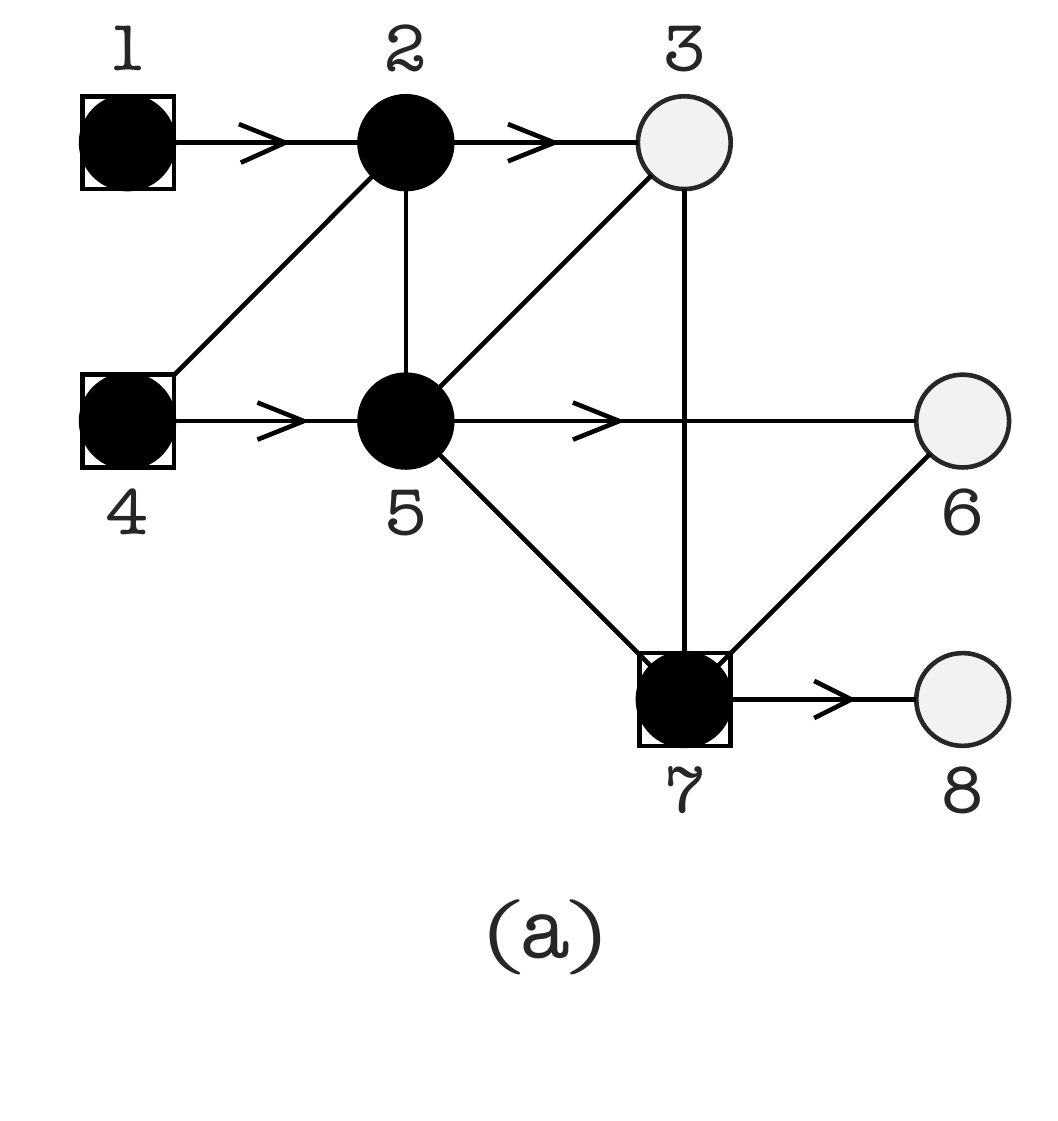}
\caption{Open graph associated to the example in Section \ref{sec_alg}}
\label{fig_examplegraphs}
\end{figure}

In this first example we apply our method to optimise the circuit in Figure \ref{fig_example1}-a. Using the method from \cite{BroadbentK09}, we translate the circuit in Figure \ref{fig_example1}-a to the following measurement pattern, resulting in the graph in Figure \ref{fig_examplegraphs}-a. The difference in depth between the flow and SSF is shown in Table \ref{table_ex1}.

\begin{center} \label{table_ex1}
    \begin{tabular}{ | l | l | l |}
    \hline
\textbf{Method} & \textbf{depth} & \textbf{partial order} \\ \hline
Flow  &5& $1 \prec_f 4 \prec_f 2 \prec_f 5 \prec_f 7$\\ \hline
SSF & 2&$1,4,7 \prec_s 2,5$ \\ \hline
    \end{tabular}
\end{center}

\begin{align} \label{eq_ex1ssf}
X^{s_6}_8Z^{s_5}_6X^{s_2}_6X^{s_2}_3M_5^{\theta_5}M_2^{\theta_2}
X^{s_7}_8
Z^{s_4}_6Z^{s_4}_3X^{s_4}_6X^{s_4}_3X^{s_4}_5
Z^{s_1}_8X^{s_1}_5X^{s_1}_3X^{s_1}_2M_7^{\theta_7}M_4^{\theta_4}M_1^{\theta_1}\nonumber \\
E_{78}E_{67}E_{57}E_{56}E_{45}E_{37}E_{35}E_{25}E_{24}E_{23}E_{12}
\end{align}

 \begin{figure}
\includegraphics[scale=.9]{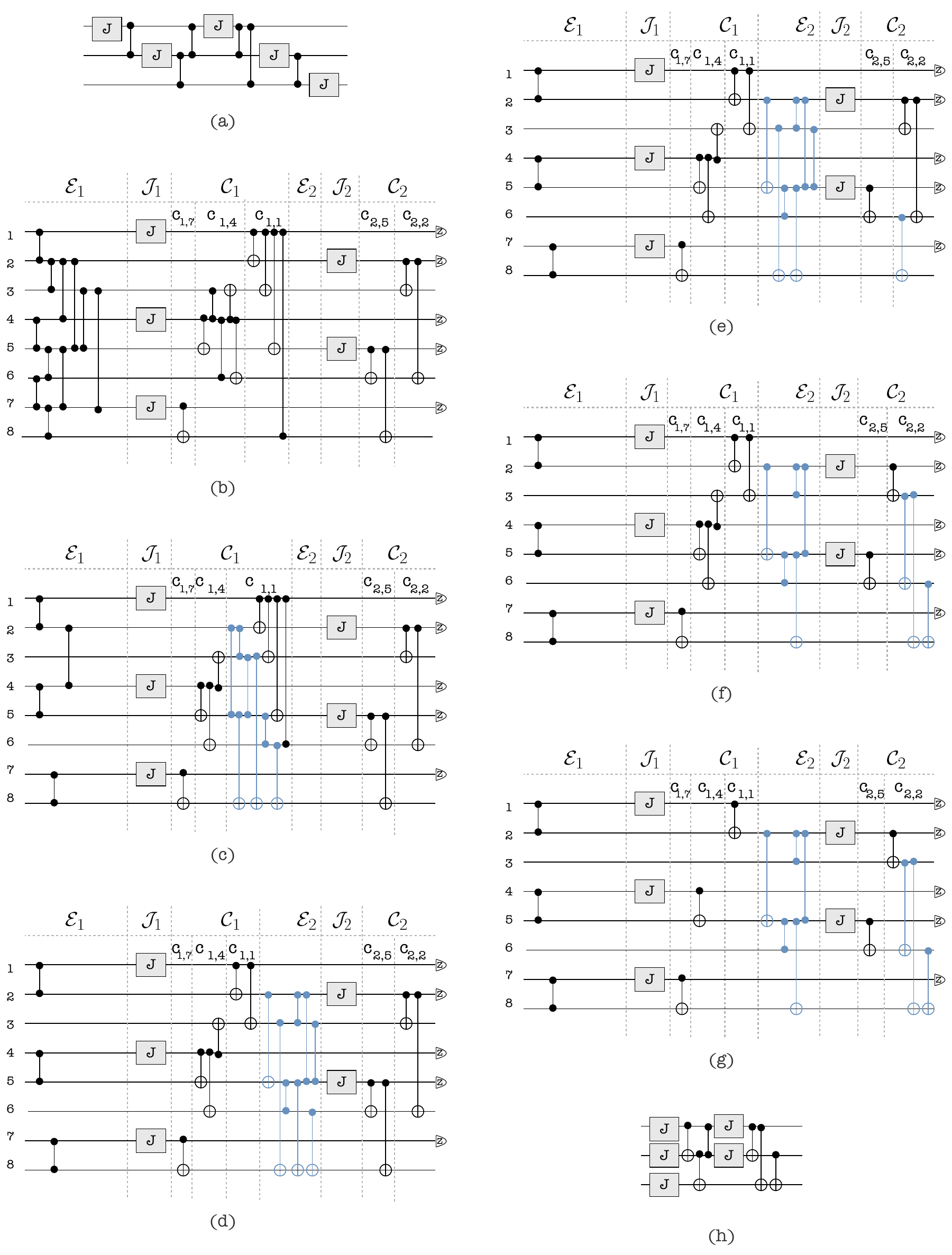}
\caption{A complete example of global circuit optimization. In each figure the shaded (blue online) gates are the new gates obtained through the RPs (see Section \ref{sec_example1} for more information.)}
\label{fig_example1}
\end{figure}

The extended translation (Definition \ref{def_ext}) of the measurement pattern in Equation \ref{eq_ex1ssf} gives the circuit in Figure \ref{fig_example1}-b. Let us apply Algorithm \ref{alg_full} to find the compact form of that extended circuit. We start with the first SSF layer of circuit in Figure \ref{fig_example1}-a. For $n=1$ we have $W_1 = \{1,4,7\}$, $S^{1}_{max}=L_f(7)=5$ and $S^{1}_{min}=L_f(1)=1$. Thus, we start with qubit 7 with $N(s(7))\backslash \{7\}= \{\emptyset\}$ and therefore no rewrite procedure is applied. The value of $S^{1}_{max}$ is decreased to 2 (there is no qubit $i$ s.t. $L_f(i) = 3$ or $4$). Now we analyse qubit 4 because $L_f(4)=2$ with $N(s(4))\backslash \{4\}= \{2, 3, 5, 6, 7\}$. For qubits in this set it holds that $2 \prec_f 5 \prec_f 7 \prec_f 3,6$. The maximal value of $L_f$ is stored in $NSTEP$. For both qubits $3$ and $6$, Algorithm \ref{alg_full} applies RP1, moving $CZ_{65}$ and $CZ_{35}$ past $CX_{45}$. 

Now $NSTEP \leftarrow NSTEP -1$ and the qubit considered is qubit $7$. For qubit $7$ the algorithm applies RP2, transforming $CZ_{67}$, $CZ_{57}$ and $CZ_{37}$ into $CX_{68}$, $CX_{58}$ and $CX_{38}$ respectively and moving those new $CXs$ past the correction structure, removing $CX_{48}$ from the circuit. Now consider qubit $5$ and move $CZ_{56}$ and $CZ_{53}$ past the correction structure, using RP2. The same rewrite procedure is applied for qubit $2$, moving $CZ_{23}$ and $CZ_{25}$ past the correction structure. The rewritten circuit is depicted in Figure \ref{fig_example1}-c.

In the next step, the value of $S^1_{max}$ is decreased to 1, where the qubit to be considered is qubit $1$. $N(s(1))\backslash \{1\}= \{2, 3, 4, 5, 6, 7\}$. For qubits in this set it holds that $4 \prec_f 2 \prec_f 5 \prec_f 7 \prec_f 3,6$. For $6$, Proposition \ref{prop_rp1} is satisfied and hence Algorithm \ref{alg_full} applies RP1, moving $CZ_{65}$ past $CX_{15}$. Then for $k=3$, Algorithm \ref{alg_full} applies RP1 (Proposition \ref{prop_rp1}), moving $E_{35}$ and $E_{23}$ past $CX_{12}$ and $CX_{15}$. Next, RP2 is applied to $k=7$, moving $CX_{38}$ and $CX_{58}$ past $CX_{13}$ and $CX_{15}$. For $k=5$, we have that $k \in s(i)$ (condition (i) in Proposition \ref{prop_rp1b}) and $E_{35}$ in slice $c_{1,1}$ after $CX_{15}$ gate ((condition (ii) in Proposition \ref{prop_rp1b}). Therefore, for this case RP1 moves $E_{35}$ and $E_{25}$ past $CX_{12}$ and $CX_{13}$, creating two $CZ_{15}$ separated by $CX_{15}$. By Lemma \ref{lem_rp1}, we know those two $CZ_{15}$  can simply be removed from the circuit without changing the computation being implemented. For $k=2$, Proposition \ref{prop_rp1} is satisfied and hence RP1 moves $E_{25}$ and $E_{23}$ past $CX_{13}$ and $CX_{15}$. Finally, for qubit $4$, RP3 transforms $E_{24}$ into $CX_{25}$ and move it past $CX_{12}$ and $CX_{15}$, removing the latter in the process, resulting in Figure \ref{fig_example1}-d.

In the second SSF layer we have $n=2$, $W_2 = \{2,5\}$, $S^{2}_{max}=L_f(2)=3$ and $S^{2}_{min}=L_f(5)=4$. Thus, we start with qubit 5. Since $N(s(5))\backslash \{5\}=\{7\}$, Algorithm \ref{alg_full} applies RP2, moving $CX_{68}$ past $CX_{56}$ and $CX_{58}$ while removing $CX_{58}$ in the process. After that, the value of $NSTEP$ decreases and qubit $2$ is considered. $N(s(2))\backslash \{2\}=\{5,7\}$ and since $L_f(5) <L_f(7)$, qubit $7$ is considered first. In this case Algorithm \ref{alg_full} applies RP3 moving gates $CX_{38}$ and $CX_{68}$ past the correction structure. Now for qubit $5$,  RP2 is applied transforming $CZ_{53}$ into $CX_{36}$. The created $CX$ is moved forward past the correction structure, removing $CX_{26}$ from the circuit. The resulting circuit is shown in Figure \ref{fig_example1}-f. The command in Line 17 of Algorithm \ref{alg_full} removes the remanning undesired gates using the identity trivial $CX_{ij}|+\rangle_j=|+\rangle_j$. The resulting circuit is depicted in Figure \ref{fig_example1}-g. Finally, the $J$-gate identity is applied for  every qubit $i \in O^C$ resulting in the optimised compact circuit shown in Figure \ref{fig_example1}-h.

\subsubsection*{Example 2}
In this example we explore what is the role of Pauli measurements in the context of compactification procedures. Let us start by reviewing how the correction operators are modified when applied to a qubit measured with $0$ and $\pi/2$ angles (corresponding to Pauli measurements):
\begin{eqnarray}
M^{\frac{\pi}{2}}_i X^{s}_i &=& M^{\frac{\pi}{2}}_i Z^{s}_i \label{eq_pauliy}\\
M^{0}_i X^{s}_i &=& M^{0}_i \label{eq_paulix}
\end{eqnarray}
Note that in the MBQC context, both substitutions clearly might reduce the depth of the computation: Equation \ref{eq_pauliy} substitute $Z^{s}_i$ for $X^{s}_i$, which can then be removed using signal-shifting, and Equation \ref{eq_paulix} simply deletes the existing $X^{s}_i$ dependency. However, in the associated extended circuit, these optimisations can not be implemented alongside the compactification procedure developed in this section (Algorithm \ref{alg_full}), forcing one to choose between optimization in time or memory. To see why, note that both Equations \ref{eq_pauliy} and \ref{eq_paulix} change the correcting structure for every qubit encoded in the index $s$. As a consequence, the step-wise influencing path $\wp$ (Definition \ref{def_path}), which is the backbone of the compactification protocol, can not be defined anymore for these qubits.

In summary, if one wants to save memory in the circuit model, the substitutions in Equations \ref{eq_pauliy} and \ref{eq_paulix} must be avoided, allowing the identification of all step-wise influencing paths and, consequently, the removal of auxiliary qubits. Conversely, if $J$-gate parallelization is the goal, Algorithm \ref{alg_full} can be easily adapted to create just $|O^C|-p$ many $J$-blocks in the extended circuit (instead of $|O^C|$ many), where $p$ is the number of Pauli measurements in the associated measurement pattern.

Consider the circuit shown in Figure \ref{fig_pauliopt}-a. It has two $J$ gates with arbitrary angles (arbitrary angles are omitted in the Figure) and two Pauli angles, namely $0$ and $\frac{\pi}{2}$. Using the method in \cite{BroadbentK09}, we can translate this circuit to the MBQC model; the associated graph is shown in Figure \ref{fig_pauliopt}-b and the flow measurement pattern is given by:

\begin{equation}
Z^{s_4}_6Z^{s_1}_3X^{s_5}_6X^{s_2}_3M_5^{0}M_2^{\pi/2}Z^{s_4}_2X^{s_4}_5M_4^{\theta_4}Z^{s_1}_5Z^{s_1}_4X^{s_1}_2M_1^{\theta_1}E_G
\end{equation}
where, $E_G = E_{56}E_{45}E_{25}E_{24}E_{23}E_{12}$. Using the signal shifting technique explored in this paper, we obtain the following SSF measurement pattern:
\begin{equation} \label{eq_ex2ssf}
Z^{s_1+s_4}_6Z^{s_1}_3X^{s_5+s_1}_6X^{s_1+s_2+s_4}_3M_5^{0}M_2^{\pi/2}X^{s_1}_2X^{s_1+s_4}_5M_4^{\theta_4}M_1^{\theta_1}E_G
\end{equation}
Now let us see what happens when we use the fact that some measurements are Pauli measurements. Using Equation \ref{eq_pauliy}, we can optimise it by removing the dependency between measurements $1$ and $2$, obtaining the following measurement pattern:\begin{equation}
Z^{s_1+s_4}_6Z^{s_1}_3X^{s_5+s_1}_6X^{s_2+s_4}_3M_5^{0}X^{s_1+s_4}_5M_2^{\pi/2}M_4^{\theta_4}M_1^{\theta_1}E_G
\end{equation}
Alternatively, the measurement on qubit $5$ can also be parallelised with the ones in $1$ and $4$, using Equation \ref{eq_paulix}. The optimized measurement pattern is the following:
\begin{equation}
Z^{s_1+s_4}_6Z^{s_1}_3X^{s_5+s_1}_6X^{s_1+s_2+s_4}_3M_2^{\pi/2}X^{s_1}_2M_5^{0}M_4^{\theta_4}M_1^{\theta_1}E_G
\end{equation}
Finally, both optimisations can be considered together. In this case, all measurements can be performed at once:
\begin{equation}  \label{eq_ex2allpauli}
Z^{s_1+s_4}_6Z^{s_1}_3X^{s_5+s_1}_6X^{s_1+s_2+s_4}_3M_2^{\pi/2}M_5^{0}M_4^{\theta_4}M_1^{\theta_1}E_G
\end{equation}

\begin{figure}
\center
\includegraphics[scale=.7]{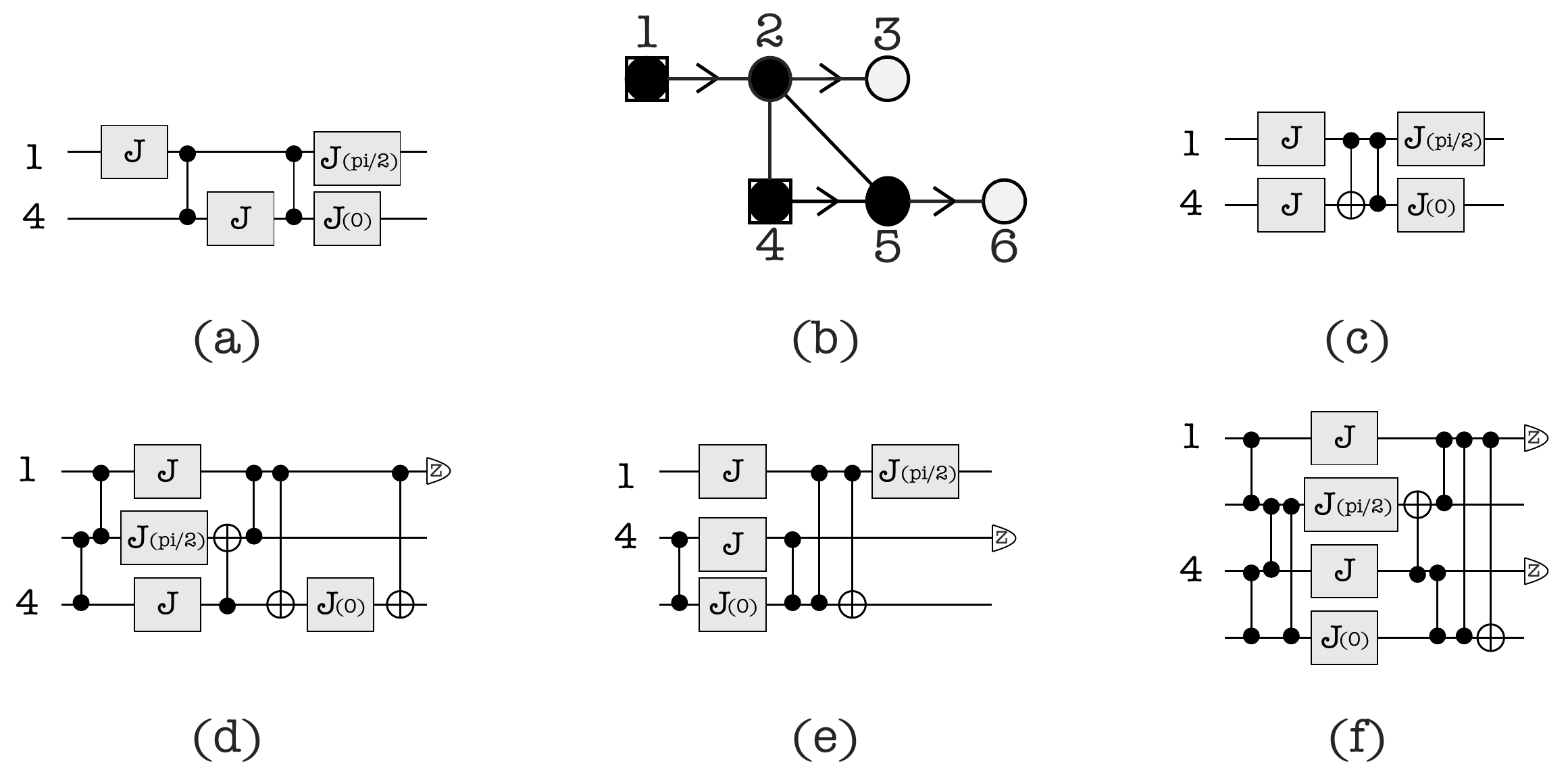}
\caption{Using compactification procedures to obtain $J$-gate parallelization for circuits with Pauli angles. See example 2 for more information.}
\label{fig_pauliopt}
\end{figure}

With few adaptations\footnote{Due to the modifications introduced by equations \ref{eq_pauliy} and \ref{eq_paulix}, it is not possible to create $J$-blocks for $p$ many wires, where $p$ is the number of times equations \ref{eq_pauliy} or \ref{eq_paulix} were used in the original measurement pattern.}, Algorithm \ref{alg_full} can be applied to the extended circuits associated to the measurement patterns in equations \ref{eq_ex2ssf} to \ref{eq_ex2allpauli}. The obtained circuits are shown in Figures \ref{fig_pauliopt}-c to \ref{fig_pauliopt}-f, respectively. Note that the optimization given by equations \ref{eq_pauliy} and \ref{eq_paulix} in the MBQC model becomes $J$-gate parallelization in the circuit model. However, it is not clear in which cases the $J$-gate parallelization implies time optimization. An in-depth analysis of the time-memory tradeoff for extended circuits with pauli angles remains as an interesting open question.
\section{Complexity analysis}

In this section we analyse the space, \emph{i.e.} the number of wires, and depth complexity of our proposed optimization scheme.  We show that our procedure can give a more optimal circuit in terms of both space and depth compared with the general method obtained in \cite{BroadbentK09}. An overview of this analysis is shown in Figure \ref{fig_complexity}.

Let $C$ be our initial quantum circuit with $n$ wires, $m = \mathtt{poly}(n)$ J gates and depth $d_C = \mathtt{poly}(n)$.
We can translate the computation implemented by $C$ to an MBQC pattern $P$ with $m + n$ qubits, out of which $m$ are measured qubits, and depth $d_P \leq d$ \cite{BroadbentK09}.
This pattern can be further optimised to depth $d_{\text{SS}} \leq d_P$ by performing signal shifting \cite{BroadbentK09} and obtaining a new pattern $P_{\text{SS}}$.
The extended quantum circuit $C_{\text{SS}}$, corresponding to the signal shifted pattern $P_{\text{SS}}$ can be created via the method given in Definition \ref{def_ext}.
This new circuit $C_\text{SS}$ performs exactly the same computation as the initial circuit $C$ and has the same number of $J$-gates, $m + n$ wires and depth $d_{\text{SS}} \cdot O(m)$.
One could apply the parallelisation method of \cite{MooreN01} to create a parallelised circuit $C_\text{Par}$ with depth $d_\text{SS} \cdot O(\log n)$ and size $O(m^2)$ \cite{BroadbentK09}.
Depending on the value of $d_\text{SS}$, the depth of $C_\text{Par}$ can be smaller than that of the original circuit $C$, but this could increase the space used, which can increase considerably, \emph{i.e} from $n$ to $O(m^2)$.

\begin{figure}[h]
        \begin{center}
                \resizebox{\hsize}{!}{\includegraphics{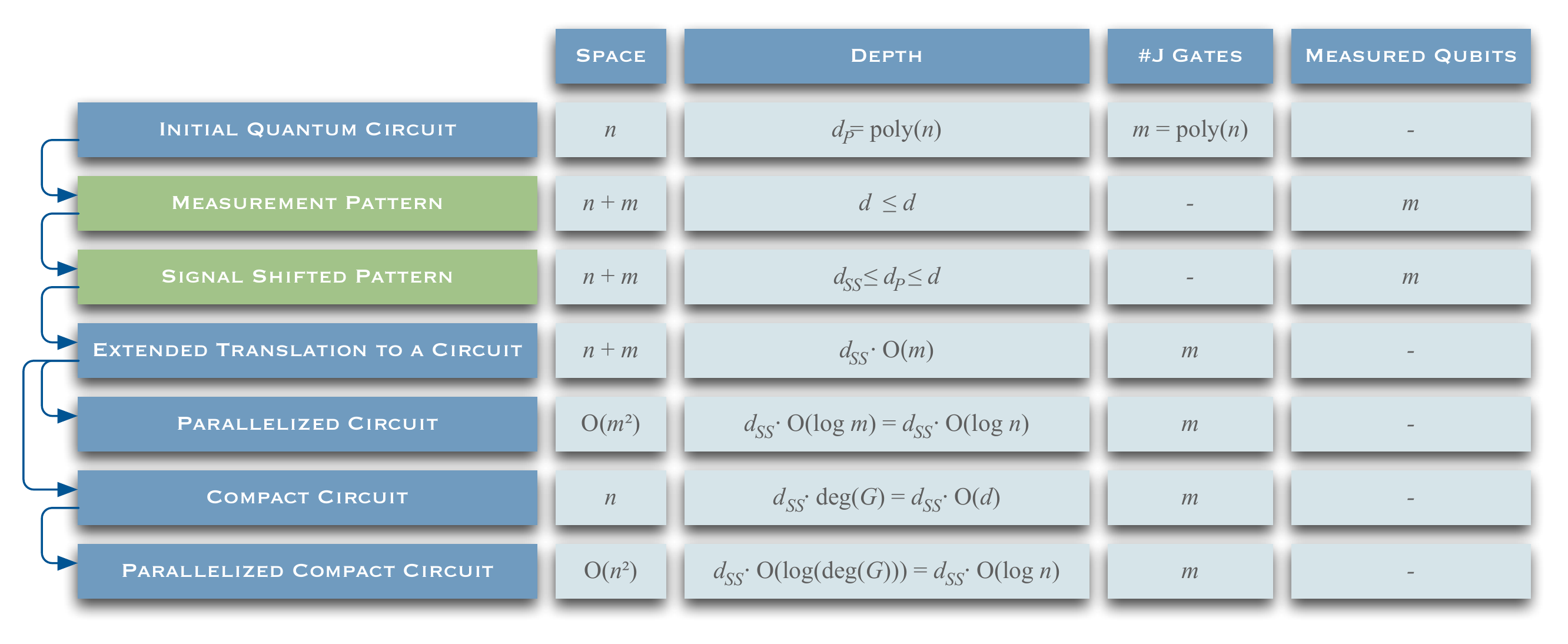}}
        \end{center}
        \caption{A summary of the optimisation procedure using extended translation and compactification. The parallelisation procedure described in \cite{BroadbentK09} (without Pauli simplification) ends with the ``Parallelised Circuit'' whereas our optimisation can end with a ``Compactified Circuit'' or a ``Parallelised Compactified Circuit'' depending the goal of the optimization.}
        \label{fig_complexity}
\end{figure}

To resolve this problem, instead of parallelising the circuit $C_\text{SS}$ we apply our compactification method. This will give us the compact circuit $C_{Com}$ with $n$ wires and $d_\text{SS} \cdot O(deg(G))$ where $deg(G)$ is the degree of the open graph $(G, I, O)$ of $P_{\text{SS}}$.

\begin{lemma}
        Let $C_\text{SS}$ be the quantum circuit obtained through extended translation from a signal shifted measurement pattern $P_\text{SS}$ on an open graph $(G, I, O)$ with depth $d_\text{SS}$. The application of the compactification Algorithm \ref{alg_full} to $C_\text{SS}$ results in a circuit with depth $d_\text{SS} \cdot O(deg(G))$.
\end{lemma}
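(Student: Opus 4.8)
The plan is to slice the output of Algorithm~\ref{alg_full} along the SSF layering and to show that each layer contributes only $O(deg(G))$ to the total depth. Write $D=d_\text{SS}$ for the number of non-output SSF layers. By Definition~\ref{def_ext} the extended circuit $C_\text{SS}$ carries, for every $n$, the triple of slices $(\mathcal{E}_n,\mathcal{J}_n,\mathcal{C}_n)$, and by the analysis of Lemmas~\ref{lem_czdeleted} and~\ref{lem_algneveraborts} the rewrite procedures only move, create or delete gates \emph{within a layer or into the adjacent slice $\mathcal{E}_{n+1}$}, never commuting a gate past the $J$-gate of one of the wires it acts on; finally Lines~17--18 perform a global cleanup and apply the $J$-gate identity once per pair $\{i,f(i)\}$, $i\in O^C$. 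Since for $j\in s(i)$ one has $L_s(j)<L_s(i)$ (from $L_s(i)=\max_{j\in s(i)}(L_s(j)+1)$ and $f(i)\in s(i)$), every flow path is strictly decreasing in $L_s$, so the $J$-gate identity never contracts two vertices of the same SSF layer onto one wire. Hence the final compact circuit decomposes into consecutive blocks $B_0,B_1,\dots,B_D$, where $B_n$ collects the gates associated with SSF layer $n$, and it suffices to show that each $B_n$ has depth $O(deg(G))$: then the total depth is $\le (D+1)\cdot O(deg(G))=d_\text{SS}\cdot O(deg(G))$.

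For the per-block bound, fix $n$ and a wire $w$ of the compact circuit, and let $v$ be the (unique, by the previous paragraph) vertex whose $J$-gate sits on $w$ in $B_n$. The gates of $B_n$ touching $w$ are: a single $J$-gate (depth $1$); two-qubit gates controlled by $v$, i.e.\ the surviving entries of slice $c_{n,v}$; and two-qubit gates with $v$ as target, coming from $\mathcal{E}_n$ (or $\mathcal{E}_{n+1}$) and from corrections controlled by neighbours of $v$. The crucial point is that the \emph{fan-out} and \emph{fan-in} of genuine corrections are trivial after the cleanup: Line~17 of Algorithm~\ref{alg_full} deletes every $CX_{vj}$ with $j\neq f(v)$, so the only surviving $CX$ controlled by $v$ is $CX_{vf(v)}$, which is absorbed into the $J$-block; and since $f$ is injective (the injectivity lemma), the only $CX$ whose target is $v$ that survives Line~17 is $CX_{f^{-1}(v),v}$, again part of a $J$-block. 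What remains on $w$ inside $B_n$ are therefore $E$-gates incident to $v$ together with $CZ$-gates, whose targets lie in $Odd(s(v))$ and hence, by Lemma~\ref{lem_evenconnections}, among the output vertices. Each such gate is either an original entangling edge incident to $v$, or a gate produced by one of RP1--RP3; by Corollary~\ref{cor_czcx} every gate produced by an RP ``consumes'' a uniquely determined edge of $G$, and by Lemma~\ref{lem_flowssf} the endpoints of that edge lie in $N(s(i))$ for the target wire $i$ of that RP, forcing the consumed edge to be incident to one of $\{f^{-1}(v),v,f(v)\}$. Thus the two-qubit gates of $B_n$ touching $w$ inject into the set of edges of $G$ incident to the three vertices $\{f^{-1}(v),v,f(v)\}$, of which there are at most $3\,deg(G)$, so $B_n$ has depth $O(deg(G))$.

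Summing over the $D+1$ blocks yields depth $d_\text{SS}\cdot O(deg(G))$, as claimed. The step I expect to be the main obstacle is the second paragraph: controlling precisely which $E$-gates and $CZ$-corrections remain in each block after \emph{all} applications of RP1, RP2, RP3 and the $CZ$-cancellations of Lemma~\ref{lem_rp1}. A naive count of the corrections controlled by $v$ is $|s(v)|+|Odd(s(v))|$, which is \emph{not} bounded by $deg(G)$ (because $s(v)$ may be large and $Odd(s(v))\subseteq N(s(v))$ may then also be large); the bound rests entirely on the fact that Line~17 removes the potentially unbounded fan-out $\{CX_{vj}:j\in s(v)\}$, that injectivity of $f$ removes the unbounded fan-in, and that each gate created or moved by the algorithm is processed exactly once and is tied by Corollary~\ref{cor_czcx} and Lemma~\ref{lem_flowssf} to a unique edge incident to $v$ or an immediate flow-neighbour of $v$. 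Making this ``one gate $\leftrightarrow$ one incident edge'' correspondence rigorous, rather than just the asymptotic counting, is where the bulk of the work lies; a secondary (routine) point to verify is that the block decomposition is stable under the final $J$-gate contractions, which follows from the strict $L_s$-monotonicity of flow paths noted above.
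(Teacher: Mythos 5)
Your proposal is correct in substance and takes essentially the same route as the paper's own proof: both exploit that the rewrite procedures leave the $J$-gate layers (equal in number to $d_\text{SS}$) intact, and then bound the two-qubit gates between consecutive $J$-layers by $O(deg(G))$ by charging each surviving gate to a distinct edge of $G$ incident to the wire's current vertex or its flow neighbours (the paper charges leftover $E$-gates to edges at $i$ and RP-created $CX$-gates to edges at $f^{-1}(i)$). Your bookkeeping is merely finer-grained, and the one inaccuracy in it is immaterial to the count: the pattern's $CZ$-corrections are in fact all eliminated by RP1 (Lemma \ref{lem_rp1}), whereas RP-created $CX$ gates sitting in $\mathcal{E}$-slices do survive Line 17 and are exactly the gates the paper (and your Corollary \ref{cor_czcx} argument) ties to edges at $f^{-1}(v)$ and $v$.
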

\begin{proof}
        Algorithm \ref{alg_full} is the application of Rewrite Procedures (1) - (5) as defined in Section \ref{sec_rps} to the extended circuit of $P_\text{SS}$.
        Note that the number of $J$-gate layers in $C_{SS}$ is equal to the depth of $P_\text{SS}$ \cite{BroadbentK09}.
        Since the rewrite procedures do not change the $J$-gate layers, the compactified circuit $C_\text{Com}$ will also have $d_\text{SS}$ $J$-gate layers.

        Now we will count the number of gates in-between the $J$-gates.
        We do not need to be concerned with gates in the  $\mathcal{C}$ layers, since Algorithm \ref{alg_full} removes all of them.
        It is possible that the rewrite procedures leave some $E_{ij} = CZ_{ij}$ gates in front of qubit $i$.
        As all of these gates correspond to an edge in the graph $G$, the maximum number of such gates has to be $O(deg(G))$.
        According to the rewrite procedures, none of the $CX_{k, i}$ gates created will be moved past the $J$-gates.
        As can be seen from the rewrite procedures and Algorithm \ref{alg_full}, these will be created because of the existence of some $E_{f^{-1}(i) k}$ gates in $C_\text{SS}$.
        Since these correspond to the edges connected to the vertex $f^{-1}(i)$ there can only exist $O(deg(G))$ many $CX$ gates between any two $J$-gates.
        Now we know that there can be a total of $O(deg(G))$ two qubit gates in-between any two $J$-gates and hence also between the $J$-gate layers.
        Therefore the total depth of the circuit $C_\text{Com}$ will be $d_\text{SS} \cdot O(deg(G))$
\end{proof}

Already we can see, that if $deg(G) < \log n$ the $C_\text{Com}$ will have smaller depth than $C_\text{SS}$ while the corresponding space will be considerably smaller.
We can further decreased the depth of $C_\text{Com}$ by applying the parallelisation method from \cite{MooreN01}.
The depth of the new circuit $C_\text{ComPar}$ will be $d_\text{SS} \cdot O(\log (deg G))$ and size $O(n^2)$.
Note that because of the way $C$ is translated into $P$, the maximum number of edges connected to a vertex in $G$ will be $O(d)$, where $d = \text{poly}(n)$.
Therefore the depth of $C_\text{ComPar}$ can be written as $d_\text{SS} \cdot O(\log n)$.
Hence the compactification procedure together with the parallelisation method from \cite{MooreN01} will in the worst case give us the same depth as the method from \cite{BroadbentK09}, without Pauli Simplifications, but uses considerably less qubits ($n^2$ vs $m^2$, where $m = poly(n)$).

\section{Discussion and summary of results}
      
Initially MBQC was proposed as an alternative architecture for the implementation of quantum computing. However from early on the distinct parallel power of the model attracted researchers to explore further this unique feature of the model. In a series of results the key concepts of flow, signal shifted flow, gflow, maximally delayed gflow, focused gflow and information preserving flow were introduced \cite{DanosK06,DanosKP07,BrowneKMP07,MhallaP07, MhallaMPST11}. They address the general question of determinism in MBQC while shedding light on the parallelism as well. Although it is now known that the MBQC parallel power is equivalent to the quantum circuit with unbounded fanout \cite{BKP11}, further investigation is required to fully take advantage of this extra power. In this paper we continue this line of research by first presenting a surprising link between signal shifted flow  and maximally delayed gflow. The surprise comes from the fact that the former is obtained via a simple rewrite rules of pushing the $Z$ dependencies of a pattern to the end of the computation, while the latter is constructed directly from the stabilisers of the underlying graph. This leads to a new efficient procedure for finding the optimal gflow of graphs with flow as we discussed in the last section.

Moreover the link between signal shifted flow and optimal gflow opens a new direction to unify further the ``flow'' structure and fully characterise the constructions behind the parallel power of MBQC. To begin with, we succeeded in extending the applicability of the compactification method from \cite{daSilvaG12} by proving that it can also be used for the signal shifted flow. This will allow us to translate back the obtained parallel structure of the MBQC into the circuit model without increasing the size, leading to an automated optimisation procedure for quantum circuits. The automated scheme explore the global structure of circuit to parallelise several $J$-gates and to group together several $CX$ gates, which allow further optimisations to take place using specific method for Clifford gate parallelisation. Interestingly the scheme fails to compact the parallel pattern obtained via Pauli simplification rules \cite{BroadbentK09}. In other words one needs to keep the extra space to keep the parallel depth obtained due to the Pauli measurements. This further indicates the crucial role that Clifford computation (corresponding to Pauli measurements) play in obtaining the superior parallel power of MBQC over quantum circuit. 

The above trade-off naturally suggests the consideration of a hybrid model for quantum computing, where part of the computation is processed using the quantum circuit model and the other part using MBQC. Such a computation can be obtained by a \textit{partial} compactification of the extended circuits. First a quantum circuit acting on $n$ qubits is translated to the MBQC model, where it requires $m>n$ qubits. Then, we use an automated compactification procedure to obtain many different circuits implementing the same computation but with different depth and number of qubits (ranging from $n$ to $m$ many qubits). In those circuits, part of the computation is implemented using MBQC to obtain the parallel depth due to Pauli measurement while the rest are performed in the circuit model as one would not achieve any parallel advantage.  We believe that this can be of great value for experimental implementation of quantum computation, since one can design algorithms more adapted to the available experimental resources.

The application of the different techniques introduced in this paper to known quantum algorithms, as well as a full comparison with other known optimization methods beyond MBQC, constitute an interesting subject to be explored in the future. 

\subsection*{Acknowledgments}  We are grateful to Ernesto F. Galv\~ao for many helpful discussions. We acknowledge financial support by the Instituto Nacional de Ci\^encia e Tecnologia de Informac\~ao Qu\^antica (INCT- IQ/CNPq - Brazil) that sponsored R. Dias da Silva visit to Scotland where this work was completed. E. Kashefi also acknowledges support from UK Engineering and Physical Sciences Research Council (EP/E059600/1).


\begin{thebibliography}{10}
	
\bibitem{Deutsch89}
D. Deutsch.
\newblock{\em Procedings of the Royal Society}
\textbf{425} 73-90 (1989).

\bibitem{NielsenCbook00}
M. A. Nielsen and I. L. Chuang.
\newblock{\em Quantum Computation
and Quantum Information (Cambridge University Press,
2002)}.
	

\bibitem{RaussendorfB01}
R.~Raussendorf and H.~J. Briegel.
\newblock {\em Phys. Rev. Lett.} \textbf{86}, 5188--5191 (2001).
	
	
\bibitem{DanosKP07}
V. Danos, E. Kashefi and P. Panangaden.
\newblock{\em Journal of the ACM}
\textbf{54}, 2 (2007).


\bibitem{BrowneKMP07}
D.~E. Browne, E. Kashefi, M.~Mhalla, and S.~Perdrix.
\newblock {\em New J. Phys.} \textbf{9}, 250 (2007).


\bibitem{RaussendorfBB01}
R.~Raussendorf, H.~J. Briegel, and D.~E. Browne.
\newblock {\em Journal of Modern Optics} \textbf{49}, 1299 (2002).


\bibitem{EscartinP11}
J. C. Garcia-Escartin and P.Chamorro-Posada.
\newblock{ \em arXiv:1110.2998v1 [quant-ph]} (2011).

\bibitem{SedlakP08}
M. Sedl{\'a}k  and M. Plesch.
\newblock{\em Central European Journal of Physics}, \textbf{6}, 128-134 (2008).


\bibitem{MaslovDMN08}
D. Maslov, G. W. Dueck, D. M. Miller and C. Negrevergne.
\newblock{\em IEEE Transactions on Computer-Aided Design of Integrated Circuits and Systems} \textbf{27}(3), 436-444 (2008).


\bibitem{MooreN01}
C. Moore and M. Nilsson.
\newblock{\em SIAM J. Computing.} \textbf{31}, 799-815 (2001).


\bibitem{BroadbentK09}
A. Broadbent and E. Kashefi.
\newblock {\em Theoretical Computer Science} \textbf{410} (26), 2489 (2009).


\bibitem{DuncanP10}
R. Duncan and S. Perdrix.
\newblock{Proc. of ICALP Part II}, \textbf{6199}, 285-296 (2010).


\bibitem{daSilvaG12}
R. ~Dias da Silva and E. Galv\~ao.
\newblock{\em arXiv:1209.5079 [quant-ph] (2012)} 
(2012)


	
\bibitem{MhallaP07}
M. Mhalla and S. Perdrix.
\newblock{\em Proc. of 35th ICALP},
\textbf{10}, 857-868 (2008)


\bibitem{DanosK06}
V. Danos and E. Kashefi.
\newblock {\em Phys. Rev. A} \textbf{74}, 052310 (2006).

\bibitem{GottesmanC99}
D. Gottesman and I. Chuang.
\newblock{\em Nature} \textbf{402}, 390-393 (1999)

\bibitem{HuelgaPV01}
S. F. Huelga, M. B. Plenio and J. A. Vaccaro.
\newblock{\em Phys. Rev. A},
\textbf{65}, 042316
(2002).
	
	
\bibitem{HuelgaVCP00}
S. F. Huelga, J. A. Vaccaro, A. Chefles and M. B. Plenio
\newblock{\em Phys. Rev. A},
\textbf{63}, 042303
(2001).

\bibitem{Leung01}
D. Leung.
\newblock{\em arXiv:0111.122 [quant-ph]}
(2001).

\bibitem{Nielsen01}
M. Nielsen.
\newblock{\em Phys. Lett. A.} \textbf{308} (2-3): 96-100 (2003)

\bibitem{Jozsa05}
R. Jozsa.
\newblock{\em arXiv:0508.124 [quant-ph]} (2005)


\bibitem{RaussendorfBB03}
R.~Raussendorf, D.~E. Browne and H.~J. Briegel.
\newblock {\em Phys. Rev. A} \textbf{68}, 022312 (2003).

\bibitem{HeinDERvdNB06}
M.~Hein, W.~D{\"u}r, J.~Eisert, R.~Raussendorf, M.~Van {den Nest}, and H.~J.
  Briegel.
\newblock In {\em Proceedings of the International School of Physics ``Enrico
  Fermi'' on ``Quantum Computers, Algorithms and Chaos''} \textbf{162} (2006).

\bibitem{GrossFE08}
D.~Gross, S.T.~Flammia and J.~Eisert.
\newblock {\em Phys. Rev. Lett.} \textbf{102}, 190501 (2009).

\bibitem{VanDenNestMDB6}
M.~Van den Nest, A.~Miyake, W.~D{\"u}r, and H.~J. Briegel.
\newblock {\em Phys. Rev. Lett.} \textbf{97}, 1505 (2006).



\bibitem{MoraPMNDB10}
C.~E. Mora, M.~Piani, A.~Miyake, M.~Van~den Nest, W.~D{\"u}r, and H.~J. Briegel.
\newblock {\em Phys. Rev. A} \textbf{81}, 042315 (2010).

\bibitem{WeiTAR11}
T.-C. Wei, I. Affleck, and R. Raussendorf.
\newblock {\em Phys. Rev. Lett.} \textbf{106}, 070501 (2011).

	
\bibitem{Duncan12}
R. Duncan.
\newblock{\em arXiv:1203.6242 [quant-ph]}, (2012).


\bibitem{MhallaMPST11}
M. Mhalla, M. Murao, S. Perdrix, M. Someya and P. S. Turner.
\newblock{TQC}
(2011).


\bibitem{BKP11}
D. E. Browne, E. Kashefi and S. Perdrix.
\newblock{\em Lecture Notes in Computer Science} \textbf{6519}, (2011).
	
	
	
\bibitem{Beaudrap06a}
N. de Beaudrap.
\newblock{\em Phys. Rev. A} \textbf{77}, 022328 (2008).


\end{thebibliography}
\end{document}